\documentclass[oneside,reqno,11pt,dvipsnames]{amsart}
\usepackage[a4paper,left=3cm,right=3cm,top=3cm,bottom=3cm]{geometry}

\usepackage{xcolor}                                        
\usepackage[T1]{fontenc}                                   
\usepackage[utf8]{inputenc}                                
\usepackage{microtype}                                     
\usepackage{preamble/symbols}                              
\usepackage{preamble/environments}                         
\usepackage{enumitem}                                      
\usepackage{subcaption}                                    
\usepackage{booktabs}                                      
\usepackage{hyperref}                                      
\definecolor{colorlinks}{RGB}{0, 24, 168}                  
\definecolor{colorcites}{RGB}{124, 10, 2}                  
\hypersetup{                                               
    colorlinks=true,                                       
    linkcolor=colorlinks,                                  
    citecolor=colorcites,                                  
    urlcolor=colorlinks,                                   
    pdfborder={0 0 0}                                      
}                                                          

\usepackage{preamble/boxes}                                
\renewcommand\boxstatus[1]{} 
\PassOptionsToPackage{disable}{todonotes} 
\usepackage{preamble/todos}                                
\usepackage{preamble/headings}                             

\usepackage{bbm}

\newcommand\commentTiancheng[1]{\comment{Tiancheng: #1}}
\newcommand\commentHugo[1]{\comment{Hugo: #1}}
\newcommand\commentIoan[1]{\comment{Ioan: #1}}
\renewcommand\im[1]{\commentIoan{#1}}

\newcommand\commentEmile[1]{\comment{Emile: #1}}
\newcommand\commentPiet[1]{\comment{Piet: #1}}

\newcommand\eav[1]{\commentEmile{#1}}
\newcommand\hugo[1]{\commentHugo{#1}}
\newcommand\hide[1]{}
\newcommand\ind{\mathbbm{1}}

\newcommand\even\circ
\newcommand\odd\bullet
\renewcommand\P{\mathbb{P}}

\newcommand\Circuit{\mathrm{Circuit}}
\newcommand\Vertical{\operatorname{Vertical}}
\newcommand\Horizontal{\operatorname{Horizontal}}

\newcommand\osc{\operatorname{oscillation}}

\renewcommand\a{\mathbf{a}}
\renewcommand\b{\mathbf{b}}
\renewcommand\c{\mathbf{c}}

\newcommand\cmulti{c_{\mathrm{multi}}}

\newcommand\cinter{c_{\mathrm{inter}}}
\newcommand\charm{c_{\mathrm{harm}}}

\newcommand\eps{\epsilon}

\makeatletter
\@namedef{subjclassname@2020}{\textup{2020} Mathematics Subject Classification}
\makeatother
\title{Conformal invariance of the six-vertex model's height function}
\author[Emile Averous, Hugo Duminil-Copin, Tiancheng He, Piet Lammers and Ioan Manolescu]{Emile Averous\textsuperscript{1}, Hugo Duminil-Copin\textsuperscript{1,2}, Tiancheng He\textsuperscript{1}, \\ Piet Lammers\textsuperscript{3} and Ioan Manolescu\textsuperscript{4}}
\thanks{\textsuperscript{1}Universit\'e de Gen\`eve,
\texttt{emile.averous@unige.ch}, \texttt{hugo.duminil@unige.ch},
\texttt{tiancheng.he@unige.ch}.}
\thanks{\textsuperscript{2}Institut des Hautes \'Etudes Scientifiques,
\texttt{duminil@ihes.fr}.}
\thanks{\textsuperscript{3}CNRS, Sorbonne Universit\'e, LPSM,
\texttt{piet.lammers@cnrs.fr}.}
\thanks{\textsuperscript{4}Universit\'e de Fribourg,
\texttt{ioan.manolescu@unifr.ch}.}
\date{5 October 2026}
\begin{document}



\maketitle
\vfill
\begin{abstract}
    In this paper, we show that the height function of the six-vertex model in
    simply connected domains with piecewise constant boundary conditions
    converges to the Gaussian Free Field (GFF) with Dirichlet boundary
    conditions. This implies the conformal invariance of the scaling limit of the
    model. 
    
    The argument relies on the full plane GFF convergence combined with
    Reflexion Positivity and qualitative estimates. The proof works for a wide
    range of spectral parameters $\Delta$ (among which  $\Delta$ close to $0$, and
    between $-1$ and $-1/2$) making this paper the first instance of a conformal
    invariance result for a continuum of models. The paper has consequences for
    other models, including the FK percolation with cluster-weight $q=4$ and the
    Ashkin-Teller model, which we will pursue in future work.
\end{abstract}
\vfill
\setcounter{tocdepth}{1}
\tableofcontents

\newpage



\section{Introduction}

\subsection{Motivation}

For more than fifty years, it has been conjectured that the large-scale behaviour of lattice models undergoing a continuous phase transition is described by quantum field theories that are invariant under {\em conformal transformations}; see, for instance, the early works of Patashinskii and Pokrovskii \cite{PatPok64}, followed by those of Polyakov \cite{Pol68, Pol70a,Pol70b}. Such quantum field theories are known as Conformal Field Theories (CFTs), and their enhanced symmetry makes them particularly amenable to analysis. In two dimensions, the conformal symmetry is infinite-dimensional, and this abundance of local symmetries renders many planar CFTs integrable, as revealed in the seminal work of Belavin, Polyakov, and Zamolodchikov \cite{BPZ84a,BPZ84b}. This observation led to a remarkably detailed understanding of two-dimensional CFTs and, in turn, to a wealth of predictions for the critical behaviour of lattice models.

The physical predictions associated with these models suggest a path toward a comprehensive mathematical understanding of phase transitions in two-dimensional lattice models. The program consists of two steps: first, to study the corresponding CFTs and extract, at the continuum level, all the relevant information; second, to prove that the original lattice models are indeed conformally invariant in the scaling limit.

The first step has been particularly successful. It is fair to say that the mathematical understanding of CFT is now well underway, with spectacular progress having been made in recent decades. The introduction of the Schramm-Loewner Evolution \cite{Sch00,Law05} and the Conformal Loop Ensembles \cite{She09,SheWer12} (see also \cite{MS16,MS20}), as well as advances in the mathematical foundations of CFT \cite{Dav+16,RV14,GRV19} have reshaped the field of 2D statistical physics over the past thirty years.

By contrast, every advance toward the second step has been the result of a hard-fought battle. Proofs of conformal invariance remain confined to a handful of specific models. These include site Bernoulli percolation on the triangular lattice \cite{Smi01,KhrSmi21,CamNew07,CamNew06}, with results ranging from Cardy's formula to convergence toward SLE(6) and CLE(6); the Ising and FK-Ising models \cite{Smi10,CheSmi12,HonSmi13,CheHonIzy15,CheDumHon12a,BenHon16}, where major advances include the analysis of fermionic observables, the energy and spin fields, and convergence to SLE and CLE with parameters 3 and 16/3; uniform spanning trees \cite{LawSchWer04}; dimer models \cite{Ken00,Ken01}; and level lines of the discrete Gaussian free field \cite{SchShe09}.

The present paper aims to change this picture. We provide the first proof of conformal invariance for a broad class of models, namely the height function of the six-vertex model with spectral parameter $\Delta$ ranging over whole intervals of possible values. In addition to this, this result opens the way toward establishing conformal invariance for other important families of lattice models, including the Ashkin-Teller model and the Potts model.

The proof follows a strategy pioneered by Kenyon in his work on dimers \cite{Ken00,Ken01}: one identifies suitable observables associated with the height function and shows that they solve a discrete analogue of an appropriate boundary value problem. In \cite{Ken00}, correlations of the height gradient were shown to be discrete holomorphic and to satisfy suitable boundary conditions. Then, the scaling limit is identified as the solution to the corresponding boundary value problem in the continuum. The same strategy has subsequently been successfully implemented in \cite{Smi01,Smi10,CheSmi12,LawSchWer04,SchShe09} for other models. In the present setting, we prove that multi-point correlations of the height function of the six-vertex model are approximately pre-harmonic and we determine their behaviour near the boundary. This enables us to pass to the limit as the lattice mesh tends to zero and thereby identify the scaling limit of the multi-point correlations of the height function.

 The proof of approximate harmonicity relies on a new argument based on the reflection positivity (RP) of the model. More precisely, RP implies analyticity properties and a Cauchy-Schwarz inequality that can be used  to transfer the approximate harmonicity of full-plane correlations established in \cite{DumKozLamMan26} to correlations in simply connected domains. In this sense, our result builds crucially on \cite{DumKozLamMan26} and on the body of work leading up to it.

 Interestingly, the paper fits the framework imagined by physicists consisting in first proving translation, rotation and scale invariance before deriving conformal invariance. Indeed, translation invariance was trivial, rotation invariance was obtained in \cite{AveCheDumHeKraManXia26,DumKozKraManOul26}, the scaling limit in the full plane (which happens to be scale invariant) was derived in \cite{DumKozLamMan26}, and finally conformal invariance is obtained in this paper.  For the height function of the six-vertex model, the present work can therefore be viewed as the final step in this program.

Let us conclude by mentioning that the approximate harmonicity result is particularly robust and, as discussed at the end of this introduction, is expected to have applications beyond the results established here. From a broader perspective, this paper provides a stepping stone toward establishing conformal invariance for other lattice models.

\subsection{Definition of the six-vertex model on discrete graphs}

The six-vertex model is a probability measure on orientations of the edges of a graph \cite{Bax82,LieWu80}. A six-vertex configuration $\omega^\times$ on a subgraph of the square lattice assigns an orientation to each edge, subject to the \emph{ice rule}: at every vertex of degree four, the number of incoming edges equals the number of outgoing edges. In particular, there are precisely six local arrow configurations compatible with this constraint, as depicted in Figure \ref{fig:6V_vertices}. The six local configurations are grouped into three pairs, to which we assign positive weights $\mathbf a$, $\mathbf b$, and $\mathbf c$, referred to below as {\em Boltzmann weights}. We also introduce the \emph{spectral parameter}
\begin{equation}
\label{eq}
\Delta=\Delta(\mathbf a,\mathbf b,\mathbf c)
:=
\frac{\mathbf a^2+\mathbf b^2-\mathbf c^2}
{2\mathbf a\mathbf b}.
\end{equation}

\begin{figure}
\centering
\includegraphics[width=0.8\linewidth]{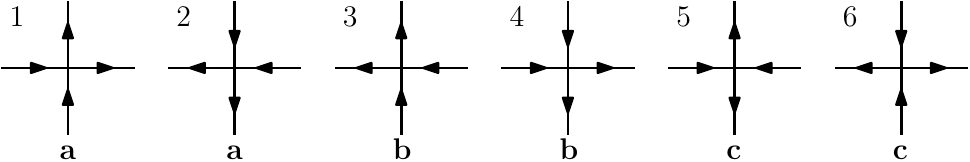}
\caption{The six possible local arrow configurations around a degree-four vertex of the six-vertex model, together with their corresponding weights.}
\label{fig:6V_vertices}
\end{figure}

\bigbreak\noindent{\bf Configurations in finite domains.}
While the six-vertex model is often considered with periodic boundary conditions, in this paper we work in finite domains. Let $F$ be a collection of faces of $\mathbb Z^2$ which is connected (here two faces are neighbours if they share a side) and whose complement is connected. We associate with $F$ the graph $G=(V,E)$ whose edge-set $E$ is composed of edges separating two neighbouring faces of $F$, and whose vertex-set $V$ is made of the endpoints of these edges. The resulting graph has vertices of degree one, two, or four. A six-vertex configuration on $G$ is an orientation of edges in $E$ satisfying the ice rule at every vertex of $V$ of  degree four.

From now on, we only consider a graph $G$ obtained from a collection of faces $F$ of the square lattice\footnote{Or as we will see below, of a rescaled version of the square lattice.} which is connected and whose complement is connected, as explained in the previous paragraph. We call this a {\em simply connected discrete domain}. 

Note that the set of faces of $G$ is not $F$. More precisely, except for the exterior face of $G$, the other faces are indeed in $F$, but not all faces in $F$ are faces in $G$. We denote by $\partial G$ the set of faces of $F$ that are not faces in $G$. It coincides with the collection of faces in $F$ that are adjacent to at least one face of the square lattice outside $F$.

\bigbreak\noindent{\bf The height function of the six-vertex model.}
 Every six-vertex configuration $\omega^\times$ on $G$ naturally gives rise to an integer-valued \emph{height function} $h$ defined on $F$. It is piecewise constant on each face\footnote{Strictly speaking, this may assign two different values to an edge of $G$, and possibly as many as four different values to a vertex of $G$. We disregard this minor ambiguity, since it has no effect on the results below. It could, for instance, be resolved by fixing in advance any deterministic rule for selecting a value at these exceptional points.} and its increments between neighbouring faces are determined by the orientation of the edge separating them: when crossing an edge of $\omega^\times$, the height increases by $1$ if the crossed arrow points to the right and decreases by $1$ otherwise.

\begin{figure}
\begin{center}
\includegraphics[width = .35\textwidth, page = 7]{figures/maps.pdf}
\caption{A six-vertex configuration $\omega^\times$ and its associated height function.}
\label{fig:6Vheights}
\end{center}
\end{figure}

The ice rule ensures that these local increments are compatible around every vertex and can therefore be integrated to define a height function on $F$. The height function is determined only up to a global additive constant, whereas its gradient is uniquely determined by $\omega^\times$. Conversely, the arrow configuration can be recovered from the height increments between neighbouring faces. Thus, modulo an additive constant, the height function and the six-vertex configuration encode the same information.

\bigbreak\noindent{\bf Six-vertex measure with Dirichlet boundary conditions.}
Let $\zeta$ be a height function prescribed on $\partial G$. The six-vertex measure on $G$ with \emph{Dirichlet boundary condition $\zeta$} is defined by
\begin{equation}
\label{eq:6V_def}
\bbP^\zeta_{G,\mathbf a,\mathbf b,\mathbf c}[\omega^\times]
:=
\frac{1}{Z^\zeta_{G,\mathbf a,\mathbf b,\mathbf c}}
\mathbf a^{N_1+N_2}
\mathbf b^{N_3+N_4}
\mathbf c^{N_5+N_6}
\mathbbm{1}_{h_{|\partial G}=\zeta},
\end{equation}
where $N_i=N_i(\omega^\times)$ denotes the number of degree-four vertices whose local arrow configuration is of type $i$. The normalising constant $Z^\zeta_{G,\mathbf a,\mathbf b,\mathbf c}$ is the corresponding \emph{partition function}, chosen so that \eqref{eq:6V_def} defines a probability measure.

We emphasise that only degree-four vertices contribute to the Boltzmann weight. At such vertices, the ice rule restricts the local configuration to one of the six types displayed in Figure \ref{fig:6V_vertices}.

\subsection{Basics on the Gaussian Free Field}

We now turn to the description of the object that will ultimately arise as the scaling limit.
There are several equivalent ways to view the GFF, depending on the regularity one wishes to impose on the field.
For further background on the GFF, we refer to \cite{She07,BerPow25}.
Let $\calD\subset\mathbb R^2$ be a \emph{domain}, that is, a non-empty connected open set.
For $n\geq 1$, let
\begin{equation}
    D_n(\calD)
    :=
    \bigl\{(u_1,\dots,u_n)\in\calD^n:\ u_i\neq u_j
    \text{ for all }i\neq j\bigr\}.
\end{equation}

We begin with the GFF with zero boundary conditions.
Let
$ G_{\calD}:D_2(\calD)\longrightarrow\mathbb R
$
denote the Dirichlet Green function in $\calD$.
Thus, for each fixed $y\in\calD$, the function $x\mapsto G_{\calD}(x,y)$ is harmonic on $\calD\setminus\{y\}$, vanishes as $x$ approaches $\partial\calD$, and satisfies
\begin{equation}
    G_{\calD}(x,y)
    =
    -\frac{1}{2\pi}\log|x-y|+O(1)
    \qquad\text{as }x\to y.
\end{equation}
Equivalently, $G_{\calD}$ is symmetric and satisfies these properties in either coordinate.

By a \emph{finite-Dirichlet-energy generalised test function} we mean a finite signed measure $\varphi$ supported in $\calD$ such that
\begin{equation}
    \iint_{\calD\times\calD}
    G_{\calD}(u,v)\,{\rm d}\varphi(u)\,{\rm d}\varphi(v)
    <\infty.
\end{equation}
The GFF is a random distribution, in the sense that it can be tested against a finite-Dirichlet-energy generalised test function $\varphi$. We set
\begin{equation}
   \langle\Gamma,\varphi\rangle:= \int_{\calD}
   \Gamma(u){\rm d}\varphi(u).
\end{equation}
\begin{definition}[GFF with zero boundary conditions] The zero-boundary GFF $\Gamma^\calD$ can then be characterised as follows:
for every finite family
$
   \varphi=(\varphi_1,\ldots,\varphi_n)
$ of finite-Dirichlet-energy generalised test functions, define the $n\times n$ matrix
$\Sigma(\varphi)$ by
\begin{equation}
    \Sigma(\varphi)_{ij}
    :=
    \iint_{\calD\times\calD}
    G_{\calD}(u,v)\,
    {\rm d}\varphi_i(u)\,
    {\rm d}\varphi_j(v).
\end{equation}
Then
\begin{equation}
    \bigl(\langle\Gamma^\calD,\varphi_i\rangle\bigr)_{i=1}^n
    \sim
    \calN\bigl(0,\Sigma(\varphi)\bigr).
\end{equation}
This specifies the law of $\Gamma$ as a centred Gaussian generalised function with covariance kernel $G_{\calD}$.
\end{definition}

Away from the diagonals, the multipoint correlation functions of the zero-boundary GFF are given by Wick's formula. Namely,
\begin{equation}
    \label{eq:GFF_multi-point_cor_function}
    \Psi_n:D_n(\calD)\longrightarrow\mathbb R,
    \qquad
    (u_1,\dots,u_n)
    \longmapsto
    \sum_{\pi}
    \prod_{\{i,j\}\in\pi}
    G_{\calD}(u_i,u_j),
\end{equation}
where the sum runs over all pairings $\pi$ of $\{1,\dots,n\}$, that is, over all partitions of $\{1,\dots,n\}$ into sets of cardinality two.
In particular,
$ \Psi_n\equiv 0$
    whenever $n$ is odd,
since in that case no such pairing exists.

More generally, the GFF  in $\calD$  with boundary conditions $\zeta$ and variance $\sigma^2$, denoted by $\Gamma^{\calD,\zeta,\sigma^2}$, is obtained by setting
\begin{equation}
    \Gamma^{\calD,\zeta,\sigma^2}:=\sigma\Gamma^{\calD}+H_\zeta,
\end{equation}
where $H_\zeta$ denotes the harmonic extension of $\zeta$ to $\calD$.
Here and throughout, we implicitly assume that $\zeta$ admits a harmonic extension to $\calD$ that extends continuously to $\partial\calD$ with boundary values $\zeta$. Note that $\langle \Gamma^{\calD,\zeta,\sigma^2},\varphi\rangle$ is a multivariate Gaussian random variable with averages given by $\langle H_\zeta,\varphi\rangle:=(\langle H_\zeta,\varphi_i\rangle)_{1\le i\le n}$ and variance $\sigma^2\Sigma(\varphi)$.

\begin{proposition}[Conformal invariance of the GFF]\label{prop:conformal invariance GFF}
The family of Gaussian free fields with Dirichlet boundary conditions is conformally invariant in the following sense. Let
$f:\calD\to\calD'$ be a conformal bijection, and let $\zeta$ be a continuous function on $\partial\calD$. Then the push-forward under $f$ of the GFF in $\calD$ with boundary conditions $\zeta$ has the same law as the GFF in $\calD'$ with boundary conditions $\zeta\circ f^{-1}$.
\end{proposition}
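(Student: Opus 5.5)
The plan is to reduce the statement to two facts: conformal invariance of the Dirichlet Green function, and conformal invariance of harmonic extensions. The Gaussian characterisation in the definition of $\Gamma^\calD$ then does the rest. First we make precise what the push-forward means. For a finite signed measure $\varphi$ supported in $\calD$, its push-forward is $f_*\varphi:=\varphi\circ f^{-1}$, a finite signed measure supported in $\calD'$. We set $\langle f_*\Gamma,\psi\rangle:=\langle \Gamma,f^*\psi\rangle$, where $f^*\psi:=(f^{-1})_*\psi$ for generalised test functions $\psi$ on $\calD'$. With this convention, testing against measures is the natural pairing, and the Jacobian factor appearing for densities is automatically absorbed.

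\emph{Step 1: $G_{\calD'}(f(x),f(y))=G_\calD(x,y)$.} Fix $y\in\calD$ and put $g(x):=G_{\calD'}(f(x),f(y))$.
\begin{itemize}
\item Since $f$ is holomorphic (or antiholomorphic), and harmonicity is preserved under composition with such maps, $g$ is harmonic on $\calD\setminus\{y\}$.
\item As $x\to\partial\calD$, the point $f(x)$ leaves every compact subset of $\calD'$, because $f$ is a homeomorphism. Hence $g(x)\to0$.
\item Near $y$ we have $\log|f(x)-f(y)|=\log|x-y|+\log|f'(y)|+o(1)$, since $f'(y)\neq0$. So $g(x)=-\frac1{2\pi}\log|x-y|+O(1)$.
\end{itemize}
The difference $g-G_\calD(\cdot,y)$ is then harmonic on $\calD\setminus\{y\}$, bounded near $y$ (so the singularity is removable), and tends to $0$ at $\partial\calD$. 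The maximum principle gives $g\equiv G_\calD(\cdot,y)$. The one delicate point is the boundary condition: "vanishes as $x\to\partial\calD$" must be read in the sense of the definition, namely convergence to $0$ when leaving compacts. That reading is conformally invariant, and this is the main subtlety I expect, especially for non-smooth boundaries.

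\emph{Step 2: Covariances.} By change of variables and Step 1,
\begin{equation*}
\iint G_{\calD'}(u,v)\,{\rm d}f_*\varphi_i(u)\,{\rm d}f_*\varphi_j(v)=\iint G_{\calD}(x,y)\,{\rm d}\varphi_i(x)\,{\rm d}\varphi_j(y).
\end{equation*}
This has two consequences. First, $f_*$ is a bijection between the finite-energy generalised test functions of $\calD$ and those of $\calD'$. Second, $\Sigma_{\calD'}(f_*\varphi)=\Sigma_\calD(\varphi)$. Now take finite-energy $\psi_1,\dots,\psi_n$ on $\calD'$ and set $\varphi_i:=f^*\psi_i$, so that $f_*\varphi_i=\psi_i$. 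The vector $(\langle f_*\Gamma^\calD,\psi_i\rangle)_i=(\langle\Gamma^\calD,\varphi_i\rangle)_i$ is distributed as $\calN(0,\Sigma_\calD(\varphi))=\calN(0,\Sigma_{\calD'}(\psi))$. Since the finite-dimensional marginals characterise the law, $f_*\Gamma^\calD\overset{d}{=}\Gamma^{\calD'}$.

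\emph{Step 3: Means.} Let $H$ be the harmonic extension of $\zeta$ to $\calD$. The function $H\circ f^{-1}$ is harmonic on $\calD'$ and extends continuously to $\partial\calD'$ with boundary values $\zeta\circ f^{-1}$, interpreting boundary correspondence through the continuous extension of $f$, which is implicit in the statement. By uniqueness, again from the maximum principle, we get $H_{\zeta\circ f^{-1}}=H_\zeta\circ f^{-1}$. It follows that
\begin{equation*}
\langle H_\zeta,f^*\psi\rangle=\int H_\zeta\circ f^{-1}\,{\rm d}\psi=\langle H_{\zeta\circ f^{-1}},\psi\rangle.
\end{equation*}
Finally, $f_*\Gamma^{\calD,\zeta,\sigma^2}=\sigma f_*\Gamma^\calD+H_\zeta\circ f^{-1}$. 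By Steps 2 and 3 this has the law of $\Gamma^{\calD',\zeta\circ f^{-1},\sigma^2}$.
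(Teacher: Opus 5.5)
The paper does not prove this proposition. It is recalled without proof as a classical fact in the background section on the GFF, where the reader is referred to the standard GFF literature, so there is no in-paper argument to compare against. Your argument is the standard one and is correct: conformal invariance of the Dirichlet Green function, push-forward of test measures so that $\Sigma_{\calD'}(f_*\varphi)=\Sigma_\calD(\varphi)$, and invariance of the harmonic extension.

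The one step to tighten is the uniqueness claim in Step 3. When $\calD'$ is unbounded, the maximum principle with data prescribed only on $\partial\calD'$ does not give uniqueness. For instance, $\mathrm{Im}\,z$ is harmonic on the upper half-plane and vanishes on $\mathbb R$. You should therefore take $H_\zeta$ to be the \emph{bounded} harmonic extension. Then $H_\zeta\circ f^{-1}$ is bounded, and uniqueness among bounded harmonic functions (for example by transferring to $\mathbb D$) gives $H_{\zeta\circ f^{-1}}=H_\zeta\circ f^{-1}$.

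In Step 1 you already handle the analogous issue correctly. Reading the Green function's boundary condition as ``tends to $0$ when leaving every compact subset'', which includes going to infinity, is exactly what makes $G_\calD$ uniquely characterised on unbounded domains; otherwise $G_{\calD}(\cdot,y)+c\,\mathrm{Im}(\cdot)$ would be another candidate on the half-plane.
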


\subsection{Main results of the paper}

This article applies to a certain range of triples $(\mathbf a,\mathbf b,\mathbf c)$, subject to several restrictions. The first, and most immediate, was already alluded to above. When $\Delta<-1$, the height function of the six-vertex model is localised, as proved in \cite{Dum21,RS22,GP23}, whereas it is expected to be delocalised for $\Delta\in[-1,1)$ (this is proved in some special cases \cite{Cha21,GP23,Lis21,DumKarManOul22,GL25}). We therefore restrict our attention to triples $(\mathbf a,\mathbf b,\mathbf c)$ for which $\Delta\in[-1,1)$.

As explained later, we require that $\mathbf a=\mathbf b$ as we are currently unable to derive the proper RSW theorems for anisotropic models. We do expect that this independent question can be addressed, which would enable us to generalise our result. In this paper, we nevertheless restrict our attention to the case $\mathbf a=\mathbf b$ and assume without loss of generality that they are equal to $1$. Below, we index the model by the parameter $\Delta$, which is in one-to-one correspondence with $\mathbf c$, the only remaining free parameter once $\mathbf a$ and $\mathbf b$ are fixed at 1. We say that the model is {\em isotropic}.

A third restriction on the range of parameters arises from our use of positive association for the so-called spin representation of the six-vertex model (see below). This requires us to assume that $\mathbf c\ge1$, i.e.~$\Delta\le 1/2$.

The fourth restriction is more delicate. We require that the scaling limit of the full-plane six-vertex height function be a Gaussian Free Field (GFF). More precisely, let $\mathbb E_{\delta\bbZ^2}$ denote the unique flip-symmetric Gibbs measure of the six-vertex model (see \cite{DumKozLamMan26} for the definition). Under this measure, a six-vertex height function $h$ can be defined up to an additive constant. We say that {\em the scaling limit of the six-vertex height function (of spectral parameter $\Delta$) is the GFF} if, for every collection of distinct points $u_1,u'_1,\dots,u_n,u'_n\in\mathbb R^2$,
\[
\mathbb E_{\delta\bbZ^2}\left[(h(u_1)-h(u'_1))\cdots(h(u_n)-h(u'_n))\right]
\]
converges, as $\delta$ tends to zero, to the corresponding multi-point correlations of the full-plane GFF with variance
\[
\sigma^2=\frac{2}{\arccos\Delta}.
\]
Such a convergence result has been established in \cite{DumKozLamMan26} for every $\Delta\in[-1,-1/2]$, and in \cite{GiuMasTon17}\footnote{In \cite{GiuMasTon17}, convergence to the full-plane GFF is established, but the variance of the limiting GFF is not identified explicitly. Nevertheless, since the limiting correlations are rotationally invariant, independently of the value of $\sigma^2$, one may use the main result of \cite{DumKozLamMan26}  to identify the variance.} for $\Delta\in[-\Delta_0,\Delta_0]$, where $\Delta_0>0$ is a sufficiently small constant.

It is noted in \cite{DumKozLamMan26} that the range of parameters for which full-plane GFF convergence could be established is potentially considerably larger, essentially the only missing ingredient being a rotational-invariance result analogous to that of \cite{AveCheDumHeKraManXia26}. In view of this, we state the main results of this paper in the following general setting.

Let
\begin{equation}\label{eq:definition I}
I_{\mathbf{CI}}=\{\Delta\le 1/2:\text{ the full-plane 6V height function's scaling limit is the GFF}\}.
\end{equation}
As explained above, $(-1,-1/2]\cup[-\Delta_0,\Delta_0]\subset I_{\mathbf {CI}}$ (where $\Delta_0$ is as in \cite{GiuMasTon17}).

\subsubsection{Harmonicity of the multi-point correlations of the isotropic six-vertex model}

The first result, which represents the core of our paper, states that multi-point correlations are approximately pre-harmonic (in the sense that the circle averages around a point are close to the value at the point) in each coordinate.

To state our result, introduce, for a reasonable function $F$ on $\mathbb R^2$, $z\in \mathbb R^2$, and $r>0$, the Laplacian
\begin{equation}
\Delta_rF(z):=F(z)-\oint_{\partial B_r(z)} F(v){\rm d}v,
\end{equation}
where $\oint$ means the average on the circle. Except if otherwise stated, we always think of $\Delta_r$ as acting on the function following it with $z$ as the variable. Recall that we are consider the model with $\mathbf a=\mathbf b=1$.

\begin{theorem}[scaling limit of the isotropic six-vertex model]\label{thm:harmonicity}
For every $K,R,r,\eps,n>0$, there exists $\delta_0=\delta_0(K,R,r,\eps,n)>0$ such that for every $\Delta\in I_{\mathbf {CI}}$, the following holds true. For every $\delta\in(0,\delta_0)$, every simply connected subgraph $G$ of $\delta\bbZ^2$ with $\Lambda_R \not\subset G$, every $n$ faces $z,u_2,\dots,u_{n}\in G \cap \Lambda_R$ at a distance at least 1 from $\partial G$, and a distance between $1$ and $K$ from each other, and any boundary conditions $\zeta$ on $\partial G$ satisfying $\sup\{|\zeta(u)|:u\in\partial G\}\le K$, we have
\begin{equation}
\big|\Delta_r \mathbb E_{G}^\zeta[h(z)h(u_2)\dots h(u_{n})]\big|\le \eps.\label{eq:harmonicity}
\end{equation}
 \end{theorem}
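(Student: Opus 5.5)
We transfer the approximate harmonicity of full-plane correlations, proved in \cite{DumKozLamMan26}, to the domain $G$. The idea is that $\Delta_r$ only tests the field near $z$, and the full-plane harmonicity is a local statement: averaging $h$ over a mesoscopic circle around $z$ reproduces $h(z)$ up to a field that decorrelates from everything at distance $\gg r$. The domain measure $\mathbb E^\zeta_G$ differs from the full-plane one only through conditioning on information away from $B_r(z)$. So we write
\begin{equation*}
\Delta_r \mathbb E_{G}^\zeta[h(z)h(u_2)\cdots h(u_n)]=\mathbb E_G^\zeta\Bigl[X_z\, h(u_2)\cdots h(u_n)\Bigr],\qquad X_z:=h(z)-\oint_{\partial B_r(z)}h(v)\,{\rm d}v .
\end{equation*}
The goal is to show that $X_z$ is approximately orthogonal to any function of the field outside a neighbourhood of $B_r(z)$, uniformly over the conditioning.

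The steps are as follows.

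\textbf{Step 1 (spatial Markov and a priori bounds).} Using the Markov property of the height function and the bound $|\zeta|\le K$, reduce to a bounded annular region around $z$. Condition on the configuration outside $B_{\rho}(z)$, where $r\ll\rho\le 1$. The qualitative RSW-type estimates and positive association (available since $\mathbf c\ge 1$) give uniform moment bounds. Specifically, they bound $h(u_i)$ by $O(\log(1/\delta))$ modulo centring, and they show that the conditional law of the field in $B_\rho(z)$ is controlled uniformly in the boundary condition on $\partial B_\rho(z)$, up to its harmonic-like mean. Since $h(u_2)\cdots h(u_n)$ is measurable with respect to the outside, we obtain
\begin{equation*}
\Delta_r\mathbb E_G^\zeta[\cdots]=\mathbb E_G^\zeta\bigl[\mathbb E^{\xi}_{B_\rho(z)}[X_z]\,h(u_2)\cdots h(u_n)\bigr],
\end{equation*}
where $\xi$ is the random boundary height on $\partial B_\rho(z)$.

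\textbf{Step 2 (the key estimate via reflection positivity).} Show that $\mathbb E^{\xi}_{B_\rho(z)}[X_z]$ is small in $L^2(\mathbb E_G^\zeta)$. For this, use RP with respect to reflections in lines through $z$ (lattice lines, exploiting $\mathbf a=\mathbf b$). RP gives a Cauchy--Schwarz inequality of the form
\begin{equation*}
\bigl|\mathbb E[A\,\theta B]\bigr|^2\le \mathbb E[A\,\theta A]\;\mathbb E[B\,\theta B].
\end{equation*}
This bounds the domain quantity by the corresponding quantity in a reflected, symmetric configuration, and iterating reflections (chessboard estimate) drives it toward the full-plane, flip-symmetric Gibbs measure. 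In that limit, the harmonicity of full-plane correlations, which follows from $\Delta\in I_{\mathbf{CI}}$ via convergence to the GFF, for which circle averages equal point values, shows that the variance of $\mathbb E^\xi[X_z]$ is $o(1)$ as $\delta\to0$ and then $\rho/r\to\infty$. RP also gives analyticity or monotonicity in the reflection parameter, which is useful for comparing the domain with the periodised torus.

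\textbf{Step 3 (conclusion).} Apply Cauchy--Schwarz, with Step 2 controlling the first factor and Step 1 bounding the second moment of $h(u_2)\cdots h(u_n)$ by a constant $C(K,R,n)$. Because the separations lie between $1$ and $K$, the logarithmic divergences do not appear. Choosing $\rho$ and then $\delta_0$ yields \eqref{eq:harmonicity}.

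The main obstacle is Step 2. Reflection positivity requires a reflection-symmetric domain, while $G$ and $\zeta$ are arbitrary. Making the Cauchy--Schwarz comparison uniform in the outside boundary condition, and showing that the boundary influence on $X_z$ vanishes, is where the qualitative mixing estimates must be combined with RP. I would first attempt this on boxes centred at $z$ with the boundary value $\xi$ frozen, then use monotonicity to sandwich general $\xi$.
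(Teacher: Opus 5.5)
Your skeleton is the same as the paper's: write the Laplacian as $\mathbb E_G^\zeta[X_z\,h(u_2)\cdots h(u_n)]$, replace $X_z$ by its conditional expectation given the field away from $z$, show this is small, and close with Cauchy--Schwarz. But Step~2, which carries essentially all of the difficulty, does not work as described. The Cauchy--Schwarz form of reflection positivity needs a reflection-invariant measure (full plane, torus, cylinder) and the two observables on opposite sides of the reflection line. $\mu_G^\zeta$ is not reflection positive, and for a line through $z$ the observable $X_z$ straddles the line. Chessboard estimates give only upper bounds on probabilities, not the vanishing of a signed conditional expectation. Monotonicity cannot sandwich $\mathbb E^\xi[X_z]$, because $X_z=h(z)-\oint h$ is a difference of increasing functions. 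More fundamentally, full-plane GFF convergence only says that $X_z$ decorrelates from far-away correlation-type observables. It says nothing directly about $\mu_{\delta\mathbb Z^2}[\Delta_rh\mid h_4|_A=\xi]$ for a microscopically detailed conditioning. The paper bridges this in the full plane by putting $\Delta_rh$ on one side of a line and the conditioning event $E$ on the other. Reflection positivity then gives $\mu[\Delta_rh\,\ind_E]^2\le\mu[\Delta_rh\,\Delta_rh^\dagger]\,\mu[E\cap E^\dagger]$. The first factor is $o(1)$ by GFF convergence. The RSW-based quasi-independence (Lemma~\ref{lemma:quasi_independence}) gives $\mu[E\cap E^\dagger]\le C\mu[E]^2$, which is exactly what converts this into a bound on the conditional expectation.

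The key obstruction, which your proposal does not address, is that any conditioning screening $B_r(z)$ from $\partial G$ surrounds $z$, so it never lies on one side of a line. Two ideas are missing.
\begin{itemize}
\item \emph{Analyticity under pushing.} The transfer-matrix spectral representation (horizontal, vertical, and a new diagonal one, which is where $\mathbf a=\mathbf b=1$, $\mathbf c\ge1$ is used) handles translations of separated conditioning sets. If a subfamily is translated by $\lambda v$, then $\lambda\mapsto\mu[\Delta_rh\,\ind_{\mathbf E(\mathbf z+\lambda\mathbf v)}]/\mu[\mathbf E(\mathbf z)]$ has the form $\int(1-\alpha)^{(\lambda+s)/\delta}\,d\nu(\alpha)$, with $\|\nu\|_{\mathrm{TV}}$ bounded uniformly in $\delta$. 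Hence subsequential limits are analytic in $\lambda$. Eight horizontal, vertical and diagonal pushes move four rectangles below a horizontal line, where reflection positivity applies, and smallness there propagates back to $\lambda=0$ (Theorem~\ref{thm:intermediate_statement}).
\item \emph{Strong mixing.} Quasi-independence needs separated pieces, so one conditions on four rectangles rather than a circuit. A strong-mixing argument with swap-compatible tubes (Proposition~\ref{prop:strong_mixing}) shows that conditioning on a full separating exploration is close in total variation on $\Lambda_{2r}$ to conditioning on the four flower domains only. Weak mixing (Lemma~\ref{lemma:weak_mixing}) then passes from $\mu_{\delta\mathbb Z^2}$ to $\mu_G^\zeta$.
\end{itemize}

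Finally, Step~3 is wrong as stated. $\mathbb E_G^\zeta[h(u_i)^2]$ grows like $\log(1/\delta)$ whatever the separations, so the second moment of $h(u_2)\cdots h(u_n)$ is unbounded. You must replace each $h(u_i)$ by $\widetilde h_r(u_i)$. Bounding the second moments of $\widetilde h_r(u_i)$ and $\widetilde\Delta_rh(z)$ uniformly in $\delta$ already requires flower-domain explorations and arm-event bounds (Lemmas~\ref{lemma:L2bounds} and~\ref{lem:htilde}), because the oscillation of $h$ on a fixed circle is of order $\sqrt{\log(r/\delta)}$.
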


Note that the notion of ``approximate harmonicity'' used here does not live at the scale $\delta$ of the original graph, but rather at the macroscopic scale. This differs from exact pre-harmonicity or pre-holomorphicity obtained in \cite{Ken00,Ken01,Smi10,CheSmi12}.

The approximate harmonicity result is quite robust. More generally, it can be used to establish harmonicity (in the variable $z$) of certain observables whenever one can explore an annulus around $z$ and ``discover'' a path of heights with uniformly bounded variation. In particular, the argument should extend to cylindrical and toroidal geometries, as well as to more general boundary conditions, provided that these do not substantially alter the qualitative behaviour of the model.

\subsubsection{Carath\'eodory convergence}

From now on, the results involve certain approximations (on the lattice $\delta\mathbb Z^2$) of a domain $\calD$. We therefore discuss this in more detail now.

Our aim is to approximate a simply connected domain $\calD\subset\mathbb R^2$, possibly together with  marked boundary points
$
a^{(1)},\dots,a^{(k)}\in\partial\calD$, listed in clockwise order (below, consider the case of no marked point as $k=0$). Although our results apply in a more general setting, we restrict ourselves to simply connected domains in order to avoid unnecessary technicalities.

Our proof is compatible with a wide variety of discrete approximations, but we adopt the following setup for definiteness. We consider {\em simply connected discrete domains} defined as follows. They are subgraphs $\calD_\delta$ of $\delta\mathbb Z^2$ obtained as follows: consider an {\em edge}-self-avoiding non-self-crossing polygon, which we denote $\partial\calD_\delta$. To simplify the discussion below, we consider that when the polygon turns, the turns are infinitesimally roundish. 
Then, $\calD_\delta$ is the set of medial faces enclosed by $\partial\calD_\delta$.

Let $\calD_\delta$ be simply connected discrete domains, viewed as unions of faces of $\delta\bbZ^2$, and let
$
a^{(1)}_\delta,\dots,a^{(k)}_\delta
$
be marked boundary vertices (of the graph associated with this set of faces, see the definitions above) that lie on the boundary of $\calD_\delta$, seen as a domain of $\mathbb R^2$.

We say that
$
(\calD_\delta;a^{(1)}_\delta,\dots,a^{(k)}_\delta)
$ converges to
$(\calD;a^{(1)},\dots,a^{(k)})
$
in the {\em Carath\'eodory sense} if there exist conformal bijections
\begin{equation}
f_\delta:\calD_\delta\longrightarrow\mathbb D
\qquad\text{and}\qquad
f:\calD\longrightarrow\mathbb D
\end{equation}
such that every compact subset of $\calD$ is contained in $\calD_\delta$
for all sufficiently small $\delta$, $f_\delta$ converges to $f$
uniformly on every compact subset of $\calD$, and such that
\begin{equation}
f_\delta(a^{(j)}_\delta)\longrightarrow f(a^{(j)})
\qquad\text{for every }j=1,\dots,k.
\end{equation}
Here, the conformal maps are understood to be continuously extended to the marked boundary points (which are considered as prime ends in the case of $\calD$).

\subsubsection{Convergence of the height function in piecewise even/odd domains}

Recall that Theorem~\ref{thm:harmonicity} implies that $\mathbb E_{G}^\zeta[h(z)]$ is approximately harmonic on large graphs. Consequently, any subsequential limit, along a sequence for which the functions $\mathbb E_{\calD_\delta}^{\zeta_\delta}[h_\delta(\cdot)]$ converge uniformly on compact sets, is harmonic. In particular, for suitable choices of $\calD_\delta$ and $\zeta_\delta$, if one can establish convergence of the expected height function near the boundary, then one can obtain convergence of the six-vertex height function throughout the domain. This can be achieved in certain cases. We mention one here, which is important both for the six-vertex model and for related models, such as the Ashkin-Teller model.

 Fix a chessboard colouring, in black and white, of the vertices of $\delta\mathbb Z^2$. A simply connected discrete domain $\calD_\delta$ is called {\em piecewise even/odd} if there exist medial edges  $a^{(0)}_\delta,\dots,a^{(k)}_\delta=a^{(0)}_\delta$ partitioning $\partial\calD_\delta$ into arcs $(a^{(i)}_\delta a^{(i+1)}_\delta)$, such that every face of $\mathcal D_\delta$ adjacent to the arc $(a_\delta^{(i)} a_\delta^{(i+1)})$ is black when $i$ is even and white when $i$ is odd.

We say that boundary conditions $\zeta_\delta$ on $\calD_\delta$ are {\em piecewise constant} if $\zeta_\delta$ is constant on the faces bordering each boundary arc $(a^{(i)}_\delta a^{(i+1)}_\delta)$, for $1\leq i\leq k$. We denote by $\zeta_\delta(i)$ the value of $\zeta_\delta$ on the arc $(a^{(i)}_\delta  a^{(i+1)}_\delta)$. Notice that the boundary values on two consecutive arcs differ by $\pm1$.

We are now in a position to state the following theorem, which states the convergence of $h_\delta$  in this context.

\begin{theorem}[Scaling limit in even domains]\label{thm:even}
Let $\Delta\in I_{\mathbf {CI}}$.  Let $\calD$ be a simply connected domain and let $\zeta$ be a piecewise constant boundary function whose jumps, occurring at marked boundary points $a^{(1)},\dots,a^{(k)}$, are all equal to $\pm1$. Let $\zeta(i)$ be the value on $(a^{(i)},a^{(i+1)})$.

Consider piecewise even/odd  simply connected discrete domains $(\calD_\delta,a^{(1)}_\delta,\dots,a^{(k)}_\delta)$ on $\delta\bbZ^2$ converging to
$(\calD,a^{(1)},\dots,a^{(k)})$
in the Carath\'eodory sense, and $\zeta_\delta$ piecewise constant boundary conditions $\zeta(i)$ on faces adjacent to the arcs $(a^{(i)}_\delta a^{(i+1)}_\delta)$.

Then, $h_\delta$  on $\calD_\delta$ converges, as $\delta$ tends to zero, to the GFF on $\calD$ with boundary conditions $\zeta$ and variance
\begin{align}\label{eq:sigma_expression}
		\sigma^2
        =  \frac{2}{\arccos \Delta}.
	\end{align}
	More precisely, for any $n\geq 1$, the $n$-point correlations of $h_\delta$ converge, as $\delta$ tends to zero, uniformly on compact subsets of $D_n(\calD)$ to  the $n$-point correlations of the corresponding GFF.
            Furthermore, for any family $\varphi=(\varphi_1,\ldots,\varphi_n)$ of finite Dirichlet energy generalised test functions,
            the law of $(\langle  h_\delta,\varphi_i\rangle)_i$ converges weakly to $\calN((\langle H_\zeta,\varphi_i\rangle)_i,\sigma^2\Sigma(\varphi))$ as $\delta$ tends to zero.%
\end{theorem}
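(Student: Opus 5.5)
The plan is to identify the limit of the $n$-point correlations by induction on $n$, using Theorem~\ref{thm:harmonicity} for harmonicity in each variable and a separate argument for the boundary behaviour.

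\textbf{Tightness.} First I will show that the functions $(u_1,\dots,u_n)\mapsto\mathbb E^{\zeta_\delta}_{\calD_\delta}[h(u_1)\cdots h(u_n)]$ are uniformly bounded and equicontinuous on compact subsets of $D_n(\calD)$.
\begin{itemize}
\item Boundedness should follow from moment bounds on $h$ at macroscopic distance from the boundary. These come from the RSW-type crossing estimates, using $\mathbf a=\mathbf b$ and positive association for $\mathbf c\ge 1$, which give logarithmic variance.
\item Equicontinuity would follow from the approximate mean-value property \eqref{eq:harmonicity} applied at all scales $r$. A function whose circle averages are $\eps$-close to its values for every $r$ in a range is close to harmonic, hence Lipschitz on compacts up to $O(\eps)$.
\end{itemize}
By Arzel\`a--Ascoli, every sequence then has a subsequence converging locally uniformly. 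Passing to the limit in \eqref{eq:harmonicity} shows that any subsequential limit $\Phi_n$ is harmonic in each variable off the diagonals.

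\textbf{Boundary behaviour.} This is the main obstacle. I must show that when $u_1$ approaches a boundary point in the interior of the arc $(a^{(i)},a^{(i+1)})$, we have $\Phi_n(u_1,\dots,u_n)-\zeta(i)\,\Phi_{n-1}(u_2,\dots,u_n)\to 0$.
\begin{itemize}
\item The even/odd structure is what makes this work. On the arc, all adjacent faces have the same colour, so the boundary faces have height exactly $\zeta(i)$, with no parity ambiguity.
\item I would use positive association in the spin representation together with RSW. Near a boundary point, with probability tending to one, there is a circuit, or half-circuit anchored on the arc, in small annuli around $u_1$ on which $h$ equals $\zeta(i)$ up to a bounded oscillation. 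Conditioning on this circuit via the spatial Markov property, $h(u_1)$ is then controlled by the boundary value. The same exploration yields a decorrelation $\mathbb E[(h(u_1)-\zeta(i))\prod_{j\ge2}h(u_j)]\to0$, and Cauchy--Schwarz with the moment bounds handles the remaining error terms.
\item At the marked points $a^{(j)}$ I only need uniform boundedness. There $\zeta$ jumps by $\pm1$, so bounded harmonic functions are still determined by the boundary values away from a finite set.
\end{itemize}
The Carath\'eodory convergence ensures these local estimates transfer uniformly, after using the conformal maps $f_\delta$ to locate boundary arcs.

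\textbf{Identification.}
\begin{itemize}
\item For $n=1$, $\Phi_1$ is bounded harmonic with boundary values $\zeta$, so $\Phi_1=H_\zeta$.
\item For $n\ge2$, consider $\Phi_n$ minus the corresponding correlation of $\Gamma^{\calD,\zeta,\sigma^2}$. By induction this difference vanishes at the boundary in $u_1$, but it may have a logarithmic singularity as $u_1\to u_j$. The singular part is removed by comparing with the full-plane correlations: near the diagonal, the local behaviour of the domain correlations matches the full-plane GFF with variance $\sigma^2=2/\arccos\Delta$, since $\Delta\in I_{\mathbf{CI}}$. The coefficient is therefore $\sigma^2 G_{\calD}$ up to a harmonic correction.
\item This matching is obtained by splitting $h$ in a small ball around $u_j$ into the full-plane-like fluctuation plus a harmonic boundary term, again via RP and Cauchy--Schwarz as in the proof of Theorem~\ref{thm:harmonicity}.
\item The difference is then harmonic with a removable singularity and zero boundary values, hence it vanishes. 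Uniqueness of the limit gives convergence of the full sequence, and the limit is Wick's formula \eqref{eq:GFF_multi-point_cor_function} with $G_\calD$ scaled by $\sigma^2$ and means $H_\zeta$.
\end{itemize}

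\textbf{Test functions.} Finally, for the weak convergence of $\langle h_\delta,\varphi_i\rangle$ I integrate the correlations against the test functions. I need uniform integrability near the diagonals and near the boundary, using the logarithmic upper bound on correlations and the finite Dirichlet energy of $\varphi$. The Gaussian moments determine the law, so convergence of moments gives weak convergence.
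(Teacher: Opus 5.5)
\textbf{Main gap: the boundary step.} You rightly flag this step as the main obstacle, but the mechanism you propose does not work.

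First, a half-circuit near $u_1$ on which $h$ equals $\zeta(i)$ only up to bounded oscillation gives $\Phi_n-\zeta(i)\Phi_{n-1}=O(1)$, not $o(1)$. What is needed (Proposition~\ref{prop:boundary_values_of_h}) is a circuit of the percolation $\omega$ in $\Lambda_r(u_1)$, attached to the arc, of height \emph{exactly} $\zeta(i)$. Given this circuit and everything outside it, the inner law is a constant-boundary measure $\mu^{\pm}$, which is symmetric under $h\mapsto 2\zeta(i)-h$. So the conditional expectation of $h(u_1)$ is exactly $\zeta(i)$, and the failure event, of probability at most $C(d/r)^c$, is absorbed by Cauchy--Schwarz.

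Second, and more seriously, producing such a circuit requires FKG and RSW for $\omega$ all the way to the boundary, and this is precisely what is unavailable here. Piecewise constant boundary conditions on a piecewise even/odd domain prescribe heights on both even and odd faces, under which FKG for $(\omega^+,1-\omega^-)$ can fail. Moreover, an odd arc cannot be encoded as an $\omega$-path, because $\omega^*$-paths do not neutralise the $\mathbf c$-weights of boundary vertices. Positive association of $h$ or $|h|$ does not by itself produce exact-height $\omega$-circuits attached to the arc.

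The missing idea has three parts.
\begin{enumerate}
\item Lemma~\ref{lem:constant_by_parts_bc_decomposition}: free the odd boundary heights and average over formal even heights outside each odd boundary edge, so that the total boundary weight no longer depends on the inner configuration. This writes the relaxed measure as a mixture of measures $\mu^{\zeta(h^\even_\partial)}_{\calD}$ with $\omega$-wired even arcs and $\omega^*$-wired odd arcs, each of which enjoys FKG and RSW.
\item $\mu^\zeta_\calD$ is then recovered by conditioning on the event $\mathcal E$ that arcs of different heights are not connected. The reweighting density is $2^{-m}\in[2^{-k},1]$, coming from the uniform spins of the boundary $\omega^*$-clusters (Proposition~\ref{prop:odd_spins_uniform}).
\item One needs $\mu^{\zeta(h^\even_\partial)}_\calD[\mathcal E]\ge c(\eta)$ (Lemma~\ref{lem:diff_height_arcs_dont_connect}). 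This is proved by RSW crossings of conformal rectangles joining equal-height even arcs along a maximal planar partition, and it is where the conformal separation of the marked points is used.
\end{enumerate}

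\textbf{Further points.} Your boundedness argument fails as stated. The second moment $\mu^{\zeta}_{\calD_\delta}[h(u)^2]$ grows like $\log(1/\delta)$, so pointwise moment bounds and H\"older's inequality cannot bound $\mu^\zeta_{\calD_\delta}[h(u_1)\cdots h(u_n)]$ uniformly in $\delta$. You must first replace each $h(u_i)$ by its conditional expectation $\widetilde h_r(u_i)$ given $h_4$ on $\partial\Lambda_{2r}(u_i)$. Products of these have bounded second moments, by flower-domain explorations and the arm-event bound (Lemma~\ref{lem:htilde}).

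Granting that, your equicontinuity argument is a legitimate alternative to the paper's RSW-circuit H\"older estimate (Proposition~\ref{prop:correlations_holder}). It averages the approximate mean-value property over a continuum of radii, so it needs Theorem~\ref{thm:harmonicity} to hold uniformly for $r$ in compact intervals; finitely many radii would not suffice.

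Finally, the near-diagonal matching with the full-plane field is not a reflection-positivity statement. The paper (Proposition~\ref{prop:two_point_merging}) explores the outermost $\omega^+$-circuit surrounding both points, inside which the domain and full-plane conditional laws agree. This leaves an $O(M^2)$ error plus a difference of expectations of a Lipschitz functional of the explored domain, which RSW controls.
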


For the six-vertex model on a piecewise even/odd domain $\calD_\delta$, every vertex of $\partial\calD_\delta$ which is not one of the $a_\delta^{(i)}$ has degree two. Such vertices therefore carry no non-trivial Boltzmann weight. Even domains with piecewise constant boundary conditions arise naturally when exploring the so-called percolation representation of the six-vertex model, which we introduce in Section~\ref{sec:review_percolation_representation}.
It is also natural to ask what happens when one explores level lines of the six-vertex model, or interfaces arising from FK percolation. In these settings, the boundary conditions are constant, but non-trivial weights may be associated with the degree-two boundary vertices.

\subsection{Discussion}

\subsubsection{Generalisations of our result} One may wonder why we restrict ourselves to piecewise even/odd simply connected discrete  domains and to the associated piecewise boundary conditions. Several generalisations are conceivable.

Simple connectedness is not essential per se. The main point that would need to be adapted is the notion of Carath\'eodory convergence, which can be extended to this setting without substantial difficulty. One can also extend the harmonicity result and the GFF convergence to the cylinder, see \cite{DumKozLamMan26b} for details. A generalisation to the torus also appears within reach, but it raises the problem of estimating the leading eigenvalues of the first blocks of the so-called transfer matrix and therefore relies on delicate results from the Bethe Ansatz \cite{DumKozKraManTik22}.

The case $\mathbf a\ne \mathbf b$ is probably reachable under the constraint that $\mathbf a,\mathbf b<\mathbf c$.  In this case, one still benefits from an FKG inequality for the absolute value of the height function, which is a core ingredient in our proof. The additional difficulty compared to the isotropic case $\mathbf a=\mathbf b=1$ comes from the fact that we do not currently know the relevant RSW estimates. Still, these estimates seem within reach. Assuming such qualitative estimates, one can derive the conformal invariance once the lattice $\mathbb Z^2$ is embedded in such a way that each square becomes a rhombus with a given angle $\alpha$ depending on $\mathbf a,\mathbf b,\mathbf c$; see \cite{DumKozLamMan26} for more details.

As for the boundary conditions, Theorem~\ref{thm:harmonicity} shows that multi-point correlations are harmonic in each coordinate, provided that the boundary conditions remain bounded. The main difficulty in deriving a convergence result for general such boundary conditions is twofold. 

First, one needs to understand the behaviour of the mean height function, that is, the one-point correlation. This can be achieved for certain specific classes of boundary conditions. For instance, one may allow several of the points $a_\delta^{(i)}$ to converge to the same point in the scaling limit, while keeping $k$ fixed. Another possibility is to consider domains whose boundaries consist of horizontal and vertical segments, with boundary conditions alternating between two values along boundary arcs of macroscopic length.

Second, one would need to estimate the boundary conditions of multi-point correlations, i.e.~the behaviour of the correlations when one of the points approaches the boundary.  In order to prove this result, one should extend the boundary mixing result Proposition~\ref{prop:boundary_values_of_h}. While this can probably be done, this requires a new argument.

Notice that the preceding two points are independent. Even when the limit of the expected height cannot be identified, once boundary mixing has been established, one may still centre the height function $h_\delta$ by subtracting its expectation, namely by setting $
\widetilde h_\delta(x)
=
h_\delta(x)-\mathbb E^\zeta_{\calD_\delta}[h_\delta(x)].
$
The centred height function $\widetilde h_\delta$ then has the GFF in $\calD$ with zero boundary conditions as its scaling limit.

\subsubsection{Conformal invariance of other lattice models}

The conformal invariance of the six-vertex height function established above has direct applications to several other lattice models.

First, through the Baxter-Kelland-Wu (BKW) correspondence \cite{BKW76} and the coupling between $\mathrm{CLE}(4)$ and the GFF, our result implies conformal invariance of the critical random-cluster model with cluster weight $q=4$. Previous results for this model include the derivation of its critical exponents \cite{Che26a}, as well as the computation of certain full-plane correlation functions \cite{CheXia26}. Combined with the Edwards-Sokal coupling, this further yields conformal invariance of the critical four-state Potts model on the square lattice. The details of these two conformal invariance results are given in \cite{AveDumHeLamMan26}.

The BKW correspondence extends more generally to random-cluster models with cluster weight $q<4$. In this setting, however, the boundary behaviour induced by conditioning on a large loop of the random-cluster model are considerably more involved. Identifying the scaling limit of the corresponding field therefore remains a substantial challenge. Nevertheless, the results of the present paper constitute a substantial step toward a better understanding of the scaling limit of critical random-cluster models with $1\le q<4$; see \cite{Dum17a} for more details.

The critical four-state Potts model on the square lattice is itself a particular instance of the Ashkin-Teller model. In \cite{DumLisQia25,DumLisQia21}, the scaling limit of the double Ising model, another special case of the Ashkin-Teller model, was obtained through its coupling with the dimer height function, or equivalently with the height function of the six-vertex model at spectral parameter $\Delta=0$. Our convergence result for six-vertex models with $\Delta\in[-1,-1/2]\cup[-\Delta_0,\Delta_0]$ provides a natural route toward establishing conformal invariance of the Ashkin--Teller model over a broader range of parameters. We will pursue this direction in a forthcoming work \cite{AveDumLamMan26}.

 \section{Proof overview}

We start by explaining the main steps and ingredients in the proof of Theorem~\ref{thm:harmonicity}.

\begin{figure}
\begin{center}
\includegraphics[width = 0.4\textwidth]{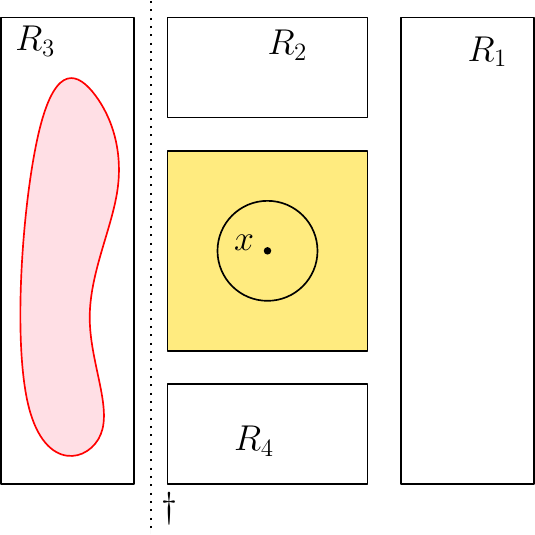}
\caption{The four rectangles $R_1, \dots, R_4$ around $\Lambda_{2r}$. 
The conditioning in Lemma~\ref{lem:R_1} is on $h_4$ on some set as the red one, contained in $R_1$. In Theorem~\ref{thm:intermediate_statement}, we similarly condition on $h_4$ in four connected sets, each contained in one rectangle $R_i$. }
\label{fig:4rectangles2}
\end{center}
\end{figure}

Recall the definition of the local observable $\Delta_r h(x)$. 
We fix henceforth $x$ and $r$. For simplicity we take $x = 0$.
%

A central point will be to condition the full plane measure $\mu_{\delta \bbZ^2}$ on realisations of the gradient in certain regions around $x$, 
and to prove harmonicity under this conditional measure. 
For precision, we will condition on the height function mod $4$; 
write $h_4:F(\delta\Z^2)\to\Z/4\Z,\,f\mapsto h(f)+4\Z$ for the height function modulo four.
Observe that, if $A$ is a connected set of faces, the information of $h_4$ on $A$ is equivalent to the gradient of $h$ on this set, together with a single bit of information that determines the rest mod $4$ of $h$ on a single face (which itself is odd or even). 

For a connected  set $A$, write $h_4|_A$ for the restriction of $h_4$ to $A$ and
\begin{align}
\osc ( h_4|_A)  = \max_{z \in A} h(z) - \min_{z \in A} h(z). 
\end{align}
Note that, since $h_4|_A$ determines the gradient of $h$ on $A$, both its maximum and minimum are defined up to common additive constant. 

\subsection{Harmonicity after conditioning}
\label{subsec:harmonicity_after_conditioning}

%
%

Let us start by a simplified setting which is already very instructive. Consider a  reflection line $\dagger$ across a vertical line which goes through centres of faces. 
\begin{lemma}\label{lem:R_1}
	For any $\eta > 0$, $M\geq 1$ and $\eps > 0$, there exists $\delta_0=\delta_0(\eta,M,\eps)>0$ such that,
	for any $\delta <\delta_0$, any $x$ at distance at least $\eta$ on the left of a vertical line which goes through centres of faces, and any connected set of faces $A$ of diameter at least $r$ 
	at a distance at least $\eta$ on the right of this vertical line, and any realisation $\zeta$ of $h_4$ on $A$ with $\osc (\zeta) \leq M$,
	\begin{align}
	\mu_{\delta \bbZ^2} \big[ \Delta_r h(x) \,\big| h_4|_{A} = \zeta\,\big] \leq \epsilon.
	\end{align}
\end{lemma}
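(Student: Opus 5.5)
Since $\Delta_r h(x)$ is invariant under adding a global constant, the quantity $\mu_{\delta\bbZ^2}[\Delta_r h(x)\mid h_4|_A=\zeta]$ is well defined. The plan is to compare this conditional expectation with the unconditioned one, which is essentially zero, using reflection positivity across the vertical line $\dagger$. Write $\mathcal F_L$ and $\mathcal F_R$ for the $\sigma$-algebras generated by the gradient (together with $h_4$) on the faces to the left, respectively right, of $\dagger$. Let $\theta$ be the reflection across $\dagger$. Because $\dagger$ passes through centres of faces and the model is isotropic ($\mathbf a=\mathbf b=1$), the measure $\mu_{\delta\bbZ^2}$ is reflection positive with respect to $\theta$. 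Concretely, for $\mathcal F_L$-measurable $F$ we have $\mu[F\cdot\theta F]\ge0$, and the associated Cauchy--Schwarz inequality reads $|\mu[F\cdot\theta G]|^2\le\mu[F\theta F]\,\mu[G\theta G]$.

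The key steps, in order, are as follows.

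\emph{Step 1: rewrite the target.} Set $X:=\Delta_r h(x)$, which is $\mathcal F_L$-measurable up to an error that is negligible once $x$ is at distance $\eta$ from $\dagger$ and $r<\eta$. If $r\ge\eta$, I would instead shift the reflection line, or first move $x$ using the translation invariance of the full-plane measure. Let $Y:=\ind\{h_4|_A=\zeta\}$. Then $Y=\theta Y'$ with $Y'=\ind\{h_4|_{\theta A}=\theta\zeta\}$, which is $\mathcal F_L$-measurable. The target becomes $\mu[X\,\theta Y']/\mu[Y]$, and we must show this ratio is at most $\epsilon$.

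\emph{Step 2: analyticity/spectral decomposition.} Reflection positivity yields a transfer operator $T$ along the horizontal direction, which is positive and self-adjoint. In this language, $\mu[F\,\theta G]=\langle F,T^{2d}G\rangle$, where $d$ is the distance to $\dagger$. I would decompose the conditional correlation as the sum of the product term $\mu[X]\mu[Y]$ and a remainder. The product term vanishes: $\mu[X]=0$ by translation invariance, and more robustly because full-plane harmonicity is part of the hypothesis $\Delta\in I_{\mathbf{CI}}$. The remainder must be controlled by Cauchy--Schwarz in the form
\[
\bigl|\mu[X\,\theta Y']\bigr|\le \mu[X\,\theta X]^{1/2}\,\mu[Y'\,\theta Y']^{1/2},
\]
which is useful only if $\mu[Y'\theta Y']$ is comparable to $\mu[Y]^2$.

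\emph{Step 3: the two factors.} Since $\mu[Y'\theta Y']$ and $\mu[Y]^2$ are not comparable in general, a direct comparison will not work. I would instead condition on an intermediate event. By RSW and the FKG inequality for $|h|$, with probability close to one there exists, in the strip between $x$ and $A$, a connected set $\gamma$ crossing vertically on which $h$ has oscillation bounded by some $M'(M,\eta,\epsilon)$. Summing over the possible values of $h_4$ on such a $\gamma$, and using the Markov property, reduces the conditioning to a bounded number of ``types'' of boundary data. For the quantity $\mu[X\,\theta X]$, the full-plane GFF convergence gives that the correlation of $X$ with $\theta X$ tends to the GFF value. That value is zero, because $\Delta_r$ of a harmonic function (the Green function, at distance $>r$) vanishes. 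Hence $\mu[X\theta X]\to0$ as $\delta\to0$, uniformly once $\eta$ and $r$ are fixed, provided we have the uniform integrability of $X^2$ that comes from the moment bounds implicit in GFF convergence.

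\emph{Step 4: lower bound on the denominator.} It remains to show that $\mu[Y'\theta Y']\le C(\eta,M)\,\mu[Y]^2$, or at least that it is of the same order. I expect this to be the main obstacle, since it is a quasi-multiplicativity statement requiring that the events on the two sides ``decorrelate'' across a distance $2\eta$ while the oscillation stays bounded. My first attempt would use FKG together with RSW crossings of bounded oscillation in the strip around $\dagger$ to glue configurations, giving $\mu[Y'\theta Y']\asymp\mu[Y']\mu[Y]=\mu[Y]^2$ with constants depending only on $M$ and $\eta$. Combining this with Step 3, the bound $|\mu[X\theta Y']|/\mu[Y]\le C\,\mu[X\theta X]^{1/2}$ can be made smaller than $\epsilon$ by taking $\delta_0$ small.
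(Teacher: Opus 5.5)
Your proposal is correct and matches the paper's argument. The final chain is the same: the reflection-positivity Cauchy--Schwarz bound $|\mu[X\theta Y']|^2\le\mu[X\theta X]\,\mu[Y'\theta Y']$, then $\mu[X\theta X]=o(1)$ from full-plane GFF convergence (the limit vanishes by the mean-value property of the Green function), then $\mu[Y'\theta Y']\le C(\eta,M)\,\mu[Y]^2$. In the paper this last bound is the quasi-independence Lemma~\ref{lemma:quasi_independence}, proved with double $\omega^+$-circuits under $h_4$-conditioning, control of the random height offsets via Corollary~\ref{cor:heigh_diff_annuli}, and partition-function swapping in annuli, rather than by a direct FKG gluing, since FKG for $\omega^+$ can fail under $h_4$-conditioning.

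Two caveats. The spectral decomposition in Step~2 and the intermediate conditioning in Step~3 are unnecessary. More importantly, your appeal to uniform integrability of $X^2$ is wrong: $\mu[X^2]$ grows like $\log(1/\delta)$, which is why the paper passes to $\widetilde\Delta_r h$ elsewhere. What you actually need is convergence of the cross-covariance of macroscopically separated increments, which the paper takes from the full-plane results of \cite{DumKozLamMan26}.
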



\begin{proof}[Proof sketch for Lemma~\ref{lem:R_1}] Let $A\subset F(\Z^2)$ denote a finite connected set of faces to the left of the
reflection line, and let $\zeta$ be a realisation of $h_4$ on $A$.
We assume that the event $\{h_4|_{A}=\zeta\}$ has positive probability.
Fix $x\in\R^2$ and $r\geq 1$ such that the faces determining $\Delta_rh(x)$ lie to the
right of the reflection line.
See Figure~\ref{fig:4rectangles2} for an illustration; the reflection line is dotted. 

Reflection positivity (see~\cite[Theorems~2.2 and~4.15]{DumKozLamMan26}
or Proposition~\ref{lemma:spectral_representation} below)
yields:
\begin{multline}
    \textstyle\mu_{\Z^2}\left[
        \ind_{h_4|_A=\zeta}
        \cdot
        \Delta_rh(x)
    \right]^2
    \textstyle
    \leq
    \mu_{\Z^2}\left[
        \{h_4|_A=\zeta\}\cap\{h_4|_{A^\dagger}=\zeta^\dagger\}
    \right]\cdot
    \mu_{\Z^2}\left[
        \Delta_rh^\dagger(x)
        \Delta_rh(x)
    \right],\label{eq:ref_pos1}
\end{multline}
In terms of conditional expectations, this may be rewritten as
\begin{multline}
    {\textstyle
        \mu_{\Z^2}\left[
            \Delta_rh(x)
            \middle|
            \{h_4|_A=\zeta\}
    \right]^2}
    \leq
    \frac{
        \mu_{\Z^2}[
            \{h_4|_A=\zeta\}\cap\{h_4|_{A^\dagger}=\zeta^\dagger\}
        ]
    }{
        \mu_{\Z^2}[
        \{h_4|_A=\zeta\}]^2
    }
    \cdot
    \mu_{\Z^2}[
        \Delta_rh^\dagger(x)
        \Delta_rh(x)
  ].
\end{multline}
The last factor is of order $o(1)$ as $r$ tends to infinity by the full-plane Gaussian Free Field convergence of \cite{DumKozLamMan26}.
Thus, once we can uniformly bound the ratio of probabilities, it follows that
the \emph{conditional expectation} of the random variable tends to zero as $\delta$ tends to zero.
The uniform bound on the ratio is called \emph{quasi-independence}, and is derived in Lemma~\ref{lemma:quasi_independence}. 
\end{proof}
The proof of Lemma~\ref{lemma:quasi_independence} is based on Russo--Seymour--Welsh-type percolation arguments,
which also play an important role in other parts of the proof.
The quasi-independence statement is one of the
technical ingredients
in the proof of Theorem~\ref{thm:harmonicity}.

Lemma~\ref{lem:R_1} provides some form of asymptotic harmonicity
of the height expectation in the full-plane measure $\mu_{\Z^2}$ conditioned
on a ``boundary-conditioning-type'' event.
The main difficulty in the proof of Theorem~\ref{thm:harmonicity} comes down to
extending from boundary conditions \emph{on one side of a reflection plane}
to boundary conditions \emph{on a full circuit enclosing $\partial B_r(x)$}.
This will require another argument, which is outlined in the next section.

\subsection{Harmonicity via analyticity}

Certain regions of the plane will be of particular importance, such as the rectangles $R_1,\dots, R_4$ of Figure~\ref{fig:4rectangles2}.
These depend on a parameter $\eta > 0$ that will be fixed later.  
Formally, they are
\begin{align}
R_1 &:= [2r + \eta, 4r - \eta] \times [-4r,4r],\\
R_2 &:= [-2r +\eta , 2r - \eta] \times [2r + \eta,4r] \\
R_3 &:= [ -4r + \eta, -2r - \eta] \times [-4r,4r],\\
R_4 &:= [-2r +\eta , 2r - \eta] \times [-4r,-2r - \eta]. 
\end{align}
\im{Maybe call these $R_N$, $R_S$, $R_E$ and $R_W$. }

As mentioned above, in the proof Theorem~\ref{thm:harmonicity} we would rather like to condition on $h_4$ 
in all rectangles $R_1,\dots, R_4$, rather than in just one of them. 
The equivalent result is listed below. 

\begin{theorem}[Harmonicity under conditioning]\label{thm:intermediate_statement}
	For any $\eta > 0$, $M\geq 1$, $0<r<1/2$, and $\eps > 0$ there exists $\delta_0 = \delta_0(\eta,M,r,\eps)> 0$ such that,
	for any $\delta <\delta_0$, any connected sets of faces $A_i\subset R_i$ for $i = 1,\dots, 4$, each of diameter at least $r$,  
	and any realisations $\zeta_i$ of $h_4$ on $A_i$ with $\osc (\zeta_i) \leq M$ for each $i$, we have
	\begin{align}\label{eq:h1}
	\mu_{\delta \bbZ^2} \big[ \Delta_r h(x) \,\big|\, \{ h_4|_{A_1} = \zeta_1\}\cap\dots\cap\{ h_4|_{A_4} = \zeta_4\} \big] \leq \epsilon.
	\end{align}
\end{theorem}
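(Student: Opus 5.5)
The plan is to reduce Theorem~\ref{thm:intermediate_statement} to the one-sided statement of Lemma~\ref{lem:R_1} by means of analyticity, as the section title suggests. Write $E_i=\{h_4|_{A_i}=\zeta_i\}$ and $E=E_1\cap\dots\cap E_4$. The main difficulty is that $E$ is not measurable with respect to one side of a single reflection line, so the Cauchy--Schwarz inequality \eqref{eq:ref_pos1} cannot be applied directly. I would therefore introduce a one-parameter family of observables and use the spectral (transfer-matrix) representation of Proposition~\ref{lemma:spectral_representation}.

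Slide the observation point horizontally and set
\[
\Phi(t):=\mu_{\delta\bbZ^2}\bigl[\ind_{E_2\cap E_3\cap E_4}\,\Delta_r h(x+t)\,\ind_{E_1}\bigr].
\]
Through the reflection line between $R_1$ and the disc, this is a transfer-matrix matrix element $\langle \psi_{\mathrm{left}}, T^{t}\,\psi_{\mathrm{right}}\rangle$, where $T$ is the horizontal transfer operator, which is self-adjoint and positive by reflection positivity. It follows that $t\mapsto\Phi(t)$ extends to a function of the form $\int \lambda^{t}\,d\nu(\lambda)$ with $\nu$ a signed spectral measure, so it is real-analytic (in fact a Laplace transform) in $t$. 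The same construction works in each of the four directions.

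The key step is to obtain smallness of $\Phi$ on a set of $t$ large enough to force smallness at $t=0$. When $x+t$ is shifted so that the disc lies strictly between the reflection line and $R_1$, or beyond it, only one conditioning region is on the far side. Lemma~\ref{lem:R_1}, combined with quasi-independence (Lemma~\ref{lemma:quasi_independence}) to handle the remaining three regions, then gives $|\Phi(t)|\le\epsilon'\,\mu[E]$ there. One must be careful here: the remaining regions lie on the same side as the point, so I would iterate the argument, applying \eqref{eq:ref_pos1} successively across each of the four lines and using quasi-independence to bound the ratios $\mu[E\cap E^\dagger]/\mu[E]^2$ by a constant $C(M,\eta)$. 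Oscillation at most $M$, together with RSW, is what makes these constants uniform.

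Analytic continuation from these $t$ back to $t=0$ is then a quantitative uniqueness statement for Laplace transforms. If $F(t)=\int\lambda^t\,d\nu$ has total variation bounded by $C\mu[E]$ (from the a priori bound on $\Delta_r h$ coming from full-plane GFF moment bounds and quasi-independence) and is small on an interval, then a two-constants or Hadamard three-lines argument in the complex strip bounds $|F(0)|$ by $(C\mu[E])^{1-\theta}(\epsilon'\mu[E])^{\theta}$. Dividing by $\mu[E]$ and sending $\epsilon'\to0$ yields \eqref{eq:h1}.

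I expect the main obstacle to be this quantitative continuation step. It requires positivity of the spectral measure, or at least control of its total mass, so I would aim to write $\Phi$ as a difference of terms of the form $\langle\psi,T^t\psi\rangle$, using polarisation. It also requires an a priori $L^2$ bound on $\Delta_r h$ under the conditioning, which itself needs the quasi-independence and RSW input.
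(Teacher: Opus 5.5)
Your ingredients are the right ones: the spectral representation, reflection positivity in a fully separated configuration, quasi-independence and analytic continuation. The one-parameter family you build, however, does not work.

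\textbf{The family $\Phi(t)$ is not a transfer-matrix element.} Proposition~\ref{lemma:spectral_representation} requires the form $\mu_{\delta\bbZ^2}[\bfX\,\tau_{nv_\star}(\bfY)]$ with $\bfX\in\frakA^-_\star$ and $\bfY\in\frakA^+_\star$. In your $\Phi(t)$ the only object that moves is $\Delta_rh(x+t)$, but it lies on the same side of your vertical line as the fixed indicator $\ind_{E_2\cap E_3\cap E_4}$. Indeed $R_2,R_4$ flank the disc and $R_3$ lies to its left, so no line perpendicular to the sliding direction separates the moving part from the fixed part. You therefore get no representation $\int\lambda^t\,d\nu(\lambda)$ and no analyticity in $t$.

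Even ignoring this, sliding the disc never reaches the configurations where smallness is actually available (all four rectangles on one side of a line, $\Lambda_{2r}$ on the other). The disc would have to pass through $R_1$, hence possibly through $A_1$, and there any such representation breaks down.

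\textbf{The intermediate smallness claim is unfounded.} With $R_1$ across the line and $R_2,R_3,R_4$ on the disc's side, applying \eqref{eq:ref_pos1} with $\bfX=\Delta_rh\,\ind_{E_2\cap E_3\cap E_4}$ requires
$\mu_{\delta\bbZ^2}[\Delta_rh\,\Delta_rh^{\dagger}\mid E_2\cap E_3\cap E_4\cap(E_2\cap E_3\cap E_4)^{\dagger}]$ to be small. That is a conditioned-harmonicity statement of the same kind as the theorem itself; the paper points out exactly this obstruction right after the statement. Quasi-independence cannot close it, since it compares probabilities of events and says nothing about conditional means of $\Delta_rh$. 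Iterating \eqref{eq:ref_pos1} across further lines only multiplies the same-side regions by reflection.

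Nor can you push a region to infinity and induct on the number of regions. Decay as the push parameter tends to $+\infty$ is not transported back to $0$ by analyticity (think of $e^{-z}$). Smallness must hold on an interval of finite push parameters.

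\textbf{What is missing.} You should translate conditioning regions rather than the observable, and do so in several steps.
\begin{itemize}
\item Take a line perpendicular to some $v_\star$, with a set $I$ of rectangles on its positive side and the disc plus the remaining rectangles on its negative side. Set $\bfX=\widetilde\Delta_rh(0)\ind_{\mathbf E_-}$ and $\bfY=\ind_{\mathbf E_+}$, and push only $\mathbf E_+$ by $\lambda v_\star$. This is a genuine transfer-matrix element.
\item Its spectral mass is at most $C\mu_{\delta\bbZ^2}[\mathbf E]$ by quasi-independence and Lemma~\ref{lemma:L2bounds}. This bound must go through $\widetilde\Delta_rh$, not through GFF moment bounds on $\Delta_rh$, whose second moment grows like $\log(r/\delta)$.
\item Compose finitely many such pushes (Lemma~\ref{lem:finite-push-geometry}) until all four translated rectangles lie below a horizontal line with $\Lambda_{2r}(0)$ above it. Diagonal pushes are needed: horizontal and vertical pushes alone get stuck because $R_1$ and $R_3$ span the full height of $\Lambda_{2r}(0)$. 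This is why the paper constructs a diagonal transfer matrix.
\item In the final configuration, Proposition~\ref{prop:rp-separated-smallness} gives smallness for an open set of push parameters, uniformly in the realisations. Proposition~\ref{prop:one-step-pushing} then propagates it back one push at a time.
\end{itemize}
Your two-constants/three-lines estimate would be a perfectly good quantitative substitute for the paper's Montel-and-contradiction argument in each such step, but only once this family of pushes is in place.
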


Observe that the same reasoning as in Lemma~\ref{lem:R_1} may not be applied here. 
Indeed, when trying to separated $R_1$ from $x$ and $R_2,\dots, R_4$ by a vertical line to apply \eqref{eq:ref_pos1},
the second observable multiplied by its conjugate become more complicated than the one we wish to bound.
Therefore, a second crucial idea from \cite{DumKozLamMan26} needs to be used: the analyticity of certain observable with respect to horizontal, vertical and diagonal pushes. 

\hide{
\begin{theorem}[Harmonicity after exploration]
    \label{thm:intermediate_statement}
    Assume GFF convergence for our choice of six-vertex parameters $\a=\b=1$ and $\c\in[1,2]$.

    We may find a family of constants $\cinter(M,\eta,r)>0$
    indexed by $(M,\eta,r)\in\Z_{\geq 1}\times(0,1/100)\times\R_{\geq 1}$
    such that $\lim_{r\to\infty}\cinter(M,\eta,r)=0$ for every fixed $(M,\eta)$,
    and
    such that the following statement holds.
    Fix $(M,\eta,r)$ with $r\geq 5000M/\eta$, and fix $x\in\R^2$.
    Let $\gamma_1,\ldots,\gamma_4\subset F(\Z^2)$ denote four connected
    sets, with $\gamma_i\subset F(R_i)$ for the respective rectangles $R_i$ in
    Figure~\ref{fig:4rectangles2}, and which each have a diameter of at least $r$.
    Let $\zeta_i:\gamma_i\to\Z$ denote four suitable height functions with $\osc(\zeta_i)\leq M$,
    such that the joint conditioning event below has positive probability.
    Then
    \begin{equation}\label{eq:intermediate_statement}
        \textstyle
        \left|\mu_{\Z^2}\left[
            \Delta_rh(x)
            \middle|
            \bigcap_{i=1}^4\{h_4|_{\gamma_i}=\zeta_i\}
        \right]\right|
        \leq \cinter(M,\eta,r).
    \end{equation}
    Here $h_4|_{\gamma_i}=\zeta_i$ means that the functions are equal modulo four.
\end{theorem}
}

\begin{proof}[Proof sketch for Theorem~\ref{thm:intermediate_statement}]
    Fix $M,\eta,r,\eps$ as in the statement. 
    For $i\in\{1,2,3,4\}$, set
	$ E_i :=\{ h_4|_{A_i} = \zeta_i \}$
	and for $z\in\mathbb R^2$, let $E_i(z)$ denote the translation of $E_i$ by $z$.  
  For fixed $r$,   we can shift the connected set $A_1$ to the right by $2\delta\lfloor ar/\delta\rfloor$ for some $a>0$,
    and analyse the behaviour of the functions
    \begin{equation}
        f_\delta(a):=
        \frac{
            \mu_{\delta\bbZ^2}\left[
                \Delta_r h(x)
                \ind_{E_1((2\delta\lfloor ar/\delta\rfloor,0))
               \cap E_2\cap E_3\cap E_4 }           \right]
        }
        {
            \mu_{\delta\bbZ^2}\left[
               E_1
            \right]\cdots  \mu_{\delta\bbZ^2}\left[
               E_4
            \right]
        }.
    \end{equation}
    It turns out that the transfer matrix formalism (Proposition~\ref{lemma:spectral_representation})
    yields a particular (spectral) representation of the function $f_\delta$.
   Among other things, any subsequential limit $f(a)$ of $f_\delta(a)$ as $\delta$ tends to zero is
    analytic in $a>0$.
    In particular, if $f(a)$ is zero for large enough $a$,
   it is zero for every $a>0$. It therefore suffices to prove that $f_\delta(a)$ tends to zero as $\delta$ tends to zero when $a$ is very large.

    The same procedure can be performed repeatedly. Combining \emph{several} translations in
    horizontal, vertical, and diagonal directions, we may arrive at a situation where the translates of $A_1,\dots,A_4$ are all below an horizontal line, and $\Lambda_{2r}(x)$ above it.
    See Figure~\ref{fig:sq2}. Doing so, we may apply Lemma~\ref{lem:R_1} (all events depend on faces lying one side of a reflection line), thus concluding that the limit as $\delta$ tends to zero of the relevant function is zero, and therefore that it is also zero without the different translations, i.e.~that $f_\delta(0)$ tends to zero as $\delta$ tends to zero.
    
    The result follows by another use of quasi-independence (Lemma~\ref{lemma:quasi_independence}) to show that the product of the probabilities of the events $E_i$ is of the order of the probability of the intersection of the events $E_i$. As a consequence, $f_\delta(0)$ and the left-hand side of \eqref{eq:h1} can be compared up to uniform constants. 
\end{proof}

\begin{remark}[Quasi-independence requires spatial separation of the conditioning]
    For quasi-independence to hold, it is important that the sets $(A_i)_i$ are
    geometrically separated.
    Indeed, to apply the Russo--Seymour--Welsh-type arguments, we need some ``space''
    to show that the mixing between the different conditioning events occurs.
    This is why we cannot simply ``close'' the circuit, and obtain the harmonicity
    theorem immediately, and that we work with the H shape instead.
\end{remark}

\begin{remark}\label{remark:diagonal_transfer_matrix}The proof strategy is reminiscent of the one used in~\cite[Section~7]{DumKozLamMan26} to prove
    harmonicity of multi-point correlation functions,
    although it is complicated by the lack of \emph{a priori} rotational invariance of
    the observables. As a consequence, the pushing in diagonal directions plays an important role.
    In fact, we view the construction of a transfer matrix in the diagonal direction
    as one of the innovations of this article.
  Transfer matrices in the horizontal and vertical directions are well established in
    the literature,
    and there is no special restriction on the model's parameters.
   For a transfer matrix in the diagonal direction (with
    a probabilistic interpretation),
    a certain monotonicity property is required.  Consequently, this formalism only makes sense for the parameters $\a=\b=1$ and $\c\geq 1$. The same condition is also at the core of the
    percolation representation
    used in~\cite[Section~14]{DumKozLamMan26}, in earlier work, and in the present article.
    \commentPiet{Supply the historical references for this
        monotonicity/percolation representation (the papers mentioned in the original
        papers by Marcin and by Sasha with Piet).}
   
\end{remark}

\subsection{From infinite volume to finite volume: strong mixing}

Theorem~\ref{thm:intermediate_statement} will be complemented by a strong mixing result. It will yield the following exploration procedure. 
We will explore part of the configuration in $R_1,\dots, R_4$ in such a way that, 
when the exploration is successful,
\begin{itemize}
	\item[(a)] Theorem~\ref{thm:intermediate_statement} applies and yields a small expectation of $\Delta_r h(x)$ 
	under the infinite volume measure conditioned on the exploration and
	\item[(b)] the exploration acts in an almost Markovian way: conditionally on it, the law of the configuration in $\Lambda_{2r}(x)$ is almost 
	independent of that outside $\Lambda_{4r}(x)$. We call this property {\em strong mixing}; see Proposition~\ref{prop:strong_mixing}.
\end{itemize}
The exploration will also be proved to be successful  with high probability under $\mu_{\delta\bbZ^2}$. 
The above will ultimately lead to the following result.

\begin{theorem}\label{thm:explored_harmo}
	For any $\eps > 0$ and $r > 0$, there exists $\delta_0=\delta_0(\eps,r)> 0$ 
	such that, for all $0 < \delta < \delta_0$ the following holds. 
	On $\delta \bbZ^2$, there exists an exploration $\calE$ of a connected set of faces in $\Lambda_{4r}(x) \setminus \Lambda_{2r}(x)$ that separates $\Lambda_{2r}$ from infinity and a notion of a ``successful outcome'' for $\calE$ such that 
	\begin{align}\label{eq:explored_harmo}
		\Big|	\mu_{\delta \bbZ^2} \big[ \Delta_rh(x) \big| \calE \text{ that is successful} \big]  \Big|< \eps
	\end{align}
	and 
	\begin{align}\label{eq:explored_good}
		\mu_{\delta \bbZ^2} \big[ \text{$\calE$ is successful}\big]  > 1- \eps.
	\end{align}
\end{theorem}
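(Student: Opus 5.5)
The plan is to build the exploration $\calE$ from the four rectangles $R_1,\dots,R_4$ and to reduce \eqref{eq:explored_harmo} to Theorem~\ref{thm:intermediate_statement} by averaging over outcomes.

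\textbf{The exploration.} In each rectangle $R_i$ we explore, in the spin or percolation representation of Section~\ref{sec:review_percolation_representation}, a connected set of faces $A_i\subset R_i$ crossing $R_i$ in the long direction. We choose the exploration so that the height oscillation along $A_i$ is at most $M$, for example a crossing of a level set of $|h|$ within bounded distance of a fixed level, and we reveal $h_4|_{A_i}$. We then also reveal a connected set in the four corner squares $\Lambda_{4r}\setminus(\Lambda_{2r}\cup R_1\cup\dots\cup R_4)$, with bounded oscillation, linking consecutive $A_i$'s. The union is then a circuit in $\Lambda_{4r}\setminus\Lambda_{2r}$ separating $\Lambda_{2r}$ from infinity.

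\textbf{Definition of success.} The outcome is \emph{successful} if all four crossings exist with oscillation $\le M$, each $A_i$ has diameter $\ge r$, the corner links exist, and the strong-mixing event of Proposition~\ref{prop:strong_mixing} holds. By RSW and FKG for $|h|$, each crossing exists with probability bounded below, uniformly in $\delta$. To get probability $>1-\eps$ rather than merely bounded below, we use many disjoint nested copies at scales between $2r$ and $4r$ (shrinking $\eta$ accordingly) and pick the innermost successful one. The chance that all of them fail decays exponentially in their number, by quasi-independence (Lemma~\ref{lemma:quasi_independence}) together with RSW. This gives \eqref{eq:explored_good}. Taking $M$ large, depending on $\eps$, keeps the oscillation bound from costing probability, since oscillations along crossings are tight by the known delocalisation and log-variance bounds.

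\textbf{Harmonicity under the exploration.} This is the main obstacle. We condition on the full explored information $\calF$. Theorem~\ref{thm:intermediate_statement} controls only the conditioning on $h_4|_{A_1},\dots,h_4|_{A_4}$, not on the corner links or on what was explored outside them. We handle this with strong mixing. On success, the law inside $\Lambda_{2r}$ given $\calF$ is within total-variation distance $o(1)$ of the law given only $\bigcap_i\{h_4|_{A_i}=\zeta_i\}$, and, crucially, this holds after the extra conditioning is averaged over. Concretely, we order the exploration so that the sets $A_i$ are revealed as stopping sets, meaning the event $\{A_i=\cdot\}$ is measurable with respect to information outside the interior region. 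Then
\[
\mu\big[\Delta_rh(x)\,\ind_{\text{succ}}\big]=\sum\mu\big[\Delta_rh(x)\,\big|\,\textstyle\bigcap_i\{h_4|_{A_i}=\zeta_i\},\ \text{rest}\big]\,\mu[\cdots],
\]
and strong mixing lets us replace the inner conditional expectation by the one in \eqref{eq:h1}, up to an error $o(1)\cdot\mu[|\Delta_rh(x)|^2]^{1/2}$. The second moment here is bounded uniformly by the full-plane GFF convergence. Each replaced term is at most $\eps$ by Theorem~\ref{thm:intermediate_statement}, applied with the diameter condition ensured by success. Dividing by $\mu[\text{succ}]\ge 1-\eps$ yields \eqref{eq:explored_harmo} after adjusting $\eps$.

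The most delicate point is arranging that the conditioning genuinely factorises into the form of Theorem~\ref{thm:intermediate_statement}. Strong mixing must be shown to hold uniformly over outcomes, and the configuration to the interior side of $A_i$ must not be biased by the exploration itself. My first attempt would be to explore each $A_i$ from the outer side of $R_i$ inwards, as an outermost crossing, so that the interior stays unexplored.
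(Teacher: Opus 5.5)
The step that fails is closing the circuit with corner links and then using strong mixing to discard them. Proposition~\ref{prop:strong_mixing} concerns a specific exploration in which the gap strips $S_{NE},\dots,S_{SW}$ between consecutive rectangles are left \emph{unexplored}. The decoupling happens inside those gaps: one finds a swap-susceptible, swap-compatible tube in each gap and swaps via Lemma~\ref{lem:geometric_mixing_quadrilaterals}. Your links run through exactly these regions, so once they are explored, nothing unexplored separates the extra information from $\Lambda_{2r}$. The link heights become boundary data of the domain containing $\Lambda_{2r}$. The law of $h_4|_{\Lambda_{2r}}$ given the $A_i$'s and the links is then in general not close in total variation to the law given the $A_i$'s alone, so you cannot reduce to \eqref{eq:h1} this way. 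A closed circuit near $\Lambda_{2r}$ is precisely what Theorem~\ref{thm:intermediate_statement} cannot handle, since quasi-independence needs separated conditioning sets, and mixing cannot remove it afterwards. The paper instead leaves the gaps open and puts the separating piece far out. Its $\calE$ consists of the four flower domains $\calF_i\subset R_i$ together with $h_4$ on the outer annulus $\calU=\Lambda_{4r}\setminus\Lambda_{4r-\eta}$. Strong mixing shows that the additional conditioning on $\calU$ costs at most $\eps$ in total variation on $\Lambda_{2r}$. The remaining $(h_4,\omega^+)$-conditioning is turned into pure $h_4$-conditioning by exploring the fattened domains $\calF_i^+$, after which Theorem~\ref{thm:intermediate_statement} applies with oscillation at most $M+3$.

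Your error control also fails. $\mu_{\delta\bbZ^2}[(\Delta_rh(x))^2]$ is not bounded uniformly in $\delta$: $h(x)$ is a point value of a log-correlated field, so this second moment grows like $\log(r/\delta)$, and GFF convergence only concerns macroscopically separated points. You must work with $\widetilde\Delta_rh(x)$, whose second moment is $O(1)$ by Lemma~\ref{lemma:L2bounds}. Since the coupling argument runs outcome by outcome, you also need the \emph{conditional} second moments of $\widetilde\Delta_rh(x)$, under both the full and the reduced conditioning, to be bounded for the given outcome. The paper builds this into the definition of success, together with $\osc(\xi|_{\partial F_i})\le M$ (tight by Lemma~\ref{lemma:arm_event_bound}), and pays for it with Markov's inequality. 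This is also what yields \eqref{eq:explored_harmo} for each successful outcome, which the proof of Theorem~\ref{thm:harmonicity} uses; your argument only bounds an average over successful outcomes. Nested copies are not needed to make success likely. The flower-domain exploration always produces an outcome, and strong-mixing success has probability at least $1-\eps$ once $\eta$ is small (many disjoint candidate tubes per gap, Lemma~\ref{lem:swap_susceptible}) and $\tau$ is large (Lemma~\ref{lem:swap_compatible}). Finally, a level set of $|h|$ at a fixed level is not meaningful under $\mu_{\delta\bbZ^2}$, where $h$ is only defined up to an additive constant.
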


We stated the theorem in a someone loose way since the exploration $\calE$ will depend on $h$, but also on a percolation representation $\omega^+$ that has not yet been defined. Its essential feature is that $\calE$ is measurable in terms of the configuration inside what was explored and that it renders the measure in $\Lambda_{2r}(x)$ independent of the configuration on the outside of $\calE$. 

The above may be understood as (and may be show to imply that)
\begin{align}\label{eq:harmonicity_is_local}
	\mu_{\delta \bbZ^2} \big[ \Delta_rh(x) \big| h \text{ on }\partial \Lambda_{4r}(x) \big] \xrightarrow{\bbP} 0 \qquad \text{ as $\delta \to 0$.}
\end{align}

Using weak-mixing results (see Lemma~\ref{lemma:weak_mixing}) the above convergence will be transported 
to finite-domain measures $\mu_{G}^\zeta$, as in Theorem~\ref{thm:harmonicity}. 
Finally, in Section \ref{sec:L2bounds_proof}, we develop moment bounds for the conditional expectation above, 
that will allow us to upgrade the convergence in probability to a convergence in $L^1$. 
This will ultimately lead to Theorem~\ref{thm:harmonicity}.

\begin{remark}Theorem~\ref{thm:explored_harmo} and \eqref{eq:harmonicity_is_local} state that the harmonicity of the full-plane 
multi-point function is a local property and is not due to global cancellations. \end{remark}

\subsection{Organisation}

The rest of the paper is organised as follows. 
Sections~\ref{sec:review_percolation_representation} and \ref{sec:mix_reg} introduce the percolation representation of the six-vertex model 
and derive relatively direct consequences of it, such as $L^2$ bounds and weak mixing properties.
Section~\ref{sec:almost_markov_proof} states and proves the strong mixing result mentioned above. 
In Section~\ref{sec:spectral_representation_proof}, we prove certain properties of the spectral representation of the full-plane six-vertex measure; some of these already appeared in \cite{DumKozLamMan26}. 
They are then used in Section~\ref{sec:intermediate_proof} to prove Theorem~\ref{thm:intermediate_statement}. 
In Section \ref{sec:harmonicity_proof} we combine Theorem~\ref{thm:intermediate_statement} and the strong mixing of Section~\ref{sec:almost_markov_proof} to prove Theorem~\ref{thm:explored_harmo}. We then deduce Theorem~\ref{thm:harmonicity}.
Finally, Sections~\ref{sec:boundary_conditions_h} and~\ref{sec:GFF_convergence} deduce Theorem~\ref{thm:even}, 
that is the convergence of the height function to the Gaussian Free field.

\section{Basics of the percolation representation}
\label{sec:review_percolation_representation}

In this section, we define a graphical representation associated to the six-vertex model. The representation appeared already in \cite{GL25} and was used extensively in~\cite[Part~D]{DumKozLamMan26}. We will limit here our description to the minimal set of properties used in this paper. 

\subsection{From heights to percolation}

Let $\bbZ^2_{\even}$ denote the graph with vertex-set $F_\even := \{(x+y +\tfrac12, x-y + \tfrac12) : x,y \in \bbZ\}$ and edges $E_\even$ between the 
vertices at distance $\sqrt 2$ of each other. Note that the vertices of $\bbZ^2_{\even}$ are the centres of the even faces of $\bbZ^2$; we will identify the two. 

Let $h$ be a height function defined on $F(\Z^2)$ or on a subset of it.
We couple $h$ with two percolation configurations $\omega = (\omega^+, \omega^-)$ on the edges of $\bbZ^2_\even$ as follows. 

Let $(X_e)_{e \in  E_\even}$ be i.i.d. Bernoulli variables of parameter $ 1/\c\in[0,1]$ (recall that $\c\ge1$).
For any edge $e= (uv) \in E_\even$, if we write $(u^* v^*)$ for its dual edge (with $u^*$ and $v^*$ therefore being centres of odd faces of $\mathbb Z^2$), 
\begin{itemize}
	\item if $h(u) \neq h(v)$, set $\omega^+(e)  = \omega^-(e) = 0$; 
	\item if $h(u) = h(v) \in 4\bbZ$ and $h(u^*) \neq h(v^*)$, set $\omega^+(e)= 1$ and $\omega^-(e) = 0$; 
	\item if $h(u) = h(v) \in 4\bbZ$ and $h(u^*) = h(v^*)$, set $\omega^+(e)= X_e$ and $\omega^-(e) = 0$; 
	\item if $h(u) = h(v) \in 4\bbZ+2$ and $h(u^*) \neq h(v^*)$, set $\omega^+(e)= 0$ and $\omega^-(e) = 1$; 
	\item if $h(u) = h(v) \in 4\bbZ +2$ and $h(u^*) = h(v^*)$, set $\omega^+(e)= 0$ and $\omega^-(e) =  X_e$.
\end{itemize}
From this point onward,  $\mu_G^\zeta$ and $\mu_{\Z^2}$ denote the joint laws of $(h,\omega)$.

Notice that no edge of $\omega^+$ is incident to an edge of $\omega^-$.
Furthermore, each connected component of $\omega^+$ or $\omega^-$ has constant height value.  
We will write $(\omega^+)^*$ for the dual configuration of $\omega^+$. 
Then, each connected component of $(\omega^+)^*$ has total height oscillation at most $3$. 

Finally,  if $A,B$ are connected sets of faces and $h$ is a height function with 
\begin{align}
\max_{z \in A} h(z) \leq \min_{z \in B} h(z) - 4(k+1), 
\end{align}
then there exists $2k$ circuits of $\bbZ^2_\even$ separating $A$ from $B$, contained alternatively in $\omega^+$ and $\omega^-$.
Indeed, it suffices to consider the cluster in $\bbZ^2_\even$ of different even heights between $\max_{z \in A} h(z)$ and $\min_{z \in B} h(z)$.
Their boundaries are part of $\omega^+$ or $\omega^-$, depending on the parity of their half-height.

\subsection{Basic properties of the percolation configuration}

An even domain is the set of faces of $\bbZ^2$ contained on or inside a circuit of $\bbZ^2_\even$. 
\im{See if this is coherent with the intro.}

\begin{lemma}[Spatial Markov property for the joint law]\label{lem:measures}
	Let $\calD$ be an even domain and write $\gamma$ for the circuit of $\bbZ^2_\even$ surrounding it. 
	Then, there exist laws $\mu_\calD^{+}$ and $\mu_\calD^-$ for $(h,\omega^+,\omega^-)$ on $\calD$ such that, 
	for any $G \supset \calD$ and any boundary condition $\zeta$ on $G$, 
	\begin{align}
		\mu_G^\zeta [\blank | (h,\omega) \text{ on and outside $\gamma$ such that $\omega^+ \equiv 1$ on $\gamma$}] &=  \mu_\calD^{+},\\
		\mu_G^\zeta [\blank | (h,\omega) \text{ on and outside $\gamma$ such that $\omega^- \equiv 1$ on $\gamma$}] &=  \mu_\calD^{-},
	\end{align}
	where in the above we refer to the law of $(h,\omega^+, \omega^-)$ in $\calD$, up to an additive constant in $4\bbZ$.
\end{lemma}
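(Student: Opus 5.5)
The plan is to write the joint law explicitly as a product of local weights and check that, once we condition on the event that $\omega^+\equiv 1$ on $\gamma$, the weight splits into a factor depending only on $(h,\omega)$ strictly inside $\gamma$ and a factor depending only on the configuration on and outside $\gamma$.

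\emph{Step 1: product form of the joint law.} Under $\mu_G^\zeta$, the probability of $(h,\omega)$ is proportional to
\[
\prod_{v}\mathbf c^{\ind[v\text{ of type }5,6]}\;\times\;\prod_{e\in E_\even}p_e(h,\omega),
\]
where the first product runs over degree-four vertices $v$ of $G$. The factor $p_e$ is the conditional probability of $(\omega^+(e),\omega^-(e))$ given $h$. By the coupling rules, $p_e$ depends only on the heights of the four faces around the vertex of $\mathbb Z^2$ at the centre of $e$ (namely $u,v,u^*,v^*$), and it is either $1$, $1/\mathbf c$ or $1-1/\mathbf c$. Both the vertex weight and $p_e$ are therefore functions of the configuration around a single vertex of $\mathbb Z^2$, which is the vertex at the crossing of $e$ and its dual edge.

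\emph{Step 2: what the event forces on $\gamma$.} Suppose every edge $e=(uv)$ of $\gamma$ is open in $\omega^+$. Then $h(u)=h(v)\in 4\mathbb Z$ along $\gamma$, so all faces of $\gamma$ carry a common value $4m$. Given this, consider a vertex $w$ of $\mathbb Z^2$ at the centre of an edge of $\gamma$. The heights on its two odd faces can only take the values $4m\pm1$. The local factor at $w$, which combines the vertex weight and $p_e$, is then determined by the edge state together with the two odd heights, one inside and one outside. Its value is $\mathbf c\cdot(1/\mathbf c)=1$ when the two odd heights are equal, since the vertex is then of type $c$ and $\omega^+(e)=X_e=1$. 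It is $1\cdot 1=1$ when they differ, since $\omega^+(e)=1$ is then forced and the vertex is of type $a$ or $b$, whose weight is $1$ because $\mathbf a=\mathbf b=1$.

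This is the key computation: the local weight is identically $1$, so it does not couple the inside odd face with the outside one. It also depends on isotropy, $\mathbf a=\mathbf b=1$. I expect this to be the main point to check. In particular, one must verify that the ice rule around $w$ is automatic whenever the heights differ by $\pm1$, so that no hidden constraint links the two sides.

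\emph{Step 3: factorisation and conclusion.} Every remaining vertex of $\mathbb Z^2$ lies either strictly inside $\gamma$ or strictly outside it. Its local factor involves faces on one side only, together possibly with the faces of $\gamma$, whose heights all equal $4m$. The conditional law of the configuration inside, given everything on and outside $\gamma$ and the event, is therefore proportional to the product of the inside factors. This product depends on the outside only through $m$, and shifting by $4m$ removes that dependence. We then define $\mu^+_\calD$ to be the resulting normalised measure, with boundary value $0$ on $\gamma$. It is independent of $G$ and $\zeta$ because neither enters the inside factors; this uses $G\supset\calD$, which guarantees that the inside vertices have degree four. The case of $\mu^-_\calD$ is identical, with $4\mathbb Z+2$ in place of $4\mathbb Z$ and $\omega^-$ in place of $\omega^+$. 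The parity reversal is irrelevant, since the computation in Step 2 only uses that the faces of $\gamma$ share a common even height.
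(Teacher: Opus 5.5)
Your proof is correct. The paper does not prove this lemma itself; it only refers to \cite{DumKozLamMan26} (Part~D). Your argument is a direct, self-contained verification. You factorise the joint weight into local factors at vertices of $\mathbb Z^2$. You then observe that on $\{\omega^+(e)=1\}$ with $h(u)=h(v)\in 4\mathbb Z$, the factor at the midpoint of $e$ is $\mathbf c\cdot\mathbf c^{-1}=1$ in the $\mathbf c$-type case and $1\cdot 1=1$ otherwise. This makes explicit why the Bernoulli parameter $1/\mathbf c$ and the isotropy $\mathbf a=\mathbf b=1$ are needed.

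The concern you flag is harmless. Faces of equal parity are never edge-adjacent, so the only hard constraints are $|h(f)-h(g)|=1$ for edge-adjacent faces $f,g$. Every such pair lies either inside-or-on $\gamma$ or outside-or-on $\gamma$, so no constraint links the two sides. The ice rule at a midpoint of $\gamma$ is then automatic.

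The only implicit requirement is that the midpoints of the edges of $\gamma$ be degree-four vertices of $G$, so that $\omega$ on $\gamma$ and the weights there are defined. The hypothesis $G\supset\calD$ alone guarantees this only for vertices strictly inside $\gamma$. The requirement is, however, already implicit in the statement, since the event refers to $\omega$ on $\gamma$.
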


The proof is contained in \cite[Part~D]{DumKozLamMan26}. The marginal of $h$ for the measure $\mu_\calD^{+}$ and $\mu_\calD^{-}$ is the six-vertex measure with constant boundary conditions and no diagonal interactions across the boundary defined in the introduction. 
We will refer to these as the measures with $+$ or $-$ boundary conditions for $\omega$. 

\begin{remark}
Similar measures may be defined on even domains with $+$ boundary conditions for certain arcs and $-$ boundary conditions for other arcs. 
In these definitions, the heights of the different arcs is not prescribed. \im{This may be needed for the quad RSW. Check with  Emile}
\end{remark}

\begin{proposition}[FKG for $\omega^+$]\label{prop:FKG_omega+}
	Let $\calD$ be an even domain. Under both  $\mu_\calD^{+}$ and $\mu_\calD^{-}$, 
	the couple $(\omega^+, 1-\omega^-)$ satisfies the FKG lattice condition. 
	The same holds under the infinite volume measure $\mu_{\bbZ^2}$.

	More generally, if $V$ is a set of {\em even} faces and $E$ is a set of edges between even faces, in the measure $\mu_\bbZ^2$, conditionally on $h_4$ on $V$ and the configuration of $\omega$ on $E$, the pair $(\omega^+, 1-\omega^-)$ satisfies the FKG lattice condition.
	\im{This is quite confusing, since we later mention that conditioning on heights, FKG may fail. If not needed, we should not mention it.
	That we may condition on $\omega$ on $E$ is a direct consequence of the FKG lattice condition, so that can even be omitted.} 
	
	In particular, if $\calD' \subset \calD$ is an even domain,  $\omega^+$ sampled according to $\mu_{\calD'}^{+}$ stochastically dominates 
	the restriction to $\calD'$ of $\omega^+$ sampled according to $\mu_{\calD}^{+}$.
\end{proposition}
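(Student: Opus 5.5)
The plan is to write the joint law of $(\omega^+,\xi)$, with $\xi:=1-\omega^-$, explicitly on an even domain $\calD$ under $\mu^+_\calD$. From that formula I would check the FKG lattice condition, which reduces to local (two-edge) checks. First I recover the weight. Given $(\omega^+,\omega^-)$, the even heights are determined up to the global constant. Indeed, clusters of $\omega^+$ carry heights in $4\bbZ$, clusters of $\omega^-$ carry heights in $4\bbZ+2$, and adjacent even faces not joined by an open edge differ by $2$, subject to the constraint from the dual odd heights. Summing over compatible height functions $h$ and over the Bernoulli variables $X_e$ then gives a weight of the form
\begin{equation}
\mu^+_\calD[\omega^+,\omega^-]\propto \c^{-o(\omega)}(1-1/\c)^{c'(\omega)}\, 2^{k(\omega^+)+k(\omega^-)}\,\ind_{\omega^+\perp\omega^-}.
\end{equation}
Here $o$ and $c'$ count open edges and closed edges among those with equal even heights. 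The factor $2^{k(\cdot)}$ counts the number of clusters, and comes from the two choices $\pm$ of the odd height inside each region delimited by the loops. This is the representation of \cite{GL25} and \cite[Part~D]{DumKozLamMan26}. I would take this formula, in the form ``a product of local edge weights times $2^{k(\omega^+)+k(\omega^-)}$ times a hard-core constraint'', as the starting point. The condition $\c\ge1$ enters here, so that all weights are nonnegative.

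Next I verify the lattice condition $\mu(\eta\vee\eta')\mu(\eta\wedge\eta')\ge\mu(\eta)\mu(\eta')$ on pairs $\eta=(\omega^+,\xi)$ in the product order. By the standard reduction it suffices to check, for every two coordinates (edges of $\omega^+$ or of $\xi$) and a fixed configuration elsewhere, that the ratio of weights is supermodular. There are three cases.

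\begin{enumerate}
\item Two $\omega^+$-edges. Opening the first edge makes opening the second more favourable. This holds because $k(\omega^+)$ with weight $2\ge1$ gives an FK-type factor with $q\ge1$, which is supermodular in $\omega^+$.
\item Two $\xi$-edges. These work identically, since $k(\omega^-)$ in terms of $\xi=1-\omega^-$ is handled through duality: by Euler's formula $k(\omega^-)$ is expressed via $k((\omega^-)^*)$.
\item Mixed pairs. Increasing $\omega^+$ and increasing $\xi$, i.e. closing $\omega^-$-edges, are compatible: the hard-core constraint $\omega^+\perp\omega^-$ is decreasing in $\omega^-$ and increasing-compatible, hence monotone in the right direction.
\end{enumerate}

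The $-$ boundary case follows by symmetry $h\mapsto h+2$, which swaps the roles of $\omega^\pm$. The infinite volume statement follows by taking weak limits of $\mu^\pm_{\Lambda_n}$, since the FKG inequality is preserved under limits. The conditional statement given $h_4$ on even faces $V$ and $\omega|_E$ follows because this conditioning restricts the configuration space to a sublattice, or fixes coordinates, and multiplies by indicators of local events compatible with the lattice structure. The stochastic domination for $\calD'\subset\calD$ follows from the spatial Markov property (Lemma~\ref{lem:measures}). Conditioning $\mu^+_\calD$ on $\omega^+\equiv1$ on the circuit surrounding $\calD'$ gives $\mu^+_{\calD'}$, and this is an increasing event. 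FKG then yields $\mu^+_\calD[A]\le\mu^+_\calD[A\mid\omega^+\equiv1\text{ on }\gamma']=\mu^+_{\calD'}[A]$ for increasing $A$.

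The main obstacle is the mixed-pair check and the exact form of the cluster-count weights. When $\omega^+$ and $\omega^-$ interact, the number of admissible odd-height assignments depends on the joint loop structure, not only on the separate cluster counts. I would first try to rewrite the weight in terms of loops of the combined configuration, and then verify the supermodularity directly for the finitely many local pictures around two adjacent edges.
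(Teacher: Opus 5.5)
\textbf{There is a genuine gap: the starting formula is not the law of $(\omega^+,\omega^-)$, and what it leaves out is exactly where the difficulty is.}

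Two of your claims are false. First, $(\omega^+,\omega^-)$ does not determine the even heights: an even face incident to no open edge can be in either height class mod $4$. Second, even faces joined by a closed edge need not differ by $2$: a closed edge with equal even heights is a $\mathbf c$-vertex with $X_e=0$. Hence your $c'(\omega)$ is not a function of $\omega$. Moreover, the factors $2$ do not come from $k(\omega^+)+k(\omega^-)$. By Proposition~\ref{prop:odd_spins_uniform} they come from the clusters of $\omega^*$, where $\omega=\omega^+\cup\omega^-$.

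Tracking the coupling gives the correct weights. An open edge has weight $\mathbf c\cdot\frac1{\mathbf c}=1$, or $1$ if forced. A closed edge has weight $\mathbf c(1-\frac1{\mathbf c})=\mathbf c-1$ if its endpoints have equal heights, and $1$ otherwise. So
\[
\mu^+_{\calD}\big[h_4\text{ on even faces},\,\omega\big]\ \propto\ 2^{k(\omega^*)}\,(\mathbf c-1)^{\#\{e=uv\notin\omega\,:\,h(u)=h(v)\}},
\]
restricted to configurations whose $\omega^+$-edges (resp.\ $\omega^-$-edges) join faces of height $\equiv0$ (resp.\ $\equiv2$) mod $4$.

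Given $h_4$ on \emph{all} even faces, Euler's formula turns this into an FK measure with $q=2$ and $p/(1-p)=2/(\mathbf c-1)$ on the equal-height edges. It splits into conditionally independent $\omega^+$ and $\omega^-$, and there your lattice-condition check works. But the law of $(\omega^+,\omega^-)$ still requires summing over the heights of faces not touched by $\omega$. For $\mathbf c<2$ the factor $(\mathbf c-1)^{\ind[h(u)=h(v)]}$ is repulsive. So neither ``$q=2\ge1$'' (your case 1) nor the hard-core constraint (your case 3) yields supermodularity.

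In fact these checks fail. Let $v$ be an even face with $\mathbb Z^2_\circ$-neighbours $u,y_1,w,y_2$. Fix everything except the states of $e=vu$ and of an edge $f=wx\neq wv$, so that:
\begin{itemize}
\item $u$ and $y_1$ lie in $\omega^+$-clusters, and $y_2$ lies in an $\omega^-$-cluster;
\item the edges $vy_1$, $vw$, $vy_2$ are closed;
\item $f$ is the only open edge at $w$ and at $x$.
\end{itemize}
Opening $e$ does not change $k(\omega^*)$, since the four odd faces around $v$ stay connected through the three closed corners. When $e$ is closed, $h(v)$ is summed over its two classes. The odds of $\{e\in\omega^+\}$ against $\{e\notin\omega\}$ are then $\frac{\mathbf c-1}{(\mathbf c-1)^2+1}$ when $f\in\omega^+$, and $\frac1{2(\mathbf c-1)}$ when $f\in\omega^-$. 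All four configurations have positive weight. The lattice condition for $(\omega^+,1-\omega^-)$ requires the first odds to dominate, i.e.\ $(\mathbf c-1)^2\ge1$, which fails for $1<\mathbf c<2$. Similarly, whatever the environment of $w$, the odds are strictly larger with $f$ closed than with $f\in\omega^+$, so case 1 fails as well.

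The relevant threshold is therefore $\mathbf c=2$, not $\mathbf c=1$. A local two-edge verification on the marginal of $(\omega^+,\omega^-)$ cannot succeed for $1<\mathbf c<2$. The same objection applies to your treatment of the conditional statement: conditioning on $h_4$ on a set $V$ removes the repulsive sum only on $V$.

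The paper does not prove this proposition; it refers to \cite[Part~D]{DumKozLamMan26}. Given the computation above, check exactly which (conditional) law is shown there to satisfy the lattice condition, rather than trying to derive the unconditional statement by local checks. Your final domination step via Lemma~\ref{lem:measures} is the standard one and is fine once the FKG input is available.
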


The proof of the above may be found in~\cite[Part~D]{DumKozLamMan26}.
We draw the attention of the reader to the fact that the FKG property for $\omega^+$ may fail under boundary conditions involving the full height function, or the height function on both even \emph{and} odd faces. Let us mention one last result which will be useful later.

\begin{proposition}[Uniform odd heights from the percolation]
\label{prop:odd_spins_uniform}
    Under $\mu_{\bbZ^2}$ conditioned on the pair $((h_4(z))_{z \in \bbZ^2_{\even}},\omega)$, 
    the odd height function $(h_4(z))_{z \in \bbZ^2_{\odd}}$ is constant on each cluster of $\omega^*$, and is obtained by i.i.d. uniform $\pm 1$ random variables on each such cluster.
\end{proposition}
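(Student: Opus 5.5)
The plan is to compute the conditional law of the odd heights explicitly from the joint law of $(h,\omega)$. By construction, $\omega$ is a function of $h$ and the auxiliary Bernoulli variables $(X_e)$. The joint weight of a configuration $(h,\omega)$ is therefore the six-vertex weight of $h$, multiplied by a product over even edges $e=(uv)$ with $h(u)=h(v)$ and $h(u^*)=h(v^*)$ of a factor $1/\c$ or $1-1/\c$, according to the value of the open coordinate of $\omega(e)$. I will work in finite volume, for instance under $\mu_\calD^{+}$ on large even domains as in Lemma~\ref{lem:measures}, and pass to $\mu_{\bbZ^2}$ by taking limits. The statement concerns finitely many odd faces at a time, so this is harmless. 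I will fix the even data $(h_4|_{F_\even},\omega)$ and identify the set of compatible odd height functions together with their weights.

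First I will show constancy on clusters of $\omega^*$. Take a dual edge $(u^*v^*)$ between odd faces, with primal edge $e=(uv)$. If $e$ is closed in both $\omega^+$ and $\omega^-$ (I read $\omega^*$ as the dual of $\omega^+\cup\omega^-$), then the rules force one of two cases. Either $h(u)\neq h(v)$, in which case the ice rule around the vertex shared by the four faces forces $h(u^*)=h(v^*)$: the four faces around a vertex carry heights $k, k\pm1$ with even differences $0$ or $\pm2$, and a difference of $2$ on the even diagonal forces the two odd faces to share the middle value. Or $h(u)=h(v)$, in which case an edge with $h(u^*)\neq h(v^*)$ would be open, so closedness forces $h(u^*)=h(v^*)$. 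Hence odd heights are constant along $\omega^*$-clusters. Conversely, across an open edge of $\omega$ the two odd faces either differ, which forces the edge open, or are equal with the edge open with probability $1/\c$.

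Next I will count degrees of freedom. Each odd face $z$ is adjacent to even faces, and $h(z)\equiv h(u)\pm1 \pmod 4$ for any even neighbour $u$. Within an $\omega^*$-cluster, which is bounded by open $\omega$-paths sitting at a common even height class $m\in\{0,2\}\bmod 4$ (and also touching even faces of other heights across closed edges), the constraint leaves exactly the two values $m\pm1$. I must check that both values are compatible with all adjacent even faces. The main obstacle is this verification: for a closed edge with $h(u)\ne h(v)$ the odd value is forced to be the middle value, so that cluster is not free. I therefore expect the precise statement to require interpreting clusters of $\omega^*$ as dual clusters separated only by the even-height level structure. Concretely, I will redo the case analysis so that the relevant clusters are those of odd faces connected through edges with $h(u)=h(v)$ closed, and use that an edge with $h(u)\neq h(v)$ already determines the odd value from the even data. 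I would try first to show that every such dual cluster surrounded by open $\omega^\pm$ edges is entirely adjacent to even faces of a single height $m$.

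Finally, the weights. The six-vertex weight with $\a=\b=1$ is $\c^{N_5+N_6}$, and a $\c$-vertex is exactly a vertex whose even diagonal is equal while its odd diagonal differs, or vice versa. Flipping the odd value on a free cluster from $m+1$ to $m-1$ preserves which odd pairs are equal or different, because the whole cluster flips and the neighbouring open edges see differences in both cases. Hence $N_5+N_6$ and the Bernoulli factors are unchanged. The conditional weight is therefore uniform over the $2^{\#\text{clusters}}$ choices, which gives i.i.d. fair signs, and the infinite-volume statement follows by locality.
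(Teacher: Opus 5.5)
Your constancy step is correct: a closed edge of $\omega$ forces $h_4(u^*)=h_4(v^*)$. When $h(u)\neq h(v)$ this comes from the ice rule, and when $h(u)=h(v)$ it comes from the construction. The trouble starts with the ``main obstacle'', which comes from mixing up $h$ and $h_4$. The conditioning is only on $h_4$ of the even faces. Suppose $h_4(u)=0$ and $h_4(v)=2$. The common odd value $h_4(u^*)=h_4(v^*)$ can then be either $1$ or $3$. These correspond to the two lifts $h(v)=h(u)+2$ and $h(v)=h(u)-2$, and the even data do not distinguish them. More generally, every value in $\{1,3\}$ differs by $\pm1$ modulo $4$ from every value in $\{0,2\}$. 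Written in terms of $h_4$, the ice rule only says that the two diagonals at a vertex cannot both differ. Hence, given $h_4$ on even faces and $\omega$, the admissible odd configurations are exactly the $\{1,3\}$-valued functions constant on $\omega^*$-clusters, and no cluster is pinned. Your proposed remedy would make the statement false. Declaring closed edges with $h(u)\neq h(v)$ non-connecting splits clusters whose two halves are forced to agree, so independence on those finer clusters fails.

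The weight step also contains a genuine error. A $\mathbf c$-vertex is one at which \emph{both} diagonals are equal, so the four heights take only two values. The vertices with exactly one differing diagonal are the $\mathbf a$- and $\mathbf b$-vertices. In addition, flipping one $\omega^*$-cluster changes whether the two odd faces across an open edge separating it from another cluster are equal. So neither $N_5+N_6$ nor the Bernoulli factors are individually preserved.

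The missing idea is that each open edge of $\omega$ carries total weight $1$ whatever its odd diagonal:
\begin{itemize}
\item If the odd diagonal differs, the vertex weight is $1$ and the edge is open deterministically.
\item If both diagonals agree, the weight is $\mathbf c\cdot\frac1{\mathbf c}=1$.
\end{itemize}
Each closed edge carries $\mathbf c-1$ or $1$, according to whether its even diagonal agrees, times $\mathbbm{1}[h_4(u^*)=h_4(v^*)]$. So, given the even data and $\omega$, the joint weight depends on the odd data only through $\prod_{e \text{ closed}}\mathbbm{1}[h_4(u^*)=h_4(v^*)]$. This gives the i.i.d.\ uniform signs on clusters immediately: first in finite volume with $+$ boundary conditions, then locally in the limit, as you planned. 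With your identification of $\mathbf c$-vertices, open edges would carry $\mathbf c$ or $1/\mathbf c$ depending on the odd diagonal, and the conditional law would not be uniform. So this is not a cosmetic slip.
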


The previous result also holds under measures $\mu_G^+$ and $\mu_G^-$, since, 
by Lemma \ref{lem:measures}, these measures are obtained from $\mu_{\bbZ^2}$ by conditioning on certain edges of $\omega$.

\subsection{RSW theory for $\omega^+$}

A {\em tube of quality} $\eta$ is a union of distinct boxes $\Lambda_{\eta}(x_i)$ for $i = 1,\dots, K$ with $K \leq \eta^{-2}$ 
and 
$x_1,\dots, x_K \in 2\eta \bbZ^2$ with $\|x_{i+1} - x_i \|= 2\eta$ for each $i$. 
The tube is said to be {\em closed} if $\|x_{K} - x_1 \|= 2\eta$. 

Fix two simply connected finite domains formed of unions of faces of $\delta\bbZ^2$, $\calD_{\rm out}$ and $\calD_{\rm in}$. 
Assume that $\calD_{\rm in} \subset \calD_{\rm out}$.
The region $\calD_{\rm out} \setminus \calD_{\rm in}$ will be called a {\em discrete topological annulus}. 
It will be called even if both $\calD_{\rm in}$ and $ \calD_{\rm out}$ are even domains. 
We will say it is {\em  of quality} $\eta$ if there exists a closed tube of quality $\eta$ in $\calD_{\rm out}$ surrounding $\calD_{\rm in}$.

The measure $\mu_{\calD}^+$ (and similarly $\mu_{\calD}^-$) from Lemma~\ref{lem:measures} may also be defined on annuli $\calD_{\rm out} \setminus \calD_{\rm in}$ where $ \calD_{\rm in} \subset \calD_{\rm out}$ are even domains. 
Under these measures, the height difference between the inner and outer circuit is not specified.

\begin{proposition}[RSW with $\omega$ b.c.]\label{prop:RSW+}
	For each $\eta > 0$ there exists $c  = c(\eta)> 0$ such that, for any $\delta > 0$ small enough, the following holds. 
	For any  even discrete topological annulus  $\calD_{\rm out} \setminus \calD_{\rm in}\subset\delta \bbZ^2$ of quality $\eta$  
	\begin{align}\label{eq:RSW+}
	\mu_{\calD_{\rm out} \setminus \calD_{\rm in}}^{-} [ \omega^+ \text{ contains a circuit around $\calD_{\rm in}$} ] \geq c. 
	\end{align}
\end{proposition}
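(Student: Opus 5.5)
We reduce Proposition~\ref{prop:RSW+} to the RSW theory for $\omega^+$ developed in \cite[Part~D]{DumKozLamMan26}, combined with the FKG property (Proposition~\ref{prop:FKG_omega+}) and the domain monotonicity it implies. The closed tube of quality $\eta$ surrounding $\calD_{\rm in}$ has at most $\eta^{-2}$ boxes $\Lambda_\eta(x_i)$, with consecutive centres at distance $2\eta$. It therefore suffices to build a circuit of $\omega^+$ out of a bounded number, depending only on $\eta$, of crossing events of rectangles of aspect ratio bounded in terms of $\eta$. Each crossing lives inside a union of two or three consecutive boxes, and we will glue the crossings together.

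First, for each $i$ we consider a rectangle $Q_i$ of size comparable to $\eta$, contained in $\Lambda_\eta(x_i)\cup\Lambda_\eta(x_{i+1})$ and oriented along $x_{i+1}-x_i$. We also consider rectangles $Q_i'$ transverse to it, placed in the box $\Lambda_\eta(x_i)$. If $\omega^+$ crosses each $Q_i$ in the long direction and each $Q_i'$ in the other direction, the union of these crossings contains a circuit in the tube, hence in $\calD_{\rm out}\setminus\calD_{\rm in}$, surrounding $\calD_{\rm in}$. These are increasing events for $\omega^+$. By the FKG lattice condition for $(\omega^+,1-\omega^-)$ under $\mu^-_{\calD_{\rm out}\setminus\calD_{\rm in}}$, the probability of their intersection is at least the product of their probabilities. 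Thus it suffices to show that each individual crossing has probability at least some $c_0(\eta)>0$, uniformly in $\delta$ small and in the shape of the annulus. This gives $c=c_0^{2\eta^{-2}}$.

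Second, we lower bound a single crossing of a rectangle $Q$ under $\mu^-$ on the annulus. The difficulty is that the rectangle may be close to, or even touch, the boundary of the annulus, where the boundary condition is $-$. This boundary condition is the least favourable one for $\omega^+$. We would therefore first use the domain monotonicity of Proposition~\ref{prop:FKG_omega+} to compare with the measure $\mu^-_{Q^{\rm even}}$ on a slightly enlarged even domain around $Q$ that is contained in the annulus. Such a domain exists since the boxes $\Lambda_\eta(x_i)$ lie in $\calD_{\rm out}$, and, shrinking $Q$ by a factor, we may assume it lies at distance of order $\eta$ from $\partial\calD_{\rm in}$ as well.

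The main obstacle is then the uniform RSW estimate itself: $\omega^+$ crosses a rectangle of bounded aspect ratio with probability bounded below, under $-$ boundary conditions, uniformly in scale. To obtain this, we would invoke the RSW results of \cite[Part~D]{DumKozLamMan26} for $\omega^+$ with $\omega^-$-wired boundary conditions at distance comparable to the rectangle size. We expect these to be derived from the $\pm$ symmetry, namely that the role of $\omega^+$ and $\omega^-$ is exchanged by shifting heights by $2$. Together with the duality fact that absence of an $\omega^+$ crossing forces a dual crossing of bounded height oscillation, this symmetry yields a self-dual probability of order $1/2$ in a symmetric setting. FKG and a standard Tassion-type RSW argument then upgrade this to crossings of long rectangles. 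If the cited theorem is only available for boxes, we would use it for squares with $-$ boundary at distance $\eta$ and glue square crossings via FKG as above.
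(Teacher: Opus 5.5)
The paper gives no proof of this proposition; it is one of the percolation-representation properties imported from \cite[Part~D]{DumKozLamMan26}. Your reduction is the standard way to obtain exactly this annulus form from box-scale RSW, and it is sound: you FKG-glue $O(\eta^{-2})$ rectangle crossings placed in the core of the tube (at distance of order $\eta$ from the complement of the union of boxes, since the tube may touch both $\partial\calD_{\rm in}$ and $\partial\calD_{\rm out}$), after lowering to $\mu^-_{\calD'}$ on an even domain $\calD'$ around each rectangle, which is the right direction because $\{\omega^-\equiv1\text{ on }\partial\calD'\}$ is decreasing for $(\omega^+,1-\omega^-)$. Drop your self-duality heuristic for the base estimate, though: because $\omega^+$ and $\omega^-$ are vertex-disjoint, the $h\mapsto h+2$ symmetry only shows that ``$\omega^+$ crosses a square horizontally'' and ``$\omega^-$ crosses it vertically'' are disjoint events of equal probability in a symmetric setting, giving an upper bound by $1/2$, while the complementary event is a crossing of $(\omega^+)^*$, which is a different object; so there is no self-dual point, and the scale-$\eta$ crossing bound under $-$ boundary conditions genuinely has to come from the cited work.
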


A more elaborate version of the above may be expressed as follows. 
An even quad is an even domain $\calQ$ with four marked points (i.e.~vertices of $\bbZ^2_\even$) $a,b,c,d$ on its boundary, in counter-clockwise order. 
The conformal length of $(\calQ,a,b,c,d)$ is the unique $\rho$ such that there exists a conformal map from $(\calQ,a,b,c,d)$  to the rectangle $[0,1] \times [0,\rho]$ 
that extends by continuity to the boundary of $\calQ$ and maps $a,b,c,d$ to the four corners of the rectangle, starting with the top-left one. 

Write $\mu_{\calQ}^{\rm alt}$ for the measure on $\calQ$ with boundary conditions $+$ for $\omega$ on the arcs $(ab)$ and $(cd)$ of the boundary 
and $-$ on the rest of the boundary. Here, the heights of the different arcs is not prescribed. 

\begin{corollary}[RSW in terms of conformal length]\label{prop:RSW_quads}
	For each $\rho > 0$ there exists $c  = c(\rho)> 0$ such that, for any quad $(\calQ,a,b,c,d)$ of conformal length at most $\rho$, 
	\begin{align}
	\mu_{\calQ}^{\rm alt} [ \omega^+ \text{ connects the arcs $(ab)$ and $(cd)$} ] \geq c. 
	\end{align}
\end{corollary}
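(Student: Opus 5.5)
We plan to deduce Corollary~\ref{prop:RSW_quads} from Proposition~\ref{prop:RSW+} and the FKG property of Proposition~\ref{prop:FKG_omega+}. The first step is a reduction to bounded geometry. Let $\phi:\calQ\to[0,1]\times[0,\rho]$ be the conformal map from the definition, where we may assume $\rho$ is bounded below (a quad of conformal length at most $\rho$ is easier to cross than one of length exactly $\rho$, by monotonicity in the boundary conditions). We then fix a finite chain of overlapping ``boxes'' in the rectangle: discs of radius about $1/10$ centred along a vertical segment from the top side $\phi((ab))$ to the bottom side $\phi((cd))$, at distance bounded below from the vertical sides. Their number is $O(\rho)$. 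Pulling these back by $\phi^{-1}$ gives a chain of simply connected regions in $\calQ$. Two consecutive pullbacks together with a suitable neighbourhood form a topological annulus, and since conformal modulus is preserved, this annulus has modulus bounded below. The end pieces touch the arcs $(ab)$ and $(cd)$.

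Next we lower-bound the crossing of each link. Each pulled-back annulus must be turned into an even discrete topological annulus of quality $\eta(\rho)>0$, so that Proposition~\ref{prop:RSW+} applies. Inside it, $\omega^+$ contains a circuit with probability at least $c(\eta)$ under $-$ boundary conditions. By the domain monotonicity in Proposition~\ref{prop:FKG_omega+}, and because $-$ boundary conditions are the worst for the increasing event, the bound still holds under $\mu_{\calQ}^{\rm alt}$ restricted to that annulus.

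For the pieces adjacent to $(ab)$ and $(cd)$, we use the half-annulus cut by the boundary arc. A circuit in the full annulus intersected with $\calQ$ gives an $\omega^+$ path that hits the wired $+$ arc. The $+$ arc is itself part of the $\omega^+$ cluster, so this connects the path to the arc. Finally, gluing these circuits with crossings of the overlap regions, which are again circuit events, and applying FKG (all events are increasing in $(\omega^+,1-\omega^-)$) produces a connection from $(ab)$ to $(cd)$ with probability at least $c^{O(\rho)}$.

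The main obstacle is turning a conformal modulus bound into the Euclidean quality-$\eta$ tube condition. Conformal images of nice annuli can be very thin in Euclidean terms when the boundary of $\calQ$ is rough, so quality is not automatic. Our first attempt would use extremal-length arguments: an annulus of modulus $m$ contains a round-ish sub-annulus at some scale, by Koebe distortion away from the boundary. Near the marked arcs, where distortion is uncontrolled, we would instead apply Proposition~\ref{prop:RSW+} at many dyadic scales and take a union over scales. If this fails, the fallback is to invoke the quad version of RSW directly from \cite[Part~D]{DumKozLamMan26}.
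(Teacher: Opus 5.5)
\textbf{There is a genuine gap: the connection to the two marked arcs $(ab)$ and $(cd)$ is not established.}

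Your bulk argument is sound, and less delicate than you fear. A round annulus around a disc at positive distance from all four sides of the rectangle pulls back, by Koebe distortion, to a region containing a round annulus of bounded modulus. After rescaling, this is an annulus of universal quality, at least when the pulled-back scale is large compared with $\delta$. Proposition~\ref{prop:RSW+}, monotonicity in boundary conditions and FKG then chain these circuits.

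Your end-piece argument needs a Euclidean annulus $A$ around a point of $(ab)$, at the scale of the last interior piece, with $\partial\calQ\cap A\subset(ab)$. Only under that condition does $\mu_{\calQ}^{\rm alt}$ restricted to $\calQ\cap A$ dominate the full-annulus measure with $-$ boundary conditions, which is all Proposition~\ref{prop:RSW+} controls. If $(bc)$ or $(da)$ enter $A$, they carry $-$ in your measure but random boundary conditions in the full annulus, and the comparison fails.

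For a general discrete quad this condition can fail at every scale between $\delta$ and $\mathrm{diam}(\calQ)$. The arc $(ab)$ may have fjords and spikes, and the $-$ arcs may come Euclidean-close to it while staying conformally far. In addition, the pullback of your half-disc touching $\phi((ab))$ is geometrically uncontrolled, since Koebe gives nothing up to the boundary.

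Descending through dyadic (Whitney) scales toward the arc does not fix this:
\begin{enumerate}
\item Chaining from the interior down to the arc gives a \emph{product} of order $\log(\mathrm{diam}(\calQ)/\delta)$ probabilities, so no uniform bound.
\item A ``union over scales'' of half-circuits anchored on $(ab)$ is useless unless the half-circuit at the top scale already occurs. That half-circuit is a crossing, between two pieces of the rough arc $(ab)$, of the pulled-back half-annulus. This is an instance of the statement you are trying to prove, so the argument is circular at the boundary.
\end{enumerate}
The fallback of citing a quad RSW from \cite{DumKozLamMan26} is not an argument.

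The missing ingredient is a mechanism that turns annulus crossing bounds, valid at every scale under the worst ($-$) boundary conditions, into crossing bounds for quads in terms of (discrete) extremal length, uniformly in the roughness of the boundary. In particular it must show that an arc of arbitrary roughness is reached by a macroscopic cluster from bounded conformal distance.

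This is exactly what the paper imports. Its proof says the corollary follows from Proposition~\ref{prop:RSW+} via the technique of \cite{DumManTas21}, which carries out this passage for the random-cluster model. There, Proposition~\ref{prop:RSW+}, being stated under $-$ boundary conditions, plays the role of the RSW input that is uniform in boundary conditions.

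To turn your proposal into a proof you would have to reproduce that argument near the marked arcs. As written, it only covers quads whose marked arcs are Euclidean-regular at the relevant scales.
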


\begin{proof}
The proof follows from Proposition~\ref{prop:RSW+} using the technique \im{Probably too vague} of \cite{DumManTas21}.
\end{proof}

Let us also mention a direct corollary of Proposition~\ref{prop:RSW+}, which will be useful to us later. 

\begin{corollary}\label{cor:heigh_diff_annuli}
	For any $\eta > 0$, there exists $c > 0$ such that, for any $\delta > 0$ small enough, 
	any  even discrete topological annulus  $\calD_{\rm out} \setminus \calD_{\rm in}\subset\delta \bbZ^2$ 
	with $\calD_{\rm in}$ having diameter at least $\eta$ and $\calD_{\rm out}^c$ intersecting  $\Lambda_{1/\eta}$, for every $k\ge0$,
	\begin{align}\label{eq:bound_increments}
	\mu_{\calD_{\rm out} \setminus \calD_{\rm in}}^{+} \big[\, \big| h|_{\partial \calD_{\rm out}} -h|_{\partial \calD_{\rm in}} \big| \geq 4k\big] \leq \exp(-ck). 
	\end{align}
\end{corollary}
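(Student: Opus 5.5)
The plan is to convert a large height difference between the two boundary circuits into the existence of many nested, disjoint circuits around $\calD_{\rm in}$, alternating between $\omega^+$ and $\omega^-$. The probability of so many such circuits should then decay exponentially in their number, by repeated use of Proposition~\ref{prop:RSW+}. Under $\mu^+_{\calD_{\rm out}\setminus\calD_{\rm in}}$ the heights on the two boundary circuits are each constant, both lying in $4\bbZ$, since they carry $\omega^+$-open circuits. By the remark at the end of the subsection ``From heights to percolation'', if the difference is at least $4k$, then there exist of order $k$ (say at least $2\lfloor k/2\rfloor-2$) disjoint nested circuits of $\bbZ^2_\even$ in the annulus separating $\calD_{\rm in}$ from $\partial\calD_{\rm out}$. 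These circuits are alternately contained in $\omega^+$ and $\omega^-$. In particular, there are at least $\lfloor k/2\rfloor -1$ nested disjoint $\omega^-$-circuits, and each consecutive pair of them is separated by an $\omega^+$-circuit.

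We bound this event by exploring the circuits one at a time from the inside. Reveal the innermost $\omega^-$-circuit $\gamma_1$ surrounding $\calD_{\rm in}$ by exploring from $\partial\calD_{\rm in}$ outward. On the event that $\gamma_1$ exists, the spatial Markov property (Lemma~\ref{lem:measures}, extended to annuli as stated before Proposition~\ref{prop:RSW+}) makes the conditional law in the region between $\gamma_1$ and $\partial\calD_{\rm out}$ equal to $\mu^{-,+}$: minus boundary conditions on $\gamma_1$ and plus on $\partial \calD_{\rm out}$. This law is, by Proposition~\ref{prop:FKG_omega+} (monotonicity in $(\omega^+,1-\omega^-)$), stochastically larger than the $\mu^-$ measure on suitable sub-annuli. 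The key step is then the following claim: for some fixed $c>0$, with conditional probability at least $c$ there is an $\omega^+$-circuit immediately surrounding $\gamma_1$. Such a circuit precludes any further $\omega^-$-circuit between... but actually this does not preclude further alternation. So the correct blocking event must be chosen differently. I would instead seek an $\omega^+$ \emph{path} crossing from $\gamma_1$ toward $\partial\calD_{\rm out}$ in a region where $\omega^-$ circuits must pass. A cleaner option is to use the dual: the $2k$ alternating circuits force every crossing path of $(\omega^+)^*$, which has oscillation at most $3$, to be blocked. So we want a $(\omega^+)^*$-crossing from $\calD_{\rm in}$ to $\partial\calD_{\rm out}$ with positive probability at each scale.

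This is where the geometric hypotheses enter. Since $\calD_{\rm in}$ has diameter at least $\eta$ and $\calD_{\rm out}^c$ meets $\Lambda_{1/\eta}$, the circuits all surround a set of diameter at least $\eta$ inside a box of size $1/\eta$. Only boundedly many of them, $C(\eta)$, can be ``thick'' in the sense of admitting annuli of quality $\eta'$ between them. The others must pinch, that is, come close together somewhere. I therefore expect the main obstacle to be this lack of uniform annulus quality: RSW (Proposition~\ref{prop:RSW+}, Corollary~\ref{prop:RSW_quads}) only gives constants depending on quality or conformal length. To handle it, I would parametrize by conformal modulus. The annulus $\calD_{\rm out}\setminus\calD_{\rm in}$ has modulus at most $C(\eta)$. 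Nested circuits split it into sub-annuli whose moduli sum to at most $C(\eta)$, so most sub-annuli are thin, which is the wrong direction for a blocking bound.

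I would therefore reverse the roles. Each alternation from an $\omega^-$-circuit to an $\omega^+$-circuit costs a factor $(1-c)$, where $c$ comes from a single fixed-modulus quad estimate. Concretely, I would explore the $\omega^-$-circuits from outside. Conditional on the outermost $j$ of them, the next alternation requires an $\omega^+$-circuit inside the current $\omega^-$-circuit that surrounds $\calD_{\rm in}$. The domain enclosed by the current $\omega^-$-circuit $\gamma$ always contains $\calD_{\rm in}$, which has diameter at least $\eta$. So Corollary~\ref{prop:RSW_quads}, applied to quads in the annulus between $\gamma$ and a fixed $\eta/2$-neighbourhood structure of $\calD_{\rm in}$, after symmetrising the roles of $\pm$ via the $h\mapsto h+2$ symmetry, gives with probability at least $c(\eta)$ an $\omega^-$-crossing from $\gamma$ to $\partial\calD_{\rm in}$. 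Such a crossing prevents any $\omega^+$-circuit from separating them, so the sequence stops. Iterating gives the factor $(1-c)^{k/2-1}$, which is the claimed bound. Making the quad conformal length uniformly bounded regardless of how $\gamma$ wiggles is the delicate part; I would try using extremal length monotonicity, which says that adding the region between $\gamma$ and $\partial\calD_{\rm out}$ only shortens the relevant crossing length, combined with FKG comparison of boundary conditions.
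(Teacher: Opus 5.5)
There is a genuine gap in your final step, namely in the choice of the blocking event. You explore $\omega^-$-circuits from the outside. Given the current circuit $\gamma$, the unexplored annulus between $\partial\calD_{\rm in}$ and $\gamma$ therefore carries $+$ boundary conditions on its inner side and $-$ on its outer side.

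You then ask for an $\omega^-$-crossing from $\gamma$ to $\partial\calD_{\rm in}$. Taken literally, this event is empty: $\partial\calD_{\rm in}\subset\omega^+$ under $\mu^+$, and no edge of $\omega^-$ is incident to an edge of $\omega^+$. At best you can ask for an $\omega^-$-path ending next to $\partial\calD_{\rm in}$. That is a crossing ending on an arc with \emph{unfavourable} boundary conditions, which Corollary~\ref{prop:RSW_quads} does not cover: it only treats crossings between arcs carrying $+$ conditions for $\omega^+$.

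The symmetry $h\mapsto h+2$ does not rescue this. It flips the boundary conditions on both sides at once, so your event becomes an $\omega^+$-crossing that ends on a $-$ arc. If instead you take one side of the quad to be an $\eta/2$-neighbourhood of $\calD_{\rm in}$, the crossing only reaches distance $\eta/2$ from $\calD_{\rm in}$. Arbitrarily many alternating circuits can still fit in that collar, so the sequence does not stop.

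Your worry that thin sub-annuli go ``the wrong direction'' is also misplaced. Thin annuli make crossing-type blocking events easier, not harder.

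The missing idea is to stop the exploration at $\omega^+$-circuits. Then the residual annulus has $+$ boundary conditions on \emph{both} sides, and the blocking event is increasing with favourable boundary conditions. This is what the paper does.

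It first derives, from Proposition~\ref{prop:RSW+} and FKG, a constant $c_0=c_0(\eta)>0$ with the following property: under $\mu^+_{\calD_{\rm out}\setminus\calD_{\rm in}}$, the probability that $\partial\calD_{\rm in}$ and $\partial\calD_{\rm out}$ are connected in $\omega^+$ is at least $c_0$. This holds uniformly over all annuli satisfying the hypotheses of the corollary, and such a connection forces a zero height increment.

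It then explores the alternating circuits from the inside. After each $\omega^+$-circuit $\gamma^+$, the conditional law between $\gamma^+$ and $\partial\calD_{\rm out}$ is $\mu^+$ on an annulus with two properties. Its inner domain contains $\calD_{\rm in}$, so it still has diameter at least $\eta$. Its outer boundary is unchanged. Hence the same $c_0$ applies, and each of the at least $k$ $\omega^+$-stages costs a factor $1-c_0$.

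Exploring $\omega^+$-circuits from the outside would work equally well, since then the outer complement still contains $\calD_{\rm out}^c$. Your direction of exploration is fine; only the sign of the circuits you stop at is wrong. In either version, no control of the conformal length of the wiggly explored circuit is needed. The hypotheses of the statement are inherited by every residual annulus, which is exactly why they are phrased that way.
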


Note that, due to the boundary conditions, $h$ is constant on $\partial \calD_{\rm out}$ and on $\partial \calD_{\rm in}$, with the difference between these constants being a multiple of $4$. 

\begin{proof}
	We first observe that, as a consequence of Proposition~\ref{prop:RSW+} and the FKG property for $\omega^+$ 	(Proposition~\ref{prop:FKG_omega+}), there exists $c_0 = c_0 (\eta) > 0$ such that 
	\begin{align}\label{eq:diff_is_0}
	\mu_{\calD_{\rm out} \setminus \calD_{\rm in}}^{+} \big[\partial \calD_{\rm in} \xleftrightarrow{\omega^+}\partial \calD_{\rm out}\big] \geq c_0.
	\end{align}
	The event above guarantees in particular that the height increment between $\partial \calD_{\rm in} $ and $\partial \calD_{\rm out} $ is zero. 
	
	We turn to proving \eqref{eq:bound_increments}. If the height difference between  $\partial \calD_{\rm in} $ and $\partial \calD_{\rm out} $ is at least $4k$, then there exist $2k$ concentric circuits in  $ \calD_{\rm out} $, alternating between $\omega^+$ and $\omega^-$, surrounding $\partial \calD_{\rm in} $.
	By exploring successively the circuits from the inside towards the outside, and applying \eqref{eq:diff_is_0} at every stage, we conclude that 
	\begin{align}\label{eq:alternating_circuits_bound}
	\mu_{\calD_{\rm out} \setminus \calD_{\rm in}}^{+} \big[ \text{ $2k$ concentric alternating circuits of $\omega^+$ and $\omega^-$ around $\partial \calD_{\rm in} $}\big] \leq c_0^k.
	\end{align}
	This provides the desired bound. 
\end{proof}

\subsection{Circuit estimates with height boundary conditions}\label{sec:double_circ}

An  RSW statement also holds under boundary conditions expressed in terms of $h_4$. 
Note that such boundary conditions are quite restrictive for $\omega$ as the FKG inequality for $\omega^+$ may fail in such situations. 
Furthermore, additional complications arise when working with boundaries with several connected components. 
We will use here a mixture of height function estimates, namely the FKG inequality for $|h|$ as well as results of \cite{DumKarManOul22}, 
and properties of the percolation configuration $\omega$. 

For technical reasons, we will directly state an estimate for  {\em double} circuits. 
Let $\calD_{\rm out} \setminus \calD_{\rm in}$ be a discrete topological annulus of quality $\eta > 0$ on $\delta \bbZ^2$ with $\delta > 0$ sufficiently small. 
Then one may identify simply connected domains 
$$\calD_{\rm in} = \calD_0 \subset \calD_1 \subset \calD_2  \subset \calD_3  = \calD_{\rm out}$$ 
such that the concentric  annuli $\calA_1 = \calD_1 \setminus \calD_0$, 
$\calA_2 = \calD_2 \setminus \calD_1$ and  
$\calA_3 = \calD_3 \setminus \calD_2$ are each of quality $\eta / 4$. 
We assume that this is done in a fixed, arbitrary way for any choice of $\calD_{\rm out} \setminus \calD_{\rm in}$. 

We say that $\omega^+$ contains a double circuit in $\calD_{\rm out} \setminus \calD_{\rm in}$ if 
both $\calA_1$ and $\calA_3$ contain circuits of $\omega^+$ separating the inside of the annulus from the outside. 
Write ${\rm DbCirc} ( \calD_{\rm out} \setminus \calD_{\rm in})$ for the event that such a double circuit exists. 

\begin{lemma}\label{lemma:double_circ}
	For each $\eta >0$, $M \geq 1$ and $n\geq 1$, there exists a constant $c  = c(\eta, M, n)> 0$ such that, for any $\delta > 0$ small enough the following holds. 
	For any discrete  topological annulus  $\calD_{\rm out} \setminus \calD_{\rm in}\subset\delta \bbZ^2$ of quality $\eta$, 
	any connected sets of faces $R_1,\dots, R_n$ and any realisations $\xi_1, \dots, \xi_n$ of $h_4$ on them such that 
	\begin{itemize} 
	\item no $R_i$ intersects the annulus $\calD_{\rm out} \setminus \calD_{\rm in}$; 
	\item each $R_i$ has diameter at least $\eta$ and is contained in $\Lambda_{1/\eta}$;  
	\item $\osc(\xi_i) \leq M$ for all $i$,
	\end{itemize}
	we have 
	\begin{align}\label{eq:RSWhbc}
	\mu_{\delta\bbZ^2} \Big[ {\rm DbCirc} ( \calD_{\rm out} \setminus \calD_{\rm in}) \,\Big|\,
	\bigcap_{i = 1}^n\{h_4 = \xi_i \text{ on } R_i\} \Big] \geq c. 
	\end{align}
\end{lemma}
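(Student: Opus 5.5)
The plan is to reduce the statement to RSW estimates for $\omega^+$ under $\omega$-boundary conditions (Proposition~\ref{prop:RSW+}, Proposition~\ref{prop:FKG_omega+}). The conditioning on $h_4$ on the sets $R_i$ is controlled by a height-function argument.

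\emph{Step 1: a one-sided circuit under height conditioning.} First show that, under $\mu_{\delta\bbZ^2}[\,\cdot\,|\bigcap_i\{h_4=\xi_i \text{ on } R_i\}]$, with probability at least $c_1(\eta,M,n)>0$ the annulus $\calA_2$ contains a circuit of $\omega^+$ or of $\omega^-$. It should be possible to choose this circuit of $\omega^-$ when convenient; by the symmetry $h\mapsto h+2$ (which exchanges $\omega^+$ and $\omega^-$ and preserves the class of conditionings), the two cases are handled in the same way. To get such a circuit, use the FKG inequality for $|h|$ together with the delocalisation/RSW results of \cite{DumKarManOul22}. Normalise $h$ so that one of the $R_i$ sits at height in $[0,M]$. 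Since each $R_i$ has diameter at least $\eta$, lies in $\Lambda_{1/\eta}$ and has oscillation at most $M$, the heights of the different $R_i$ differ by at most a constant $C(\eta,M,n)$ with probability bounded below. This comes from an argument in the spirit of Corollary~\ref{cor:heigh_diff_annuli}, applied to annuli separating the $R_i$ and using quality $\eta$. On this event the conditioning is a bounded perturbation of boundary conditions, and the $|h|$-FKG inequality yields with positive probability a circuit in $\calA_2$ on which $h$ is constant and even, hence a circuit of $\omega^+$ or $\omega^-$.

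\emph{Step 2: decoupling by exploration.} Explore from the outside or inside of $\calA_2$ the outermost such circuit $\gamma$ (assume it is of $\omega^-$; the $\omega^+$ case is analogous and even easier). Then apply the spatial Markov property (Lemma~\ref{lem:measures}) on each side of $\gamma$. Each $R_i$ lies either inside $\calD_{\rm in}$ or outside $\calD_{\rm out}$, so it is separated from $\calA_1$ or from $\calA_3$. Consider the region between $\gamma$ and the sets $R_i$ on the relevant side.

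\emph{Step 3: comparison inside and outside $\gamma$.} Conditionally on $\gamma$ and everything outside it, the law inside is $\mu^-$ on the domain enclosed by $\gamma$, further conditioned on the $h_4$-values of the $R_i$ inside $\calD_{\rm in}$. This conditioning is not FKG-friendly. To handle it, insert an intermediate step. By Proposition~\ref{prop:odd_spins_uniform} and the FKG property conditionally on $h_4$ on even faces (Proposition~\ref{prop:FKG_omega+}, second part), conditioning on the even part of $\xi_i$ preserves the lattice condition. The odd part is then handled by the uniform $\pm1$ assignment of odd heights on dual clusters, which costs at most a factor $2^{-O(1)}$ given a bounded number of relevant dual clusters. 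This is ensured with positive probability by the oscillation bound. Then Proposition~\ref{prop:RSW+} in $\calA_1$ (quality $\eta/4$, $-$ boundary on $\gamma$) gives an $\omega^+$ circuit with probability $c(\eta/4)$. The symmetric argument applies in $\calA_3$.

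\emph{Main obstacle.} The hard step is Step 3, namely obtaining positive association, or at least a monotone comparison, once heights are conditioned on both even and odd faces of the $R_i$. The fallback plan is to bypass FKG entirely there. I would construct the $\omega^+$ circuit in $\calA_1$ by gluing a bounded number of quad crossings (Corollary~\ref{prop:RSW_quads}). Each crossing would be obtained under $\mu^{\rm alt}$ in quads bounded by explored $\omega^\pm$ paths, so that only $\omega$-boundary conditions ever appear.
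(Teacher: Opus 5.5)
Your offset bound matches the paper's first step. The argument then breaks at Step~3, the step you flag yourself.

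\textbf{Where the argument fails.} After exploring $\gamma\subset\calA_2$, the annulus $\calA_1$ where you want the $\omega^+$ circuit is still directly exposed to the conditioning $\{h_4=\xi_i \text{ on } R_i\}$ for the $R_i\subset\calD_{\rm in}$. These sets may even touch $\partial\calD_{\rm in}$, and the conditioning involves odd faces, where FKG for $\omega^+$ is not available.

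Your remedy does not hold. By Proposition~\ref{prop:odd_spins_uniform}, adding the odd part of $\xi_i$ to the even part tilts the law of $(h_4^{\rm even},\omega)$ by $\ind_{\{\text{consistent}\}}\,2^{-m(\omega)}$, where $m(\omega)$ counts the $\omega^*$-clusters meeting the odd faces of $R_i$.
\begin{itemize}
\item The bound $\osc(\xi_i)\le M$ controls heights, not $m$. The count $m$ is unbounded (it can be of order $|R_i|$).
\item When $R_i$ touches $\calA_1$, $m$ varies with mergers of dual clusters inside $\calA_1$ itself, by an unbounded amount. So the tilt is not $2^{-O(1)}$.
\item Since $m$ increases with $\omega^+\cup\omega^-$, the weight $2^{-m}$ is decreasing in $\omega^+$ but increasing in $1-\omega^-$. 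It is therefore not monotone for the order on $(\omega^+,1-\omega^-)$, and you lose the comparison with $\mu^-_{\calA_1}$ needed to invoke Proposition~\ref{prop:RSW+}.
\end{itemize}
The quad-gluing fallback is circular: quads bounded by explored $\omega^\pm$ paths that screen off the $R_i$ are exactly the shielding you are trying to build.

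Step~1 also does not deliver what you need. A circuit of constant even height is not an $\omega^\pm$-circuit, because an edge whose two dual odd faces carry equal heights is open only with probability $1/\mathbf c$. Moreover, what $|h|$-FKG and \cite{DumKarManOul22} naturally give are circuits of $\{|h|\ge k\}$. The $\omega^\pm$ level-line circuits these force are not localised in $\calA_2$.

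\textbf{The missing idea.} Place the shielding between the conditioned sets and the target annuli, using the $|h|$-FKG inequality. Unlike FKG for $\omega^+$, it tolerates conditioning on full heights. This is how the paper proceeds:
\begin{enumerate}
\item Once offsets are bounded, condition on actual heights $h=\zeta$ on $\bigcup_i R_i$, with $|\zeta|\le C+2M$.
\item Split $\calA_1=\calA_{1/2}\cup\calA_{3/2}$ and $\calA_3=\calA_{5/2}\cup\calA_{7/2}$. Use \cite[Thm.~1.4]{DumKarManOul22} with $|h|$-FKG to get, with probability at least $c_1$, circuits of $\{|h|\ge C+2M+4\}$ in $\calA_{1/2}$ and $\calA_{7/2}$, the sub-annuli adjacent to the $R_i$.
\item The height gap deterministically forces $\omega^+$ circuits separating the inner $R_i$ from $\calA_{3/2}$, and $\calA_{5/2}$ from the outer $R_i$.
\item Explore these circuits from the side of the $R_i$. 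The spatial Markov property then leaves a region with pure $+$ boundary conditions and no trace of the $h_4$-conditioning.
\item Monotonicity (Proposition~\ref{prop:FKG_omega+}) and Proposition~\ref{prop:RSW+} give $\omega^+$ circuits in $\calA_{3/2}$ and $\calA_{5/2}$ with probability at least $c_2$.
\end{enumerate}
FKG for $\omega^+$ is never used under height conditioning, and no circuit in $\calA_2$ is needed.
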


\begin{proof}
	Fix all quantities as in the statement. 
	
	Note that the conditioning in \eqref{eq:RSWhbc} is not in terms of the height, but only in terms of its gradient in each region.  
	Indeed, to transform it into a condition on the values $h$ on the regions  $R_i$, 
	one needs to fix the offsets of the height between the regions. 
	Write $O_i \in 4\bbZ$ for the offset of the height in $R_i$: it is defined by identifying a face $x_i$ in each $R_i$ and imposing that 
	$h(x_i) - O_i \in \{0,1,2,3\}$ for each $i$.  
	Since the height function is defined up to global additive constant, we fix $O_1 = 0$. 
	
	Then
	\begin{align*}
		&\mu_{\delta\bbZ^2} \Big[ {\rm DbCirc} \,\Big|\,\bigcap_{i = 1}^n\{h_4|_{R_i} = \xi_i\} \Big] \\
		&= \sum_{o_2,\dots, o_n \in 4\bbZ} 
		\mu_{\delta\bbZ^2} \Big[ {\rm DbCirc} \,\Big|\,\bigcap_{i = 1}^n\{h|_{R_i} = \xi_i + o_i\} \Big]
		\mu_{\delta\bbZ^2} \Big[	\bigcap_{i = 2}^n\{O_i = o_i\} \,\Big|\,\bigcap_{i = 1}^n\{h_4|_{R_i} = \xi_i \} \Big].
	\end{align*}
	To be precise, by $\{h|_{R_i} = \xi_i + o_i\}$ we mean that $h_4 = \xi_i$ on $R_i$ and that $h(x_i) - o_i \in \{0,1,2,3\}$. 
	
	\begin{figure}
	\begin{center}
	\includegraphics[width = 0.6\textwidth]{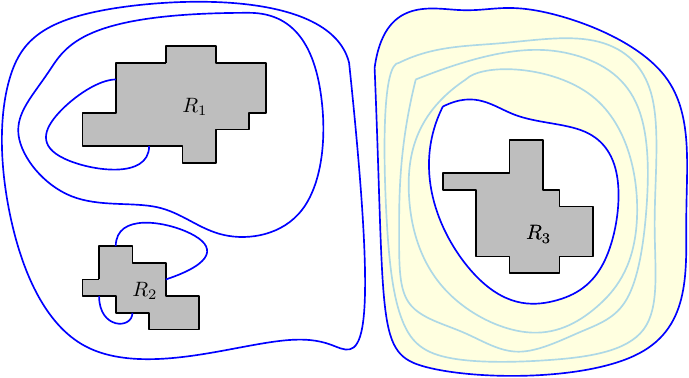}
	\caption{An instance with three sets $R_i$, where the offset of $R_3$ is larger than those of $R_1$ and $R_2$. 
	The blue paths are part of $\omega^+$.  
	This imposes the existence of a number of loops of $\omega^+$ separating $R_1$ and $R_2$ from $R_3$, 
	which in turn forces the existence of an annulus (yellow) with $\omega^+$ boundary conditions
	and a large height difference between its inner and outer boundaries. }
	\label{fig:Rs}
	\end{center}
	\end{figure}
	
	We will first explain that there exists $C = C(\eta, M,n)$ such that
	\begin{align}\label{eq:offset_bound}
		\mu_{\delta\bbZ^2} \Big[	\max_i |O_i | \leq C  \,\Big|\,
		\bigcap_{i = 1}^n\{h_4|_{R_i} = \xi_i \} \Big] \geq \tfrac12.
	\end{align}
	We will then bound the probability of ${\rm DbCirc}$ when offsets satisfy the condition above. 
	\medskip 
	
	\noindent
	\textbf{Bounding the offsets: proof of \eqref{eq:offset_bound}.}
	If we suppose that $\max_i |O_i | > C$ and we take $C$ sufficiently large, a 
	direct computation states that there must exits an  even topological annulus 
	$\calB_{\rm out} \setminus \calB_{\rm in}$ with $\calB_{\rm in}^c$ intersecting $\Lambda_{1/\eta}$, 
	disjoint from all regions $R_i$, separating some of the regions $R_i$ from the other,
	with $\partial \calB_{\rm in} \subset \omega^+$, $\partial \calB_{\rm out} \subset \omega^+$ and 
	$$\big| h|_{\partial \calB_{\rm out}} -h|_{\partial \calB_{\rm in}}\big| \geq \frac{C}{4nM}.$$
	Furthermore, $\calB_{\rm in}$ may be explored from the inside and $\calB_{\rm out}$ from the outside. 
	Indeed, if we suppose that $R_1,\dots, R_j$ are the regions in $\calB_{\rm in}$ and the other are outside, 
	it suffices to take $\partial \calB_{\rm in}$ and $\partial \calB_{\rm out}$ as 
	the innermost and outermost, respectively, circuits of  $\omega^+$ 	separating $R_1,\dots, R_j$ from the other regions. 
	When conditioning on $\calB_{\rm in}$, $\calB_{\rm out}$ and on $h_4|_{R_i} = \xi_i$ for each $i$, 
	the measure in $\calB_{\rm out} \setminus\calB_{\rm in}$ is $\mu_{\calB_{\rm out} \setminus\calB_{\rm in}}^+$. 
	See Figure~\ref{fig:Rs}.

	Thus, accounting for the combinatorial factor which consists in selecting the regions surrounded by the annulus, we find 
	\begin{align}
		\mu_{\delta\bbZ^2} \Big[	\max_i |O_i | > C  \,\Big|\,\bigcap_{i = 1}^n\{h_4|_{R_i} = \xi_i \} \Big] 
		&\leq 2^n \sup_{\Lambda_{\rm in}, \Lambda_{\rm out}} \mu_{\Lambda_{\rm out} \setminus \Lambda_{\rm in}}^+\big[\,\big| h|_{\partial \Lambda_{\rm out}} -h|_{\partial  \Lambda_{\rm in}}\big| \geq \tfrac{C}{4nM}\big]\nonumber\\
		&\leq 2^n\cdot \exp(-c_0 \tfrac{C}{4nM}),\label{eq:offset_bound2}
	\end{align}
	where the $\sup$ is over all possible selections $\Lambda_{\rm in}$ and $\Lambda_{\rm out}$ of $\calB_{\rm in}$ and $\calB_{\rm out}$. Note that, since $\calB_{\rm in}$ must contain at least one of the regions $R_i$, it has diameter at least $\eta$. 
	Corollary~\ref{cor:heigh_diff_annuli} provides the second inequality, with $c_0$ depending only on $\eta$. 
	Taking now $C=C(\eta, M,n)$ large enough, we conclude that the above is smaller than $1/2$. 
	This concludes the proof of~\eqref{eq:offset_bound}.

	\medskip 
	
	\noindent
	\textbf{Bound the probability of ${\rm DbCirc}$ when offsets are small.}
	Fix any  offsets $o_2,\dots, o_n \in 4\bbZ$ with $\max_i |o_i| \leq C$, where $C$ is the constant appearing in \eqref{eq:offset_bound}. 
	Write $\zeta$ for the actual heights on each of the regions $R_i$ defined by $\xi_i + o_i$. 
	Then, we have 
	$$ \max_{z \in \cup_i R_i} |\zeta(z)|\leq C+2M.$$
	
		\begin{figure}
	\begin{center}
	\includegraphics[width = 0.7\textwidth]{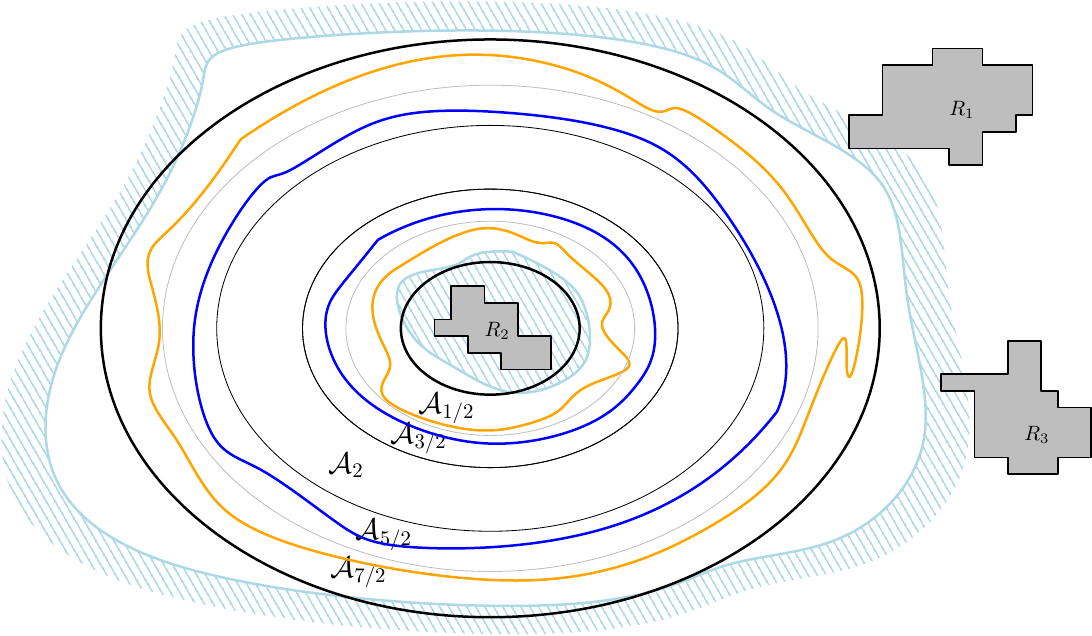}
	\caption{The concentric annuli $\calA_{1/2}, \dots, \calA_{7/2}$ separating $R_1$ from $R_2$ and $R_3$. 
	When $\calA_{1/2}$ and $\calA_{7/2}$ contain circuits of $|h| \geq C+2M+4$ (orange), 
	these force the existence of $\omega^+$ circuits (light blue) separating $\calA_{3/2}$ from $R_1$ and $\calA_{5/2}$ from  $R_2$ and $R_3$. The latter may exit $\calA_{1/2}$ and $\calA_{7/2}$. By exploring such circuits, we may then create with positive probability circuits of $\omega^+$ in $\calA_{3/2}$ and $\calA_{5/2}$ (blue). }
	\label{fig:double_circ}
	\end{center}
	\end{figure}

	Assume that $R_1,\dots, R_\ell$ are contained in $\calD_{\rm in}$ and $R_{\ell+1},\dots, R_n$ are outside $\calD_{\rm out}$, with $0 < \ell < n$;
	we will treat the degenerate cases $\ell = 0$ and $\ell = n$ after. 
	For technical reasons, we further split the annuli $\calA_1$ and $\calA_3$ into concentric disjoint annuli $\calA_{1/2}$, $\calA_{3/2}$ and $\calA_{5/2}$, $\calA_{7/2}$, respectively, each of quality $\eta/16$. 
	Applying \cite[Thm. 1.4]{DumKarManOul22} and the FKG inequality for $|h|$  \cite[Prop. 2.4]{DumKarManOul22} 
	we conclude that 
	\begin{align}\label{eq:DbCirc|h|}
		\mu_{\delta\bbZ^2} \Big[\calA_{1/2} \text{ and }\calA_{7/2} \text{ contain circuits of $|h| \geq C+2M+4$} \,\Big|\ h=\zeta \text{ on } \bigcup_i R_i \Big] \geq c_1,
	\end{align}
	with $c_1 > 0$ depending only on $\eta$, $C$ and $M$. Write ${\rm DbCirc}(|h|)$ for the event in the probability above. 
	
	When ${\rm DbCirc}(|h|)$ occurs, there exist circuits $\Gamma_{\rm in}$ and $\Gamma_{\rm out}$ of $\omega^+$
	separating $R_1,\dots, R_\ell$ from $\calA_{3/2}$ and $\calA_{5/2}$ from $R_{\ell+1},\dots, R_n$, respectively. 
	Furthermore, such circuits may be explored from the inside and outside, respectively. 
	See Figure~\ref{fig:double_circ}.

	Thus, applying the monotonicity in boundary conditions for $\omega^+$ (Proposition~\ref{prop:FKG_omega+}), we find 
\begin{align}
	\mu_{\delta\bbZ^2} \Big[ 	{\rm DbCirc} ( \calD_{\rm out} \setminus \calD_{\rm in}) \Big|{\rm DbCirc}(|h|) \text{ and } h=\zeta \text{ on } \bigcup_i R_i \Big]& \\
	\geq \mu_{\delta\bbZ^2}\big[ \omega^+ \text{ has circuits in $\calA_{3/2}$ and in $\calA_{5/2}$}\big]
	&\geq c_2, 
\end{align}
with $c_2> 0$ provided by Proposition~\ref{prop:RSW+} and depending only on $\eta$. 
The same conclusion is reached when $\ell = 0$, by exploring only $\Gamma_{\rm out}$ and when $\ell=n$ by exploring only $\Gamma_{\rm in}$.

	Combining the above with \eqref{eq:DbCirc|h|} and summing over all values of $o_2,\dots, o_n$ bounded by $C$, we find 
	\begin{align}
		\mu_{\delta\bbZ^2} \Big[ 	{\rm DbCirc} ( \calD_{\rm out} \setminus \calD_{\rm in}) \,\Big| 
		\bigcap_{i = 1}^n\{h_4|_{R_i} = \xi_i \} \text{ and } 	\max_i |O_i | \leq C \Big] \geq c_1c_2.
	\end{align}
	Finally, \eqref{eq:offset_bound} leads us to the desired bound. 
\end{proof}

\subsection{Flower domain exploration}

Fix a discrete topological annulus $\calD_{\rm out} \setminus \calD_{\rm in}$ of faces of $\delta \bbZ^2$. 
Define the exploration procedure from outside to inside in $\calD_{\rm out} \setminus \calD_{\rm in}$ as follows. 

Consider the percolation $\omega^+$ and its dual $(\omega^+)^*$. 
Starting from the boundary of  $\calD_{\rm out}$, explore all interfaces between $\omega^+$ and $(\omega^+)^*$ up to the first time they exit $\calD_{\rm out}$ or enter $\calD_{\rm in}$. 
By exploring the interfaces, we mean revealing the state of the percolation edges and the values of $h_4$ on the faces on either side of the exploration. 
See Figure~\ref{fig:flower}  for an example. 

	\begin{figure}
	\begin{center}
	\includegraphics[width = 0.55\textwidth]{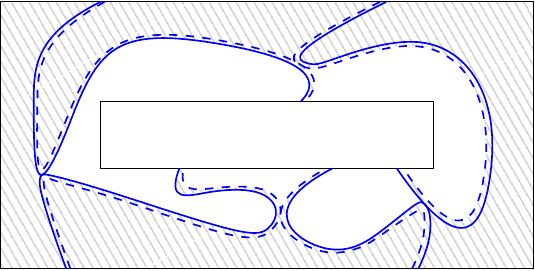}
	\caption{A topological annulus and the interfaces (in blue) between the primal percolation $\omega^+$ and its dual, explored from the outer boundary towards the inside.  The explored region is hashed and the white middle region is the flower domain explored from outside to inside. }
	\label{fig:flower}
	\end{center}
	\end{figure}

Write ${\rm Exp}$ for the set of faces resulting from this exploration and $h_{\rm exp}$ and $\omega_{\rm exp}$ for the resulting height and percolation configuration. 
Write $\calF$ for the connected component of $\calD_{\rm in}$ in $\delta \bbZ^2\setminus {\rm Exp}$. 
Note that the boundary of $\calF$ is formed of a number of arcs $P_1,\dots, P_k$ of $\omega^+$ or $(\omega^+)^*$. 
The height on each of the arcs of $\omega^+$ is constant, while the arcs of $(\omega^+)^*$ have oscillations at most $3$.

We call $\calF$ the flower domain explored from $\calD_{\rm out}$ to $\calD_{\rm in}$ and $P_1,\dots, P_k$ its {\em petals}. 
We write $\osc( \calF)$ for the oscillation of the explored height function on the boundary of~$\calF$. 

Note that the measure inside $\calF$ conditionally on the explored part has complicated boundary conditions. 
Indeed, they are a mix of heights and percolation information. 
To obtain boundary conditions induced exclusively by heights, one may explore the {\em thickened} flower domain $\calF^+$, which is obtained by further exploring the height in each face of $\calF$ adjacent to its boundary. 
The notation $\osc(\calF^+)$ extends trivially. 

In a similar way, we define the flower domain explored from the inside towards the outside, or from $\calD_{\rm in}$ to $\calD_{\rm out}$. 
Its petals face outwards and $\calF^c$ is a simply connected set of faces.

The following result will allow us to control the total oscillation of $h$ on $\partial \calF$.
Write $A_k(\calD_{\rm in }, \calD_{\rm out})$ for the event that there exist two connected paths $\gamma_1$ and $\gamma_2$ of faces of $\delta\bbZ^2$ between $\partial \calD_{\rm in}$ and $\partial\calD_{\rm out}$ 
with 
\begin{align}
 \min_{z \in \gamma_2} h(z)  \geq \max_{z \in \gamma_1} h(z) +k.
\end{align}
We call this the $k$-arm event, due to its similarity with $k$-arm percolation events. 

\im{I believe that this lemma contains the same info as the one written before - Lemma~\ref{lemma:general_rsw} (except that I changed $k^2$ to $k$. }

\begin{lemma}[Bound on the flower domain oscillation]\label{lemma:arm_event_bound}
    For any $\eta > 0$, there exists $c>0$ such that, 
for any topological annulus $\calD_{\rm out} \setminus \calD_{\rm in}$ of quality $\eta$
and any $k \geq 1$, 
        \begin{equation}\label{eq:arm_event_bound1}
            \mu_{\Z^2}\big[A_k(\calD_{\rm in }, \calD_{\rm out})\big]\leq \tfrac1c e^{-c  k}.
        \end{equation}
	Furthermore,
        for any measure $\mu_G^\zeta$ with $G$ simply connected \im{We do need $G$ to be simply connected, right? We should not write b.c. $\zeta$ in terms of $h_4$ on non-simply connected graphs.}, 
             \begin{equation}\label{eq:arm_event_bound2}
            \mu_{G}^\zeta\left[A_k(\calD_{\rm in }, \calD_{\rm out})\right]
            \leq \tfrac1c e^{-c (  k - \osc(\zeta))}.            
        \end{equation}
\end{lemma}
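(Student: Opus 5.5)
The idea is to turn a large height difference between two crossing paths into many nested crossings of alternating $\omega^+$ and $\omega^-$ clusters, and then to pay a uniform multiplicative cost for each crossing using RSW and FKG.

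\textbf{Reduction to alternating crossings.} Suppose $A_k(\calD_{\rm in},\calD_{\rm out})$ occurs, with paths $\gamma_1,\gamma_2$ as in its definition. Consider even heights $4j+2\epsilon$ strictly between $\max_{\gamma_1}h$ and $\min_{\gamma_2}h$; there are at least $\lfloor k/2\rfloor-2$ of them. For each such height $m$, the boundary of the region of even faces with height at least $m$ that contains $\gamma_2$ is formed of edges where $\omega^\pm$ is forced to be open, with the sign given by the parity of $m/2$. This is the same mechanism that produces alternating circuits from large height differences, but here it is localised to the annulus.

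We therefore obtain at least $\lfloor k/4\rfloor-1$ disjoint paths, alternately in $\omega^+$ and $\omega^-$, each crossing the annulus from $\partial\calD_{\rm in}$ to $\partial\calD_{\rm out}$ and ordered cyclically between $\gamma_1$ and $\gamma_2$. If $\gamma_1,\gamma_2$ bound a sector on each side, both sides must contain such crossings; we only need one side. It thus suffices to show that the probability of $2\ell$ alternating disjoint crossings of $\omega^+$ and $\omega^-$ decays like $e^{-c\ell}$.

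\textbf{Exponential decay via RSW and FKG.} Explore the crossings one by one in cyclic order: the leftmost $\omega^+$ crossing, then the next $\omega^-$ crossing, and so on. Each explored pair $(\omega^+$ crossing, $\omega^-$ crossing$)$ can be revealed from one side, so the unexplored region is a quad whose boundary arcs carry $+$ and $-$ conditions for $\omega$. By the spatial Markov property (Lemma~\ref{lem:measures}, in its mixed-boundary version), the conditional law in the remaining region is a measure of the type $\mu^{\rm alt}$.

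Since the annulus has quality $\eta$, the remaining region contains a tube of quality $\eta$ that crosses it. By Corollary~\ref{prop:RSW_quads} together with monotonicity in boundary conditions (Proposition~\ref{prop:FKG_omega+}), with probability at least $c(\eta)$ there is a $(1-\omega^-)$-dual or $\omega^+$ structure that separates, inside the remaining sector, the last explored crossing from $\partial\calD_{\rm in}$ or from $\partial\calD_{\rm out}$. On that event, no further $\omega^-$ crossing can occur at that stage.

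\textbf{The main obstacle.} The difficulty is obtaining this uniform cost at each step. The remaining region may be thin, so its conformal length is unbounded. I would handle this by using a fixed annular tube of quality $\eta$ and blocking with a circuit rather than with a crossing of a quad.

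Concretely, with probability at least $c$ there is an $\omega^+$ circuit in a sub-annulus (Proposition~\ref{prop:RSW+}). An $\omega^+$ circuit is incompatible with an $\omega^-$ crossing of the annulus, since no $\omega^+$ edge is incident to an $\omega^-$ edge. I would therefore bound the probability directly: every $2\ell$ alternating crossings contain $\ell$ disjoint $\omega^-$ crossings. After splitting the annulus into $\ell$ concentric sub-annuli, this forbids an $\omega^+$ circuit in each sub-annulus. This last formulation fails, because sub-annuli of quality $\eta/\ell$ degrade, and this is exactly why the cyclic-order exploration is needed. I expect this uniformity to require the most care.

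\textbf{Finite-volume statement.} For \eqref{eq:arm_event_bound2}, I would compare $\mu_G^\zeta$ with the infinite-volume picture. Heights on the boundary vary by at most $\osc(\zeta)$, so all but $O(\osc(\zeta))$ of the level crossings produced above are genuine alternating crossings inside $G$ that do not terminate on $\partial G$. Running the same exploration gives the factor $e^{-c(k-\osc(\zeta))}$. Simple connectedness of $G$ ensures that the level sets of $h$ are well defined and that the boundary condition can be expressed through $h_4$.
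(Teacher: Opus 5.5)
Your proposal has a genuine gap at the step you flag yourself: the uniform per-step cost in the cyclic exploration is never established, and the mechanism you describe does not supply it.

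Under $\mu_{\Z^2}$, revealing side-by-side crossings from one side does not leave a quad carrying $\mu^{\rm alt}$. The interior of $\calD_{\rm in}$ and the exterior of $\calD_{\rm out}$ remain unexplored, so Lemma~\ref{lem:measures} does not apply. Even after a monotonicity comparison, the remaining sector is a thin strip wrapping around the annulus. Its conformal length is unbounded, so Corollary~\ref{prop:RSW_quads} gives nothing uniform.

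Blocking with $\omega^+$ circuits does not rescue this. A single $\omega^+$ circuit in the annulus meets both arms and kills $A_k$ for every $k\geq 1$. But only boundedly many disjoint sub-annuli of fixed quality fit, so this gives a bound $1-c$ that does not decay in $k$. Decay in $k$ has to come from the monotone height structure of the level lines, and your reduction to a generic event of $2\ell$ alternating crossings discards exactly that.

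The paper avoids the issue for \eqref{eq:arm_event_bound1}. It covers the annulus by $O(\eta^{-1})$ translates of the fixed square annulus $\Lambda_\eta\setminus\Lambda_{\eta/2}$, chosen so that $A_k$ forces, in one of them, an arm event whose gap is proportional to $k$ (with a constant depending on $\eta$). It then applies the full-plane arm estimate \cite[Theorem~4.9]{DumKozLamMan26} and a union bound. Fixing the geometry removes the uniformity problem altogether.

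For \eqref{eq:arm_event_bound2}, you are missing the boundary-screening step. The condition $h=\zeta$ fixes heights on both even and odd faces. Under such conditions, the FKG property and monotonicity for $\omega^+$ (Proposition~\ref{prop:FKG_omega+}) can fail, so you cannot simply rerun the exploration under $\mu_G^\zeta$. The paper proceeds as follows.

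\begin{enumerate}
\item It first explores the union of the $(\omega^+)^*$-clusters of $\partial G$. Each such cluster has oscillation at most $3$, so the explored set has oscillation at most $\osc(\zeta)+8$. The unexplored components are independent and carry $+$ boundary conditions for $\omega$.
\item Consequently, $A_k$ forces an arm event with gap $\tfrac12(k-\osc(\zeta)-8)$ in some even domain $H$ under $\mu_H^+$.
\item In $H$ the boundary height is constant, so a gap of $8\ell$ forces $\ell$ nested circuits alternating between $\omega^+$ and $\omega^-$. These are level lines separating one of the arms from $\partial H$.
\item These circuits are explored from $\partial H$ inwards. Each new one costs a factor bounded away from $1$ by Proposition~\ref{prop:RSW+}, exactly as in Corollary~\ref{cor:heigh_diff_annuli}.
\end{enumerate}

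It is the use of nested circuits measured against a wired, constant-height boundary, rather than side-by-side crossings, that makes every stage of the exploration a clean $\pm$ domain.
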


In the second point, $G$ does not need to contain $\calD_{\rm out}$. 
In such cases, the arms need to be contained both in $\calD_{\rm out}$ and in $G$.

\begin{proof}
%
	Let us start by proving \eqref{eq:arm_event_bound1}.
	We may cover $\calD_{\rm out} \setminus \calD_{\rm in }$ by $C \eta^{-1}$ 
	annuli of the form $\Lambda_\eta \setminus \Lambda_{\eta/2}$, 
	such that, if $A_k(\calD_{\rm in }, \calD_{\rm out})$ occurs, then $A_{\frac{k}{C\eta}}(\Lambda_{\eta/2},\Lambda_\eta)$ 
	occurs in at least one of these annuli. Here $C \geq1$ is some universal constant. 
	Applying~\cite[Theorem~4.9]{DumKozLamMan26} and a union bound, we find~\eqref{eq:arm_event_bound1}.
	
	\medskip 
	
	We turn to  \eqref{eq:arm_event_bound2}. 
	Fix $G$ and $\zeta$. It suffices to consider the case where $k \geq \osc(\zeta) + 8$. 
	We start off by exploring the union $\calC_{\partial G}(\omega^+)^*$  of the clusters of the boundary of $G$ in $(\omega^+)^*$. Recall that each cluster of $(\omega^+)^*$ has total oscillation at most $3$. 
	Thus, the total oscillation of the explored part is at most $\osc(\zeta) + 8$.

    Under the conditional measure $\mu_G^\zeta[\blank| \calC_{\partial G}(\omega^+)^*]$,
    the marginal in each component of the the unexplored part 
 $G\setminus    \calC_{\partial G}(\omega^+)^*$  has $+$ boundary conditions for the percolation. 
 Furthermore, the different connected components are independent. 
 
 Now, if $A_k(\calD_{\rm in }, \calD_{\rm out})$ occurs, then 
 $A_{k - \osc(\zeta) - 8}(\calD_{\rm in }, \calD_{\rm out})$ occurs in the complement of $\calC_{\partial G}(\omega^+)^*$. 
 In particular, $A_{\frac{k - \osc(\zeta) - 8}2}(\calD_{\rm in }, \calD_{\rm out})$ occurs in one of the connected components of 
 $G\setminus    \calC_{\partial G}(\omega^+)^*$.
Thus, we find
 \begin{align}\label{eq:HHDDA}
	 \mu_{G}^\zeta\left[A_k(\calD_{\rm in }, \calD_{\rm out})\right]
             \leq \max_{H}  \mu_{H}^+\left[ A_{\frac{k - \osc(\zeta) - 8}2}(\calD_{\rm in }, \calD_{\rm out})\right],
 \end{align}
where the maximum is over all even domains $H$. 

We will now argue that there exits $c = c(\eta) > 0$ such that, for any even domain $H$ and $\ell \geq 1$,
\begin{align}\label{eq:HHDD}
	\max_{H}  \mu_{H}^+\left[ A_{8\ell}(\calD_{\rm in }, \calD_{\rm out})\right] \leq \tfrac{1}c e^{-c \ell}.
\end{align}
\im{In the previous version it was stated that we adapt the proof of~\cite[Theorem~4.9]{DumKozLamMan26}, but this is not what was done. A new proof was done, giving just the $e^{-ck}$ bound.}

The event above implies the existence of $\ell$ concentric circuits $\chi_1\dots, \chi_{\ell}$ alternating between $\omega^+$ and $\omega^-$, all crossing the annulus $\calD_{\rm out}\setminus\calD_{\rm in}$ from the inside to the outside. 
These may be explored from outside to inside. Given that $\chi_1,\dots, \chi_j$ exist, the existence of $\chi_{j+1}$ has probability uniformly bounded away from $1$ due to \eqref{eq:RSW+}. This is the same procedure as in the proof of Corollary~\ref{cor:heigh_diff_annuli}.

This proves \eqref{eq:HHDD}, which, when inserted in \eqref{eq:HHDDA} proves the lemma. 
\end{proof}

\section{Consequences of the RSW theory: mixing and regularity}\label{sec:mix_reg}

We now state a series of relatively standard consequences of the percolation representation, 
and most importantly of the RSW property, Proposition~\ref{prop:RSW+}.

\subsection{$L^2$-bounds}
\label{sec:L2bounds_proof}

For a face $x$ and $r >0$, write 
\begin{align}\label{eq:Delta_tilde_def}
\widetilde \Delta_r h(x) = \mu_{\delta \bbZ^2} \big[\Delta_r h(x) \,\big|\, h_4 \text{ on }\partial \Lambda_{2r}(x) \big].
\end{align}
This is a variable that has same expectation as $\Delta_r h(x)$, but will be shown below to have much smaller variance. 
It is the average value of the Laplacian given the configuration further away from where the Laplacian is measured. 
The same definition applies in domains, as long as they contain $\Lambda_{2r}(x)$.

\begin{lemma}[$L^2$-bound of the averaged Laplacian]
    \label{lemma:L2bounds}
    There exists a universal constant $C > 0$ such that
    for any $x$, $r >0$ and $\delta > 0$ small enough, 
    \begin{equation}\label{eq:L2_bound_infinite}
        \mu_{\delta \Z^2}[\widetilde\Delta_rh(x)^2]\leq C.
    \end{equation}
    Furthermore, for any $G \subset \delta\bbZ^2$ containing $\Lambda_{2r}(x)$ and any boundary condition $\zeta$, 
    \begin{equation}\label{eq:L2_bound_domain}
            \mu_G^\zeta[\widetilde\Delta_rh(x)^2]\leq C \cdot (\osc(\zeta))^2. 
   \end{equation}
\end{lemma}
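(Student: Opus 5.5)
The plan is to bound $\widetilde\Delta_rh(x)$ pointwise by something controlled by the oscillation of $h$ across the annulus $\Lambda_{2r}(x)\setminus\Lambda_{r}(x)$, and then use the exponential tails of Lemma~\ref{lemma:arm_event_bound}. Since $\Delta_rh(x)=h(x)-\oint_{\partial B_r(x)}h$, adding a constant to $h$ changes nothing. We may therefore normalise $h$ so that $h=0$ at some fixed face $y$ of $\partial\Lambda_{2r}(x)$, and write
\[
|\widetilde\Delta_rh(x)|\le 2\,\mu\big[\max_{z\in\Lambda_{2r}(x)}|h(z)-h(y)|\,\big|\,h_4\text{ on }\partial\Lambda_{2r}(x)\big].
\]
By Jensen's inequality (conditional expectation contracts $L^2$), it suffices to show that $\mu[\max_{z\in\Lambda_{2r}(x)}|h(z)-h(y)|^2]\le C$ in the full plane, and $\le C\,\osc(\zeta)^2$ in $G$. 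Here $C$ should be uniform in $r$ (the scale is macroscopic, while $\delta\to0$).

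For the full-plane bound I will split $M:=\max_{z\in\Lambda_{2r}(x)}|h(z)-h(y)|$ according to the oscillation of $h$ on $\partial\Lambda_{2r}(x)$ and on the interior. By the maximum-principle-type structure of the percolation representation, if $M\ge 4(k+1)+\osc(h|_{\partial\Lambda_{2r}})$, then either the boundary itself has large oscillation, or there is a face $z$ inside whose height differs from all boundary heights by at least $4(k+1)$. In the latter case $z$ is surrounded by $2k$ alternating circuits of $\omega^\pm$ inside $\Lambda_{2r}(x)$, and such circuits are costly. To avoid dealing with the tiny scale around $z$, I will instead use annuli: cover $\Lambda_{2r}(x)$ by boundedly many boxes of size $\eta r$ (with $\eta$ fixed, say $1/10$), and note that a large deviation anywhere forces an arm event $A_{k'}$ in some annulus $\Lambda_{2\eta r}(w)\setminus\Lambda_{\eta r}(w)$, or in an annulus joining a box to $\partial\Lambda_{2r}$, with $k'\ge c k$. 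Since $r$ is fixed relative to the annulus geometry, these annuli have quality depending only on $\eta$ after rescaling. By \eqref{eq:arm_event_bound1} and a union bound, $\mu[M\ge k]\le C e^{-ck}$, which gives $\mu[M^2]\le C$.

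In a domain $G$ the same argument applies, using \eqref{eq:arm_event_bound2} instead. This gives $\mu_G^\zeta[M\ge k]\le Ce^{-c(k-\osc(\zeta))}$, and summing yields $\mu_G^\zeta[M^2]\le C(1+\osc(\zeta))^2\le C'\osc(\zeta)^2$ (for $\osc(\zeta)\ge1$, which is automatic for nontrivial boundary data). The step I expect to be the main obstacle is the reduction from ``$M$ large'' to ``some arm event $A_k$ occurs in an annulus of bounded quality''. Arm events concern two paths crossing the annulus with separated heights, so one must argue carefully, via the connectedness of height level sets (the clusters of different even heights producing circuits), that a large height range inside a box propagates to a crossing of a fixed-quality annulus. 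This is where the scale-uniformity of the constant is obtained. One also needs to check that the ambiguity of height at lattice scale, and the choice of reference face $y$, only cost additive constants.
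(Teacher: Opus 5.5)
Your reduction in the first paragraph is where the argument breaks, and it cannot be repaired: the quantity you reduce to is not bounded uniformly in $\delta$. The variable $M=\max_{z\in\Lambda_{2r}(x)}|h(z)-h(y)|$ lives at the lattice scale. Already the single term $|h(x)-h(y)|$, with $x,y$ at macroscopic distance, has variance of order $\log(r/\delta)$: the height function is delocalised, with GFF-type logarithmic variance, and the paper notes that even the oscillation of $h$ on $\partial\Lambda_{2r}(x)$ is of order $\sqrt{\log(r/\delta)}$. Hence $\mu_{\delta\bbZ^2}[M^2]\to\infty$ as $\delta\to0$, and the tail bound $\mu_{\delta\bbZ^2}[M\ge k]\le Ce^{-ck}$ you aim for is false. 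The step you flag as the main obstacle is precisely where it fails. A face $z$ whose height differs from the boundary values by $4(k+1)$ is indeed surrounded by $2k$ alternating circuits of $\omega^+$ and $\omega^-$, but these may all sit at microscopic distance from $z$. Heuristically, the height at a point collects an $O(1)$ contribution from each of the roughly $\log(r/\delta)$ dyadic scales. So no arm event in an annulus of fixed macroscopic quality is forced. Lemma~\ref{lemma:arm_event_bound} only controls macroscopic scales, so pointwise heights cannot be handled this way.

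The missing idea is to keep the conditional expectation on $h$ itself rather than on its maximum, and to use the maximum principle of Lemma~\ref{lemma:boundedness}. Under height boundary data, the mean of $h$ at every interior face lies between the minimum and maximum boundary values, so the conditional mean of $\Delta_rh(x)$ is bounded by the oscillation of the conditioning data. Conditioning only on $h_4|_{\partial\Lambda_{2r}(x)}$ still loses, since that oscillation diverges. The paper therefore passes to a richer $\sigma$-algebra, which by the tower property and Jensen can only increase the second moment:
\[
\mu_{\delta\bbZ^2}[\widetilde\Delta_rh(x)^2]\le\mu_{\delta\bbZ^2}\big[\mu_{\delta\bbZ^2}[\Delta_rh(x)\mid\calF,\xi]^2\big]\le\mu_{\delta\bbZ^2}[\osc(\xi)^2].
\]
Here $\calF$ is the fattened flower domain explored from $\partial\Lambda_{2r}(x)$ towards $\Lambda_{3r/2}(x)$, and $\xi$ is the height on its boundary. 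The point of this choice is that $\partial\calF$ is made of petals. Each petal is an arc of $\omega^+$ (constant height) or of $(\omega^+)^*$ (oscillation at most $3$), and lies on a cluster that crosses the annulus $\Lambda_{2r}(x)\setminus\Lambda_{3r/2}(x)$. A large $\osc(\xi)$ therefore produces two crossings with well-separated heights, i.e.\ an arm event $A_k(\Lambda_{3r/2}(x),\Lambda_{2r}(x))$ with $k$ comparable to $\osc(\xi)$. Then \eqref{eq:arm_event_bound1} gives the exponential tail in the full plane, and \eqref{eq:arm_event_bound2} does so in $G$.

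Incidentally, what this argument yields in $G$ is a bound $C(1+\osc(\zeta))^2$. Your remark that $\osc(\zeta)\ge1$ is automatic is wrong for constant boundary data: there $\osc(\zeta)=0$, but $\widetilde\Delta_rh(x)$ need not vanish.
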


The following fact, which follows immediately from the FKG inequality and the spatial Markov property for $h$ \cite{DumKarManOul22}, will be useful below. 

\begin{lemma}
    \label{lemma:boundedness}
    Let $G$ be a domain and $\zeta$ a boundary condition expressed in terms of heights. 
    Then for any face $x \in G$
    \begin{equation}
	 \min_{z \in \partial G} \zeta(z)\leq \mu_{G}^{\zeta}[h(x)]\leq \max_{z \in \partial G} \zeta(z).
    \end{equation}
\end{lemma}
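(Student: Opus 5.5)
The plan is to combine two facts: monotonicity of the height function in its boundary condition, which follows from the FKG inequality for $h$ when $\mathbf a=\mathbf b=1$ and $\mathbf c\ge1$ \cite{DumKarManOul22}, and the symmetry $h\mapsto 2m-h$ of the six-vertex weights. We prove only the upper bound. The lower bound then follows by applying the upper bound to $-h$, whose boundary condition is $-\zeta$, since the weights are invariant under reversing all arrows.

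\emph{Step 1: a flat comparison boundary condition.} Set $M:=\max_{\partial G}\zeta$. Heights have a fixed parity on each face class: on faces of one parity class of the chessboard colouring, $h$ has the parity of $M$, and on the other class it has the opposite parity. Define $\zeta^{\downarrow}$ to equal $M$ on the first class and $M-1$ on the second. Any boundary face $z$ of the second class has $\zeta(z)\le M$ and $\zeta(z)\not\equiv M \pmod 2$, hence $\zeta(z)\le M-1$. Therefore $\zeta\le\zeta^{\downarrow}$ pointwise. Likewise define $\zeta^{\uparrow}$ to equal $M$ on the first class and $M+1$ on the second, so that $\zeta^{\downarrow}\le\zeta^{\uparrow}$. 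Both are admissible boundary conditions: neighbouring faces differ by exactly $1$, and the flat height function alternating between the two values extends them to all of $G$.

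\emph{Step 2: monotonicity.} The FKG lattice condition for $h$ gives the spatial Markov monotonicity
\[
\mu_G^{\zeta}[h(x)]\le\mu_G^{\zeta^{\downarrow}}[h(x)]\le\mu_G^{\zeta^{\uparrow}}[h(x)]
\]
for any boundary conditions ordered pointwise. This is the standard coupling or Holley argument: the conditional law inside $G$ given larger boundary values is stochastically larger.

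\emph{Step 3: symmetry.} The map $h\mapsto 2M-h$ preserves the parity of every face. It corresponds to reversing all arrows, which preserves the weights $\mathbf a,\mathbf b,\mathbf c$. This map sends $\zeta^{\downarrow}$ to $\zeta^{\uparrow}$. Hence
\[
\mu_G^{\zeta^{\downarrow}}[h(x)]+\mu_G^{\zeta^{\uparrow}}[h(x)]=2M.
\]
Combined with the second inequality of Step 2, this forces $\mu_G^{\zeta^{\downarrow}}[h(x)]\le M$, and the first inequality of Step 2 then gives the claim.

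The only delicate point is Step 2. We must check that the FKG and monotonicity statement of \cite{DumKarManOul22} applies to Dirichlet height boundary conditions on a simply connected $G$ in our parameter range $\mathbf c\ge1$. This is the isotropic, $\mathbf c\ge \mathbf a=\mathbf b$ regime where it is known. The parity bookkeeping in Step 1 is what allows us to obtain exactly $M$ rather than $M+1$.
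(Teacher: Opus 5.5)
Your proof is correct and is essentially the argument the paper has in mind: the paper only asserts that the lemma follows from the FKG inequality and the spatial Markov property for $h$, i.e.\ monotonicity in pointwise-ordered height boundary conditions, which is valid here because $\mathbf a=\mathbf b=1\le\mathbf c$. Your flat comparison condition $\zeta^{\downarrow}$ (with its parity bookkeeping) and the flip symmetry $h\mapsto 2M-h$ supply the details the paper leaves implicit and give the sharp bound $M$.
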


\im{Might need to give a ref for the proof. I cited \cite{DumKarManOul22} just to put something. Better citations probably exist}

Let us start by illustrating the idea behind Lemma~\ref{lemma:L2bounds} in the full-plane case.
By Lemma~\ref{lemma:boundedness}, we have
\begin{align}
    \label{eq:L2bound_oscillation}
    \mu_{\Z^2}[\widetilde\Delta_rh(x)^2]
    &=
    \mu_{\Z^2}\left[
        \mu_{\Z^2}\left[
            \Delta_rh(x)\middle| h_4|_{\partial \Lambda_{2r}(x)}
        \right]^2
        \right]\\
    &\leq
    \mu_{\Z^2}\left[\osc(h|_{\partial \Lambda_{2r}(x)})^2\right].
\end{align}
The bound on the right is not satisfactory,
as the typical height oscillation on $\partial \Lambda_{2r}(x)$ is of order $\sqrt{\log r/\delta}$,
while we are looking for a bound of order $O(1)$.
By standard $L^2$-martingale theory, the middle term is increasing in the conditioning
$\sigma$-algebra.
The idea is to run an exploration procedure leading to a \emph{richer}
$\sigma$-algebra,
but in a way that \emph{improves} the rightmost term coming from the oscillation on the boundary.

\begin{proof}[Proof of Lemma~\ref{lemma:L2bounds}]
    Fix $r >0$ and $x$. 
    Let $\calF$  denote the fattened flower domain explored from $\Lambda_{2r}(x)$ to $\Lambda_{3r/2}(x)$
    and let $\xi$ denote the boundary conditions induced by the exploration. 
     We start with the case of the infinite volume. 
    By the tower property and Jensen's inequality
    \begin{align}
       \mu_{\delta\Z^2}\big[\widetilde\Delta_rh(x)^2\big]
 	&= \mu_{\delta\Z^2}\left[ \mu_{\delta\Z^2}\left[\Delta_rh(x)\middle| h_4|_{\partial \Lambda_{2r}(x)} \right]^2 \right]\\
&\leq 
\mu_{\delta\Z^2}[  \mu_{\delta\Z^2}\big[ \Delta_rh(x)| \calF, \xi ]^2\big] \\
&\leq  \mu_{\delta\Z^2} [\osc(\xi)^2 ].
\end{align}
	The last inequality follows from Lemma~\ref{lemma:boundedness}.
	
    Now, Lemma~\ref{lemma:arm_event_bound}
    implies a uniform exponential bound on the distribution of $\osc(\xi)$ under $\mu_{\delta\bbZ^2}$.
    This proves that the right-hand side above is bounded uniformly in $r$, provided $\delta$ is small enough. 
    \medskip 
    
    The same procedure in finite domains implies that 
        \begin{align}
       \mu_{G}^\zeta [\widetilde\Delta_rh(x)^2]
	&\leq  \mu_{G}^\zeta [\osc(\xi)^2 ].
	\end{align}
        Observe now that Lemma~\ref{lemma:arm_event_bound} (specifically \eqref{eq:arm_event_bound2})
provides a uniform exponential bound on the distribution of $\osc(\xi) -\osc(\zeta)$ under $\mu_{G}^\zeta$.
	This implies \eqref{eq:L2_bound_domain}.
\end{proof}

We will also need a variant of Lemma~\ref{lemma:L2bounds} for the height function rather than the Laplacian. 
As above, set 
\begin{align}\label{eq:htilde_def}
\widetilde h_r(x) = \mu_G^\zeta \big[h(x) \,\big|\, h_4 \text{ on } \partial \Lambda_{2r}(x) \big].
\end{align}


\begin{lemma} \label{lem:htilde}
	For any $M\geq 1$ and $n\ge 2$, there exists a decreasing function $C:(0,\infty)\to(0,\infty)$ such that,
	for any $\delta>0$ small enough, the following holds. 
	Take $G\subset\delta\bbZ^2$ simply connected, $\zeta$ a boundary condition with $\max_{z\in\partial G}|\zeta(z)|\le M$ 
	and points $u_1,\ldots,u_n\in G$. 
	For each $i$, set $d_i$ to be the distance from $u_i$ to $\partial G$, 
	and choose $0 <r_i\leq d_i/8$ be such that $u_i$ is at distance at least $8r_i$ from all the other $u_j$. Then
	\begin{align}\label{eq:htilde}
    	  \mu_G^\zeta\big[\widetilde \Delta_{r_1} h(u_1)^2\cdot \widetilde h_{r_2}(u_2)^2 \cdot \dots\cdot \widetilde h_{r_n}(u_n)^2\big]&\leq \prod_{i=1}^n C(r_i/d_i)\quad\text{and} \\
            \mu_G^\zeta\big[\widetilde h_{r_2}(u_2)^2 \cdot \dots\cdot \widetilde h_{r_n}(u_n)^2\big]&\leq \prod_{i=2}^n C(r_i/d_i).
   \end{align}
\end{lemma}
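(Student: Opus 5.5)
We bound each factor by the oscillation of a boundary condition induced by an exploration, as in the proof of Lemma~\ref{lemma:L2bounds}, and then control the joint law of these oscillations. For each $i$, let $\calF_i$ be the fattened flower domain explored from $\Lambda_{2r_i}(u_i)$ towards $\Lambda_{3r_i/2}(u_i)$, and let $\xi_i$ be the height boundary condition it induces. We use the full heights here, not just $h_4$, so that the absolute level is retained. The separation assumption $|u_i-u_j|\ge 8r_i$ and $d_i\ge 8r_i$ make the annuli $\Lambda_{2r_i}(u_i)\setminus\Lambda_{3r_i/2}(u_i)$ pairwise disjoint and contained in $G$.

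The first step is to pass to these explorations. Let $\mathcal G_i$ be the $\sigma$-algebra generated by the exploration of $\calF_i$, which contains $\sigma(h_4|_{\partial\Lambda_{2r_i}(u_i)})$. Conditioning on all the explorations simultaneously, the spatial Markov property makes the configurations inside the distinct $\calF_i$ conditionally independent. Hence
\[
\mu_G^\zeta\bigl[h(u_i)\bigm|\mathcal G_1\vee\cdots\vee\mathcal G_n\bigr]=\mu_{\calF_i}^{\xi_i}[h(u_i)],
\]
and the analogous identity holds for $\Delta_{r_1}h(u_1)$. Lemma~\ref{lemma:boundedness} then bounds the first by $\max|\xi_i|$ and the second by $\osc(\xi_1)$.

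The main obstacle is that $\widetilde h_{r_i}(u_i)$ conditions only on $h_4$ near $u_i$, not on the finer $\sigma$-algebra. Squares of conditional expectations are therefore not directly comparable across the product. To handle this, I would use Jensen conditionally one factor at a time, or more simply argue as follows. Each $\widetilde h_{r_i}(u_i)$ equals $\mu[h(u_i)\mid \mathcal H_i]$ with $\mathcal H_i\subset\mathcal G_i$. Write it as $\mu[Y_i\mid\mathcal H_i]$, where $Y_i:=\mu[h(u_i)\mid\mathcal G_i]$ satisfies $|Y_i|\le \max|\xi_i|$. It then suffices to bound
\[
\mu_G^\zeta\Bigl[\prod_i \mu\bigl[\max|\xi_i|\bigm|\mathcal H_i\bigr]^2\Bigr].
\]
By H\"older's inequality with exponent $n$, this reduces to bounding $\mu_G^\zeta[\max|\xi_i|^{2n}]$ for each $i$ separately. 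Conditional Jensen gives $\mu[\,\cdot\mid\mathcal H_i]^{2n}\le\mu[\,\cdot^{2n}\mid\mathcal H_i]$, so the task is reduced to single-point moment bounds. The same reduction applies to the $\widetilde\Delta$ factor, with $\osc(\xi_1)$ in place of $\max|\xi_1|$.

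It remains to show $\mu_G^\zeta[\max|\xi_i|^{2n}]\le C(r_i/d_i)^{1/n}$. The oscillation part $\osc(\xi_i)$ has exponential tails by \eqref{eq:arm_event_bound2}, uniformly given $\osc(\zeta)\le 2M$. For the level part, $|h|$ on $\partial\Lambda_{2r_i}(u_i)$ differs from $|\zeta|\le M$ by the height increment across $G\setminus\Lambda_{2r_i}(u_i)$. I would cover this region by order $\log(d_i/r_i)$ dyadic annuli around $u_i$, each of bounded quality. Most are full annuli inside $G$, and the last few are truncated by $\partial G$, where the domain estimate \eqref{eq:arm_event_bound2} applies.

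A large height at $\partial\Lambda_{2r_i}(u_i)$ forces many alternating $\omega^\pm$ circuits or arms across these annuli. Combining this with Corollary~\ref{cor:heigh_diff_annuli} and exploring the annuli from the boundary inwards, the increments become dominated by a sum of order $\log(d_i/r_i)$ variables with uniform exponential tails. This gives moments bounded by a function of $\log(d_i/r_i)$, hence a decreasing $C(r_i/d_i)$.

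The delicate point is getting uniform exponential tails for each annulus increment conditionally on the previously explored ones, which have mixed height and percolation boundary conditions. I would handle it as in the proof of \eqref{eq:arm_event_bound2}, by first exploring $(\omega^+)^*$ clusters to reduce to $+$ boundary conditions. The second inequality of the lemma is the same argument with the $\widetilde\Delta$ factor omitted.
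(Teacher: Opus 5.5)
Your argument is correct in outline. It uses the same ingredients as the paper: flower-domain explorations around each $u_i$, Lemma~\ref{lemma:boundedness} to bound conditional expectations by the explored boundary heights, and exponential tails from alternating circuits and RSW. The organisation, however, is different.

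\textbf{The paper's route.} Its two-line proof runs the argument of Lemma~\ref{lemma:L2bounds} with simultaneous explorations in the disjoint annuli. It controls the explored boundary heights jointly, through their relative offsets, as in the offset bound \eqref{eq:offset_bound} of Lemma~\ref{lemma:double_circ}. This amounts to using Corollary~\ref{cor:heigh_diff_annuli} with constants depending on the geometry, hence on $r_i/d_i$.

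\textbf{Your route.} You decouple the points first. You bound $|\widetilde h_{r_i}(u_i)|\le\mu_G^\zeta[\max|\xi_i|\,|\,\mathcal H_i]$, then H\"older with exponent $n$ and conditional Jensen reduce everything to the one-point moments $\mu_G^\zeta[\max|\xi_i|^{2n}]$ and $\mu_G^\zeta[\osc(\xi_1)^{2n}]$.

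\textbf{What each buys.} Your decoupling cleanly handles a point the paper's sketch passes over. For a product of squared conditional expectations with respect to different $\sigma$-algebras, the tower/Jensen step of Lemma~\ref{lemma:L2bounds} does not directly let you refine each $\sigma(h_4|_{\partial\Lambda_{2r_i}(u_i)})$ to the exploration $\sigma$-algebra. Your decoupling also makes the mutual separation of the $u_i$, and any control of offsets between different annuli, unnecessary. The paper's route, in turn, avoids any multiscale analysis, at the price of a non-explicit dependence on $r_i/d_i$. Since only some decreasing $C$ is required, you could use that single-annulus bound for your one-point level estimate as well.

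\textbf{If you keep the multiscale version.} $G\setminus\Lambda_{2r_i}(u_i)$ cannot be covered by $O(\log(d_i/r_i))$ dyadic annuli when $G$ is much larger than $d_i$. What you actually need is that every alternating circuit separating $\partial\calF_i$ from $\partial G$ also separates $u_i$ from a boundary point at distance $d_i$. So index circuits by their distance $\rho\in[3r_i/2,d_i]$ to $u_i$, and explore them from the outside, after first exploring the boundary $(\omega^+)^*$-clusters. With uniformly positive conditional probability, an explored circuit at distance $\rho$ is connected to a circuit of the same type in $\Lambda_{\rho/2}(u_i)\setminus\Lambda_{\rho/4}(u_i)$. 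That event confines all later circuits to $\Lambda_{\rho/2}(u_i)$. Hence the number of circuits is dominated by a negative binomial variable needing $O(\log(d_i/r_i))$ successes.
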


\begin{proof}
	The proof is almost identical to that of Lemma~\ref{lemma:L2bounds}, 
	but explorations need to be performed in annuli around each point $u_i$. These annuli are disjoint due to our assumption. 
	The difference of heights between the annuli are controlled as in the proof of Lemma~\ref{lemma:double_circ}.
\end{proof}

\im{I find that the version above needs a more detailed proof.}

\subsection{Quasi-independence}
\label{sec:quasi_independence_proof}

Connected sets of faces $R_1,\dots, R_n$ of $\delta \bbZ^2$ are said to be $\eta$-separated 
if there exist  closed tubes $\Xi_1 \dots, \Xi_n$ of quality $\eta$, with each $\Xi_i$ surrounding $R_i$ but leaving all other $R_j$ outside or vice versa.

\begin{lemma}[Quasi-independence]
    \label{lemma:quasi_independence}
  For any $\eta >0$, $M\geq 1$  and $n\geq 2$, there exists a constant $c = c(\eta,M,n) > 0$ such that, 
  for any $\delta > 0$ small enough,
   any connected sets $R_1,\dots,R_n \subset \Lambda_{1/\eta}$ of faces of $\delta \bbZ^2$
   that are $\eta$-separated and have diameter at least $\eta$, 
   and any assignments $\xi_1, \dots, \xi_n$ of $h_4$  on $R_1,\dots, R_n$ with $\osc(\xi_i)\leq M$ for each $i$,
     \begin{equation}
	c
        \leq
        \frac{\mu_{\delta\bbZ^2}\big[ \bigcap_{i = 1}^n \{h_4|_{R_i}=\xi_i\}\big] }{ \prod_{i = 1}^n\mu_{\delta\bbZ^2}[h_4|_{R_i}=\xi_i]}
    	\leq c^{-1}   .
        \end{equation}
\end{lemma}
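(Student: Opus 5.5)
The plan is to decouple the events $E_i:=\{h_4|_{R_i}=\xi_i\}$ using circuits of $\omega^+$ in the tubes $\Xi_i$. Two facts make this work:
\begin{itemize}
\item Lemma~\ref{lem:measures}: conditioning on an $\omega^+$ circuit $\gamma$ together with everything on and outside it makes the configuration inside $\gamma$ distributed as $\mu^+_{{\rm int}(\gamma)}$.
\item The events $E_i$ depend only on $h_4$, which is insensitive to the additive constant in $4\bbZ$ left undetermined by Lemma~\ref{lem:measures}. So the unknown height offsets between the different circuits play no role, unlike in the proof of Lemma~\ref{lemma:double_circ}.
\end{itemize}

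\textbf{Step 1: reduce to intersections with circuit events.} Split each tube $\Xi_i$ into three concentric sub-annuli $\calA^i_1,\calA^i_2,\calA^i_3$ as in Section~\ref{sec:double_circ}. Let $C_i$ be the event ${\rm DbCirc}(\Xi_i)$. By Lemma~\ref{lemma:double_circ}, applied once with all $n$ conditionings and once with each single conditioning, we get
\[
\mu\Big[\bigcap_j C_j\,\Big|\,\bigcap_i E_i\Big]\ge c
\qquad\text{and}\qquad
\mu[C_i\mid E_i]\ge c,
\]
where $\mu:=\mu_{\delta\bbZ^2}$. The first bound as stated only gives one double circuit at a time. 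I would get all of them simultaneously by exploring them successively and reapplying the lemma, whose conditioning is flexible enough because each circuit explored from the outside is again a boundary of the required type. Consequently it suffices to compare
\[
\mu\Big[\bigcap_i (E_i\cap C_i)\Big]
\qquad\text{with}\qquad
\prod_i \mu[E_i\cap C_i].
\]

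\textbf{Step 2: decoupling via the spatial Markov property.} On $C_i$, let $\gamma_i$ be the outermost $\omega^+$ circuit in $\calA^i_3$ (outermost relative to $R_i$, so on the far side when $R_i$ lies outside $\Xi_i$). It is measurable with respect to the configuration on its far side. Conditioning on all the $\gamma_j$ and on the configuration beyond them, the regions $D_i$ enclosed by the $\gamma_i$ are independent, with laws $\mu^+_{D_i}$. For an $R_i$ on the unbounded side, I use the analogous measure on the complement, or equivalently the annulus measures introduced before Proposition~\ref{prop:RSW+}. This gives
\[
\mu\Big[\bigcap_i (E_i\cap C_i)\Big]=\mu\Big[\ind_{\cap_i\{\gamma_i\text{ exists}\}}\prod_i \mu^+_{D_i}[E_i\cap C'_i]\Big],
\]
where $C'_i$ denotes the inner half of the double circuit. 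The same formula with a single index gives $\mu[E_i\cap C_i]$. The lemma then follows once we know that $\mu^+_{D}[E_i\cap C'_i]$ is comparable, up to constants depending only on $(\eta,M,n)$, uniformly over all admissible $D$ (domains bounded by a circuit in $\calA^i_3$). Given that, both sides are comparable to $\prod_i p_i\cdot\mu[\cap_i\{\gamma_i\text{ exists}\}]$ for reference values $p_i$. The last probability is bounded below by Proposition~\ref{prop:RSW+} together with FKG (Proposition~\ref{prop:FKG_omega+}), and above by $1$.

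\textbf{Step 3 (main obstacle): uniformity of $\mu^+_D[E_i\cap C'_i]$ in $D$.} The event $E_i$ is not monotone, so the FKG comparison of Proposition~\ref{prop:FKG_omega+} cannot be applied to it directly. I would first use monotonicity to compare only the law of the circuit in $\calA^i_1$. Explore from $\partial D$ inwards the outermost $\omega^+$ circuit $\gamma'$ in $\calA^i_1$; given $\gamma'$, the event $E_i$ has conditional probability $\mu^+_{{\rm int}\gamma'}[E_i]$, which no longer depends on $D$. It remains to show that the laws of $\gamma'$ under $\mu^+_D$ and $\mu^+_{D'}$ are mutually absolutely continuous with bounded Radon--Nikodym derivative, uniformly in $D,D'$.

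For this I would try a second-moment/coupling argument in the middle annulus $\calA^i_2$. By RSW with arbitrary boundary conditions (Proposition~\ref{prop:RSW+} and Corollary~\ref{prop:RSW_quads}), one can produce with uniformly positive probability an $\omega^+$ circuit in $\calA^i_2$ that screens $\calA^i_1$ from $\partial D$. This should be done in such a way that the conditional law inside depends on $D$ only through that circuit. One would then conclude with a standard ratio-mixing bound in the style of percolation arguments. If this direct route fails, the fallback is to prove a weak-mixing statement (the paper's Lemma~\ref{lemma:weak_mixing}) for the percolation with $+$ boundary conditions first, and deduce the Radon--Nikodym bound from it. I expect this uniform comparison to be the genuinely delicate point of the proof.
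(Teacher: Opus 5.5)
\textbf{The gap is Step~3.} It is the whole content of the lemma, and you leave it open.

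The route you sketch explores from $\partial D$ the \emph{outermost} circuit $\gamma'$ of $\calA^i_1$ and then tries to bound the Radon--Nikodym derivative of its law. This does not close as written. The event $\{\gamma'=g\}$ contains the constraint that no $\omega^+$ circuit of $\calA^i_1$ surrounds $g$. Its probability, under the annulus measure between $g$ and $\partial D$, depends on $D$. Since it is a decreasing event, Proposition~\ref{prop:FKG_omega+} gives only a one-sided comparison, and you supply nothing for the other side. The ``coupling/ratio-mixing'' step you invoke is exactly what has to be proved. The weak-mixing fallback does not help either: it does not yield bounded Radon--Nikodym derivatives, and in the paper Lemma~\ref{lemma:weak_mixing} is itself deduced from quasi-independence.

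\textbf{The missing idea is to explore from the $R_i$ side.} Let $\Gamma$ be the $\omega^+$ circuit of $\calA^i_1$ closest to $R_i$. Then $E_i\cap\{\Gamma=g\}$ is $\{g\subset\omega^+\}$ intersected with $E_i\cap N_g$, where $N_g$ (no circuit of $\calA^i_1$ strictly inside $g$) is measurable inside $g$. Lemma~\ref{lem:measures} then gives
\[
\mu^+_D\big[E_i\cap\{\Gamma=g\}\big]=\mu^+_D[g\subset\omega^+]\cdot\mu^+_{{\rm int}(g)}\big[E_i\cap N_g\big].
\]
So $D$ enters only through the \emph{increasing} event $\{g\subset\omega^+\}$. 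Monotonicity sandwiches $\mu^+_D[g\subset\omega^+]$ between $\mu_{\delta\bbZ^2}[g\subset\omega^+]$ and $\mu^+_{\calT}[g\subset\omega^+]$, where $\calT$ is bounded by the outer side of a tube in $\calA^i_2$. A single RSW circuit in that tube (Proposition~\ref{prop:RSW+}) shows the two ends differ by a constant factor. This is Lemma~\ref{lem:geometric_mixing_annuli}.

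The paper uses it in cross-ratio form. It takes $\Gamma_{\rm in}$ innermost in $\calA_1$ and $\Gamma_{\rm out}$ outermost in $\calA_3$. The two sides of \eqref{eq:quasiH2} then become sums that differ termwise only by $Z^+(\gamma_{\rm in},\gamma_{\rm out})Z^+(\gamma'_{\rm in},\gamma'_{\rm out})$ versus $Z^+(\gamma_{\rm in},\gamma'_{\rm out})Z^+(\gamma'_{\rm in},\gamma_{\rm out})$. With this exploration direction, your absolute-comparability version of Step~2 would also go through.

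\textbf{Secondary problems in Steps~1--2.} The paper avoids both by induction on $n$, using only the single tube $\Xi_1$ that separates $R_1$ from $R_2,\dots,R_n$.
\begin{itemize}
\item Lemma~\ref{lemma:double_circ} only allows conditioning on $h_4$ on connected sets. Once $\omega^+$ circuits have been revealed, the conditioning mixes percolation and height information, and FKG is unavailable under height conditioning. So ``reapplying the lemma'' to obtain all $n$ double circuits simultaneously is not justified.
\item $\eta$-separation does not require the tubes $\Xi_i$ to be disjoint or non-nested: they may overlap, and the ``vice versa'' alternative produces nested configurations. The regions $D_i$ therefore need not be disjoint, and the product formula in Step~2 can fail.
\end{itemize}
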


\subsubsection{Swapping techniques}\label{subsec:geometric_mixing_annuli}

We start by stating two ``swapping'' lemmas, expressed in terms of partition functions of the model with different boundary conditions. 
These will be instrumental in proving weak and strong mixing statements, respectively.  

Recall that even domains $\calD$ are interiors of loops on $\bbZ^2_\even$
which we denote by $\partial \calD$. \im{See if this was introduced before, Move this to previous definitions. }
Consider four simply connected, even domains 
$\calD_{\rm in}\subset\calD_{\rm out}$ and $\calD_{\rm in}'\subset \calD_{\rm out}'$. 
We will say that the domains form {\em compatible annuli of quality $\eta$} 
if there exists a common closed tube of quality $\eta$ contained in both 
$\calD_{\rm out}$ and $\calD_{\rm out}'$, surrounding both $\calD_{\rm in}, \calD_{\rm in}'$.
See Figure~\ref{fig:swap_ann}.

	\begin{figure}
	\begin{center}
	\includegraphics[width = 0.55\textwidth]{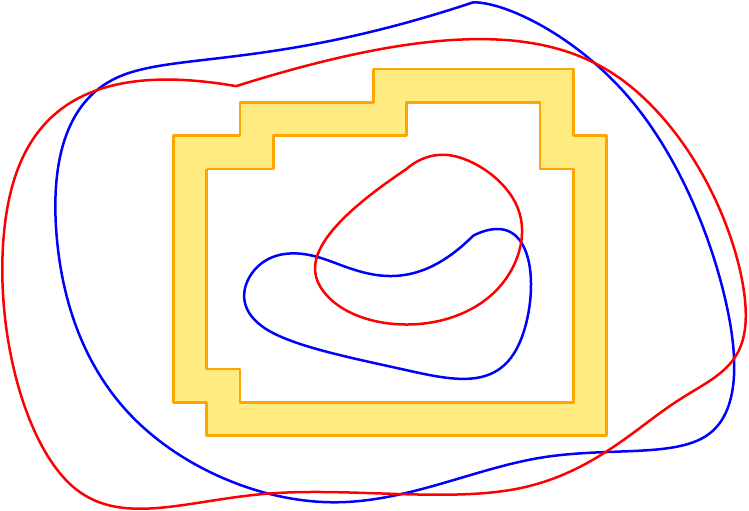}
	\caption{Two compatible annuli $\calD_{\rm out} \setminus \calD_{\rm in}$ (blue) and $\calD_{\rm out}' \setminus \calD_{\rm in}'$ (red) of quality $\eta$. The common tube separating the inside from outside is orange. }
	\label{fig:swap_ann}
	\end{center}
	\end{figure}

Write 
$Z^+(\calD_{\rm out} \setminus \calD_{\rm in})$
for the partition function with $+$ boundary conditions for $\omega$ in $\calD_{\rm out} \setminus \calD_{\rm in}$.
Then, 
\begin{equation}\label{eq:Zdom_Zann}
    Z^+(\calD_{\rm out} \setminus \calD_{\rm in})
    =
    \frac{    Z^+(\calD_{\rm out})}{    Z^+(\calD_{\rm in})}
    \mu_{\calD_{\rm out}}^+[\partial \calD_{\rm in} \subset \omega^+].
\end{equation}
Also, recall that 
\begin{align}
\mu_{\calD_{\rm out} \setminus \calD_{\rm in}}^+ =     \mu_{\calD_{\rm out}}^+[\blank | \partial \calD_{\rm in } \subset \omega^+].
\end{align}


\begin{lemma}[Swapping in annuli]\label{lem:geometric_mixing_annuli}
    For any $\eta > 0$, there exists $c >0$ such that, 
    if the domains $\calD_{\rm in}, \calD_{\rm in}',\calD_{\rm out}, \calD_{\rm out}'$ form compatible annuli of quality $\eta$, 
    \begin{equation}
        \label{eq:weak_mixing_crossing_estimate}
        c\leq
        \frac{Z^+(\calD_{\rm out} \setminus \calD_{\rm in}) Z^+(\calD_{\rm out}' \setminus \calD_{\rm in}') }
        {Z^+(\calD_{\rm out}' \setminus \calD_{\rm in}) Z^+(\calD_{\rm out} \setminus \calD_{\rm in}') }
        \leq c^{-1}.
    \end{equation}
\end{lemma}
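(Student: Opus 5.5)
The plan is to route all four partition functions through the common tube. Let $\Xi$ be the closed tube of quality $\eta$ contained in $\calD_{\rm out}\cap\calD'_{\rm out}$ and surrounding $\calD_{\rm in}\cup\calD'_{\rm in}$. For an annulus $\calA=\calD_{\rm out}\setminus\calD_{\rm in}$ (with any of the four combinations of inner and outer domain), I will decompose $Z^+(\calA)$ according to whether $\omega^+$ contains a circuit in $\Xi$ surrounding the inner domain.

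On the event that such a circuit exists, I explore the outermost such circuit $\gamma$ from the outside. By the spatial Markov property (Lemma~\ref{lem:measures}, stated for annuli), the weight factorises. Writing $\calD_\gamma$ for the even domain enclosed by $\gamma$, we have
\begin{equation*}
Z^+(\calA)\,\mu^+_{\calA}[\gamma \text{ is the outermost circuit}] = Z^+_{\rm ext}(\calD_{\rm out},\gamma)\, Z^+(\calD_\gamma\setminus\calD_{\rm in}).
\end{equation*}
Here $Z^+_{\rm ext}(\calD_{\rm out},\gamma)$ is the partition function of the annulus $\calD_{\rm out}\setminus\calD_\gamma$, restricted to configurations in which no $\omega^+$ circuit of $\Xi$ lies outside $\gamma$. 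This restriction depends only on the outer boundary and on $\gamma$. Crucially, the first factor depends only on $(\calD_{\rm out},\gamma)$ and the second only on $(\gamma,\calD_{\rm in})$.

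I then need a lower bound on the probability of a circuit. Since $\Xi\subset\calD_{\rm out}\setminus\calD_{\rm in}$ is a closed tube of quality $\eta$, Proposition~\ref{prop:RSW+} applies, with the monotonicity from Proposition~\ref{prop:FKG_omega+} used to pass from $-$ to $+$ boundary conditions, or alternatively by restricting to the sub-annulus formed by $\Xi$ itself. It gives
\begin{equation*}
\mu^+_{\calA}[\omega^+ \text{ has a circuit in } \Xi \text{ around the inner domain}]\ge c_0(\eta).
\end{equation*}
Consequently
\begin{equation*}
c_0 Z^+(\calA)\le \sum_\gamma Z^+_{\rm ext}(\calD_{\rm out},\gamma)\,Z^+(\calD_\gamma\setminus\calD_{\rm in})\le Z^+(\calA).
\end{equation*}

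Set $a_\gamma=Z^+_{\rm ext}(\calD_{\rm out},\gamma)$, $a'_\gamma=Z^+_{\rm ext}(\calD'_{\rm out},\gamma)$, $b_\gamma=Z^+(\calD_\gamma\setminus\calD_{\rm in})$ and $b'_\gamma=Z^+(\calD_\gamma\setminus\calD'_{\rm in})$. Up to factors $c_0^{\pm1}$, the four partition functions are the sums $\sum a b$, $\sum a' b'$, $\sum a' b$ and $\sum a b'$. The direction $\ge c$ of \eqref{eq:weak_mixing_crossing_estimate} is equivalent, after multiplying out, to
\begin{equation*}
\sum_{\gamma,\gamma'} a_\gamma b_\gamma a'_{\gamma'}b'_{\gamma'}\ \gtrsim\ \sum_{\gamma,\gamma'} a'_\gamma b_\gamma a_{\gamma'}b'_{\gamma'}.
\end{equation*}
This is not a pointwise identity, because the outermost circuits $\gamma$ and $\gamma'$ in different annuli need not coincide. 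To handle this I will replace the sums over single circuits by a two-step exploration. I explore the outermost circuit $\gamma_{\rm out}$ of $\omega^+$ in an outer sub-tube $\Xi_{\rm o}$ from the outside, and the innermost circuit $\gamma_{\rm in}$ in an inner sub-tube $\Xi_{\rm i}$ from the inside, where $\Xi$ has been split into disjoint concentric tubes of quality $\eta/4$. Given both circuits, the weight factorises into three independent pieces:
\begin{equation*}
\alpha(\calD_{\rm out},\gamma_{\rm out})\cdot Z^+(\text{middle annulus between }\gamma_{\rm in},\gamma_{\rm out})\cdot \beta(\gamma_{\rm in},\calD_{\rm in}).
\end{equation*}
The middle factor does not depend on which outer or inner domain was chosen, provided the configuration in the middle annulus is unconstrained. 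The double-circuit event has probability at least $c_0^2$ by RSW and FKG, as in Lemma~\ref{lemma:double_circ} but here with $\omega$-boundary conditions. On this event, each of the four partition functions equals, up to $c_0^{\pm2}$, the sum
\begin{equation*}
\sum_{\gamma_{\rm out},\gamma_{\rm in}} \alpha(\calD_{\rm out},\gamma_{\rm out})\,Z^+(\gamma_{\rm in},\gamma_{\rm out})\,\beta(\gamma_{\rm in},\calD_{\rm in}).
\end{equation*}

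Even so, the middle annulus couples $\gamma_{\rm in}$ and $\gamma_{\rm out}$, so the sum does not factorise completely. I will therefore need a further comparison
\begin{equation*}
Z^+(\gamma_{\rm in},\gamma_{\rm out})\asymp Z^+(\gamma_{\rm in},\Gamma)\,Z^+(\Gamma,\gamma_{\rm out})\,/\,(\text{a quantity depending on } \Gamma \text{ only}),
\end{equation*}
obtained by yet another RSW circuit $\Gamma$ in a middle sub-tube and averaging over it. The cleanest way I expect to do this is a Cauchy--Schwarz/rank-one argument. I view $K(\gamma_{\rm out},\gamma_{\rm in})=Z^+(\gamma_{\rm in},\gamma_{\rm out})$ as a kernel and show, using the middle RSW circuit, that $K$ is comparable up to constants to a product $u(\gamma_{\rm out})v(\gamma_{\rm in})$, in the sense that $K(\gamma,\rho)K(\gamma',\rho')\asymp K(\gamma,\rho')K(\gamma',\rho)$. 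This ``quasi-rank-one'' property of the middle kernel is the main obstacle.

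My first attempt at it will be to express $K(\gamma,\rho)$ as a sum over middle circuits $\Gamma$ of $A(\gamma,\Gamma)B(\Gamma,\rho)$, where $\Gamma$ is the outermost circuit in the middle sub-tube, with weight at least $c_0$. I will then swap the outer parts of two configurations along this circuit. This needs the middle circuits $\Gamma$ from the two configurations to be mapped bijectively, or at least compatibly, which I would arrange by exploring $\Gamma$ as the outermost middle circuit and using FKG to compare the conditional laws. Once the quasi-rank-one property is established, summing against $\alpha$ and $\beta$ yields \eqref{eq:weak_mixing_crossing_estimate}, with $c$ depending only on $c_0(\eta)$.
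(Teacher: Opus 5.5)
\paragraph{There is a genuine gap: the argument is circular.} The ``quasi-rank-one'' property $K(\gamma,\rho)K(\gamma',\rho')\asymp K(\gamma,\rho')K(\gamma',\rho)$ for $K(\gamma_{\rm out},\gamma_{\rm in})=Z^+(\gamma_{\rm in},\gamma_{\rm out})$ is exactly the statement of the lemma itself. It applies to the four annuli bounded by the explored circuits, which are compatible annuli of quality $\eta/4$ through the middle sub-tube.

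Your proposed proof of it decomposes $K$ along the outermost circuit $\Gamma$ of a further sub-tube. That only gives $K\approx AB$ with $B(\Gamma,\rho)=Z^+(\Gamma,\rho)$. Swapping then requires $B(\Gamma,\rho)B(\Gamma',\rho')\asymp B(\Gamma,\rho')B(\Gamma',\rho)$, which is the same cross-ratio again, so the regress never terminates. Exploring more circuits and swapping configurations cannot close the loop on its own, because the explored circuits in different domains do not match.

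\paragraph{The missing idea.} Use \eqref{eq:Zdom_Zann}:
\[
Z^+(\calD_{\rm out}\setminus\calD_{\rm in})=\frac{Z^+(\calD_{\rm out})}{Z^+(\calD_{\rm in})}\,\mu^+_{\calD_{\rm out}}[\partial\calD_{\rm in}\subset\omega^+].
\]
In the cross-ratio, the four domain partition functions $Z^+(\calD_{\rm out})$, $Z^+(\calD'_{\rm out})$, $Z^+(\calD_{\rm in})$, $Z^+(\calD'_{\rm in})$ cancel exactly. What remains is a ratio of probabilities of the increasing events $\{\partial\calD_{\rm in}\subset\omega^+\}$ and $\{\partial\calD'_{\rm in}\subset\omega^+\}$.

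It then suffices to show that $\mu^+_{\calD_{\rm out}}[\partial\calD_{\rm in}\subset\omega^+]$ depends on the outer domain only up to a factor $c(\eta)$. This is a one-sided monotonicity statement, not a swapping statement. Let $\calT_{\rm out}$ be the domain bounded by the outer boundary of the common tube.
\begin{itemize}
\item Proposition~\ref{prop:FKG_omega+} gives $\mu_{\bbZ^2}[\cdot]\le\mu^+_{\calD_{\rm out}}[\cdot]\le\mu^+_{\calT_{\rm out}}[\cdot]$ for this event.
\item Exploring the outermost $\omega^+$ circuit in the tube from outside, together with Proposition~\ref{prop:RSW+} and monotonicity, gives $\mu_{\bbZ^2}[\cdot]\ge c\,\mu^+_{\calT_{\rm out}}[\cdot]$.
\end{itemize}
Hence each of the four probabilities is comparable, up to $c(\eta)$, to its $\mu^+_{\calT_{\rm out}}$ counterpart, which does not depend on the outer domain.

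These same two ingredients would also have closed your first, single-circuit decomposition, provided $\gamma$ is taken in an outer sub-tube. They show that $b_\gamma/b'_\gamma$ is comparable to a constant independent of $\gamma$, since $Z^+(\calD_\gamma)$ cancels. Once you have them, though, the circuit decomposition of the annulus partition functions is unnecessary.
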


\begin{proof}
	Using \eqref{eq:Zdom_Zann}, we find
	 \begin{align}\label{eq:4Z}
 \frac{Z^+(\calD_{\rm out} \setminus \calD_{\rm in})\cdot Z^+(\calD_{\rm out}' \setminus \calD_{\rm in}') }
        {Z^+(\calD_{\rm out}' \setminus \calD_{\rm in}) \cdot Z^+(\calD_{\rm out} \setminus \calD_{\rm in}') }
         = 
\frac{\mu_{\calD_{\rm out}}^+[\partial \calD_{\rm in} \subset \omega^+]\cdot
    \mu_{\calD_{\rm out}'}^+[\partial \calD_{\rm in}' \subset \omega^+] }{
    \mu_{\calD_{\rm out}'}^+[\partial \calD_{\rm in} \subset \omega^+]\cdot
    \mu_{\calD_{\rm out}}^+[\partial \calD_{\rm in}' \subset \omega^+]}.
	 \end{align}
	
	Write $\calT_{\rm out} \setminus \calT_{\rm in}$ for a tube of quality $\eta$ that surrounds both $\calD_{\rm in}$ and $\calD_{\rm in}'$ and is contained in both $\calD_{\rm out}$ and $\calD_{\rm out}'$. 
	By the monotonicity of boundary conditions for $\omega^+$ (see Proposition~\ref{prop:FKG_omega+}), we have 
	\begin{align}
	 \mu_{\bbZ^2}[\partial \calD_{\rm in} \subset \omega^+] 
	 \leq \mu_{\calD_{\rm out}}^+[\partial \calD_{\rm in} \subset \omega^+] 
	 \leq \mu_{\calT_{\rm out}}^+[\partial \calD_{\rm in} \subset \omega^+].
	\end{align}
	The same holds for $\partial \calD'_{\rm in} $ instead of $\partial \calD_{\rm in} $
	and also for the measure $\mu_{\calD_{\rm out}'}^+$.
	
	Moreover, by exploring the outermost circuit of $\omega^+$ in $\calT_{\rm out} \setminus \calT_{\rm in}$ and using again the monotonicity in boundary conditions, 
	we also find 
	\begin{align}
	 \mu_{\bbZ^2}[\partial \calD_{\rm in} \subset \omega^+] 
	& \geq 	 \mu_{\bbZ^2}[ \omega^+ \text{ contains circuit in $\calT_{\rm out}$ around $\calT_{\rm in}$}] \cdot 
	  \mu_{\calT_{\rm out}}^+[\partial \calD_{\rm in} \subset \omega^+]\\
	  &\geq c\mu_{\calT_{\rm out}}^+[\partial \calD_{\rm in} \subset \omega^+],
	\end{align}
	with the constant $c = c(\eta) > 0$ given by \eqref{eq:RSW+}, provided that $\delta$ is small enough. 
	 
	Injecting the above into \eqref{eq:4Z}, we obtain the desired bound. 
\end{proof}

A second swapping will be done in quads.
Recall that a {quad} $\calQ$ is an even domain whose boundary contains four points $a,b,c,d$ in counter-clockwise order. 
Call the four resulting arcs $(ab)$, $(bc)$, $(cd)$ and $(da)$ of its boundary the \emph{left}, \emph{bottom}, \emph{right} and \emph{top} of the quad.
Let $\Horizontal_{\calQ}$ denote the event that there exists a path of $\omega^+$ in $\calQ$ connecting the left and right sides.
Similarly, $\Vertical_\calQ$ denotes the event of that its top and bottom are connected in $\omega^+$.

Consider the following setting. 
Two non-intersecting paths $\gamma_{L}$ and $\gamma_R$ of $\delta\bbZ^2_\even$
crossing the rectangle $[0,r] \times [0,R]$ vertically, with $\gamma_R$ to the right of $\gamma_L$. 
Four paths $\gamma_T^1, \gamma_T^2$ and $\gamma_B^1, \gamma_B^2$ of $\delta\bbZ^2_\even$, contained in  $[0,r] \times [0,R]$, 
each connecting $\gamma_L$ to $\gamma_R$.
These paths are necessarily stuck between $\gamma_L$ and $\gamma_R$ and we will assume that 
each of the former is above each of the latter. 

Define the quads $\calQ_{ij}$ for $i,j \in \{1,2\}$ as having the bottom and top boundaries formed of $\gamma_B^i$ and $\gamma_T^j$, 
and the left and right boundaries formed of segments of $\gamma_L$ and $\gamma_R$.

Let $\calQ_{\rm in}$ be the quad formed of the set of faces between $\gamma_L$ and $\gamma_R$,
above both $\gamma_B^1$ and $\gamma_B^2$, and 
below both $\gamma_T^1$ and $\gamma_T^2$.
Write $\gamma_{T}^{\rm in}$ and $\gamma_{B}^{\rm in}$ for its top and bottom boundaries.

Similarly, define $\calQ_{\rm out}$ as the quad formed of the faces between $\gamma_L$ and $\gamma_R$,
above $\gamma_B^1$ or $\gamma_B^2$, and 
below $\gamma_T^1$ or $\gamma_T^2$.
Write $\gamma_{T}^{\rm out}$ and $\gamma_{B}^{\rm out}$ for its top and bottom boundaries.
See Figure~\ref{fig:swapp_quads} for an illustration.
	\begin{figure}
	\begin{center}
	\includegraphics[width = 0.3\textwidth]{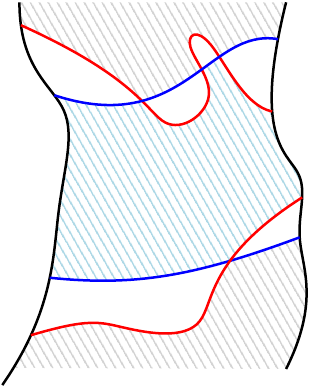}
	\caption{The quads $\calQ_{i,j}$  enclosed laterally by $\gamma_L$ and $\gamma_R$ (black) and vertically by  $\gamma_B^i$ and $\gamma_T^j$ (red for $1$ and blue for $2$). 
	The hashed interior region is $\calQ_{\rm in}$ and the hashed outer region is $\calQ_{\rm out}^c$. }
	\label{fig:swapp_quads}
	\end{center}
\end{figure}

For any of the quads $\calQ$ above,  write $Z^+(\calQ)$ for the partition function of the model with~$+$ boundary conditions in  said quad.

\begin{lemma}[Swapping in quads]\label{lem:geometric_mixing_quadrilaterals}
 With the notation above, 
       \begin{equation}\label{eq:geometric_mixing_quadrilaterals_extension}
        \mu_{\Z^2}[\Horizontal_{\calQ_{\rm in}}]^2
        \leq 
        \frac{Z^+(\calQ_{11})\cdot Z^+(\calQ_{22})}
        {Z^+(\calQ_{12})\cdot Z^+(\calQ_{21})}
        \leq 
        \mu_{\Z^2}[ \Horizontal_{\calQ_{\rm in}}]^{-2}.
    \end{equation}
\end{lemma}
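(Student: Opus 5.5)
We mimic the proof of Lemma~\ref{lem:geometric_mixing_annuli}: rewrite each partition function over a common reference domain, reduce the ratio to a ratio of $\omega^+$-probabilities, and compare these using monotonicity in boundary conditions and a crossing of $\calQ_{\rm in}$.

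\textbf{Step 1: reduction to probabilities.} The analogue of \eqref{eq:Zdom_Zann} for quads reads
\begin{equation*}
Z^+(\calQ_{ij}) = \frac{Z^+(\calQ_{\rm out})}{Z^+(\calQ_{\rm out}\setminus \calQ_{ij})}\,\mu^+_{\calQ_{\rm out}}\big[\gamma_B^i\cup\gamma_T^j\subset\omega^+\big].
\end{equation*}
Here $\calQ_{\rm out}\setminus\calQ_{ij}$ splits into a bottom piece $B_i$ (between $\gamma_B^{\rm out}$ and $\gamma_B^i$) and a top piece $T_j$. Conditionally on $\omega^+\equiv1$ on $\gamma_B^i\cup\gamma_T^j$, the pieces $B_i$, $T_j$ and $\calQ_{ij}$ decouple, up to an additive constant in $4\bbZ$, by the spatial Markov property (Lemma~\ref{lem:measures}). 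The normalising factor therefore factorises as $Z^+(B_i)Z^+(T_j)$. These factors cancel in the cross ratio, since each of $B_1,B_2,T_1,T_2$ appears once in the numerator and once in the denominator. We are left with
\begin{equation*}
\frac{Z^+(\calQ_{11})Z^+(\calQ_{22})}{Z^+(\calQ_{12})Z^+(\calQ_{21})}=\frac{p_{11}\,p_{22}}{p_{12}\,p_{21}},\qquad p_{ij}:=\mu^+_{\calQ_{\rm out}}\big[\gamma_B^i\cup\gamma_T^j\subset\omega^+\big].
\end{equation*}

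\textbf{Step 2: approximate factorisation.} We aim to show
\begin{equation*}
\mu_{\Z^2}[\Horizontal_{\calQ_{\rm in}}]\;\le\;\frac{p_{ij}}{p^B_i\,p^T_j}\;\le\;\mu_{\Z^2}[\Horizontal_{\calQ_{\rm in}}]^{-1},
\end{equation*}
where $p^B_i=\mu^+_{\calQ_{\rm out}}[\gamma_B^i\subset\omega^+]$ and $p^T_j$ is defined similarly. The marginal factors cancel in the cross ratio. Two of these factor-ratio bounds enter the numerator and two the denominator, which gives exactly the exponent $2$ in \eqref{eq:geometric_mixing_quadrilaterals_extension}.

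\emph{Lower bound.} By FKG (Proposition~\ref{prop:FKG_omega+}), the events $\{\gamma_B^i\subset\omega^+\}$ and $\{\gamma_T^j\subset\omega^+\}$ are increasing, hence positively correlated. This gives $p_{ij}\ge p^B_ip^T_j$, which is stronger than needed.

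\emph{Upper bound.} This is the genuine swapping content. On $\Horizontal_{\calQ_{\rm in}}$, the lowest left-right $\omega^+$-crossing $\chi$ of $\calQ_{\rm in}$, together with the $+$ boundary arcs on $\gamma_L$ and $\gamma_R$, disconnects the top part of the domain from the bottom part. Exploring $\chi$ from below reveals only information below it. Conditionally on it, the region above $\chi$ carries $+$ boundary conditions for $\omega$. We would write
\begin{equation*}
p^B_i\,p^T_j \ \ge\ \mu\big[\gamma_B^i\subset\omega^+,\ \Horizontal_{\calQ_{\rm in}}\big]\cdot \inf_{\chi}\mu^+_{\rm above\ \chi}\big[\gamma_T^j\subset\omega^+\big].
\end{equation*}
Monotonicity in boundary conditions then shows that the infimum dominates $\mu^+_{\calQ_{\rm out}}[\gamma_T^j\subset\omega^+\mid\gamma_B^i\subset\omega^+]$. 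The crossing probability of $\calQ_{\rm in}$ under any measure dominating the full-plane one is at least $\mu_{\Z^2}[\Horizontal_{\calQ_{\rm in}}]$, again by Proposition~\ref{prop:FKG_omega+}. Combining these yields $p_{ij}\le p^B_ip^T_j\,\mu_{\Z^2}[\Horizontal_{\calQ_{\rm in}}]^{-1}$.

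\textbf{Main obstacle.} The delicate point is the decoupling claim in the upper bound. We must check that the domain above $\chi$, with $+$ boundary conditions on $\chi$ and on the lateral arcs, really yields a measure in which the event for $\gamma_T^j$ is comparable to its conditional probability in $\calQ_{\rm out}$. We must also justify the direction of monotonicity: $+$ boundary conditions on $\chi$ dominate those induced from the full configuration below it. A further issue is that the height offsets between top and bottom are unconstrained, which is why only the $4\bbZ$ ambiguity of Lemma~\ref{lem:measures} appears and cancels. If the direct route fails, the fallback is to condition separately on the lowest crossing (for the bottom events) and the highest crossing (for the top events) of $\calQ_{\rm in}$, following the outermost-circuit argument of Lemma~\ref{lem:geometric_mixing_annuli}.
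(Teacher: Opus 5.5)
Your proposal follows the paper's proof. Step 1 is the same reduction; by keeping the factors $Z^+(B_i)Z^+(T_j)$ and observing that they cancel, it is in fact more careful than the paper's identity $Z^+(\calQ_{ij})=Z^+(\calQ_{\rm out})\,\mu^+_{\calQ_{\rm out}}[\gamma^i_B,\gamma^j_T\in\omega^+]$, which drops them. Your lower bound $p_{ij}\ge p^B_ip^T_j$ is the paper's FKG step. Your upper bound is the paper's argument with top and bottom exchanged. The paper uses $\mu[\gamma^i_B\subset\omega^+\mid\gamma^j_T\subset\omega^+]\le\mu[\gamma^i_B\subset\omega^+\mid\Horizontal_{\calQ_{\rm in}}]$, which amounts to exploring the highest crossing from above; you explore the lowest crossing from below. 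Below, $\mu$ stands for $\mu^+_{\calQ_{\rm out}}$.

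The display in your upper bound is wrong as written, namely $p^B_ip^T_j\ge\mu[\gamma^i_B\subset\omega^+,\Horizontal_{\calQ_{\rm in}}]\cdot\inf_\chi\mu^+_{\mathrm{above}\,\chi}[\gamma^j_T\subset\omega^+]$.

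\begin{itemize}
\item Exploring the lowest crossing only gives $p^T_j\ge\mu[\Horizontal_{\calQ_{\rm in}}]\cdot\inf_\chi\mu^+_{\mathrm{above}\,\chi}[\gamma^j_T\subset\omega^+]$.
\item Your display would therefore also need $p^B_i\,\mu[\Horizontal_{\calQ_{\rm in}}]\ge\mu[\gamma^i_B\subset\omega^+,\Horizontal_{\calQ_{\rm in}}]$. This is the reverse of the FKG inequality.
\item Combined with your later steps, the display amounts to the claim $\mu[\Horizontal_{\calQ_{\rm in}}\mid\gamma^i_B\subset\omega^+]\,p_{ij}\le p^B_ip^T_j$. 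You have not proved this, and it is stronger than what is needed.
\end{itemize}

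The fix uses exactly your ingredients but never intersects with $\{\gamma^i_B\subset\omega^+\}$:
\begin{align*}
p_{ij}&=p^B_i\,\mu\big[\gamma^j_T\subset\omega^+\,\big|\,\gamma^i_B\subset\omega^+\big]
\le p^B_i\,\inf_{\chi}\mu^+_{\mathrm{above}\,\chi}\big[\gamma^j_T\subset\omega^+\big]\\
&\le p^B_i\,\mu\big[\gamma^j_T\subset\omega^+\,\big|\,\Horizontal_{\calQ_{\rm in}}\big]
\le \frac{p^B_i\,p^T_j}{\mu[\Horizontal_{\calQ_{\rm in}}]}.
\end{align*}
The justification for each inequality is as follows.
\begin{itemize}
\item The first is the nested $+$-domain comparison you already identified. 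Since $\chi$ lies above $\gamma^i_B$, Proposition~\ref{prop:FKG_omega+} gives it.
\item The second holds because, given $\Horizontal_{\calQ_{\rm in}}$ and the lowest crossing $\chi$ explored from below, the law above $\chi$ is exactly $\mu^+_{\mathrm{above}\,\chi}$. So $\mu[\,\cdot\mid\Horizontal_{\calQ_{\rm in}}]$ is an average of these measures.
\item The third is elementary.
\item One concludes with $\mu[\Horizontal_{\calQ_{\rm in}}]\ge\mu_{\Z^2}[\Horizontal_{\calQ_{\rm in}}]$, by monotonicity.
\end{itemize}

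This also settles the worries in your ``main obstacle'' paragraph. No domination is needed when conditioning on the lowest crossing, since the spatial Markov property gives the exact law above it. The only comparisons used are between nested $+$ domains, and between $\mu^+_{\calQ_{\rm out}}$ and $\mu_{\Z^2}$.
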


\begin{proof}
	We have 
	\begin{align}
	Z^+(\calQ_{ij}) = Z^+(\calQ_{\rm out})\mu_{\calQ_{\rm out}}^+[\gamma_B^i, \gamma_T^j \in \omega^+].
	\end{align}
	Thus, the ratio in \eqref{eq:geometric_mixing_quadrilaterals_extension} becomes

	\begin{align}\label{eq:geometric_mixing_quadrilaterals_extension2} 
	\frac{\mu_{\calQ_{\rm out}}^+[\gamma_B^1, \gamma_T^1 \in \omega^+]
	\mu_{\calQ_{\rm out}}^+[\gamma_B^2, \gamma_T^2 \in \omega^+]}
	{\mu_{\calQ_{\rm out}}^+[\gamma_B^1, \gamma_T^2 \in \omega^+]
	\mu_{\calQ_{\rm out}}^+[\gamma_B^2, \gamma_T^1 \in \omega^+]}.
	\end{align}
	
	Due the FKG inequality (Proposition~\ref{prop:FKG_omega+}), we find that, for each $i,j \in \{1,2\}$,
	\begin{align*}
	\mu_{\calQ_{\rm out}}^+[\gamma_B^i \in \omega^+] \mu_{\calQ_{\rm out}}^+[\gamma_T^j \in \omega^+]  
	&\leq 	\mu_{\calQ_{\rm out}}^+[\gamma_B^i, \gamma_T^j \in \omega^+] \\
	&\leq 
	\mu_{\calQ_{\rm out}}^+[\gamma_B^i \in \omega^+ \,|\,  \Horizontal_{\calQ_{\rm in}} ] \cdot	\mu_{\calQ_{\rm out}}^+[\gamma_T^j \in \omega^+] \\
	&\leq \frac{\mu_{\calQ_{\rm out}}^+[\gamma_B^i \in \omega^+] \mu_{\calQ_{\rm out}}^+[\gamma_T^j \in \omega^+]}{\mu_{\Z^2}[ \Horizontal_{\calQ_{\rm in}}]}.
	\end{align*}
	Inserting this into \eqref{eq:geometric_mixing_quadrilaterals_extension2}, we obtain the desired bounds. 
\end{proof}

\subsubsection{Proof of Lemma~\ref{lemma:quasi_independence}}

\begin{proof}[Proof of Lemma~\ref{lemma:quasi_independence}]
Fix $\eta$ and $M$ as in the statement. 
	We will do the proof by induction over $n$, the number of regions. 
	The result is immediate for~$n = 1$. 
	
	Fix $n \geq 2$ and assume the result proved for $n-1$. Let $R_1,\dots,R_n$ be $n$ regions and $\xi_1,\dots, \xi_n$ be assignments of heights as in the statement. 
	In light of the induction hypothesis, our goal is to prove that 
	\begin{align}\label{eq:quasiH1}
	c \leq \frac{\mu_{\delta\bbZ^2}\Big[ \bigcap_{i = 1}^n \{h_4|_{R_i}=\xi_i\}\Big]}
	{\mu_{\delta\bbZ^2}\big[ h_4|_{R_1}=\xi_1\big]\mu_{\delta\bbZ^2}\Big[ \bigcap_{i = 2}^n \{h_4|_{R_i}=\xi_i\}\Big]}
	\leq c^{-1},
	\end{align}
	for some constant $c> 0$ depending only on $\eta$,  $M$ and $n$.
	
	Let $\calD_{\rm out} \setminus\calD_{\rm in}$ be an annulus of quality $\eta$ that separates $R_1$ from $R_2,\dots, R_n$, 
	which is to say that $R_1 \subset \calD_{\rm in}$ and $R_2,\dots, R_n \subset \calD_{\rm out}^c$ or vice-versa. 
	We assume the former for simplicity. 
	Set $$F_{\rm in} := \{h_4|_{R_1}=\xi_1\}\quad  \text{ and } \quad F_{\rm out} := \bigcap_{i = 2}^n \{h_4|_{R_i}=\xi_i\}.$$

	Recall the double circuit event ${\rm DbCirc} = {\rm DbCirc}(\calD_{\rm out} \setminus \calD_{\rm in})$ and the annuli $\calA_1, \calA_2$ and $\calA_3$ defined in Section~\ref{sec:double_circ}.
	By Lemma~\ref{lemma:double_circ},
	\begin{align}
	\mu_{\delta\bbZ^2}\big[{\rm DbCirc}\big|  F_{\rm in} \cap F_{\rm out} \big] &\geq c_0, \quad &
	 \mu_{\delta\bbZ^2}\big[{\rm DbCirc}\big] &\geq c_0,\\
	\mu_{\delta\bbZ^2}\big[{\rm DbCirc} \big|  F_{\rm in}\big] &\geq c_0, \qquad &
	\mu_{\delta\bbZ^2}\big[{\rm DbCirc} \big|  F_{\rm out}\big] &\geq c_0,
	\end{align}
	where $c_0 > 0$ depends only on $\eta$, $M$ and $n$. 
	Thus, the bounds in \eqref{eq:quasiH1} become equivalent to 
	\begin{align}\label{eq:quasiH2}
	c_1 \leq \frac{\mu_{\delta\bbZ^2}\big[ {\rm DbCirc} \cap F_{\rm in} \cap F_{\rm out}\big]\mu_{\delta\bbZ^2}\big[ {\rm DbCirc}\big]}
	{\mu_{\delta\bbZ^2}\big[{\rm DbCirc}  \cap F_{\rm in}\big] \mu_{\delta\bbZ^2}\big[{\rm DbCirc}  \cap F_{\rm out}\big]}
	\leq c_1^{-1}.
	\end{align}

	Call $\Gamma_{\rm in}$ the inner-most circuit of $\omega^+$ in $\calA_1$ and $\Gamma_{\rm out}$ the outermost circuit  in $\calA_3$.
	For two realisations $\gamma_{\rm in}, \gamma_{\rm out}$ of these paths, write $Z^+(\gamma_{\rm in}, \gamma_{\rm out})$ 
	for the partition function of the model with $+$ boundary conditions in the annulus with inner boundary $\gamma_{\rm in}$ and outer boundary $\gamma_{\rm out}$.
	By partitioning the events in \eqref{eq:quasiH2} on the possible values of $\Gamma_{\rm in}$ and $\Gamma_{\rm out}$
	and using the short-hand $\gamma_{\rm in}$ for $\{\Gamma_{\rm in} = \gamma_{\rm in}\}$ and similarly for $\gamma_{\rm out}$, we find 
	\begin{align*}
		\frac{\mu_{\delta\bbZ^2}\big[ {\rm DbCirc} \cap F_{\rm in} \cap F_{\rm out}\big]\mu_{\delta\bbZ^2}\big[ {\rm DbCirc}\big]}
	{\mu_{\delta\bbZ^2}\big[{\rm DbCirc}  \cap F_{\rm in}\big]\mu_{\delta\bbZ^2}\big[{\rm DbCirc}  \cap F_{\rm out}\big]}
	=
		\frac{\displaystyle \sum
		\mu_{\delta\bbZ^2}\big[ \gamma_{\rm in},\,\gamma_{\rm out},\, F_{\rm in},\, F_{\rm out}\big]
		\mu_{\delta\bbZ^2}\big[\gamma_{\rm in}' ,\,\gamma_{\rm out}'\big]}
		{\displaystyle \sum
		\mu_{\delta\bbZ^2}\big[\gamma_{\rm in} ,\,\gamma_{\rm out}',\,   F_{\rm in} \big]
		\mu_{\delta\bbZ^2}\big[\gamma_{\rm in}' ,\,\gamma_{\rm out},\,  F_{\rm out} \big]} &\\
		=
	\frac{\displaystyle\sum
		\mu_{\delta\bbZ^2}\big[\gamma_{\rm in},\, F_{\rm in}\big]
		\mu_{\delta\bbZ^2}\big[\gamma_{\rm out},\, F_{\rm out}\big]
		\mu_{\delta\bbZ^2}\big[\gamma_{\rm in}'\big]
		\mu_{\delta\bbZ^2}\big[\gamma_{\rm out}' \big]
		Z^+(\gamma_{\rm in}, \gamma_{\rm out})Z^+(\gamma_{\rm in}', \gamma_{\rm out}')
		}
		{\displaystyle\sum
		\mu_{\delta\bbZ^2}\big[\gamma_{\rm in} ,\, F_{\rm in}\big]
		\mu_{\delta\bbZ^2}\big[\gamma_{\rm out}' \big]
		\mu_{\delta\bbZ^2}\big[\gamma_{\rm in}'\big]
		\mu_{\delta\bbZ^2}\big[\gamma_{\rm out},\, F_{\rm out}\big]
		Z^+(\gamma_{\rm in}, \gamma_{\rm out}')Z^+(\gamma_{\rm in}', \gamma_{\rm out})},&
		\end{align*}
		where the sums are over possible realisations $\gamma_{\rm in}, \gamma_{\rm in}'$ of $\Gamma_{\rm in}$ and $\gamma_{\rm out}, \gamma_{\rm out}'$ of $\Gamma_{\rm out}$.
		
		In the above we have used the fact that $F_{\rm in}$ is measurable in terms of the region inside $\Gamma_{\rm in}$ and 
		$F_{\rm out}$ in terms of the one outside $\Gamma_{\rm out}$. 
		To obtain the last expression, one would  need to work first in finite volume, then take the volume to infinity. 
		
		The bounds of Lemma~\ref{lem:geometric_mixing_annuli} allow us to conclude. 
		Recall that all annuli appearing in the partition functions above are of quality $\eta/4$. 
\end{proof}

\subsection{Weak mixing}
\label{sec:weak_mixing_proof}

\begin{lemma}[Weak mixing]
    \label{lemma:weak_mixing}
	For any $\epsilon > 0$ and $M\geq1$, 
	there exists a constant $c = c(\eps, M) > 0$ such that, for $\delta > 0$ small enough,
	for every simply connected  $G \subset \delta \bbZ^2$ containing $\Lambda_1$, 
	any boundary conditions $\zeta$ on $G$ with $\osc(\zeta)\leq M$, 
	and every event $E$ measurable in terms of $h_4$ on $\Lambda_{1/2}$,
    \begin{align}
        \label{eq:weak_mixing_1}
        \mu_{\delta\Z^2}[E]\leq c
        \qquad
        &
        \text{implies}
        \qquad
        \mu_{G}^\zeta[E]\leq \epsilon,
        \\
        \label{eq:weak_mixing_2}
        \mu_{G}^\zeta[E]\leq c\qquad&\text{implies}\qquad
        \mu_{\delta \Z^2}[E]\leq\epsilon
        .
    \end{align}
\end{lemma}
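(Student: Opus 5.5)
The plan is to couple both measures, in the annulus $\Lambda_1\setminus\Lambda_{1/2}$, to a common measure with $+$ boundary conditions for $\omega$. The event we use is the existence of an $\omega^+$ circuit surrounding $\Lambda_{1/2}$ inside this annulus. Once we condition on the outermost such circuit $\Gamma$, the spatial Markov property (Lemma~\ref{lem:measures}) shows that the law of $h_4$ inside $\Gamma$ is $\mu^+_{{\rm int}(\Gamma)}$. This holds both under $\mu_{\delta\bbZ^2}$ and under $\mu_G^\zeta$, because $\Gamma$ can be explored from the outside and $E$ only depends on $h_4$ on $\Lambda_{1/2}$. Indeed, $\mu^+$ fixes $h$ on $\Gamma$ only up to an additive constant in $4\bbZ$, which $h_4$ does not see.

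The first step is to show that, with probability close to one under either measure, such circuits exist and $\Gamma$ takes values in a "good" family. Let $N$ be a large integer. Split $\Lambda_1\setminus\Lambda_{1/2}$ into $N$ concentric annuli of quality $\eta_N$. For $\mu_{\delta\bbZ^2}$, apply Lemma~\ref{lemma:double_circ} with no conditioning, exploring successive annuli from outside in. For $\mu_G^\zeta$, first explore the clusters of $(\omega^+)^*$ touching $\partial G$, as in the proof of Lemma~\ref{lemma:arm_event_bound}. In the unexplored region we then have $+$ boundary conditions for $\omega$, and Proposition~\ref{prop:RSW+} together with FKG (Proposition~\ref{prop:FKG_omega+}) gives a circuit in each annulus with conditional probability at least $c(\eta_N)$. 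Since $\osc(\zeta)\le M$, the explored clusters have bounded oscillation. Here we must make sure they do not penetrate deep into $\Lambda_1$ with positive probability, which we control by the arm bound \eqref{eq:arm_event_bound2}.

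The second step concludes. Let $\mathcal C$ be the event that among the first $N/2$ annuli, counted from the outside, at least one contains an $\omega^+$ circuit. The probability of $\mathcal C^c$ is at most $(1-c)^{N/2}$ under both measures. We then write
\begin{equation*}
\mu[E]\le \mu[\mathcal C^c]+\sum_\gamma \mu[\Gamma=\gamma]\,\mu^+_{{\rm int}(\gamma)}[E].
\end{equation*}
It remains to compare $\mu^+_{{\rm int}(\gamma)}[E]$ across different admissible $\gamma$, and with $\mu_{\delta\bbZ^2}[E]$. We use Lemma~\ref{lem:geometric_mixing_annuli}, the swapping in annuli, or more directly the following argument, applied with an inner circuit $\Gamma'$ in a deeper annulus. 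Under $\mu^+_{{\rm int}(\gamma)}$, explore the innermost $\omega^+$ circuit $\Gamma'$ in the inner half of the annuli. The law of $\Gamma'$ has Radon--Nikodym derivative bounded by constants with respect to its law under $\mu_{\delta\bbZ^2}$. Inside $\Gamma'$ the law is the same in both cases, namely $\mu^+_{{\rm int}(\Gamma')}$. This yields $\mu_G^\zeta[E]\le \epsilon/2 + C(N)\,\mu_{\delta\bbZ^2}[E]$. Choosing $N$ with $(1-c)^{N/2}<\epsilon/4$ and then $c'=\epsilon/(2C(N))$ proves \eqref{eq:weak_mixing_1}; \eqref{eq:weak_mixing_2} follows by exchanging the roles of the two measures.

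The main obstacle is the bounded Radon--Nikodym derivative in the second step. Conditioning on both the outer circuit $\gamma$ and the inner circuit $\gamma'$ yields a product of partition-function ratios $Z^+(\gamma,\gamma')$. These must be compared uniformly, exactly as in the proof of Lemma~\ref{lemma:quasi_independence}, via Lemma~\ref{lem:geometric_mixing_annuli} on compatible annuli of quality $\eta_N/4$. A secondary difficulty is the first step in the domain: ensuring that boundary conditions given by heights (not $\omega$) still allow the RSW circuit lower bound. I would first try to handle this by the $(\omega^+)^*$ boundary-cluster exploration described above.
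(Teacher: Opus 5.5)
There is a genuine gap in your first step, and it is structural rather than technical. The RSW constant in Proposition~\ref{prop:RSW+} is $c=c(\eta)$. The $N$ concentric annuli you cut out of $\Lambda_1\setminus\Lambda_{1/2}$ have width of order $1/N$, hence quality $\eta_N\to0$ and $c(\eta_N)\to0$: a circuit in such a thin annulus has to cross order $N$ small squares in succession. So $(1-c(\eta_N))^{N/2}$ does not tend to $0$, and choosing $N$ with $(1-c)^{N/2}<\epsilon/4$ is circular.

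No choice of subdivision can repair this. Since $E$ may depend on $h_4$ on all of $\Lambda_{1/2}$, and $G$ may be as small as $\Lambda_1$, you only have an annulus of fixed modulus available. There, the event that \emph{no} $\omega^+$ circuit surrounds $\Lambda_{1/2}$ has probability bounded below uniformly in $\delta$. For instance, a radial $\omega^-$ crossing of the annulus blocks every $\omega^+$ circuit, and its probability is bounded below by the RSW theory applied to $\omega^-$. Hence $\mu[\mathcal C^c]$ is not small under either measure, and the bound $\mu_G^\zeta[E]\le \epsilon/2+C(N)\,\mu_{\delta\bbZ^2}[E]$ is never obtained. The same objection applies to the inner circuit $\Gamma'$ in your second step.

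In $G$ there is a further error. The clusters of $(\omega^+)^*$ touching $\partial G$ have oscillation at most $3$. Their penetration into $\Lambda_1$ is therefore not an arm event $A_k$ with large $k$, so \eqref{eq:arm_event_bound2} does not control it. When $\partial G$ is close to $\partial\Lambda_1$, such a cluster reaches $\Lambda_{1/2}$ with probability bounded below.

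What is missing is an exploration that \emph{always} exists, of which only a quantitative property is required with high probability, combined with a comparison that works through ratios. The paper proves the contrapositive of \eqref{eq:weak_mixing_1} as follows.
\begin{itemize}
\item It explores the thickened flower domain $\calF$ from $\Lambda_{1/2}$ outwards to $\Lambda_{3/4}$. The explored complement is simply connected and contains $\Lambda_{1/2}$, so $E$ decomposes over the realisations $(F,\xi)$ of the exploration and of $h_4$ on $F^c$.
\item Lemma~\ref{lemma:arm_event_bound} provides $K=K(\epsilon,M)$ such that $\mu_G^\zeta[E\text{ and }\osc(\calF)\le K]>\epsilon/2$ whenever $\mu_G^\zeta[E]>\epsilon$.
\item For each realisation with oscillation at most $K$, quasi-independence (Lemma~\ref{lemma:quasi_independence}) between $F^c$ and $\partial G$ gives $\mu_G^\zeta[h_4=\xi\text{ on }F^c]\ge c_0(K)\,\mu_{\delta\bbZ^2}[h_4=\xi\text{ on }F^c]$.
\item Summing over realisations yields $\mu_{\delta\bbZ^2}[E]\ge c_0\epsilon/2$.
\end{itemize}
Your circuit and swapping ideas (Lemma~\ref{lemma:double_circ}, Lemma~\ref{lem:geometric_mixing_annuli}) do appear, but only inside the proof of quasi-independence. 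There the circuits enter a ratio, so a \emph{positive} lower bound on their probability suffices. That is exactly what a high-probability coupling through $\Gamma$ cannot supply.
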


\begin{proof}[Proof of Lemma~\ref{lemma:weak_mixing}]
	Fix $\eps > 0$, $M\geq 1$, $G$, $\zeta$ and $E$ as in the statement. 
	We first focus on Equation~\eqref{eq:weak_mixing_1}.
    We shall prove the contraposition
    \begin{equation}
        \mu_{G}^\zeta[E]> \epsilon
        \qquad \text{implies} \qquad
        \mu_{\Z^2}[E]> c,
    \end{equation}
    where $c >0$  is a constant that will be determined and will be shown to only depend on $M$ and $\eps$. 

    Assume that $\mu_{G}^\zeta[E]> \epsilon$.
    Let $\calF$ denote the thickened flower domain explored from inside outwards between $\Lambda_{1/2}$ and $\Lambda_{3/4}$.
	By Lemma~\ref{lemma:arm_event_bound}, there exists a value $K = K(\eps)$ such that 
	\begin{align}
		\mu_{G}^\zeta\big[E \text{ and } \osc(\calF) \leq K \big]> \epsilon/2.
	\end{align}
	
	We may write 
	\begin{align}\label{eq:sum_zeta_F}
	\mu_{G}^\zeta\big[E \text{ and } \osc(\calF) \leq K \big]
	= \sum_{F, \xi} 	\mu_{G}^\zeta[h_4 = \xi \text{ on }F^c], 
	\end{align}
	where the sum runs over all realisations $F$ or $\calF$ and all realisations $\xi$ of $h_4$ on $\calF^c$ that correspond to the event on the right-hand side. 
	
	Now, due to the quasi-independence of Lemma~\ref{lemma:quasi_independence}, 
	for any $F,\xi$ as above, \im{check that the use of $\partial G$ is correct here}
	\begin{align}
	 	\mu_{G}^\zeta[h_4 = \xi \text{ on }F^c]
		=\mu_{\bbZ^2}[h_4 = \xi \text{ on }F^c \,|\, h_4 = \zeta \text{ on } \partial G]
		\geq c_0 \mu_{\delta\bbZ^2}[h_4 = \xi\text{ on }F^c],	
	\end{align}
	where $c_0 > 0$ is a constant that depends only on $K$. 
	
	Summing the above as in \eqref{eq:sum_zeta_F}, we conclude that 
	\begin{align}
		\mu_{\delta\bbZ^2}[E] \geq  c_0	\mu_{G}^\zeta\big[E \text{ and } \osc(\calF) \leq K \big] \geq \tfrac{c_0}2\eps.
	\end{align}
	This concludes the proof with $c=  \frac{c_0}2\eps$.
\end{proof}

\section{Strong mixing}\label{sec:almost_markov_proof}

For $\eta > 0$, recall the rectangles $R_1,\dots, R_4$ of Figures~\ref{fig:4rectangles2} and~\ref{fig:4rectangles}. 
For $1 \leq  i \leq 4$, let $R_i^{\rm in}$ denote the set of faces at $L^\infty$ distance at least $\eta$ from $R_i^c$. 
Let $\calF_i$ be the flower domain explored from $R_i^{\rm in}$ to $R_i$. 
Additionally, let $\calU$ denote the annulus $\Lambda_{4r} \setminus \Lambda_{4r - \eta}$. 
Write $\mathcal{E}$ for 
\begin{itemize}
	\item the realisations $F_1,\dots, F_4$ of the flower domains $\calF_1,\dots, \calF_4$ and the realisations $(\xi, \tau)$ of $h_4$ and $\omega^+$ on each $\calF_i^c$;
	\item the realisation $\xi$ of $h_4$ on $\calU$.
\end{itemize}
Note these notions depend implicitly on $\eta$.

\begin{proposition}[Strong mixing]\label{prop:strong_mixing}
	For any $\eps > 0$, there exist $\tau > 0$ and $\eta>0$ such that for all $\delta > 0$ small enough, 
	there exists a notion of successful outcome of $\mathcal{E}$ such that, 
	if ${\rm Exp}$ is successful, 
	\begin{align}\label{eq:strong_mixing0}
	\Big\|\mu_{\delta \bbZ^2} \big[ \cdot \big|\mathcal{E} = {\rm Exp}  \big] - \mu_{\delta \bbZ^2} \big[\cdot \big| \calF_i = F_i \text{ and } (h_4,\omega^+)|_{\cup_i F_i} = (\xi, \tau^+)\big]  \Big\|_{\TV,\Lambda_{2r}(x)} \leq \eps.
	\end{align}
	Moreover, $\mu_{\delta \bbZ^2} [\calE \text{ is successful } ] \geq 1-\eps$.
\end{proposition}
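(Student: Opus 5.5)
The plan is to define ``successful'' so that the conditioning on the outer layer $\calU$ becomes irrelevant once one is inside a circuit of $\omega^+$ that is explored from outside. The natural choice is: ${\rm Exp}$ is successful if (i) each flower domain $F_i$ has oscillation at most $K$, and (ii) the explored data allow a \emph{double circuit} of $\omega^+$ in the annulus $\Lambda_{4r-\eta}\setminus\Lambda_{4r-2\eta}$, together with the four petals. Concretely, (ii) asks that a quantity measurable in terms of the exploration be small. That quantity is the conditional probability, given ${\rm Exp}$, that no $\omega^+$-circuit separating $\Lambda_{2r}$ from $\calU$ exists in the union of $\Lambda_{4r-\eta}\setminus\Lambda_{2r}$ and the petals. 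We require it to be at most $\eps/2$.

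First step: bound the probability of success. Condition (i) fails with probability $\le\eps/4$ for $K=K(\eps,\eta)$, by Lemma~\ref{lemma:arm_event_bound}. For (ii), the idea is to use many disjoint concentric annuli between $\Lambda_{2r}$ and $\Lambda_{4r-\eta}$, roughly $m=m(\eps)$ of them, with $\eta$ small enough that each has quality $\gg\eta$. By Lemma~\ref{lemma:double_circ} (conditioning on $h_4$ on $\calU$ and on the petal data of controlled oscillation, which is allowed after restricting to (i) and bounding offsets as in its proof), each annulus contains a double circuit with conditional probability at least $c$. Exploring the annuli from outside in and using the spatial Markov property (Lemma~\ref{lem:measures}) together with monotonicity (Proposition~\ref{prop:FKG_omega+}) gives a failure probability of at most $(1-c)^m$. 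Markov's inequality then converts this averaged bound into the pointwise requirement (ii) holding with probability $\ge1-\eps/2$. The delicate point is that the rectangles $R_i$ cut through these annuli. The circuits must therefore be allowed to pass through the flower domains $F_i$, where they are unrevealed, and connect to petals. This is why the intersection with the rectangles is handled by requiring crossings of the unexplored parts of the $R_i$, using Corollary~\ref{prop:RSW_quads} for quads.

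Second step: the total variation estimate. On success, introduce the outermost $\omega^+$-circuit $\Gamma$ separating $\Lambda_{2r}$ from $\calU$ inside the region not yet explored. Since $\Gamma$ can be discovered from outside, Lemma~\ref{lem:measures} shows that, conditionally on $\Gamma=\gamma$ and everything outside, the law inside $\gamma$ is $\mu^+$ in the domain bounded by $\gamma$ and the petal data. Crucially, this law does not depend on $\xi|_\calU$. The same decomposition holds for the measure without $\calU$-conditioning. The two measures then differ on $\Lambda_{2r}$ only through the laws of $\Gamma$ and the event that $\Gamma$ exists, and the latter has probability $\ge1-\eps/2$ under both. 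To compare the laws of $\Gamma$, the plan is to use the double circuit. Taking $\Gamma$ to be the \emph{inner} circuit of a double circuit, the swapping Lemma~\ref{lem:geometric_mixing_annuli}, or its quad version Lemma~\ref{lem:geometric_mixing_quadrilaterals} near the rectangles, bounds the Radon--Nikodym derivative between the two laws of the inner circuit by constants. A bounded derivative alone does not give TV $\le\eps$. To upgrade it, iterate the coupling over $m$ nested double circuits: at each scale there is a constant chance of coupling the two explorations (matching circuits), so the TV distance decays like $(1-c)^m$.

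I expect this coupling to be the main obstacle, namely making the constant-ratio swapping into an actual coupling uniformly in the petal boundary conditions, since the $R_i$ break rotational nesting of annuli. I would first try to treat the four corner regions of $\Lambda_{4r}\setminus\Lambda_{2r}$ between the rectangles as quads and apply Lemma~\ref{lem:geometric_mixing_quadrilaterals} there. Then $\eps$ fixes $m$, $m$ fixes $\eta$, and $\eta$ fixes $K$ and $\tau$.
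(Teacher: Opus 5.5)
There is a genuine gap, both in the geometry of your separating structure and in the quantitative step you yourself flag as the main obstacle.

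\textbf{Geometry of the separating structure.} The flower domain $\calF_i$ is explored from $R_i^{\rm in}$ outwards, so the explored set $\calF_i^c$ is a simply connected set containing $R_i^{\rm in}$. Inside each rectangle, the unrevealed region is therefore only a margin of width at most $\eta$. Consequently every concentric annulus between $\Lambda_{2r}$ and $\Lambda_{4r-\eta}$ meets the conditioned regions, and Lemma~\ref{lemma:double_circ} gives nothing, since it requires the conditioned sets not to intersect the annulus. An $\omega^+$-circuit around $\Lambda_{2r}$ avoiding $\bigcup_i\calF_i^c$ would have to run lengthwise through strips of width $O(\eta)$ and length of order $r$. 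The alternative is to run along petals, but these alternate between $\omega^+$ and $(\omega^+)^*$ arcs. Either way, your condition (ii) is not a typical event, and Step~1 fails as stated. No circuit is needed, however. The explored sets carry data common to both measures, and they already separate $\Lambda_{2r}$ from $\calU$ inside the rectangles. The two regions communicate only through the four gaps between consecutive rectangles, so it suffices to build an $\omega^+$ barrier in each gap.

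\textbf{The quantitative step.} You observe that a bounded Radon--Nikodym derivative does not give a total variation bound $\le\eps$, and you propose an iterated coupling over $m$ scales. That would need conditional RSW and swapping bounds that are uniform in previously explored, height-conditioned (non-FKG) configurations, in four corners at once; you do not supply these. The missing idea is that Lemma~\ref{lem:geometric_mixing_quadrilaterals} bounds the cross-ratio by $\mu_{\bbZ^2}[\Horizontal_{\calQ_{\rm in}}]^{\pm 2}$, which is \emph{close to $1$} whenever the inner quad is very likely to be crossed. The paper proceeds as follows.
\begin{enumerate}
\item In each corner strip it places tubes of width $4\eta$ and height $6\tau\eta$. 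Once $\eta$ is small, some tube has vertical $\omega^+$ crossings in its two outer quarter-strips (Lemma~\ref{lem:swap_susceptible}).
\item For $\tau$ large, horizontal crossings in its top and bottom thirds join these vertical crossings with conditional probability close to $1$ (Lemma~\ref{lem:swap_compatible}). This gives $\mu_{\delta\bbZ^2}[\calH(\tau,\eta)]\ge 1-\eps^2$ (Corollary~\ref{cor:Hexp}).
\item The middle third is a $4\eta\times\tau\eta$ rectangle crossed in its short direction with probability at least $(1-\eps)^{1/8}$, even under alternating boundary conditions.
\item The cross-ratio of the conditional laws of the inside, given two different outside configurations, is a product of four corner quad cross-ratios. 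It therefore lies in $[1-\eps,(1-\eps)^{-1}]$.
\item Summing over inside and outside configurations gives total variation at most $3\eps$ in one shot (Proposition~\ref{prop:strong_mixing_favor}).
\end{enumerate}
``Successful'' is then defined to mean that $\calH$ has conditional probability at least $1-\eps$ under \emph{both} conditionings, with and without $\calU$. By Markov's inequality this holds with probability at least $1-2\eps$. Note that your condition (ii) controls only the measure conditioned on $\mathcal{E}={\rm Exp}$, whereas your Step~2 also needs the separating structure to be likely under the measure without the $\calU$-conditioning.
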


Above, $\|\blank\|_{\TV,\Lambda_{2r}(x)}$ is the total variation metric between the restrictions of $h_4$ to $\Lambda_{2r}(x)$.

\subsection{Preparatory Russo--Seymour--Welsh estimates}

	\begin{figure}
	\begin{center}
	\includegraphics[width = 0.65\textwidth]{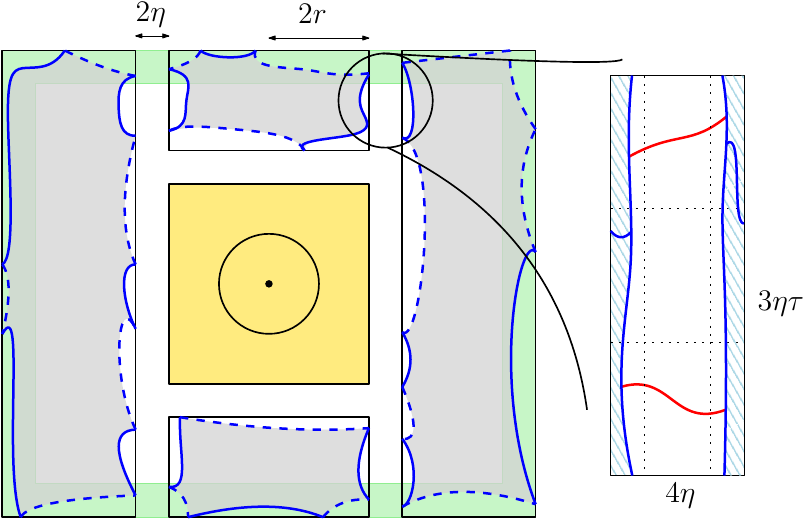}
	\caption{{\em Left:} the flower domains $\calF_i$ provide alternative primal/dual petals of $\omega^+$ (actually $\calF_i^c$ are grey; they are contained in the rectangles $R_i$). The Laplacian $\Delta_r(h)$ is computed in the orange central block.
	To completely isolate the orange block from the outside, we will also explore the height in the green annulus $\calU$. 
	{\em Right: } A tube $T_{\tau,\eta}$ contained in $S_{NE}$. It is swap-susceptible due to the blue paths of $\omega^+$, and swap-compatible if these are connected via the red paths of $\omega^+$.}
	\label{fig:4rectangles}
	\end{center}
	\end{figure}

To prepare for the proof of Proposition~\ref{prop:strong_mixing}, 
we define a set of notations and state certain consequences of  Proposition~\ref{prop:RSW+} (RSW). 
See Figure~\ref{fig:4rectangles} for relevant illustrations. 
For an aspect ratio $\tau\geq 1$, scale $\eta > 0$, and point $y\in\R^2$,
let $T_{\tau,\eta}(y)\subset\R^2$ be the vertical rectangle, or \emph{tube}, defined by
\begin{equation}
    T_{\tau,\eta}(y):=y+[-2\eta,2\eta]\times[-3\tau\eta,3\tau\eta].
\end{equation}
We slice this rectangle into four vertical strips (closed rectangles) of width $\eta$ each,
and denote the $i$-th slice by
$T_i$, from left to right.
Similarly, we slice the rectangle into three rectangles of heights $\tau \eta$, and denote the $j$-th slice by $T^j$. See Figure~\ref{fig:4rectangles}.

 A tube $T_{\tau,\eta}(y)$ is called \emph{swap-susceptible} if
\begin{itemize}
	 \item in $T_1$, there exists a connected component of $\omega^+$ connecting its top, bottom and left sides;
	\item in $T_4$, there exists a connected component of $\omega^+$ connecting its top, bottom and right sides.
\end{itemize}

A tube is called \emph{swap-compatible} if \emph{both}
 $T^1$ and $T^3$ contain an $\omega^+$-crossings 
between the left-most vertical crossing of $\omega^+$ in $T_1$ and 
the right-most  vertical crossing of $\omega^+$ in $T_4$.
See again Figure~\ref{fig:4rectangles} (right) for an illustration.

We state two standard consequences of the RSW theory for $\omega^+$. 

\begin{lemma}[Existence of swap-susceptible tubes]  \label{lem:swap_susceptible}
	For any $\tau >0$, there exists $c > 0$ such that, for any $\eta >0$ and $\delta >0$ small enough, 
	and any family of tubes $(T_{\tau,\eta}(y_i))_{1 \leq i \leq n}$ that are $\eta$ separated,
	the events $\{T_{\tau,\eta}(y_i) \text{ is swap-susceptible}\}$ dominate a family of i.i.d.~Bernoulli variables of parameter $c$. 
\end{lemma}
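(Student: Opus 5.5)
The plan is to combine a uniform single-tube estimate under arbitrary $\omega$-boundary conditions with a sequential exploration. By a standard argument (e.g.~Holley or the sequential coupling lemma for dominating i.i.d.~families), it suffices to show the following. Order the tubes $T_{\tau,\eta}(y_1),\dots,T_{\tau,\eta}(y_n)$ arbitrarily. For each $i$, conditionally on any realisation of the configuration $(h,\omega)$ outside a suitable neighbourhood of $T_{\tau,\eta}(y_i)$ containing the previously examined tubes, the probability that $T_{\tau,\eta}(y_i)$ is swap-susceptible is at least $c=c(\tau)>0$. Since the event is increasing in $\omega^+$, it is enough to obtain this lower bound under the worst boundary conditions for $\omega^+$.

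\textbf{Reduction to $-$ boundary conditions.} I would use the $\eta$-separation to surround each tube by a closed tube of quality $\eta$ disjoint from the other tubes. Then I would explore $\omega^+$, $\omega^-$ and $h_4$ from outside this tube, up to the outermost circuit $\gamma$ of $\omega^+$ or of $\omega^-$ lying in the surrounding annulus.

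If no such circuit exists in a first annulus, I would use a second, nested one. Alternatively, I would avoid the issue by exploring the outermost $\omega^-$-circuit or $(\omega^+)^*$-structure. Simplest is to use Lemma~\ref{lem:measures} together with Proposition~\ref{prop:FKG_omega+}. Conditioned on everything outside an even domain $\calD$ containing the tube, $\omega^+$ inside stochastically dominates its law under the \emph{minimal} boundary condition, which is $\mu^-_{\calD}$ (ordering $(\omega^+,1-\omega^-)$). This is justified by the FKG lattice condition: conditioning on $\omega$ on a set of edges preserves it, and $\mu^-$ corresponds to setting $\omega^-\equiv1$ on the boundary circuit, i.e.~the lowest state.

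\textbf{Single tube under $\mu^-$.} Inside $\mu^-_{\calD}$, I would construct swap-susceptibility by gluing crossings. In $T_1$ (width $\eta$, height $6\tau\eta$) one needs a cluster touching the top, bottom and left sides. This follows from a vertical crossing of $T_1$ together with horizontal crossings of the top and bottom $\eta\times\eta$ squares of $T_1$ connecting to its left side, plus a few crossings of overlapping rectangles to connect them. Each of these crossings has probability at least $c(\tau)$ under $-$ boundary conditions at distance $\ge\eta$. The bound comes from Corollary~\ref{prop:RSW_quads} (quads of bounded conformal length $\rho(\tau)$), after monotonically lowering the alternating boundary conditions to $-$, or directly from Proposition~\ref{prop:RSW+} and a standard RSW gluing. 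FKG (Proposition~\ref{prop:FKG_omega+}) combines the boundedly many increasing events, and the same construction handles $T_4$. This gives $\mu^-_{\calD}[\text{swap-susceptible}]\ge c(\tau)$ uniformly in $\eta$ and in $\delta$ small, by scale invariance of the RSW bounds.

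\textbf{Main obstacle.} I expect the hardest step to be making the domination argument rigorous. Concretely, the conditioning on previously examined tubes must be expressible as boundary conditions on $\omega$ alone (not on heights, where FKG fails), so that Proposition~\ref{prop:FKG_omega+} applies. I would handle this by revealing only $\omega$ (never heights) on and outside the surrounding even domain. I would use the extended FKG statement allowing conditioning on $\omega$ on an edge set $E$ in $\mu_{\delta\bbZ^2}$, and note that the extremal case is $\omega^-\equiv1,\ \omega^+\equiv0$ on $\partial\calD$. A second subtlety is that RSW must hold for quads possibly touching the domain boundary; the quality-$\eta$ buffer from separation ensures the relevant quads stay at distance $\eta$ from $\partial\calD$.
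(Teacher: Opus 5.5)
Your proposal is correct and follows essentially the paper's argument. You reveal only the $\omega$-configuration in the other tubes, use the FKG lattice condition for $(\omega^+,1-\omega^-)$ (Proposition~\ref{prop:FKG_omega+}) to bound the conditional probability below by $\mu^-$ on the $\eta$-enlarged rectangle $T_{\tau,\eta}(y_i)+\Lambda_\eta$, and bound that by Proposition~\ref{prop:RSW+} plus FKG gluing. One small slip: the route via Corollary~\ref{prop:RSW_quads} ``after lowering the alternating boundary conditions to $-$'' does not work, since lowering boundary conditions decreases crossing probabilities, so it is your alternative (circuits from Proposition~\ref{prop:RSW+} glued by FKG) that gives the bound, exactly as in the paper.
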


\begin{proof}
	All events considered are measurable in terms of $\omega^+$. 
	Thus, by Proposition~\ref{prop:FKG_omega+}, the probability that $T_{\tau,\eta}(y_i)$ is swap-susceptible, 
	given the configuration in all tubes $T_{\tau,\eta}(y_j)$ with $j\neq i$ is bounded below by 
	$$\mu_{T_{\tau,\eta}(y_i) + \Lambda_{\eta}}^- \big[ T_{\tau,\eta}(y_i) \text{ is swap-susceptible}\big], $$
	where $T_{\tau,\eta}(y_i) + \Lambda_{\eta}$ is the rectangle centred on $T_{\tau,\eta}(y_i)$, with side lengths $6\eta$ and $(6\tau +2)\eta$.
	The displayed probability is bounded below due to Propositions~\ref{prop:RSW+} and \ref{prop:FKG_omega+}.
\end{proof}

For a tube $T_{\tau,\eta}(y)$, write $\calS = \calS(T_{\tau,\eta}(y))$ for the exploration consisting of revealing $\omega^+$ in $T_{\tau,\eta}(y)$, 
left of and on the left-most vertical crossing of $T_1$ 
and right of and on the right-most vertical crossing of $T_4$. 
We will consider $\calS$ to contain the information of both the sets of edges left and right of said paths, and their states in $\omega^+$. 
Then,  $T_{\tau,\eta}(y)$ being swap-compatible is determined by $\calS$. 
When $T_{\tau,\eta}(y)$ is not swap-compatible, let $\calS$ contain the states of all edges of $\bbZ^2_\even$ in $T_{\tau,\eta}(y)$.

\begin{lemma}[Swap-susceptible tubes are swap-compatible] \label{lem:swap_compatible}
	Fix $\eps > 0$. There exists $\tau > 0$ such that the following hods for any $\eta > 0$ and $\delta  >0$ sufficiently small.
	For any tube $T_{\tau,\eta}(y)$, any realisation $S$ of $\calS$ that makes $T_{\tau,\eta}(y)$ swap-susceptible 
	and any event $B$ expressed in terms of $\omega^+$ outside of $T_{\tau,\eta}(y)$, 
	\begin{align}\label{eq:swap_compatible}
	\mu_{\delta \bbZ^2} \big[\text{$T_{\tau,\eta}(y)$ is swap-compatible} \,|\, \calS = S \text{ and } B\big] > 1- \eps.
	\end{align}	
\end{lemma}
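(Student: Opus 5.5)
The plan is to reduce the statement to an RSW estimate in a domain with free boundary on two sides. Conditioning on $\calS = S$ reveals the left-most vertical crossing $\gamma_L$ of $\omega^+$ in $T_1$ and the right-most vertical crossing $\gamma_R$ of $\omega^+$ in $T_4$, together with everything to their left and right respectively. Both are $\omega^+$-open paths, and because of swap-susceptibility each spans the full height $6\tau\eta$ of the tube. Let $\calQ$ denote the region of $T_{\tau,\eta}(y)$ strictly between $\gamma_L$ and $\gamma_R$. The edges of $\omega^+$ inside $\calQ$ are unexplored. The event $B$ depends only on $\omega^+$ outside the tube, so it carries no information about $\calQ$ apart from what it induces through boundary conditions.

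The first step is to apply the FKG property for $(\omega^+,1-\omega^-)$ (Proposition~\ref{prop:FKG_omega+}), together with the remark that conditioning on $\omega$ on a set of edges preserves the lattice condition. This bounds the conditional law of $\omega^+$ in $\calQ$ from below. The point to check is that "swap-compatible" is increasing in $\omega^+$ on $\calQ$, which holds because the crossings must be $\omega^+$-paths joining $\gamma_L$ to $\gamma_R$ inside $T^1$, respectively $T^3$.

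If $B$ is not increasing, I would first also condition on the full configuration outside the tube. The bound is then uniform in that configuration, which disposes of $B$ after integrating. The comparison should be with the worst case: the quad bounded by $\gamma_L$ and $\gamma_R$, with wired ($+$) boundary on those two sides and $-$ boundary on the top and bottom of the tube and elsewhere. By monotonicity in boundary conditions, the conditional probability of crossing $T^1$ horizontally between $\gamma_L$ and $\gamma_R$ is at least the corresponding $\mu^{\rm alt}$-type probability in an appropriate quad.

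The second step, which I expect to be the main obstacle, is to make the lower bound close to $1$ rather than merely positive. That is where $\tau$ large is used. The strip $T^1$ has width at most $4\eta$ and height $\tau\eta$. I would split it vertically into roughly $\tau/10$ disjoint sub-quads, each bounded by $\gamma_L$, $\gamma_R$ and two horizontal segments at distance $10\eta$. Each sub-quad has conformal length bounded uniformly in the shape of $\gamma_L$ and $\gamma_R$: the paths are confined to $T_1$ and $T_4$, so the distance between the arcs is at most $4\eta$, while the arcs have macroscopic extension $10\eta$. Corollary~\ref{prop:RSW_quads} then gives each sub-quad a crossing with probability at least $c>0$ under alternating boundary conditions. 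The delicate point is that the conformal length must stay bounded even for wiggly crossings; I would handle this with an extremal-length argument.

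By exploring the sub-quads one at a time from the bottom, and using monotonicity so that the conditioning on failed crossings (which are decreasing information) is dominated by $-$ boundary on the previous sub-quads, the failures are dominated by independent trials. So the probability that none of them is crossed is at most $(1-c)^{\tau/10}$. Applying the same argument in $T^3$ and taking a union bound gives failure probability at most $2(1-c)^{\tau/10}<\eps$ for $\tau$ large. Since $c$ depends only on the aspect ratio, this is uniform in $\eta$ and in $\delta$ small, which is the claim in \eqref{eq:swap_compatible}.
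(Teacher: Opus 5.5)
Your overall strategy is the paper's: use monotonicity in boundary conditions for $(\omega^+,1-\omega^-)$ to reduce to a measure with $+$ on $\gamma_L,\gamma_R$ and $-$ elsewhere, then use RSW with $\tau$ large. The gap is in the step where you cut the region $\calQ$ between $\gamma_L$ and $\gamma_R$ into sub-quads ``bounded by $\gamma_L$, $\gamma_R$ and two horizontal segments''. That step does not work as stated, and the extremal-length bound does not repair it.

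The curves $\gamma_L,\gamma_R$ are confined only \emph{horizontally}, to $T_1$ and $T_4$. Vertically they can wander over the whole height of the tube and cross a given horizontal line many times. So $\calQ\cap(\mathbb{R}\times[y_k,y_k+10\eta])$ is in general not a quad whose left and right arcs lie on $\gamma_L$ and $\gamma_R$. Its boundary alternates between pieces of $\gamma_L$ and pieces of the horizontal lines. Corollary~\ref{prop:RSW_quads} then neither applies with the right boundary conditions nor yields a path ending on $\gamma_L$.

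Suppose instead you cut $\calQ$ along the maximal horizontal cross-cuts from $\gamma_L$ to $\gamma_R$. You do get genuine quads whose left and right arcs are sub-arcs of $\gamma_L$ and $\gamma_R$, and your bound on their conformal length via horizontal segments is fine. But these quads may contain long pockets, enclosed by excursions of $\gamma_L$ or $\gamma_R$ that leave the strip and even leave $T^1$. A left--right crossing of such a quad may touch $\gamma_L$ only inside such a pocket. It then does not give the connection \emph{inside $T^1$} that swap-compatibility requires. The difficulty is topological, not a matter of conformal modulus.

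The idea that removes it, and which the paper uses, is to never work in the random domain. Let $\mu^{\rm alt}_{T_{\tau,\eta}}$ be the measure on the deterministic tube with $+$ on its left and right sides and $-$ on its top and bottom.

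\begin{itemize}
\item Any horizontal $\omega^+$-crossing of $T^1$ must meet $\gamma_L$ and $\gamma_R$, since these separate the vertical sides of $T_1$, resp.\ $T_4$.
\item The portion of that crossing between its last visit to $\gamma_L$ and its next visit to $\gamma_R$ is an $\omega^+$-path in $\calQ\cap T^1$ joining them.
\item This last event is increasing and depends only on $\omega^+$ in $\calQ$.
\item The restriction of $\mu^{\rm alt}_{T_{\tau,\eta}}$ to $\calQ$ is dominated by the measure with $+$ on $\gamma_L,\gamma_R$ and $-$ elsewhere, which is itself dominated by your conditional law.
\end{itemize}

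Hence the conditional probability is at least $\mu^{\rm alt}_{T_{\tau,\eta}}[T^1\text{ and }T^3\text{ are crossed horizontally}]$. Your bottom-up exploration argument now runs verbatim on straight $4\eta\times10\eta$ sub-rectangles, giving a bound of order $1-2(1-c)^{\tau/10}$.

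One further caution. When you condition on the outside of the tube to dispose of $B$, condition on $\omega$ (and at most $h_4$ on even faces), not on $h$. Revealing the heights outside can pin $\gamma_L$ and $\gamma_R$ to heights differing by a non-zero multiple of $4$. In that case no $\omega^+$-connection between them is possible, so no uniform bound can hold; FKG may also fail under height boundary conditions.
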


\begin{proof}
	Fix $y$ and remove it from notation. 
	Due to the monotonicity in boundary conditions for $\omega^+$ (Proposition~\ref{prop:FKG_omega+}), we find 
	\begin{align}
	\mu_{\delta \bbZ^2} \big[\text{$T_{\tau,\eta}$ swap-compatible} \,|\, \calS = S \text{ and } B\big] 
	\geq 
	\mu_{T_{\tau,\eta}}^0 \big[\text{$T_{\tau,\eta}$ swap-compatible} \,|\, \calS = S\big] \\
	\geq
	\mu_{T_{\tau,\eta}}^{\rm alt} \big[ T^1 \text{ and }T^3 \text{ crossed horizontally in $\omega^+$}\big],
	\end{align}
	where ${\rm alt}$ are the boundary conditions on $T_{\tau,\eta}$ that are $+$ on the left and right sides and $-$ on the top and bottom sides. 
	
	It is a standard consequence of Proposition~\ref{prop:RSW+} that the above may be rendered arbitrarily close to $1$ by taking $\tau$ sufficiently large.
	\end{proof}

We now state a corollary of the above lemmas in terms of the explorations of interest to us. 
Recall the definition of the flower domains $\calF_1,\dots, \calF_4$ explored in $R_1,\dots, R_4$.
Let 
\begin{align}
S_{NE} &:= [-2r -2\eta, -2r + 2\eta]  \times [2r + \eta,4r-\eta];  \\
S_{NW} &:= [2r -2\eta, 2r + 2\eta] \times [2r + \eta,4r-\eta]; \\
S_{SE} &:= [-2r -2\eta, -2r + 2\eta]  \times [-4r + \eta, - 2r - \eta];  \\
 S_{SW} &:= [2r -2\eta, 2r + 2\eta] \times[-4r + \eta, - 2r - \eta],
\end{align}
be the vertical tubes of width $4\eta$ centred on the  spaces between the rectangles $R_i$. 
In each such tube, place $K := \lfloor \frac{2r-2\eta}{(3\tau + 2)\eta}\rfloor$ tubes $T_{\tau, \eta}(y_i)$ that are $\eta$-separated. 
\im{Might want a figure here}

Write $\calH= \calH(\tau,\eta)$ for the event that, in each $S_{*}$ for $ * \in \{NE,NW, SE, SW\}$ there exists at least one swap-susceptible tube $T_{\tau, \eta}(y_i)$ and, if we write  $T_{\tau, \eta}(y_*)$ for the outer-most swap-susceptible tube in $S_*$, 
each $T_{\tau, \eta}(y_*)$  is swap-compatible.

%

\begin{corollary}\label{cor:Hexp}
	For any $\eps > 0$, there exist $\eta > 0$ and $\tau > 0$ such that,
	\begin{align}\label{eq:Hexp}
	\mu_{\delta \bbZ^2} \big[ \calH(\tau,\eta) \big] \geq 1-\eps.
	\end{align}
\end{corollary}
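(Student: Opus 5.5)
The plan is to choose $\tau$ first, then $\eta$, and to bound separately the failure probabilities of the two ingredients of $\calH$ in each of the four strips $S_*$; a union bound over the four strips then finishes. Fix $\eps>0$ and let $\tau=\tau(\eps/8)$ be given by Lemma~\ref{lem:swap_compatible}, so that a swap-susceptible tube becomes swap-compatible with conditional probability at least $1-\eps/8$. With $\tau$ fixed, let $c=c(\tau)>0$ be the constant of Lemma~\ref{lem:swap_susceptible}.

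\textbf{Existence of a swap-susceptible tube.} In each strip $S_*$ there are $K=\lfloor \frac{2r-2\eta}{(3\tau+2)\eta}\rfloor$ tubes, and they are $\eta$-separated. By Lemma~\ref{lem:swap_susceptible}, the indicators that these tubes are swap-susceptible dominate i.i.d. Bernoulli$(c)$ variables. Hence the probability that no tube in $S_*$ is swap-susceptible is at most $(1-c)^K$. Since $r$ is fixed and $K\to\infty$ as $\eta\to0$, we may choose $\eta$ small enough that $(1-c)^K\le\eps/8$.

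\textbf{Swap-compatibility of the outermost swap-susceptible tube.} Here the order of exploration matters. We reveal the tubes of $S_*$ one by one, from the outermost inward, and for each tube $T$ we reveal only the exploration $\calS(T)$. The first tube found to be swap-susceptible is $T_{\tau,\eta}(y_*)$. On the event $\{T_{\tau,\eta}(y_*)=T,\ \calS(T)=S\}$, the remaining information consists of the $\omega^+$-configuration in the other, previously explored tubes, which lie outside $T$. Two facts make this work:
\begin{itemize}
\item swap-susceptibility of $T$ is measurable in terms of $\calS(T)$, since the leftmost and rightmost vertical crossings are determined by exploring from the sides;
\item the event that the tubes explored before $T$ failed is an event $B$ depending on $\omega^+$ outside $T$.
\end{itemize}
Lemma~\ref{lem:swap_compatible} then applies directly with this $B$ and gives conditional probability at least $1-\eps/8$ of swap-compatibility. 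Summing over the possible values of $T$ and $S$ bounds the failure probability by $\eps/8$ per strip.

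The expected subtle point is exactly this measurability. We must check that the outermost-selection rule only looks at $\omega^+$ in tubes farther out, together with $\calS(T)$ inside $T$. Otherwise conditioning on "$T$ is the outermost swap-susceptible tube" could bias the interior of $T$ between the two extremal crossings. Exploring from outside inward and using the extremal crossings avoids this. The events inside the $T^j$ slices between the crossings stay unexplored, which is the setting of Lemma~\ref{lem:swap_compatible}.

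Combining the two bounds over the four strips gives $\mu_{\delta\bbZ^2}[\calH(\tau,\eta)^c]\le 4(\eps/8+\eps/8)=\eps$, for all $\delta$ small enough (depending on $\eta,\tau$) so that both lemmas apply.
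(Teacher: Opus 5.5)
Your proposal is correct and follows essentially the same route as the paper: fix $\tau$ via Lemma~\ref{lem:swap_compatible}, then take $\eta$ small via Lemma~\ref{lem:swap_susceptible} so that each strip contains a swap-susceptible tube with high probability, and apply Lemma~\ref{lem:swap_compatible} conditionally on the exploration that identifies the outermost swap-susceptible tube. The differences are cosmetic. The paper reveals $\calS(T(y))$ in all tubes at once and multiplies the four conditional bounds to get $(1-\eps)^4$, whereas you explore each strip from the outside in and use a union bound. Your explicit check that the selection event is measurable with respect to $\calS(T)$ and $\omega^+$ outside $T$ is exactly what the paper's argument uses implicitly.
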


\begin{proof}
	Fix $\eps> 0$ and consider a choice of $\tau,\eta > 0$ to be fixed below. 
	We simply write $T(y)$ for $T_{\tau,\eta}(y)$. 
 	Recall the exploration $\calS(T(y))$ that consists in revealing the left-most and right-most vertical crossings of $T_1$ and $T_4$, 
	and thus determine if $T(y)$ is swap-compatible. Also recall that $\calS(T(y))$ only refers to $\omega^+$. 
	
	By taking $\tau$ large enough, we may ensure \eqref{eq:swap_compatible} with the value of $\eps$ fixed above. 
	Then, by taking $\eta = \eta (r,\tau)$ small enough, we may ensure that 
	\begin{align}\label{eq:all_ss}
	\mu_{\delta \bbZ^2} \big[\text{each $S_{*}$ contains at least one swap-susceptible $T(y)$}\big] \geq 1-\eps. 
	\end{align}
	
	Apply the exploration $\calS(T(y))$ in each of the tubes $T(y)$ in $S_{NE},\dots, S_{SW}$. 
	This allows one to determine which tubes are swap-susceptible, and in particular determined $T(y_{NE}),\dots, T(y_{SW})$ when the event in \eqref{eq:all_ss} occurs. 
	Then, due to \eqref{eq:swap_compatible}, 
	\begin{align}
		\mu_{\delta \bbZ^2} \big[\text{$T_{\tau,\eta}(y_{*})$ are all swap-compatible} \,\big|\, \calS(T(y))_y \text{ s.t.  event in \eqref{eq:all_ss} occurs}\big] \geq (1-\eps)^4. 
	\end{align}
	The above, together with \eqref{eq:swap_compatible} imply the desired conclusion, after adapting $\eps$. 
\end{proof}

\subsection{Strong mixing under favourable explorations}

\begin{proposition}[Strong mixing under favourable explorations]\label{prop:strong_mixing_favor}
	For any $\eps > 0$, there exists $\tau > 0$ such that for any $\eta>0$ and $\delta > 0$ small enough, the following holds. 
	 If ${\rm Exp}$ is a realisation of $\mathcal{E}$ such that 
	 \begin{itemize}
	 	\item $\mu_{\delta \bbZ^2} \big[ \calH(\tau,\eta) \big|\mathcal{E} = {\rm Exp}  \big] \geq 1-\eps $ and
		\item $\mu_{\delta \bbZ^2} \big[ \calH(\tau,\eta) \big| \calF_i = F_i \text{ and } (h_4,\omega^+)|_{\cup_i F_i} = (\xi, \tau^+)\big]  \geq 1-\eps$, 
	 \end{itemize}
	then 
	\begin{align}\label{eq:strong_mixing_favor}
		\Big\|\mu_{\delta \bbZ^2} \big[ \cdot \big|\mathcal{E} = {\rm Exp}  \big] - 
		\mu_{\delta \bbZ^2} \big[\cdot \big| \calF_i = F_i \text{ and } (h_4,\omega^+)|_{\cup_i F_i} = (\xi, \tau^+)\big]  \Big\|_{\TV,\Lambda_{2r}(x)} 
		\leq 3\eps.
	\end{align}
\end{proposition}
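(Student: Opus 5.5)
The plan is to compare the two conditional measures through the event $\calH(\tau,\eta)$. On this event the four swap-compatible tubes $T(y_{NE}),\dots,T(y_{SW})$ provide $\omega^+$-circuits that separate $\Lambda_{2r}(x)$ from the annulus $\calU$, and these circuits can be explored from outside. Write $\mu_1:=\mu_{\delta\bbZ^2}[\cdot\,|\,\calE={\rm Exp}]$ and $\mu_2:=\mu_{\delta\bbZ^2}[\cdot\,|\,\calF_i=F_i,\ (h_4,\omega^+)|_{\cup_iF_i}=(\xi,\tau^+)]$. The two measures differ only in that $\mu_1$ also conditions on $h_4$ on $\calU$.

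By the two hypotheses, $\mu_j[\calH]\ge1-\eps$ for $j=1,2$. It therefore suffices to show that the restrictions to $\Lambda_{2r}(x)$ of $\mu_1[\cdot\,|\,\calH]$ and $\mu_2[\cdot\,|\,\calH]$ are within $\eps$ in total variation. Together with the two $\eps$ losses this gives $3\eps$; after a harmless rescaling of $\eps$ one could even aim for $0$ here.

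\textbf{Step 1: an explorable separating circuit.} On $\calH$, for each $*$ let $\calS_*$ be the exploration of the outermost swap-susceptible tube $T(y_*)$, revealing its left-most and right-most vertical $\omega^+$-crossings in $T_1,T_4$ from the outside. Swap-compatibility then yields horizontal $\omega^+$-crossings in $T^1$ and $T^3$. The petals of $\calF_1,\dots,\calF_4$ are arcs of $\omega^+$ or of its dual. I would combine the top crossings (those in $T^1$, i.e.\ nearest $\Lambda_{4r}$ in the northern tubes, and the analogous ones in the southern tubes) with the vertical crossings and with the $\omega^+$ petals reaching into the tubes. The aim is to extract an $\omega^+$-circuit $\Gamma$ in $\Lambda_{4r}\setminus\Lambda_{2r}$ that surrounds $\Lambda_{2r}(x)$, lies inside the region sandwiched between the explored $F_i^c$ and $\calU$, and can be discovered from the outside. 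I would use the outermost such circuit so that it is a stopping set.

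\textbf{Step 2: Markov property and comparison.} Conditionally on $\Gamma=\gamma$ and on everything outside $\gamma$, Lemma~\ref{lem:measures} says the law inside $\gamma$ is $\mu^+_{{\rm int}(\gamma)}$ up to the additive constant. This constant is fixed by the flower boundary data, since $\gamma$ touches petals of $F_i$ whose $h_4$ is prescribed; it is the same under $\mu_1$ and $\mu_2$. Hence the law of $h_4|_{\Lambda_{2r}(x)}$ under $\mu_j[\cdot\,|\,\calH]$ is a mixture, over $\gamma$, of a single kernel $\gamma\mapsto\mu^+_{{\rm int}(\gamma)}$ (restricted to $\Lambda_{2r}(x)$). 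The weights of this mixture depend on $j$.

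\textbf{Main obstacle: the law of $\Gamma$.} The difficulty is that the law of $\Gamma$ is influenced by the extra conditioning on $\calU$, so the mixtures are not obviously equal. I would handle this with the swapping structure of Lemma~\ref{lem:geometric_mixing_quadrilaterals}. The tubes are designed so that the part of $\Gamma$ inside $T(y_*)$ can be resampled: between the top crossing (which connects to $\calU$'s side) and the bottom crossing (which connects toward $\Lambda_{2r}$), the configuration factorises into partition functions of quads $\calQ_{ij}$. Lemma~\ref{lem:geometric_mixing_quadrilaterals} controls the cross-ratio of these partition functions by $\mu[\Horizontal_{\calQ_{\rm in}}]^{\pm2}$, and with $\tau$ large this is $1+o(1)$, as in Lemma~\ref{lem:swap_compatible}. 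The plan is to use the inner crossings (in $T^3$) to build the circuit that determines the inside law. The dependence on outside information then enters only through swap-invariant ratios, which gives a coupling of the inner circuits under $\mu_1$ and $\mu_2$ that succeeds with probability $1-\eps$.

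This coupling step is the delicate one: it requires checking that the conditioning on $h_4|_{\calU}$ affects the weights only through the outer crossings. If the cross-ratio argument proves too rough, my fallback is to replace it by an RSW/FKG-based decoupling in the long tubes, proving that the inner crossing is nearly independent of the outer boundary as $\tau\to\infty$.
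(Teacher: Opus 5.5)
Your reduction to the event $\calH(\tau,\eta)$ (costing $2\eps$) and your choice of tool, Lemma~\ref{lem:geometric_mixing_quadrilaterals} applied in the tubes with $\tau$ large, are the right ingredients. As proposed, however, one step fails and the decisive step is not carried out.

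\textbf{Step~1 fails.} $\calH$ only yields $\omega^+$-crossings inside the four tubes $T(y_*)$. To close these into a circuit around $\Lambda_{2r}(x)$ you must get past the explored sets $F_i^c$, whose boundaries alternate between $\omega^+$ and $(\omega^+)^*$ petals, and nothing forces an $\omega^+$-path through the unexplored region that bridges the dual petals. So there is no $\omega^+$-circuit $\Gamma$, and no kernel $\gamma\mapsto\mu^+_{{\rm int}(\gamma)}$ as in Step~2. No circuit is needed, though. Take as ${\rm Inside}$ the unexplored region enclosed by the flower boundaries and the four inner crossings $\Gamma^*_{\rm in}$ (bottom-most in $T^3$). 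Take as ${\rm Outside}$ the set $\calU\cup\bigcup_iF_i^c$ together with the unexplored faces beyond the outer crossings $\Gamma^*_{\rm out}$ (top-most in $T^1$). Both measures condition on the same $(F_i,\xi,\tau^+)$, and the separating paths are $\omega^+$-open. Hence, by the spatial Markov property, the conditional weight of an inside configuration given an outside one is its intrinsic weight times $\prod_*Z^+(\calQ_*)$ over the four quads between $\Gamma^*_{\rm in}$ and $\Gamma^*_{\rm out}$, divided by a normalisation depending only on the outside. This also settles your worry about $h_4|_{\calU}$: $\calU$ sits in ${\rm Outside}$, so it influences the inside only through that normalisation and through $\Gamma^*_{\rm out}$. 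Note that the top crossings you use for your outermost circuit are exactly $\Gamma^*_{\rm out}$, whose law does depend on $\calU$; in the correct argument they serve only as the outer sides of the quads.

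\textbf{The missing computation.} Your ``swap-invariant ratios \ldots coupling succeeding with probability $1-\eps$'' is never derived, and no coupling is needed. For $(h^1,\omega^1)$ and $(h^2,\omega^2)$ in $\calH$, consider the cross-ratio $\mu[{\rm in}^1|{\rm out}^1]\,\mu[{\rm in}^2|{\rm out}^2]\big/\big(\mu[{\rm in}^2|{\rm out}^1]\,\mu[{\rm in}^1|{\rm out}^2]\big)$. After the intrinsic weights and normalisations cancel, it equals $\prod_*Z^+(\calQ_*^{1,1})Z^+(\calQ_*^{2,2})\big/\big(Z^+(\calQ_*^{2,1})Z^+(\calQ_*^{1,2})\big)$, where $\calQ^{i,j}_*$ lies between $\Gamma^*_{\rm in}(\omega^i)$ and $\Gamma^*_{\rm out}(\omega^j)$. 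Lemma~\ref{lem:geometric_mixing_quadrilaterals}, together with monotonicity in boundary conditions (Proposition~\ref{prop:FKG_omega+}), bounds each factor through the horizontal crossing probability of $T^2\subset\calQ_{\rm in}$. Choosing $\tau$ with $\mu^{\rm alt}_{T^2}[\Horizontal_{T^2}]^8\ge1-\eps$ then puts the product in $[1-\eps,(1-\eps)^{-1}]$. Now sum over outside configurations drawn from $\mu_1$ (resp.\ $\mu_2$) and over inside configurations in $E$. This gives $\mu_1[E\cap\calH]\,\mu_2[\calH]\big/\big(\mu_1[\calH]\,\mu_2[E\cap\calH]\big)\in[1-\eps,(1-\eps)^{-1}]$, and together with $\mu_j[\calH]\ge1-\eps$ this yields the bound $3\eps$.
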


\begin{proof}
	Recall the vertical splitting of $T_{\tau,\eta}$ into rectangles $T^1,T^2,T^3$ of sizes $4\eta \times\tau\eta$. 
	By taking $\tau$ sufficiently large and using Proposition~\ref{prop:RSW+}, we may ensure that
	\begin{align}\label{eq:T_2_cross}
	 	\mu_{T^2}^{\rm alt}[\Horizontal_{T^2}]^{8}  \geq 1-\eps. 
	\end{align}
	where ${\rm alt}$ are the boundary conditions on $T^{2}$ that are $+$ on its left and right sides and $-$ on the top and bottom sides. 
	
	For a configuration $\omega^+$, let
	$\Gamma_{\rm out}^{NE}$ and $\Gamma_{\rm in}^{NE}$ denote the top-most and bottom-most $\omega^+$ paths in $T(y_{NE})$
	as in the definition of swap-compatible. 
	The same notation applies in each of the other corners $NW, SE, SW$, with $\Gamma_{\rm out}^{*}$ always denoting the outer-most path. 
	Also set ${\rm Outside}$ as the region of $\Lambda_{4r}$ formed of the explored regions $\calU, \calF_1^c,\dots, \calF_4^c$
	as well as all faces in the unexplored regions that are above $\Gamma_{\rm out}^{NE}$ or $\Gamma_{\rm out}^{NW}$ or 
	below $\Gamma_{\rm out}^{SE}$ or $\Gamma_{\rm out}^{SW}$.
	Let ${\rm Inside}$ denote the unexplored region separated from infinity by $\Gamma_{\rm in}^{NE} \cup \Gamma_{\rm in}^{NW} \cup \Gamma_{\rm in}^{SE} \cup \Gamma_{\rm in}^{SW}$.
	See Figure~\ref{fig:4rectangles3}. 
	
	\begin{figure}
	\begin{center}
\includegraphics[width = 0.4\textwidth]{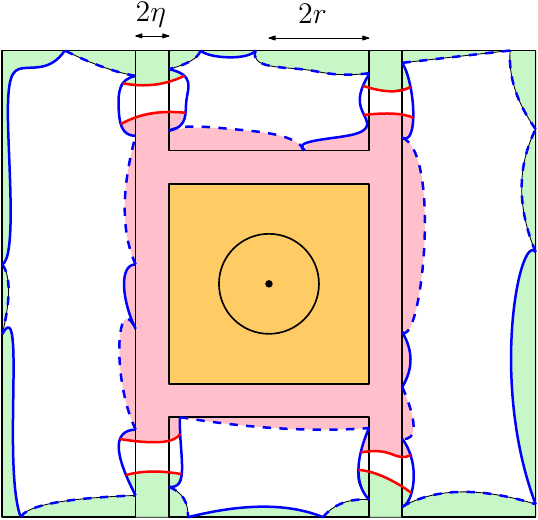}
\caption{The petals of the $\calF_i$ are in blue and the paths $\Gamma_{\rm in}^{*}$ and $\Gamma_{\rm out}^{*}$ are depicted in red. The regions ${\rm Inside}$ and ${\rm Outside}$ are pink and green, respectively. }
\label{fig:4rectangles3} 
	\end{center}
	\end{figure}
	
	We will use the short-hand $\calF = F$ for $\{  \calF_i = F_i \text{ and } (h_4,\omega^+)|_{\cup_i F_i} = (\xi, \tau^+)\}$.
	Fix two configurations $(h^1, \omega^1)$ and $(h^2, \omega^2)$ such that $\calH(\tau, \eta)$ occurs. 
	They will ultimately be sampled according to 
	$\mu_{\delta \bbZ^2} \big[ \cdot \big|\mathcal{E} = {\rm Exp}  \big]$ and $\mu_{\delta \bbZ^2} \big[\cdot \big| \calF = F\big]$, respectively.
	
	Write $(h_{\rm out}^1, \omega_{\rm out}^1)$ and $(h_{\rm in}^1, \omega_{\rm in}^1)$  
	for the restrictions of  $(h^1, \omega^1)$ to ${\rm Outside}(\omega^1)$ and ${\rm Inside}(\omega^1)$, respectively. 
	Do the same for $(h^2, \omega^2)$.
	For $i,j \in \{1,2\}$, define the short-hand 
	\begin{align*}
	&\mu\big[ (h_{\rm in}^i,  \omega_{\rm in}^i) \big|(h_{\rm out}^j,  \omega_{\rm out}^j)\big] \\
	&\qquad := \mu_{\delta\bbZ^2}\big[(h,\omega^+) = (h_{\rm in}^i,  \omega_{\rm in}^i) \text{ on } {\rm Inside}(\omega^i)\big|(h,\omega^+) = (h_{\rm out}^j,  \omega_{\rm out}^j) \text{ on } {\rm Outside}(\omega^j)\big].
	\end{align*}
	Let us now bound the cross ratio
	\begin{align}\label{eq:cross_ratio}
	\frac{\mu\big[ (h_{\rm in}^1,  \omega_{\rm in}^1) \big|(h_{\rm out}^1,  \omega_{\rm out}^1)\big]\times
	\mu\big[ (h_{\rm in}^2,  \omega_{\rm in}^2) \big|(h_{\rm out}^2,  \omega_{\rm out}^2)\big]}
	{\mu\big[ (h_{\rm in}^2,  \omega_{\rm in}^2) \big|(h_{\rm out}^1,  \omega_{\rm out}^1)\big]\times
	\mu\big[ (h_{\rm in}^1,  \omega_{\rm in}^1) \big|(h_{\rm out}^2,  \omega_{\rm out}^2)\big]}.
	\end{align}
	
	For $i,j \in \{1,2\}$, write $\calQ_{NE}^{i,j}$ for the quad formed of the faces that are
	\begin{itemize}
		\item right of the left-most $\omega^+$ crossing of $T_{\tau,\eta}(y)$,
		\item left of the right-most $\omega^+$ crossing of $T_{\tau,\eta}(y)$,
		\item below $\Gamma_{\rm out}^{NE} (\omega^j)$ and	
		\item above $\Gamma_{\rm in}^{NE}(\omega^i)$.
	\end{itemize}
	See Figure~\ref{fig:4rectangles} for an illustration. 
	Use a similar procedure to define $\calQ_{NW}^{i,j}$,$ \calQ_{SE}^{i,j}$ and $\calQ_{SW}^{i,j}$. 
	Recall the partition functions $Z^+(\calQ)$ defined in Section~\ref{subsec:geometric_mixing_annuli}, specifically those used in Lemma~\ref{lem:geometric_mixing_quadrilaterals}. 
	Then, the cross ratio in \eqref{eq:cross_ratio} becomes 
	\begin{align}\label{eq:cross_ratio_bounds}
        1- \eps	\leq \prod_{* \in \{NE,NW,SE, SW\}} \frac{Z^+(\calQ_{*}^{1,1})Z^+(\calQ_{*}^{2,2})}{Z^+(\calQ_{*}^{2,1})Z^+(\calQ_{*}^{1,2})} \leq 
             \frac{1}{1- \eps}, 
	\end{align}
	with the bounds due to Lemma~\ref{lem:geometric_mixing_quadrilaterals}, the monotonicity of boundary conditions (Proposition~\ref{prop:FKG_omega+}) and our choice of $\tau$ \eqref{eq:T_2_cross}. 
	
	Now, for any event $E$ depending on $h$ and $\omega^+$ in $\Lambda_{2r}$, we have 
	\begin{align*}
	\mu_{\delta \bbZ^2} \big[ E \cap \calH \big|\mathcal{E} = {\rm Exp}  \big]&= 
	\sum_{\substack{(h_{\rm out}^1,  \omega_{\rm out}^1 )\\
	(h_{\rm in}^1,  \omega_{\rm in }^1) \in E }}
	 \mu\big[ (h_{\rm in}^1,  \omega_{\rm in}^1) \big|(h_{\rm out}^1,  \omega_{\rm out}^1)\big]\times \mu\big[ (h_{\rm out}^1,  \omega_{\rm out}^1) \big|\mathcal{E} = {\rm Exp}  \big],
\\	 
\mu_{\delta \bbZ^2} \big[  \calH \big|\mathcal{E} = {\rm Exp}  \big] &= 
	\sum_{\substack{(h_{\rm out}^1,  \omega_{\rm out}^1) \\
	(h_{\rm in}^2,  \omega_{\rm in }^2)}}
	 \mu\big[ (h_{\rm in}^2,  \omega_{\rm in}^2) \big|(h_{\rm out}^1,  \omega_{\rm out}^1)\big]\times \mu\big[ (h_{\rm out}^1,  \omega_{\rm out}^1) \big|\mathcal{E} = {\rm Exp}  \big],
	\end{align*}
	where the sums run over configurations as featured in \eqref{eq:cross_ratio}. 
	Similar expressions exist for $\mu_{\delta \bbZ^2} \big[ E \cap \calH \big| \calF = F \big]$
	and $\mu_{\delta \bbZ^2} \big[ \calH \big| \calF = F \big]$, 
	with $ (h_{\rm out}^1,  \omega_{\rm out}^1) $ being replaced by $ (h_{\rm out}^2,  \omega_{\rm out}^2)$.
	Combining these expressions with the bounds in \eqref{eq:cross_ratio_bounds} we find
	\begin{align*}
		1 - \eps \leq &\frac{\mu_{\delta \bbZ^2} \big[ E \cap \calH \big|\mathcal{E} = {\rm Exp}  \big]
		\mu_{\delta \bbZ^2} \big[\calH \big| \calF = F  \big] }
		{\mu_{\delta \bbZ^2} \big[ \calH \big|\mathcal{E} = {\rm Exp}  \big]
		\mu_{\delta \bbZ^2} \big[ E \cap \calH \big| \calF = F  \big] }		
		\leq 
		\frac{1}{1-\eps}. 
	\end{align*}
	Finally, since we assumed 
	\begin{align}\label{eq:boundsH}
		1 - \eps \leq \mu_{\delta \bbZ^2} \big[ \calH \big|\mathcal{E} = {\rm Exp}  \big] \leq 1 \quad \text{ and }\quad 
		1 - \eps \leq \mu_{\delta \bbZ^2} \big[\calH \big| \calF = F  \big]  \leq 1,
	\end{align}
	we conclude that, 
	\begin{align*}
		(1 - \eps)^2 \leq &\frac{\mu_{\delta \bbZ^2} \big[ E \cap \calH \big|\mathcal{E} = {\rm Exp}  \big] }
		{\mu_{\delta \bbZ^2} \big[ E \cap \calH \big| \calF = F  \big] }	\leq 	
	(1-\eps)^{-2}. 
	\end{align*}
	Using again \eqref{eq:boundsH}, this leads to 
\begin{align}
\big|\mu_{\delta \bbZ^2} \big[ E  \big|\mathcal{E} = {\rm Exp}  \big] - \mu_{\delta \bbZ^2} \big[ E  \big| \calF = F  \big]	\big| \leq 3\eps,
\end{align}
as claimed. 
\end{proof}

\subsection{Proof of Proposition~\ref{prop:strong_mixing}}

\begin{proof}
	Fix $\eps > 0$. Let $\tau > 0$ be given by Proposition~\ref{prop:strong_mixing_favor}. 
	
	Let $\eta > 0$ be given by Corollary~\ref{cor:Hexp} for $\eps^2$. Then, by the tower property, \eqref{eq:Hexp} may be written as 
	\begin{align}\label{eq:Hexp_conditional}
		&\mu_{\delta \bbZ^2} \Big[	\mu_{\delta \bbZ^2} \big[ \calH(\tau,\eta)^c \big| \calE \big]\Big] \leq \eps^2 \quad \text{ and} \\
		&\mu_{\delta \bbZ^2} \Big[	\mu_{\delta \bbZ^2} \big[ \calH(\tau,\eta)^c \big| \calF_1,\dots, \calF_4 , (h_4,\omega^+)|_{\cup_i F_i}   \big]\Big] \leq \eps^2.
	\end{align}
By Markov's inequality, it follows that, with probability at least $1 - 2\eps$, $\calE$ is such that 
\begin{align}
		\mu_{\delta \bbZ^2} \big[ \calH(\tau,\eta) \big|\mathcal{E} \big] \geq 1-\eps \quad \text{ and }\quad 
 \mu_{\delta \bbZ^2} \big[ \calH(\tau,\eta) \big|  \calF_1,\dots, \calF_4 , (h_4,\omega^+)|_{\cup_i F_i}   \big]  \geq 1-\eps. 
\end{align}
	We call $\calE$ successful when the above occurs. 
	Then, for $\calE$ successful, \eqref{eq:strong_mixing_favor} implies \eqref{eq:strong_mixing0}, with $3\eps$ instead of $\eps$. 
	Modifying $\eps$ provides the desired conclusion. 
\end{proof}

\section{Proof of the spectral representation}
\label{sec:spectral_representation_proof}

The goal of this section is Proposition~\ref{lemma:spectral_representation}.

\subsection{Spectral representation of local observables}
\label{subsec:transfer_matrix_formalism}

Recall that $\mu_{\Z^2}$ denotes the full-plane law of the height
function modulo four, $h_4$ (see~\cite[Sections~14--17]{DumKozLamMan26}).
The following statements are written in the spirit of~\cite[Section~4.5]{DumKozLamMan26},
although we state everything directly in terms of the full-plane measure $\mu_{\Z^2}$.

\begin{definition}[Local observables, translations, and reflections]
    \label{def:local_observables}
    Let $\mathsf D:=\{\rightarrow,\nearrow,\uparrow,\nwarrow,\leftarrow,\swarrow,\downarrow,\searrow\}$
    be the set of eight directions.
    We set
    \begin{equation}
        \begin{aligned}
            v_\rightarrow&:=(2,0), & v_\leftarrow&:=(-2,0),\\
              v_\uparrow&:=(0,2), & v_\downarrow&:=(0,-2),\\
            v_\nearrow&:=(2,2), & v_\swarrow&:=(-2,-2), \\
            v_\nwarrow&:=(-2,2), & v_\searrow&:=(2,-2).
        \end{aligned}
    \end{equation}

    The associated reflection lines are
    \begin{equation}
        \begin{aligned}
            \calR_\rightarrow=\calR_\leftarrow&:=\{1/2\}\times\R, \\
             \calR_\nearrow=\calR_\swarrow&:=\{(x,y)\in\R^2:x+y=1\},\\
            \calR_\uparrow=\calR_\downarrow&:=\R\times\{1/2\}, \\
             \calR_\nwarrow=\calR_\searrow&:=\{(x,y)\in\R^2:x-y=0\}.
        \end{aligned}
    \end{equation}
    For $\star\in\mathsf D$, let $\dagger_\star$ denote the
    reflection with respect to $\calR_\star$, acting on faces through their centres, and set
    $h^{\dagger_\star}(f):=h(f^{\dagger_\star})$.

    A random variable $\bfX=\bfX(h)$ is called a \emph{local observable} if
    it is bounded and real-valued, and is measurable in terms of $h_4$ on
    finitely many faces.  Its support is the union of those faces.
    For $v\in\Z^2$ with $v_1+v_2$ even, write $\tau_v(f):=f+v$ for every face $f$ of $\mathbb Z^2$. The translated observable
    $\tau_v(\bfX)$ is defined by $\tau_v(\bfX)(h):=\bfX(h\circ\tau_v)$,
    so its support is translated by $v$.
    \commentPiet{14.09.2026: I don't think it's necessary to explicitly define how boundary height functions are transformer;
    this is obvious from the rest.}

    Reflection acts on observables by
    \begin{equation}
        \bfX^{\dagger_\star}(h)
        :=\bfX\bigl(h^{\dagger_\star}\bigr).
    \end{equation}

    We write $\frakA_\star^-$ and $\frakA_\star^+$ for the spaces of local
    observables supported, respectively, in the two closed half-planes on the
    negative and positive sides of $\calR_\star$, where the positive side is
    the one pointed to by $\star$.
\end{definition}

The horizontal spectral representation below is the full-plane height-modulo-four analogue of
\cite[Theorem~4.15]{DumKozLamMan26}; its transfer matrix construction is in
\cite[Sections~11--12]{DumKozLamMan26}, and the full-plane limiting procedure is \cite[Sections~14--17]{DumKozLamMan26}.
The extension to the result below is discussed in Section~\ref{sec:spectral_representation_proof}.

\begin{proposition}[Spectral representation of local observables]
    \label{lemma:spectral_representation}
    For every $\star\in\mathsf D$ and observables $\bfX\in\frakA_\star^-$ and
    $\bfY\in\frakA_\star^+$, there exists a finite complex-valued measure
    $\mu_{\bfX,\bfY,\star}$ on $[0,1]$ such that, for every
    $n\in\Z_{\geq0}$,
    \begin{equation}\label{eq:pushing_spectral_representation}
        \mu_{\Z^2}\bigl[\bfX\,\tau_{nv_\star}(\bfY)\bigr]
        =\int_{[0,1]}\lambda^n\diff\mu_{\bfX,\bfY,\star}(\lambda).
    \end{equation}
    %
    Moreover, $\mu_{\bfX,\bfX^{\dagger_\star},\star}$ and
    $\mu_{\bfY^{\dagger_\star},\bfY,\star}$ are positive measures, and
    \begin{equation}\label{eq:pushing_reflection_positivity}
        \bigl|\mu_{\Z^2}[\bfX\bfY]\bigr|^2
        \leq \lVert\mu_{\bfX,\bfY,\star}\rVert_{\mathrm{TV}}^2
        \leq
        \lVert\mu_{\bfX,\bfX^{\dagger_\star},\star}\rVert_{\mathrm{TV}}
        \lVert\mu_{\bfY^{\dagger_\star},\bfY,\star}\rVert_{\mathrm{TV}}
        =\mu_{\Z^2}[\bfX\bfX^{\dagger_\star}]
        \mu_{\Z^2}[\bfY^{\dagger_\star}\bfY].
    \end{equation}
\end{proposition}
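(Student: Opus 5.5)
We construct, for each direction $\star\in\mathsf D$, a transfer matrix $T_\star$ acting on a Hilbert space of ``half-plane states'', which is self-adjoint, positive semi-definite and contractive with respect to the reflection-positive inner product. The spectral theorem then gives the representation. For the axis directions $\rightarrow,\uparrow$ (and their opposites) this is exactly \cite[Theorem~4.15]{DumKozLamMan26} transported to the full-plane measure via \cite[Sections~14--17]{DumKozLamMan26}. The new case is the diagonal directions $\nearrow,\nwarrow$.

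\textbf{Step 1: finite volume, torus or cylinder.} We work first on a finite torus $\mathbb T_{N}$, rotated so that $\calR_\star$ is a symmetry line passing through face centres. The model must be invariant under the reflection $\dagger_\star$. For diagonal $\star$ the reflection swaps the roles of $\a$ and $\b$ vertices, so isotropy $\a=\b=1$ is used here. On $\frakA_\star^-$ we define the sesquilinear form $\langle \bfX,\bfY\rangle:=\mu_{\mathbb T_N}[\bar\bfX\,\bfY^{\dagger_\star}]$. Reflection positivity says this form is nonnegative.

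To prove reflection positivity, we write the Boltzmann weight as a product of a weight depending on one side, its reflection, and an interaction across $\calR_\star$. In the diagonal case the reflection line crosses vertices and faces diagonally, and the interaction is written in terms of $h_4$ on the faces lying on the line. The interaction must be a positive combination of products $f\cdot f^{\dagger}$. This is where $\c\ge1$ enters: it is the same monotonicity and positivity property underlying the percolation representation of Section~\ref{sec:review_percolation_representation}, as anticipated in Remark~\ref{remark:diagonal_transfer_matrix}. Concretely, we would use the coupling with $(h_4,\omega)$, in which the cross-line weight factorises after summing the Bernoulli variables $X_e$ with $1/\c\in[0,1]$.

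\textbf{Step 2: transfer operator.} We quotient $\frakA_\star^-$ by null vectors and complete to a Hilbert space $\mathcal H_N$. The shift by $v_\star$ maps $\frakA_\star^-$ into itself. It arises from a one-step transfer matrix $T$, the shift by half of $v_\star$ in the reflection-positive form, such that
\[
\mu_{\mathbb T_N}\bigl[\bfX\,\tau_{nv_\star}(\bfY)\bigr]=\langle \bfX, T^{n}\bfY'\rangle ,
\]
where $\bfY'$ is the state associated with $\bfY$ through $\dagger_\star$. The choice of $v_\star$ of length $2$ (resp.\ $(2,2)$), i.e.\ two lattice steps, makes $T=S^\ast S$ with $S$ the single-step operator. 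Hence $T$ is positive semi-definite.

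We bound $\|T\|\le1$ by the standard Cauchy--Schwarz/chessboard iteration. The spectral theorem then gives a positive measure on $[0,1]$ for $\bfX=\bfY^{\dagger}$, and complex measures in general, by polarisation. The inequality \eqref{eq:pushing_reflection_positivity} is Cauchy--Schwarz for the spectral measures:
\[
|\mu_{\bfX,\bfY}|(B)^2\le \mu_{\bfX,\bfX^\dagger}(B)\,\mu_{\bfY^\dagger,\bfY}(B).
\]
Summed over partitions, this gives the total-variation bound, and total masses equal the $n=0$ correlations.

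\textbf{Step 3: full plane.} We let $N\to\infty$. The full-plane $\mu_{\Z^2}$ on $h_4$ is the local limit of torus measures, by \cite[Sections~14--17]{DumKozLamMan26}; uniqueness of the flip-symmetric Gibbs measure ensures the diagonal torus gives the same limit. For fixed $n$ the moments converge, and the measures $\mu^{(N)}_{\bfX,\bfY,\star}$ have total variation bounded by $\|\bfX\|_\infty\|\bfY\|_\infty$ by Step 2. Weak compactness on $[0,1]$, together with uniqueness in the Hausdorff moment problem, yields a limiting measure satisfying \eqref{eq:pushing_spectral_representation} for all $n$. Positivity and the inequalities pass to the limit.

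The main obstacle is Step 1 in the diagonal directions: exhibiting reflection positivity across a diagonal line through face centres, and checking that the cross-line interaction is genuinely a positive sum of reflected products only when $\c\ge1$. The axis cases and the limiting procedure are routine given \cite{DumKozLamMan26}.
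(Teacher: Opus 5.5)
Your route (reflection positivity across $\calR_\star$, an Osterwalder--Schrader reconstruction, and the spectral theorem for a positive contraction) is a genuinely different and viable way to prove the proposition. However, Step~2 fails in the finite volume where you place it.

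\textbf{The problem with Step~2 on the torus.} On a torus $\mathbb T_N$ no translation maps a half-space algebra into itself. (In the plane, too, it is $\tau_{-v_\star}$, not $\tau_{v_\star}$, that preserves $\frakA_\star^-$.) So there is no operator $T$ on $\mathcal H_N$ with $\mu_{\mathbb T_N}[\bfX\,\tau_{nv_\star}(\bfY)]=\langle\bfX,T^n\bfY'\rangle$ for all $n$. In fact no identity $\mu_{\mathbb T_N}[\bfX\,\tau_{nv_\star}(\bfY)]=\int_{[0,1]}\lambda^n\,d\mu^{(N)}(\lambda)$ with a fixed measure can hold. The left-hand side is periodic in $n$, and a periodic moment sequence of a measure on $[0,1]$ forces the measure to be a multiple of $\delta_1$, i.e.\ the correlation would be constant in $n$. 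A transfer matrix on the torus gives instead a trace formula whose atoms sit at ratios $\lambda_k^2/\lambda_l^2$, which need not lie in $[0,1]$. So the compactness argument of your Step~3 has nothing to act on.

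\textbf{The repair.} Swap the order of the steps:
\begin{enumerate}
\item The inequality $\mu[\bfX\bfX^{\dagger_\star}]\geq 0$ for local $\bfX\in\frakA_\star^-$ is preserved under local limits, so it holds for $\mu_{\Z^2}$.
\item Run the reconstruction directly in infinite volume on $\frakA_\star^+$, with the form $\mu_{\Z^2}[\bfX^{\dagger_\star}\bfY]$ and $T=\tau_{v_\star}$.
\item The form is symmetric and $T$ is self-adjoint by reflection and translation invariance.
\item $T\geq 0$ follows from reflection positivity across $\calR_\star+v_\star/2$, since $\tau_{v_\star}$ is the composition of the reflections in $\calR_\star$ and $\calR_\star+v_\star/2$. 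In the diagonal case this line again passes through faces of a single parity; in the axis case it is again a column of faces. This is the rigorous form of your $T=S^\ast S$.
\item $\|T\|\leq 1$ follows from your iteration, which only uses $|\mu_{\Z^2}[\bfX^{\dagger_\star}\tau_{mv_\star}\bfX]|\leq\|\bfX\|_\infty^2$.
\end{enumerate}
Step~3 then becomes unnecessary.

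\textbf{Comparison with the paper.} Once repaired, your argument differs from the paper's mainly in being abstract where the paper is explicit. The paper builds transfer matrices on an infinite cylinder. In the horizontal case the states are column configurations. In the diagonal case they are pairs $(\kappa,\eps)$ recording the even heights on a diagonal together with the bond/odd-spin bit, symmetrised as $t_\nearrow=W^{1/2}\tilde t_\nearrow W^{1/2}$ with $W$ carrying the factors $\c-1$. The paper then:
\begin{enumerate}
\item restricts to the balanced block and uses the Perron--Frobenius vector as vacuum;
\item expands $t_\star^{2n}$ in eigenvectors, with atoms at $\lambda_k^2/\lambda_0^2\in[0,1]$;
\item lets $L\to\infty$ by compactness of the spectral measures.
\end{enumerate}
Your positivity criterion on a diagonal line is that $\bigl(\begin{smallmatrix}\c&1\\1&\c\end{smallmatrix}\bigr)=\mathbf 1\mathbf 1^{T}+(\c-1)I$ be positive semidefinite. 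This is exactly what makes $W^{1/2}$ real, so $\c\geq 1$ enters at the same place in both proofs. The paper's route yields a concrete Markov/transfer structure along diagonals. The corrected OS route needs only reflection positivity, translation and reflection invariance of $\mu_{\Z^2}$, and boundedness of observables. It avoids irreducibility, Perron--Frobenius and the limit of spectral measures.

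\textbf{A side remark that is backwards.} A reflection in a diagonal line maps each diagonal axis to itself, so it preserves the vertex types. It is reflections in horizontal and vertical lines that exchange the $\a$ and $\b$ types. So isotropy is used for the axis directions, while across a diagonal line the only input needed is $\c\geq 1$.
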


The goal of the rest of this section is to prove this proposition.
\subsection{The horizontal transfer matrix}

In this part, given some even integer $L>0$, define the cylinder $\text{Cyl}_L:=\bbZ\times (\bbZ/L\bbZ)$ of width $L$ as being infinite in the horizontal direction.

We define a column configuration $\kappa$ of the six-vertex model as an element of $(\bbZ/4\bbZ)^{\bbZ/L\bbZ}$ which alternates parity, in the sense that for all $i\in\bbZ/L\bbZ$, one has $\kappa_{i+1}\equiv \kappa_i+1\pmod 2$. One should think of $\kappa$ as locally encoding the height function modulo $4$ on a column of the cylinder $\text{Cyl}_L$. Let us point out now the following Markov property: conditionally on two column configurations $\kappa$ and $\kappa'$ on columns $j$ and $j'$ respectively, the six-vertex configuration between these two columns is independent from that outside the two columns. Let $S_{\rightarrow}$ be the set of all column configurations $\kappa$, and let $\Omega:=\ell^2(S_\rightarrow)$ be the space on which the transfer matrix acts. We denote by $e_\kappa$ the basis vector associated to a configuration $\kappa\in S_\rightarrow$.

Define the transfer matrix $t_\rightarrow$ to be the linear map from $\Omega$ to itself which is defined on the basis $e_\kappa$ in the following way. Given $\kappa,\kappa'\in S_\rightarrow$, start by inserting $\kappa'$ to the right of $\kappa$. From here, each vertex between the two columns gets a weight:
\begin{itemize}
    \item $0$ if the four heights around the vertex do not satisfy the ice rule,
    \item $\mathbf c$ if the vertex is a $\mathbf c$-type vertex,
    \item $1$ otherwise,
\end{itemize}
and let $e_{\kappa'}^\dagger t_\rightarrow e_\kappa$ be the product of such weights. Notice that one automatically gets $e_{\kappa'}^\dagger t_\rightarrow e_\kappa=e_{\kappa}^\dagger t_\rightarrow e_{\kappa'}$, so $t_\rightarrow$ is a symmetric operator on $\Omega$.

For $j\in\bbZ$, let $E_j$ be the $j$-th column of $\text{Cyl}_L$, to which we associate a copy $\Omega_j:=\Omega$ of the vector space. Then, for all $j$, $t_\rightarrow$ can be seen as an operator from $\Omega_j$ to $\Omega_{j+1}$. For each $j\le j'$, let $E_{jj'}$ be the union of columns $\cup_{j\le k\le j'}E_k$. For two configurations $\kappa,\kappa'$ and two indices $j,j'$, we may define $\bbE_{\kappa\kappa'}^{jj'}[\cdot]$ to be the six-vertex measure inside $E_{jj'}$ conditioned on column $E_j$ having configuration $\kappa$ and column $E_{j'}$ having configuration $\kappa'$, assuming this conditioning is non-degenerate. Notice that the Markov property of the configurations ensures that additionally conditioning on any configuration outside $E_{jj'}$ does not change the measure.

For $\mathbf X$ a random variable measurable in terms of the heights modulo $4$ inside $E_{jj'}$, define the linear operator $\mathfrak o_{\mathbf X}^{jj'}:\Omega_j\to\Omega_{j'}$ in the following way. For each basis vectors $e_\kappa\in\Omega_j$ and $e_{\kappa'}\in\Omega_{j'}$, set
\begin{equation*}
    e_{\kappa'}^\dagger\mathfrak o_{\mathbf X}^{jj'}e_\kappa := \bbE_{\kappa\kappa'}^{jj'}[\mathbf X](e_{\kappa'}^\dagger t_\rightarrow^{j'-j}e_\kappa),
\end{equation*}
The measure $\bbE_{\kappa\kappa'}^{jj'}[\cdot]$ makes sense precisely when $e_{\kappa'}^\dagger t_\rightarrow^{j'-j}e_\kappa$ is non-zero, so $\mathfrak o_{\mathbf X}^{jj'}$ is well-defined. These observables follow the same properties as in \cite{DumKozLamMan26}.

\subsection{The diagonal transfer matrix}

We now wish to define a transfer matrix $t_{\nearrow}$ in the diagonal direction. The main obstacle in this case is that the naive construction of $t_\nearrow$ does not yield a symmetric operator.

For $L>0$, let us start by defining the cylinder $\text{Cyl}_L$ in the diagonal direction as the set $\bbZ^2/\sim$, where $\sim$ is the equivalence relation generated by $(x,y)\sim (x+L,y-L)$ for all $(x,y)\in\bbZ^2$. We define a diagonal configuration $(\kappa,\eps)$ to be a pair constituted of some $\kappa\in (2\bbZ/4\bbZ)^{\bbZ/L\bbZ}$ (so all entries are even), and for each $i$, a quantity $\eps_{i,i+1}\in\{-1,0,1\}$, in such a way that whenever $\eps_{i,i+1}=0$, one has $\kappa_i=\kappa_{i+1}$. One should think of $\kappa$ as encoding the height function modulo $4$ on a diagonal. At each $i$ such that $\kappa_i=\kappa_{i+1}$, the value $\eps_{i,i+1}$ encodes some additional information: either $\eps_{i,i+1}=0$, in which case this means that there is an edge of the percolation $\omega^\even$ between the two faces on which $\kappa_i$ and $\kappa_{i+1}$ lie, otherwise there is no such edge, so the two odd faces between the values $\kappa_i$ and $\kappa_{i+1}$ must have the same value, equal to $\eps_{i,i+1}$. See Figure \ref{fig:diagonal_transfer_matrix} (left) for an illustration.

We claim that this notion of diagonal configuration allows for the following Markov property, as in the horizontal case: given two diagonal configurations $(\kappa,\eps)$ and $(\kappa',\eps')$ on two diagonals $k$ and $k'$ respectively, the six-vertex configuration between the two diagonals $k$ and $k'$ is independent from the configuration outside the diagonals. This is due to the fact that the only place the inside and the outside could interact along a diagonal $k$ is at vertices where $\eps_{i,i+1}=0$ (that is, the odd heights at each side of the diagonal are undetermined), but in this case, the existence of the edge of $\omega^\even$ implies that this vertex will get weight $1$, independently of the heights neighboring it, as desired.

To a configuration $(\kappa,\eps)$, let us assign a weight
\begin{equation*}
    w(\kappa,\eps):=(\mathbf c-1)^{\#\{i:\kappa_i=\kappa_{i+1},\ \eps_{i,i+1}\neq0\}}.
\end{equation*}

Once again, set $S_{\!\nearrow}$ to be the set of all diagonal configurations $(\kappa,\eps)$, and let $\Omega:=\ell^2(S_{\!\nearrow})$ be the vector space on which the transfer matrix acts. Define $(e_{\kappa,\eps})$ to be the canonical basis of $\Omega$.  For $j\in2\bbZ/L\bbZ$, we define $E_j$ to be the $j$-th diagonal of $\text{Cyl}_L$, so that the $E_j$ form a partition of the even faces of $\text{Cyl}_L$, and we associate to it a vector space $\Omega_j$ which is a copy of $\Omega$. For each $j$, we interpret the weight $w(\cdot)$ as a diagonal operator $W:\Omega_j\to\Omega_j$ with diagonal entries $(w(\kappa,\eps))_{\kappa,\eps}$.

Let us now define the tilted transfer matrix $\tilde t_{\nearrow}$. Given two configurations $(\kappa,\eps)$ and $(\kappa',\eps')$, we define the value $e_{\kappa',\eps'}^\dagger\tilde t_{\nearrow}e_{\kappa,\eps}$ in the following way. As in Figure \ref{fig:diagonal_transfer_matrix}, place both diagonal configurations next to one another. If the values do not align then give weight $0$. Otherwise, sum over all possible heights in the diagonal in between (which are not determined at indices where $\eps_{i,i+1}=\eps'_{i,i+1}=0$, see the question marks in Figure \ref{fig:diagonal_transfer_matrix}), and for each, compute the six-vertex weights at all vertices marked by $\bullet$ in the figure. For each $j$, one can interpret $\tilde t_\nearrow$ as a linear map from $\Omega_j$ to $\Omega_{j+1}$. Finally, one defines the transfer matrix $t_\nearrow:\Omega_j\to\Omega_{j+1}$ by $t_\nearrow := W^{1/2}\tilde t_{\nearrow}W^{1/2}$, where the two $W$ act on $\Omega_{j+1}$ and $\Omega_j$ respectively.

\begin{figure}
    \centering
    \includegraphics[width=0.49\textwidth]{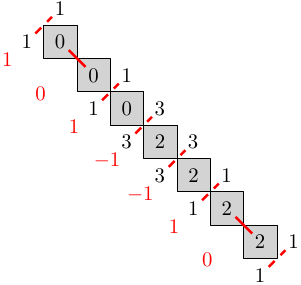}
    \includegraphics[width=0.49\textwidth]{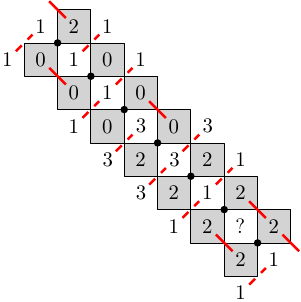}
    \caption{Left: A diagonal configuration $(\kappa,\eps)$. The data of $\kappa$ is given in the grey squares, the data of $\eps$ is given by the numbers in red, at the bottom left of each vertex between two grey squares. The data of $\eps$ is also represented by the configuration of $\omega$ (in red) between the grey squares, and corresponds to the vertices where $\eps=0$. The additional heights on the figure are those given by non-zero values of $\eps$ (recall that $-1=3$ since we are modulo $4$). This configuration has weight $(\mathbf c-1)^2$. Right: When putting two diagonal configurations next to one another, they interact on the vertices marked in black, by summing over all possible configurations of the ``$?$'' squares. The matrix element associated to these two configurations is $e_{\kappa',\eps'}^\dagger\tilde t_\nearrow e_{\kappa,\eps}=\mathbf c+\mathbf c^3$.}
\label{fig:diagonal_transfer_matrix}
\end{figure}

It is fairly easy to see that by definition, the matrix $\tilde t_\nearrow$ is symmetric, and therefore, the transfer matrix $t_\nearrow$ is also symmetric.

Let us explain why $t_\nearrow$ can be considered to be a transfer matrix. Given two indices $j\le j'$ and two configurations $(\kappa,\eps)$ on $E_j$ and $(\kappa',\eps')$ on $E_{j'}$, we may write
\begin{equation*}
    e_{\kappa',\eps'}^\dagger t_{\nearrow}^{j'-j}e_{\kappa,\eps}=w(\kappa',\eps')^{1/2}w(\kappa,\eps)^{-1/2}\cdot e_{\kappa',\eps'}^\dagger (\tilde t_{\nearrow}W)^{j'-j}e_{\kappa,\eps}.
\end{equation*}
This quantity expands into a sum over all configurations of heights between $E_j$ and $E_{j'}$ of a certain weight. One may check that every vertex gets a weight, either from the matrix $\tilde t_\nearrow$ or from $W$. In the second case, the weight is $1$ if the vertex corresponds to an edge of $(\omega^\even)^\star$, and $\mathbf c-1$ if it corresponds to an edge of $\omega^\even$. Therefore, modulo the extra boundary weights $w(\kappa',\eps')^{1/2}w(\kappa,\eps)^{1/2}$, one recovers exactly the partition function of the six-vertex model conditioned on having configurations $(\kappa,\eps)$ on $E_j$ and $(\kappa',\eps')$ on $E_{j'}$.

As in the horizontal case, define $E_{jj'}$ to be the set of faces between diagonals $E_j$ and $E_{j'}$ (note that this includes all odd faces as well, so it is not simply the union of the $E_k$ for $j\le k\le j'$). By the Markov property, given configurations $(\kappa,\eps)$ on $E_j$ and $(\kappa',\eps')$ on $E_{j'}$, there is a natural measure $\bbE_{(\kappa,\eps)(\kappa',\eps')}^{jj'}[\cdot]$ defined as the six-vertex measure conditioned on having configurations $(\kappa,\eps)$ and $(\kappa',\eps')$ on $E_j$ and $E_{j'}$ respectively. If $\mathbf X$ is a random variable measurable in terms of the heights modulo $4$ inside $E_{jj'}$, define the linear operator $\mathfrak o_{\mathbf X}^{jj'}:\Omega_j\to\Omega_{j'}$ by setting, for each configurations $(\kappa,\eps),(\kappa',\eps')$,
\begin{equation*}
    e_{\kappa',\eps'}^\dagger\mathfrak o_{\mathbf X}^{jj'}e_{\kappa,\eps}:=\bbE_{(\kappa,\eps)(\kappa',\eps')}^{jj'}[\mathbf X](e_{\kappa',\eps'}^\dagger t_{\nearrow}^{j'-j}e_{\kappa,\eps}),
\end{equation*}
which once again makes sense since the conditioning is non-degenerate precisely when $e_{\kappa',\eps'}^\dagger t_{\nearrow}^{j'-j}e_{\kappa,\eps}$ is non-zero.

\subsection{Spectral theorem and Perron-Frobenius eigenvector}

From now on, take $\star\in\{\rightarrow,\nearrow\}$. The proof works in the same way for the horizontal and diagonal transfer matrices. Since the transfer matrix $t_\star$ is symmetric, it has real eigenvalues. Furthermore, its entries are nonnegative, so it has an eigenvalue $\lambda_0>0$ such that all other eigenvalues $\lambda$ satisfy $|\lambda|\le \lambda_0$. Ideally, we would want $\lambda_0$ to be a simple eigenvalue, and for other eigenvalues $\lambda$ to satisfy the strict inequality $|\lambda|<\lambda_0$. In order to achieve this, a simple way would be for $t_\star$ to be irreducible, that is to say that for $k$ large enough, all entries of $t_\star^k$ are positive. However, this is not the case, since the transfer matrix elements $e_{\kappa'}^\dagger t_\star e_\kappa$ are zero unless $\kappa$ and $\kappa'$ have the same winding around the cylinder, i.e. if one goes around the cylinder and integrates the height function modulo $4$ into a real integer-valued height function, the height difference obtained after a full turn around the cylinder is the same in $\kappa$ and $\kappa'$. Let us note that in the diagonal case, one cannot integrate the height function off of $\kappa$ alone, but it is possible with the additional data of $\eps$.

For this reason, we restrict ourselves to the \emph{balanced} block, consisting of all configurations $\kappa$ which integrate to real height functions around the cylinder, denote by $\Omega^b$ the space generated by balanced configurations. The transfer matrix $t_\star$ acts on $\Omega^b$ and it is now irreducible, so it has a largest eigenvalue $\lambda_0$, of Perron-Frobenius eigenvector $v_0$. Denote $\lambda_0>|\lambda_1|\ge|\lambda_2|\ge\ldots$ the eigenvalues of $t_\star$, with corresponding orthogonal eigenvectors $v_0,v_1,\ldots$.

In probabilistic terms, the restriction to the balanced block corresponds to conditioning on the event $\texttt{Bal}$ that the configuration is balanced on the cylinder. The following result allows us to pass to the full-plane. The horizontal case follows from~\cite[Lemma~18.7]{DumKozLamMan26}, and the diagonal case is analogous.

\begin{theorem}
    For each $\star\in\{\rightarrow,\nearrow\}$, each even $L>0$, let $\text{Cyl}_L$ be the corresponding cylinder. Then, as $L$ tends to infinity, the measures $\mu_{\text{Cyl}_L}[\cdot|\mathtt{Bal}]$ converge, for the local convergence of measures, to $\mu_{\bbZ^2}$.
\end{theorem}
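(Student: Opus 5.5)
The strategy is to realise the cylinder measure conditioned on $\mathtt{Bal}$ as an (approximate) Gibbs measure for the full-plane six-vertex model with $h_4$-boundary data. We then use the uniqueness of the flip-symmetric Gibbs measure, together with the mixing and RSW tools of Sections~\ref{sec:review_percolation_representation} and~\ref{sec:mix_reg}, to identify every subsequential local limit with $\mu_{\bbZ^2}$. We treat $\star=\rightarrow$ as in \cite[Lemma~18.7]{DumKozLamMan26} and then indicate the modifications for $\star=\nearrow$.

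\emph{Step 1: tightness and Gibbs property.} The space of $h_4$-configurations on any finite set of faces is finite, so the family $\mu_{\text{Cyl}_L}[\cdot\,|\,\mathtt{Bal}]$ is tight for the local topology. Subsequential limits therefore exist by a diagonal argument.

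The Markov property of column (resp.\ diagonal) configurations shows that, conditionally on the configuration outside a fixed box $\Lambda$, the law inside is the six-vertex measure on $\Lambda$ with the induced $h_4$-boundary data. This holds as soon as $L$ is larger than the diameter of $\Lambda$. The event $\mathtt{Bal}$ is a winding condition around the cylinder and is unaffected by modifications inside $\Lambda$ with fixed boundary data. Hence any subsequential limit $\nu$ satisfies the DLR equations.

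The limit $\nu$ is also translation invariant. In the direction along the cylinder this is exact for each $L$. In the transverse direction it follows because the conditioned measure is stationary under shifts by $v_\star$ (the cylinder is infinite in that direction). Finally, $\nu$ is invariant under the global flip $h\mapsto -h$, since both the weights and $\mathtt{Bal}$ are.

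\emph{Step 2: identification.} By the uniqueness of the flip-symmetric translation-invariant Gibbs measure (\cite[Part~D]{DumKozLamMan26}), it remains to exclude that $\nu$ is not delocalised in the appropriate sense, or equivalently that it is a mixture of "tilted" measures. This is where conditioning on $\mathtt{Bal}$ enters. Balance forces zero slope around the cylinder, so the limiting gradient field has zero tilt.

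To turn this into a statement about $\nu$, I would use the percolation representation. On balanced cylinders, Proposition~\ref{prop:FKG_omega+} and Proposition~\ref{prop:RSW+} (whose proofs are local) give circuits of $\omega^+$ around boxes with probability bounded below, uniformly in $L$. By Lemma~\ref{lem:measures}, such a circuit lets us couple the law inside it with $\mu^+_{\calD}$. Comparing with the same construction for $\mu_{\bbZ^2}$, through the monotonicity in Proposition~\ref{prop:FKG_omega+}, shows that $\nu$ and $\mu_{\bbZ^2}$ agree on local events up to an error that vanishes as the number of nested circuits grows. The argument is the same as the coupling in the proof of Lemma~\ref{lemma:weak_mixing}.

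\emph{Step 3: the diagonal case and the main obstacle.} For $\star=\nearrow$, the extra data $\eps$ and the weights $W^{1/2}$ only modify boundary terms. The identity expressing $t_\nearrow^{j'-j}$ as a partition function shows that, after conditioning on $(\kappa,\eps)$ on two diagonals, one recovers exactly the six-vertex law, with $\omega^+$ coupled as in Section~\ref{sec:review_percolation_representation}. So Steps 1–2 carry over.

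The main obstacle is Step 2 in the diagonal geometry. The RSW input must be shown uniform for the diagonally wrapped cylinder conditioned on $\mathtt{Bal}$, where FKG for $\omega^+$ has to be checked with the $\eps$-decorated boundary. I would first try to deduce it from the monotonicity requirement $\mathbf c\ge1$, using that the $(\mathbf c-1)$-weights are exactly the percolation edges, so that FKG holds by the lattice condition as in \cite[Section~14]{DumKozLamMan26}.
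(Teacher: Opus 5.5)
Your top-level plan agrees with the paper, which proves this statement only by citing \cite[Lemma~18.7]{DumKozLamMan26} for $\star=\rightarrow$ and calling the diagonal case analogous. The self-contained argument you give in its place, however, fails at its load-bearing step.

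That step is a lower bound, uniform in $L$, on the probability under $\mu_{\text{Cyl}_L}[\cdot\,|\,\mathtt{Bal}]$ of $\omega^+$-circuits around a box, and, by iteration, of many nested ones. You justify it by saying the proofs of Propositions~\ref{prop:FKG_omega+} and~\ref{prop:RSW+} are local, but they are not. Proposition~\ref{prop:RSW+} concerns planar annuli with $-$ boundary conditions. Transporting it to the cylinder requires that, conditionally on the configuration outside an annulus, the law of $\omega^+$ inside dominates $\mu^-$. That is the FKG lattice condition and monotonicity in boundary conditions for the cylinder measure conditioned on the global event $\mathtt{Bal}$, which is a global property of the measure. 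On a cylinder, once compatible heights are summed out, the weight of a percolation configuration depends on whether clusters wrap around and on the consistency of heights around the circumference, which is exactly what $\mathtt{Bal}$ encodes. Such topological corrections to the planar duality relations can destroy the lattice condition, so it must be checked, not inherited.

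Nor can you work directly under the planar Gibbs limit $\nu$ with the paper's tools for height boundary conditions (Lemmas~\ref{lemma:double_circ} and~\ref{lemma:arm_event_bound}). These are either stated for $\mu_{\bbZ^2}$ itself or give bounds in terms of $\osc(\zeta)$, which you do not control for an arbitrary subsequential limit.

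Your heuristic that balance forces zero tilt does not fill the hole either. $\mathtt{Bal}$ only kills the total winding around each circumference, which is compatible with bands of opposite tilt, so it does not stop $\nu$ from being a flip-symmetric mixture of tilted Gibbs measures. Such a mixture is itself translation invariant and flip symmetric, so any uniqueness theorem you invoke needs a zero-slope hypothesis you have not verified for $\nu$. Uniform RSW on balanced cylinders is what would exclude this, and it is precisely the non-trivial input the paper imports by citation.

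Two further points. First, the comparison ``as in Lemma~\ref{lemma:weak_mixing}'' only gives comparability up to multiplicative constants, not equality of limits. To identify $\nu$ with $\mu_{\bbZ^2}$ you need, inside the outermost circuit $\gamma$, the monotone sandwich $\mu^+_{\Lambda_m}\succeq\mu^+_{\mathrm{int}\,\gamma}\succeq\mu^+_{\Lambda_n}$ (restricted to $\Lambda_m$) together with the convergence of $\mu^+_{\Lambda_n}$. This must then be upgraded, via a second circuit and Lemma~\ref{lem:measures}, to total-variation closeness of $h_4$ on a fixed box. That is standard, but it is a different argument from the one you cite.

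Second, in Step~3 you misplace the difficulty. The decoration $(\kappa,\eps)$ and the factors $W^{1/2}$ belong to the transfer-matrix encoding, not to the measure. For $\star=\nearrow$ the measure is just the six-vertex measure on $\bbZ^2/\langle(L,-L)\rangle$ in the balanced sector. The obstacle there is the same monotonicity and RSW question on a balanced cylinder as in the horizontal case, which your argument leaves open in both cases.
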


\subsection{Proof of Proposition~\ref{lemma:spectral_representation}}

We place ourselves in the corresponding setup, and set $\star\in\{\rightarrow,\nearrow\}$ to be the direction we consider. Assume that $\mathbf X\in\frakA_{\star}^-$ and $\mathbf Y\in\frakA_\star^+$. Since $\mathbf X$ and $\mathbf Y$ are local, there is some $i$ such that $\mathbf X$ depends only on the height modulo $4$ inside $E_{(-i)0}$ and $\mathbf Y$ depends only on the height modulo $4$ inside $E_{0i}$. Now, take some even $L$ large enough so that $\mathbf X$ and $\mathbf Y$ make sense as observables on $\text{Cyl}_L$, and consider the six-vertex model on $\text{Cyl}_L$.

In order to express $\bbE_{\text{Cyl}_L}[\mathbf X\cdot \tau_{nv_\star}\mathbf Y]$, one would need to first consider the model on a long torus of length $M$, and take $M$ to infinity. As in \cite{DumKozLamMan26}, the result is that at the limit, only the Perron-Frobenius vector $v_0$ survives at infinity. More precisely, this gives the following equation
\begin{equation*}
    \bbE_{\text{Cyl}_L}[\mathbf X\cdot \tau_{nv_\star}\mathbf Y]=\frac{v_0^\dagger\mathfrak o_{\mathbf Y}^{(2n)(2n+i)}t_\star^{2n}\mathfrak o_{\mathbf X}^{(-i)0} v_0}{v_0^\dagger t_\star^{2n+2i+2}v_0}.
\end{equation*}
Expanding $t_\star^{2n}=\sum_k \lambda_k^{2n}v_kv_k^\dagger$, we get (up to identifying the $v_k$ in $\Omega_0$ and $\Omega_{2n}$)
\begin{align*}
    \bbE_{\text{Cyl}_L}[\mathbf X\cdot \tau_{nv_\star}\mathbf Y]&=\lambda_0^{-2n-2i-2}\sum_k \lambda_k^{2n}v_0^\dagger\mathfrak o_{\mathbf Y}^{0i}v_kv_k^\dagger\mathfrak o_{\mathbf X}^{(-i)0} v_0\\
    &=\int \lambda^n d\mu_{\mathbf X,\mathbf Y,L,\star}(\lambda)
\end{align*}
where
\begin{equation*}
    \mu_{\mathbf X,\mathbf Y,L,\star}=\lambda_0^{-2i-2}\sum_kv_0^\dagger\mathfrak o_{\mathbf Y}^{0i}v_kv_k^\dagger\mathfrak o_{\mathbf X}^{(-i)0}v_0\cdot \delta_{\lambda_k^2/\lambda_0^2}
\end{equation*}
is a (signed) measure on $[0,1]$. In the case where $\mathbf Y=\mathbf X^\dagger$, the measure becomes
\begin{equation*}
    \mu_{\mathbf X,\mathbf X^\dagger,L,\star}=\lambda_0^{-2i-2}\sum_k(v_k^\dagger\mathfrak o_{\mathbf X}^{(-i)0}v_0)^2\cdot \delta_{\lambda_k^2/\lambda_0^2}
\end{equation*}
so it is a positive measure, of weight
\begin{align*}
    ||\mu_{\mathbf X,\mathbf X^\dagger,L,\star}||_{\text{TV}}&=\lambda_0^{-2i-2}\sum_k(v_k^\dagger\mathfrak o_{\mathbf X}^{(-i)0}v_0)^2\\
    &=\int d\mu_{\mathbf X,\mathbf X^\dagger,L,\star}=\bbE_{\text{Cyl}_L}[\mathbf X\mathbf X^\dagger].
\end{align*}
By Cauchy-Schwarz, one may write
\begin{align*}
    ||\mu_{\mathbf X,\mathbf Y,L,\star}||_{\text{TV}}^2&\le\Big(\lambda_0^{-2i-2}\sum_k\big|v_0^\dagger\mathfrak o_{\mathbf Y}^{0i}v_kv_k^\dagger\mathfrak o_{\mathbf X}^{(-i)0}v_0\big|\Big)^2\\
    &\le\Big(\lambda_0^{-2i-2}\sum_k\big(v_0^\dagger\mathfrak o_{\mathbf Y}^{0i}v_k\big)^2\Big)\Big(\lambda_0^{-2i-2}\sum_k\big(v_k^\dagger\mathfrak o_{\mathbf X}^{(-i)0}v_0\big)^2\Big)\\
    &=\bbE_{\text{Cyl}_L}[\mathbf X\mathbf X^\dagger]\bbE_{\text{Cyl}_L}[\mathbf Y\mathbf Y^\dagger].
\end{align*}

As $L$ tends to infinity, the values $\bbE_{\text{Cyl}_L}[\mathbf X\mathbf X^\dagger]$ and $\bbE_{\text{Cyl}_L}[\mathbf Y\mathbf Y^\dagger]$ converge to their full-plane counterparts. Therefore, the measures $\mu_{\mathbf X,\mathbf Y,L,\star}$ have uniformly bounded norm, and therefore one can extract convergent subsequences. \eav{Will need to check some topology stuff here but I think it should be alright.} Any subsequential limit $\mu_{\mathbf X,\mathbf Y,\star}$ satisfies, for all $n\ge 0$,
\begin{equation*}
    \int \lambda^n d\mu_{\mathbf X,\mathbf Y,\star}(\lambda)=\lim_{L\to\infty}\bbE_{\text{Cyl}_L}[\mathbf X\cdot \tau_{nv_\star}\mathbf Y]=\bbE_{\bbZ^2}[\mathbf X\cdot \tau_{nv_\star}\mathbf Y]
\end{equation*}
and
\begin{equation*}
    ||\mu_{\mathbf X,\mathbf Y,\star}||_{\text{TV}}^2\le \lim_{L\to\infty}\bbE_{\text{Cyl}_L}[\mathbf X\mathbf X^\dagger]\bbE_{\text{Cyl}_L}[\mathbf Y\mathbf Y^\dagger]=\bbE_{\bbZ^2}[\mathbf X\mathbf X^\dagger]\bbE_{\bbZ^2}[\mathbf Y\mathbf Y^\dagger].
\end{equation*}



\section{Proof of the intermediate statement (Theorem~\ref{thm:intermediate_statement})}
\label{sec:intermediate_proof}
\label{sec:pushing}

\boxstatus{Tiancheng 03.10: this section comes from HDC section 5.}

\boxstatus{Tiancheng 30.09: this section is ready.}

\subsection{Setting of the proof and the two fundamental propositions}
\label{subsec:pushing-reduction}

\boxstatus{Tiancheng 30.09: this section is ready.}

We work with fixed $\eta > 0$ and $M \geq 1$ that will be implicit in the notations. 
 Fix a mesh size $\delta >0$, a realisation of $A_1,\dots,A_4$, and realisations $\zeta_1,\dots,\zeta_4$ of $h_4$ on $A_1,\dots,A_4$.
 For $i\in\{1,2,3,4\}$, set
$
        E_i :=\{ h_4|_{A_i} = \zeta_i \}
$
and for $z\in\mathbb R^2$, let $E_i(z)$ denote the translation of $E_i$ by $z$.  
Write $\calO_\eta$ for the set of quadruples of points 
 $\mathbf z=(z_1,\dots,z_4)\in(\mathbb R^2)^4$
such that the translates $R_1+z_1,\dots,R_4+z_4$ are separated by a distance (at least) $\eta r/2$
and do not intersect $\Lambda_{2r}(0)$. In particular $(0,\dots,0) \in \mathcal O_\eta$. 
\commentTiancheng{Check the notation $B$ versus $\Lambda$, the meaning of $\mathbf z\in\Lambda_R$, and $\mu_{\bfX,\bfY}$.}
For any $\mathbf z\in \mathcal O_\eta$, write 
\[
        \mathbf E(\mathbf z):=\bigcap_{i\in\{1,2,3,4\}} E_i(z_i).
\]

\begin{figure}[htbp]
    \centering
    \includegraphics[width=\textwidth]{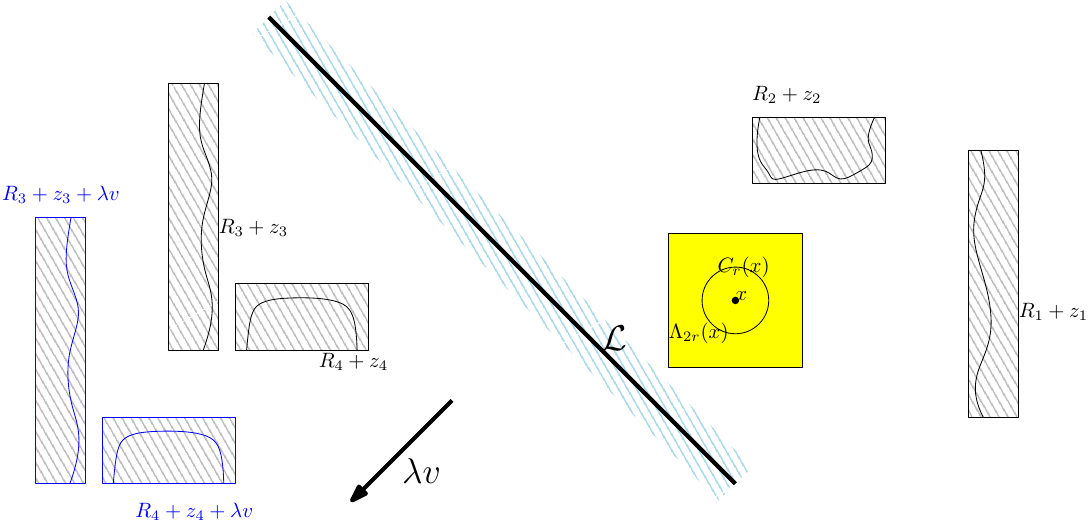}
    \caption{Here is one example of one step pushing. In this case $I=\{3,4\}$, $v=(-2,-2)$ and $\mathbf v=(0,0,v,v)$. In blue the events after pushing.}
    \label{fig:one-step-pushing}
\end{figure}
%
%
%

For $\eps,\delta_0>0$, set $\calN_\eps(\delta_0)$ to be the set of points $\mathbf z \in \calO_\eta$
such that, for any $0 < \delta < \delta_0$, 
any realisation of $(A_i)_{i\in \{1,2,3,4\}}$ and $\zeta$ with $\osc(\zeta_i) \le M$ for each $i\in\{1,2,3,4\}$, 
\begin{align}
	\Big|\mu_{\delta \bbZ^2}\left[ \Delta_r h(0)\,\middle|\, \mathbf E (\mathbf z)\right]\Big| \leq \eps.
\end{align}

Our goal is to prove that for every $\eps>0$, there exists $\delta_0$ such that $(0,\dots,0)\in \calN_\eps(\delta_0)$. The proof will rely on three results as an input. 

To describe them, the notion of separation line will be useful. 
Fix $\star\in\mathsf D$ and $\mathbf z\in\calO_\eta$, and write $v=v_\star$, with the notation of Section~\ref{subsec:transfer_matrix_formalism}. 
Below, for fix $\eta>0$, $0<r<1/2$. A line $\mathcal L$ perpendicular to $v$ is called a \emph{separation line} for $\mathbf z$ if there exists a set $I = I(\mathcal L) \subset \{1,2,3,4\}$ 
so that the support of $\Delta_r h(0)$ and $(R_i +z_i)_{i\in \{1,2,3,4\} }$ are all at a distance at least $\eta r/2$ from $\mathcal L$, 
with $(R_i +z_i)_{i\in I}$ on the positive side\footnote{The positive side of $\mathcal L$ is the half-plane whose points have larger scalar product with $v$ than those of~$\mathcal L$. The negative side is the other half-plane.} of $\mathcal L$ and the rectangles $R_i+z_i$ with $i\notin I$ and the support of $\Delta_r h(0)$ on the negative side of $\mathcal L$. 

The first proposition is essentially a rescaled version (by a factor $\delta$) of Lemma~\ref{lem:R_1}. 
\begin{proposition}[Separated conditioning]\label{prop:rp-separated-smallness}
Fix $\eta>0$, $M\ge1$, $0<r<1/2$, $R \geq 1$ and $\eps > 0$.
There exists $\delta_0  >0$ such that any $\mathbf z \in \calO_\eta\cap\Lambda_R$ \commentIoan{Not sure it's necessary that $\mathbf z$ is bounded}for which there exists a horizontal separation line $\mathcal L$ with $I(\mathcal L)=\{1,2,3,4\}$ is in $\calN_\eps(\delta_0)$. 
\end{proposition}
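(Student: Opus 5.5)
The plan is to follow the proof sketch of Lemma~\ref{lem:R_1}: combine reflection positivity across the separation line with quasi-independence and the full-plane GFF convergence. Since $I(\mathcal L)=\{1,2,3,4\}$, all translated rectangles $R_i+z_i$ lie on the positive side of the horizontal line $\mathcal L$ (at distance at least $\eta r/2$), and the support of $\Delta_r h(0)$ lies on the negative side. First I translate the whole picture vertically by a lattice vector of even parity so that $\mathcal L$ becomes the reflection line $\calR_\uparrow$ (rescaled by $\delta$); translation invariance of $\mu_{\delta\bbZ^2}$ makes this harmless. Then I set $\bfX:=\Delta_r h(0)$ and $\bfY:=\ind_{\mathbf E(\mathbf z)}$. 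The first is not bounded, so I truncate: $\bfX_K:=\Delta_r h(0)\ind_{|\Delta_r h(0)|\le K}$, where the truncation error is controlled using the $L^2$ bounds of Section~\ref{sec:L2bounds_proof} together with the lower bound on $\mu[\mathbf E(\mathbf z)]$ obtained below. Alternatively, one can use that \eqref{eq:pushing_reflection_positivity} extends by approximation to square-integrable $h$-differences.

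Next I apply \eqref{eq:pushing_reflection_positivity} with $\star=\uparrow$:
\begin{equation*}
\mu_{\delta\bbZ^2}\bigl[\Delta_rh(0)\ind_{\mathbf E(\mathbf z)}\bigr]^2\le \mu_{\delta\bbZ^2}\bigl[\Delta_rh(0)\,\Delta_rh(0)^{\dagger}\bigr]\cdot\mu_{\delta\bbZ^2}\bigl[\mathbf E(\mathbf z)\cap\mathbf E(\mathbf z)^{\dagger}\bigr].
\end{equation*}
Dividing by $\mu[\mathbf E(\mathbf z)]^2$ bounds the squared conditional expectation by the product of two factors. The first factor is $\mu[\Delta_rh(0)\Delta_rh(0^\dagger)]$, a four-point-type difference correlation between two circle-average Laplacians centred at points a macroscopic distance apart (at least $\eta r$). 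By the assumption $\Delta\in I_{\mathbf{CI}}$, it converges as $\delta\to0$ to the corresponding GFF quantity, which is $0$ because the full-plane Green function is harmonic away from the diagonal and the circle average of $\log|\cdot-w|$ equals its centre value when $w$ is outside the circle. Some uniformity is needed, since $\mathbf z$ ranges over the bounded set $\Lambda_R$ and the reflected point lies in a compact set of positions; I plan to handle this by equicontinuity (RSW/Hölder-type regularity) or a compactness argument over the finitely many relevant positions, up to $\delta$-discretisation.

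The second factor is the ratio $\mu[\mathbf E\cap\mathbf E^\dagger]/\mu[\mathbf E]^2$, which I bound uniformly using quasi-independence (Lemma~\ref{lemma:quasi_independence}). The eight sets $A_i+z_i$ and their reflections are $\eta'$-separated for some $\eta'(\eta,r)>0$, since $\mathbf z\in\calO_\eta$ and the distance to $\mathcal L$ is at least $\eta r/2$. They lie in $\Lambda_{1/\eta'}$ because $\mathbf z\in\Lambda_R$, have diameter at least $r$, and carry oscillation at most $M$. The lemma then bounds the numerator above by $c^{-1}\prod_i\mu[E_i]^2$ and $\mu[\mathbf E]$ below by $c\prod_i\mu[E_i]$, so the ratio is at most $c^{-3}$. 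Choosing $\delta_0$ so that the first factor is below $c^3\eps^2$ gives the claim. I expect the main obstacle to be the uniformity of the GFF convergence of $\mu[\Delta_rh(0)\Delta_rh(0)^\dagger]$ over the allowed positions of $\mathcal L$, together with justifying RP for the unbounded observable; if the truncation route fails, I would work with bounded functions of $h_4$-differences and pass to the limit using the exponential tails from Lemma~\ref{lemma:arm_event_bound}.
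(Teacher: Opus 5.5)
Your proposal is correct and follows the paper's proof: reflection positivity across the horizontal line with $\mathbf X=\Delta_r h(0)$ and $\mathbf Y=\mathbf 1_{\mathbf E(\mathbf z)}$, then full-plane harmonicity to get $\mu[\Delta_r h(0)\,\Delta_r h(0)^\dagger]=o_\delta(1)$, and finally quasi-independence for the eight separated sets to get $\mu[\mathbf E(\mathbf z)\cap\mathbf E(\mathbf z)^\dagger]\le C\,\mu[\mathbf E(\mathbf z)]^2$. The truncation step is superfluous, and your sketched control of its error (unconditional $L^2$ bounds divided by $\mu[\mathbf E(\mathbf z)]$, which can be exponentially small in $r/\delta$) would not close anyway; but at fixed $\delta$ the height function is $1$-Lipschitz, so $\Delta_r h(0)$ is already a bounded local observable, the reflection-positivity inequality applies to it directly, and your explicit handling of uniformity in the position of $\mathcal L$ usefully fills in what the paper leaves implicit.
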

The proof is the same as for Lemma~\ref{lem:R_1}. We briefly recall it here for completeness.

\begin{proof}
 Fix $\eta>0$, $M\ge1$, $0<r<1/2$, $R \geq 1$ and $\eps > 0$.
Consider $(A_i)_{i\in \{1,2,3,4\}}$, $\zeta$ and $\mathbf z$ as in the statement of the proposition.
%
%
%

Equation~\eqref{eq:pushing_reflection_positivity} applied to $ \bfX:=\Delta_r h(x)$ and $ \bfY:=1_{\bf E(\mathbf z)}$ gives that 
\begin{equation}\label{eq:separated_reflection_bound}
        \left|\mu_{\delta \bbZ^2}[\Delta_r h(x)1_{\bf E(\mathbf z)} ]\right|^2
        \le
        \mu_{\delta \bbZ^2}[\Delta_r h(x) \Delta_r h^\dagger(x)]\,
        \mu_{\delta \bbZ^2}[h,h^\dagger\in\mathbf  E(\mathbf z) ].
\end{equation}
Since the height function is (approximately) harmonic in the full plane \cite[Proposition~7.2]{DumKozLamMan26}, one has
\begin{equation}\label{eq:separated_harmonicity}
\mu_{\delta \bbZ^2}[\Delta_r h(x) \Delta_r h^\dagger(x)] 
= o_\delta  (1),
\end{equation}
The four rectangles and their reflections are pairwise at distance at least $\eta r/2$ since $\mathbf z\in\calO_\eta$, each $\zeta_i$ and its reflection has oscillation at most $M$.
Lemma~\ref{lemma:quasi_independence} (and reflection invariance) applied to the eight events $E_i(z_i)$, $i=1,\dots,4$ and their reflections shows that for some constant $C=C(\eta,M,R)$, 
\begin{align}
    \mu_{\delta\bbZ^2}[h,h^\dagger\in\mathbf  E(\mathbf z)]
    &\le C\prod_{i=1}^4\mu_{\delta\bbZ^2}[E_i(z_i)]^2,\label{eq:separated_eight_events}\\
 \mu_{\delta\bbZ^2}[\mathbf  E(\mathbf z)]
    &\ge C^{-1}\prod_{i=1}^4\mu_{\delta\bbZ^2}[E_i(z_i)],\label{eq:separated_four_events}
\end{align}uniformly in $\zeta$ and $\mathbf z$ and all sufficiently small $\delta$ in the assumptions.
Together, this gives that
\begin{equation}\label{eq:separated_event_bound}
        \mu_{\delta \bbZ^2}[h,h^\dagger\in\mathbf  E(\mathbf z) ]
        \le C\,\mu_{\delta \bbZ^2}[\mathbf  E(\mathbf z) ]^2.
\end{equation}
When plugged in \eqref{eq:separated_reflection_bound}, and after dividing by  $\mu_{\delta\bbZ^2}[\mathbf  E(\mathbf z)]$, this concludes the proof.
\end{proof}

In order to bring back ourselves into the situation of the previous proposition, we use a pushing argument, in the spirit of \cite[Section 7.1]{DumKozLamMan26}.  
In \cite{DumKozLamMan26}, pushing was done for observables of the scaling limit of the field, which were shown to be analytic in the ``pushing variable''.
In this article, statements are given in the discrete. 

\begin{proposition}[Pushing perpendicularly to separation lines]\label{prop:one-step-pushing}
Fix $\eta>0$, $M\ge1$, $0<r<1/2$, $0 < a< b$ and $\eps>0$. 
Then, there exists $\delta_0$ and $\eps'$ such that the following holds true for every $\mathbf z\in\calO_\eta$ with a separation line $\calL$ perpendicular to $v$. If for all $\lambda \in (a,b)$,
$\mathbf z + \lambda \mathbf v \in  \calN_{\eps'}(\delta_0)$, where $\mathbf v=(v_1,\dots,v_4)$ with $v_i=v$ if $i\in I(\calL)$ and $v_i=0$ otherwise,  then $\mathbf z  \in  \calN_{\eps}(\delta_0)$.
\end{proposition}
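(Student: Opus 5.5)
We use the spectral representation of Proposition~\ref{lemma:spectral_representation} in direction $\star$, together with an analyticity/rigidity argument for functions of the form $n\mapsto\int\lambda^n\,\mathrm d\mu(\lambda)$, and quasi-independence (Lemma~\ref{lemma:quasi_independence}) to normalise. Set $\bfX:=\Delta_r h(0)\prod_{i\notin I}\ind_{E_i(z_i)}$, supported on the negative side of $\calL$, and $\bfY:=\prod_{i\in I}\ind_{E_i(z_i)}$, supported on the positive side. After translating so that $\calL$ becomes $\calR_\star$ (possible up to an $O(\delta)$ shift, absorbed by changing the sets $A_i$ slightly), Proposition~\ref{lemma:spectral_representation} gives, for every $n\geq0$,
\begin{equation*}
g_\delta(n):=\frac{\mu_{\delta\bbZ^2}[\bfX\,\tau_{nv}\bfY]}{\prod_{i=1}^4\mu_{\delta\bbZ^2}[E_i]}=\int_{[0,1]}\lambda^n\,\mathrm d\nu_\delta(\lambda),\qquad \nu_\delta:=\frac{\mu_{\bfX,\bfY,\star}}{\prod_i\mu_{\delta\bbZ^2}[E_i]}.
\end{equation*}
The observable $\bfX$ is not bounded. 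We handle this either by truncating $\Delta_rh$ at level $K$, with the error controlled by Lemma~\ref{lemma:L2bounds} and Cauchy--Schwarz, or directly, since the spectral argument only needs finite second moments. With this in hand, the total variation bound \eqref{eq:pushing_reflection_positivity} gives $\|\nu_\delta\|_{\rm TV}^2\le \mu[\bfX\bfX^\dagger]\mu[\bfY^\dagger\bfY]/\prod_i\mu[E_i]^2$. Applying Lemma~\ref{lemma:quasi_independence} to the reflected events (all separated by $\eta r/2$), and bounding the Laplacian factor by the $L^2$ bound with a Cauchy--Schwarz/quasi-independence step as in Proposition~\ref{prop:rp-separated-smallness}, we get $\|\nu_\delta\|_{\rm TV}\le C(\eta,M,r)$, uniformly in $\delta$, the $A_i$ and the $\zeta_i$.

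Next we pass to the macroscopic pushing variable $\lambda\in\R_+$, taking $n=\lfloor \lambda/(2\delta)\rfloor$ and writing $\lambda=e^{-s\cdot 2\delta}$ to change variables, so that $G_\delta(t):=g_\delta(\lfloor t/2\delta\rfloor)=\int_{[0,\infty]}e^{-st}\,\mathrm d\tilde\nu_\delta(s)+o(1)$ is a Laplace transform of a measure of mass at most $C$. Such functions form a normal family on $\{\Re t>0\}$: each is bounded by $C$ on the right half-plane and holomorphic there. Now suppose the conclusion fails. Then there are $\delta_k\to0$ and configurations with $|G_{\delta_k}(0)|/\text{normalisation}>\eps$. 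By hypothesis, $|G_{\delta_k}(t)|\le \eps' C'$ for $t\in(a,b)$, where quasi-independence converts the conditional expectation in $\calN_{\eps'}$ into the normalised $g$. Extracting a subsequence, $\tilde\nu_{\delta_k}\to\tilde\nu$ vaguely on $[0,\infty]$, so $G_{\delta_k}\to G$ locally uniformly on $\Re t>0$ and pointwise at $t=0$. The point $t=0$ requires care: the mass of $\tilde\nu$ at $s=\infty$ contributes to $G(0)$ but to no other $G(t)$.

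To conclude we need a quantitative unique-continuation statement: a Laplace transform of a measure of mass at most $C$ that is at most $\eta_0$ on $(a,b)$ is small on $[0,a]$. This fails near $0$ because of the mass at large $s$, that is, eigenvalues $\lambda\ll1$, which correspond to local correlations at the scale $\delta$. Controlling this mass is the main obstacle. We handle it as follows. Mass at $s\ge S$ contributes at most $\int_{[S,\infty]}|\mathrm d\tilde\nu|$ to $G(0)$. We bound this via the positive measures $\nu_{\bfX,\bfX^\dagger}$ and $\nu_{\bfY^\dagger,\bfY}$: by Cauchy--Schwarz on the spectral sum, it is at most the geometric mean of $\nu_{\bfX,\bfX^\dagger}([S,\infty])$ and $\nu_{\bfY^\dagger,\bfY}([S,\infty])$, which equals the geometric mean of the differences $\mu[\bfX\bfX^\dagger]-\mu[\bfX\tau_{tv}\bfX^\dagger]$ at small $t$, and similarly for $\bfY$. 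For $\bfY$, an indicator of events at macroscopic distance $\eta r/2$ from $\calL$, this difference is small for small macroscopic $t$ by mixing and RSW (the swap Lemmas~\ref{lem:geometric_mixing_annuli} and~\ref{lem:geometric_mixing_quadrilaterals}). It is also here that we expect the choice of $\delta_0$ to enter. Once this tail is controlled, the remaining part is a Laplace transform of a measure on $[0,S]$, an entire function of exponential type $S$ with mass at most $C$. A Remez/Hadamard three-circles type estimate then shows that smallness $\eps'$ on $(a,b)$ implies smallness at $0$, with $\eps'=\eps'(\eps,C,S,a,b)$. Finally, undoing the normalisation with Lemma~\ref{lemma:quasi_independence} gives $\mathbf z\in\calN_\eps(\delta_0)$.
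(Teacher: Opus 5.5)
There is a genuine gap, and it sits at the step you yourself single out as the main obstacle. The rest of your architecture is the paper's: the reflection-positivity spectral representation, a $\delta$-uniform total-variation bound on the normalised spectral measure via quasi-independence and the $L^2$ bound, and then normal families plus unique continuation.

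Because you place the reflection line exactly at $\calL$, the configuration of interest corresponds to zero push. That is the boundary point $t=0$ of the half-plane on which your Laplace transforms are holomorphic. So you need the mass of $\tilde\nu_\delta$ near $s=\infty$ to be small, uniformly in $\delta$, the $A_i$ and the $\zeta_i$. Your justification is that $\mu_{\delta\bbZ^2}[\bfY^\dagger\bfY]-\mu_{\delta\bbZ^2}[\bfY^\dagger\tau_{tv}\bfY]=o(1)\prod_{i\in I}\mu_{\delta\bbZ^2}[E_i]^2$ as $t\to0$, ``by mixing and RSW''. Nothing available gives this. Lemma~\ref{lemma:quasi_independence} and the swapping Lemmas~\ref{lem:geometric_mixing_annuli} and~\ref{lem:geometric_mixing_quadrilaterals} only give multiplicative-constant comparisons, or cross-ratios of swapped configurations. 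None of them controls the relative change of such a probability when a conditioning event of exponentially small probability is translated by a small macroscopic amount. That $1+o(1)$ continuity statement is just a reformulation of the tail bound you are trying to prove. Without it, the reduction to a measure on $[0,S]$, and with it the Remez/three-circles step, is not justified.

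The missing idea is to use the margin $\eta r/2$ built into the definition of a separation line. Write $\bfY=\tau_{n_0v}\bfY'$ with $n_0\asymp\eta r/\delta$ and $\bfY'$ still supported on the positive side. The paper does this by taking $\bfY:=\mathbf 1_{\mathbf E_+(\mathbf z-s_n\mathbf v)}$ with $s_n=\delta_n\lfloor\eta r/(8\delta_n)\rfloor$ (Lemma~\ref{lem:transfer_analyticity}), and applies the spectral representation to the pair $(\bfX,\bfY')$. The configuration you care about then sits at a strictly positive macroscopic push. The functions $\varphi_n(w)=\int(1-\delta_n\alpha)^{(w+s_n)/\delta_n}\,d\widetilde\nu_n(\alpha)$ are holomorphic and uniformly bounded on $\{\Re w>-\eta r/16\}$, which contains $0$ in its interior. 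Montel's theorem and the identity theorem then give the contradiction directly, with no tail control and no Remez-type inequality.

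Equivalently, in your own setup, translation invariance gives $\nu_{\bfY^\dagger,\bfY}(d\lambda)=\lambda^{2n_0}\nu_{\bfY'^\dagger,\bfY'}(d\lambda)$. Combined with quasi-independence for $\bfY'$, this yields the tail bound you want, decaying exponentially in $S\eta r$. That is a positivity-plus-translation argument, not an RSW one.

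A secondary point concerns the total-variation bound. $\Delta_rh(0)$ itself has second moment of order $\log(1/\delta)$, so ``finite second moments'' (or truncation) does not give a $\delta$-uniform bound on $\|\nu_\delta\|_{\mathrm{TV}}$. You must first replace $\Delta_rh(0)$ by $\widetilde\Delta_rh(0)$, which is allowed since the conditioning events live outside $\Lambda_{2r}(0)$. Only then apply Cauchy--Schwarz and Lemma~\ref{lemma:L2bounds}, as the paper does.
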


We postpone the proof to Section~\ref{sec:push2} and jump directly to the proof of Theorem~\ref{thm:intermediate_statement}.
\subsection{Proof of Theorem~\ref{thm:intermediate_statement}}

Let us start by the following lemma.

\begin{lemma}\label{lem:finite-push-geometry}
	Fix $\eta > 0$. There exists $\eta' > 0$ and a sequence of lines $\mathcal L^{(1)}, \dots, \mathcal L^{(k)}$
	orthogonal to horizontal, vertical or diagonal directions $v^{(1)},\dots, v^{(k)}\in\{v_\star:\star\in\mathsf D\}$
	and values $\lambda^{(1)},\dots, \lambda^{(k)} > \eta'$ such that the following holds. 
	For any $\lambda_1,\dots, \lambda_k$ with $|\lambda_j - \lambda^{(j)}| < \eta'$, 
	if we defined recursively
	\begin{align}\label{eq:zj_def}
	\mathbf z^{(0)} = (0,\dots,0) \quad \text{ and }\quad 
	\mathbf z^{(j)} = \mathbf z^{(j-1)} + \lambda_j \mathbf v^{(j)}, 
	\end{align}
	where $\mathbf v^{(j)}_i$ is equal to $v^{(j)}$ when $i\in I(\calL^{(j)})$, and zero otherwise, then
	\begin{itemize}
	\item  
	$\mathbf z^{(k)}\in \calO_\eta\cap\Lambda_R$ and there exists a separation line $\mathcal L$ with $I(\mathcal L)=\{1,2,3,4\}$,
	\item $\mathcal L^{(j)}$ is a separation line for $\mathbf z^{(j-1)}$ for every $1\le j\le k$.
	\end{itemize}
\end{lemma}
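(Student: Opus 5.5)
The plan is to give an explicit, finite sequence of pushes. Everything is scaled by $r$ and only finitely many configurations occur, so the required strict inequalities persist under perturbations of size $\eta'$ once $\eta'$ is small compared to $\eta r$. We start from $\mathbf z^{(0)}=0$, where $R_1$ lies to the right of $\Lambda_{2r}$, $R_3$ to the left, $R_2$ above and $R_4$ below. The support of $\Delta_r h(0)$ lies in $\Lambda_{r+\delta}\subset\Lambda_{2r}$. The target is a configuration in which all four translated rectangles lie strictly below a horizontal line, while $\Lambda_{2r}$ lies above it.

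The sequence of moves is as follows.
\begin{itemize}
\item[(1)] Push $R_4$ downward with $v_\downarrow$. The horizontal line $y=-2r-\eta/2$ separates $R_4$ (with $I=\{4\}$) from everything else, since the other rectangles reach down to $y=-4r$. We push it far, say by a total displacement $\approx 10r$.
\item[(2)] Push $R_1$ right with $v_\rightarrow$, using the vertical line $x=2r+\eta/2$, and push $R_3$ left with $v_\leftarrow$, using $x=-2r-\eta/2$. Each is pushed by about $10r$. Now $R_1,R_3$ have $x$-coordinates beyond $\pm 12r$ and $R_4$ sits below $y\approx-14r$.
\item[(3)] Push $R_1$ diagonally with $v_\searrow$. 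A line $x-y=c$ separates it: take $c$ just above $\max_{\Lambda_{2r}\cup R_2\cup R_3\cup R_4}(x-y)$. The only doubtful one is $R_4$, whose $x-y$ is at least about $12r$. We therefore first push $R_1$ far enough right that $\min_{R_1}(x-y)\ge 12r-4r$ exceeds $\max_{R_4}(x-y)\approx 2r+14r$. Concretely, in step (2) we push $R_1$ by $40r$ instead of $10r$. The diagonal push then moves $R_1$ down until its top is below $y=-2r-\eta$.
\item[(4)] Do the same for $R_3$ with $v_\swarrow$, after having pushed it left by $40r$ in step (2).
\item[(5)] Handle $R_2$, which sits above the centre. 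First push $R_2$ upward with $v_\uparrow$ by a large amount, separating it by the line $y=2r+\eta/2$. Then push it right with $v_\rightarrow$ via a vertical line, which is possible once it is above every other set; it then sits far up and to the right, say at $x\approx 100r$. Next push it down with $v_\downarrow$, separated by a vertical line? No: a downward push requires a horizontal separation line with $R_2$ on the negative side, and this is obstructed by $\Lambda_{2r}$. Instead we use a diagonal $v_\searrow$ push: the line $x-y=c$ separates $R_2$ once $\min_{R_2}(x-y)$ exceeds the maximum over all other sets. This is arranged by pushing $R_2$ far enough right, while keeping it away from $R_1$, which now lies lower. We push along $\searrow$ until $R_2$ is below $y=-2r-\eta$.
\end{itemize}
At the end all four rectangles lie below the line $y=-2r-\eta/2$ and $\Lambda_{2r}$ lies above it, which gives a separation line with $I=\{1,2,3,4\}$ in the $\uparrow/\downarrow$ orientation. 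Along the way, order of pushes and magnitudes are chosen so that pairwise distances stay at least $\eta r/2$ and none of the rectangles meets $\Lambda_{2r}$; this gives $\mathbf z^{(j)}\in\calO_\eta$ and, at the end, membership in $\Lambda_R$ for $R$ of order $100r$ plus constants.

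The verification of each bullet reduces to linear inequalities between extremal values of linear functionals ($x$, $y$, $x\pm y$) on finitely many rectangles. Each inequality holds with a margin proportional to $\eta r$, so perturbing each $\lambda_j$ by less than $\eta'$, with $\eta'$ small (say $\eta r/(100k)$), preserves all of them. The main obstacle is bookkeeping: ensuring that each intermediate line $\mathcal L^{(j)}$ is a genuine separation line at distance $\eta r/2$ from all sets simultaneously, notably in the diagonal pushes. I would handle this by choosing the horizontal and vertical displacements with wide margins (factors of ten), so that diagonal separation is immediate.
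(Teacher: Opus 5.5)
Your overall strategy matches the paper's: an explicit finite chain of pushes, each justified by a separation line perpendicular to the push direction, followed by a margin argument for $\eta'$. However, the route you give for $R_2$ contains a step that fails.

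\textbf{The gap in step (5).} After the $v_\uparrow$ push, you push $R_2+z_2$ with $v_\rightarrow$, saying this is possible ``once it is above every other set''. A $v_\rightarrow$ push needs a \emph{vertical} separation line, with $R_2+z_2$ strictly to its right and the support of $\Delta_r h(0)$, $R_4$ and the far-right translate of $R_1$ strictly to its left. No such line exists, for three reasons:
\begin{itemize}
\item the upward push leaves the horizontal projection $[-2r+\eta,2r-\eta]$ of $R_2$ unchanged, and this is exactly the projection of $R_4$;
\item that projection meets $[-r,r]$, the projection of the support of $\Delta_r h(0)$;
\item $R_1+z_1$ lies entirely to the right of $R_2+z_2$.
\end{itemize}
Being above everything only gives a horizontal separation line, i.e.\ it only licenses further $v_\uparrow$ pushes.

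\textbf{The missing ingredient} is a diagonal $v_\nearrow$ push between the $\uparrow$ and $\rightarrow$ pushes. The paper does exactly this, moving $R_2$ successively by $\uparrow,\nearrow,\rightarrow,\searrow$. Since your $v_\searrow$ push of $R_1$ preserves $x+y$, $\max_{R_1+z_1}(x+y)$ is controlled by the size of the earlier rightward push of $R_1$. Choose the upward push of $R_2$ larger than that quantity. Then a line $x+y=c$ separates $R_2+z_2$ from everything else, and pushing along $v_\nearrow$ far enough puts $R_2+z_2$ to the right of all other sets. After that, your vertical line, the $\rightarrow$ push and the final $\searrow$ push go through as you describe.

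\textbf{Step (1) is also invalid as written, but harmless.} $R_1$ and $R_3$ span $y\in[-4r,4r]$, so they straddle $y=-2r-\eta/2$; the fact you cite is precisely why that line fails. In fact no horizontal line separates $R_4$ from $R_1\cup R_3$ at any point of your sequence. The step is unnecessary, though: $R_4$ already lies below $y=-2r-\eta$ and never needs to move, as in the paper. Simply drop it; your diagonal separation lines for $R_1$ and $R_3$ only become easier.
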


\begin{proof}

We follow the four steps in Figure~\ref{fig:sq2}, and achieve the lemma with $k=8$. 
At each step, we choose $\mathcal L^{(j)}$ with the corresponding separation in the figures. 
\begin{enumerate}
\item 
Push $R_3$ to the left and then $R_1$ to the right (that is, $v^{(1)}=v_\leftarrow$ and $v^{(2)}=v_\rightarrow$), choose $\lambda^{(1)},\lambda^{(2)}$ large enough to allow the diagonal pushes in the next step. 
\item 
Push diagonally the translated $R_3$ to the lower left and then the translated $R_1$ to the lower right, choose $\lambda^{(3)},\lambda^{(4)}$ large enough so that both lie below $R_4$. 
\item 
Push $R_2$ upwards, choose $\lambda^{(5)}$ large enough to allow the next diagonal push. 
\item 
Push diagonally the translated $R_2$ successively to the upper right, right, and lower right, choose $\lambda^{(6)},\lambda^{(7)}$ large enough to allow the next move, and $\lambda^{(8)}$ large enough so that after the push, $R_2$ lies below $R_4$. Then we have a horizontal line separating all four $R_i+z_i$ from $\Lambda_{2r}(0)$.
\end{enumerate}
At each step, choose $\mathcal L^{(j)}$ before moving the rectangle, with a positive margin in the separation conditions. Moving the rectangle perpendicularly away from this line preserves the separation. 
After fixing the parameters $\lambda^{(1)},\dots,\lambda^{(k)}$, choose $\eta'>0$ small enough that these separations and the final horizontal separation persist whenever $|\lambda_j-\lambda^{(j)}|<\eta'$ for every $j$. This is possible because there are finitely many pushes.
\end{proof}

\begin{figure}[htbp]
\begin{center}
\setlength{\fboxsep}{4pt}
\setlength{\fboxrule}{0.3pt}
\begin{subfigure}[b]{0.30\textwidth}
    \centering
    \fcolorbox{black!40}{white}{\begin{minipage}[c][3cm][c]{\dimexpr\linewidth-2\fboxsep-2\fboxrule\relax}
        \centering
        \includegraphics[width=\linewidth,height=3cm,keepaspectratio]{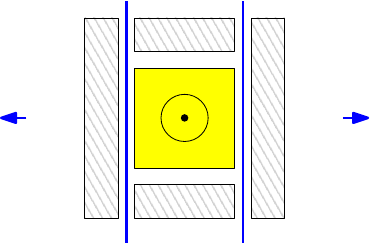}
    \end{minipage}}
    \caption{}
\end{subfigure}\hfill
\begin{subfigure}[b]{0.30\textwidth}
    \centering
    \fcolorbox{black!40}{white}{\begin{minipage}[c][3cm][c]{\dimexpr\linewidth-2\fboxsep-2\fboxrule\relax}
        \centering
        \includegraphics[width=\linewidth,height=3cm,keepaspectratio]{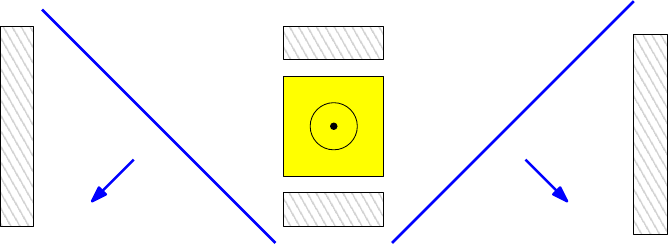}
    \end{minipage}}
    \caption{}
\end{subfigure}\hfill
\begin{subfigure}[b]{0.30\textwidth}
    \centering
    \fcolorbox{black!40}{white}{\begin{minipage}[c][3cm][c]{\dimexpr\linewidth-2\fboxsep-2\fboxrule\relax}
        \centering
        \includegraphics[width=\linewidth,height=3cm,keepaspectratio]{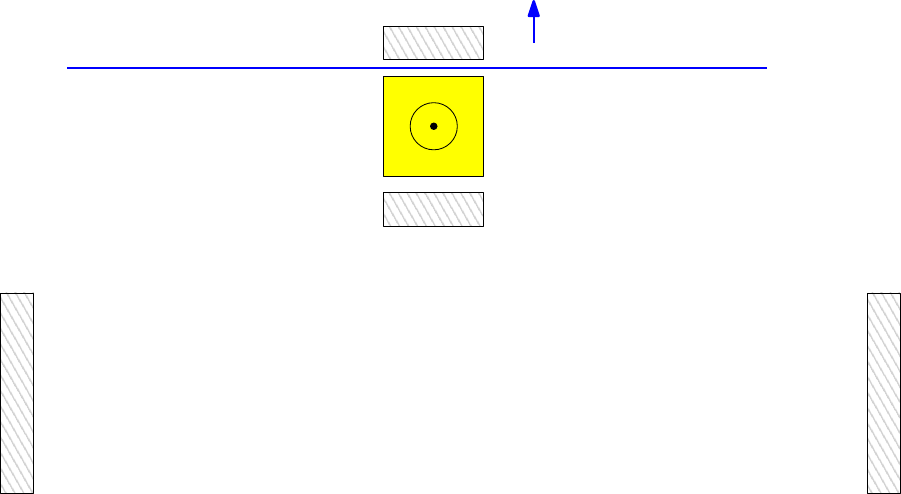}
    \end{minipage}}
    \caption{}
\end{subfigure}

\medskip
\begin{subfigure}[b]{\textwidth}
    \centering
    \fcolorbox{black!40}{white}{\includegraphics[width=\dimexpr\linewidth-2\fboxsep-2\fboxrule\relax]{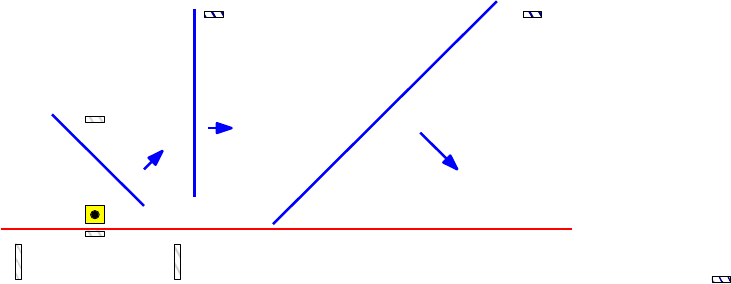}}
    \caption{}
\end{subfigure}
\caption{The four steps of the pushing procedure. Blue lines are the separating lines $\calL$ for the pushes. The red line separates the final positions of all four rectangles from $\Lambda_{2r}(0)$.}
\label{fig:sq2}
\end{center}
\end{figure}

We are now in a position to prove Theorem~\ref{thm:intermediate_statement}.

\begin{proof}[Proof of Theorem~\ref{thm:intermediate_statement}]
Fix $\eta > 0$, $0 < r <1/2$ and $M \geq 1$.
Also, set $\eps = \eps_0 > 0$. 
Consider $\mathcal L^{(1)}, \dots, \mathcal L^{(k)}$, $v^{(1)},\dots, v^{(k)}$ and $\lambda^{(1)},\dots, \lambda^{(k)}$ 
produced by Lemma~\ref{lem:finite-push-geometry}. 
Below, for $\lambda_1,\dots,\lambda_k$ with $|\lambda_j - \lambda^{(j)}| < \eta'$ for all $j$, let 
\begin{equation}
\mathbf z^{(j)}(\lambda_1,\dots,\lambda_j):=\mathbf z^{(j-1)}(\lambda_1,\dots,\lambda_{j-1})+\lambda_j\mathbf v^{(j)}.
\end{equation}

By Proposition \ref{prop:one-step-pushing}, there exists $\delta_0 > 0$ and a sequence $0 < \eps_k \leq \dots \leq \eps_1 \leq \eps_0$
such that, for any $\lambda_1,\dots, \lambda_k$ with $|\lambda_j - \lambda^{(j)}| < \eta'$ for all $1 \leq j\le k$ and every $0\le i<k$, if 
\begin{align}\label{eq:epsj+1toj}
\text{ for every }\lambda\text{ such that }|\lambda -\lambda^{(i+1)}|<\eta', \quad\mathbf z^{(i+1)}(\lambda_1,\dots,\lambda_i,\lambda) \in \calN_{\eps_{i+1}}(\delta_0)
\end{align}
then $\mathbf z^{(i)}(\lambda_1,\dots,\lambda_i) \in \calN_{\eps_i}(\delta_0)$.

Since Proposition~\ref{prop:rp-separated-smallness} implies that, by potentially decreasing $\delta_0 > 0$, 
$
\mathbf z^{(k)}(\lambda_1,\dots,\lambda_k) \in \calN_{\eps_k}(\delta_0)
$
as soon as $|\lambda_j - \lambda^{(j)}| < \eta'$ for all $1 \leq j \leq k$, we immediately conclude
that 
$
(0,\dots,0) = \mathbf z^{(0)} \in \calN_{\eps_0}(\delta_0), 
$
as claimed.
\end{proof}

\subsection{Proof of Proposition~\ref{prop:one-step-pushing}}\label{sec:push2}

\boxstatus{Tiancheng 30.09: this section is ready. Conditional on Lemma 3.5 problem.}

We wish to use the analyticity argument in a continuous limit. To formalise this, it is convenient to proceed by contradiction. 

Assume henceforth that there exist $\eta>0$, $M\ge1$, $0<r<1/2$, $0 < a< b$, $\eps>0$, a direction $v$,  $I \subset \{1,2,3,4\}$,
and a sequence of mesh sizes $(\delta_n)$ tending to zero such that the following holds. For every $n$, there exist $(A_i^{(n)})_{i\in \{1,2,3,4\}}$ and $(\zeta_i^{(n)})_{i\in\{1,2,3,4\}}$ with $\osc(\zeta_i^{(n)})\le M$ for all $i\in\{1,2,3,4\}$, 
$\mathbf z^{(n)}$, and a separation line $\mathcal L^{(n)}$ orthogonal to $v$ with $I(\calL^{(n)})=I$, 
such that 
\begin{align}\label{eq:to_contradict_push}
\lim_{n\rightarrow \infty}\sup_{\lambda \in [a,b]\cap\delta_n\mathbb Z}  \big|  \mu_{\delta_n \bbZ^2} [\Delta_r h(0)\,|\,\mathbf E^{(n)} (\mathbf z^{(n)} + \lambda \mathbf v)]  \big| &= 0, \quad \\
\inf_{n\ge1}\big| \mu_{\delta_n \bbZ^2} [\Delta_r h(0)\,|\,\mathbf E^{(n)}(\mathbf z^{(n)})]  \big| &\geq  \eps  .\label{eq:to_contradict_push_lower_bound}
\end{align}
Above, $\mathbf v$ is defined from $I$ and $v$ as usual, and $\mathbf E^{(n)}$ is the event defined in terms of $(A_i^{(n)})_{i\in \{1,2,3,4\}}$ and $(\zeta_i^{(n)})_{i\in\{1,2,3,4\}}$.


We start with a lemma that explains how to express $\mu_{\delta_n \bbZ^2} \left[\Delta_r h(0)\,\middle|\,\mathbf E^{(n)} (\mathbf z^{(n)} + \lambda \mathbf v)\right]$
for any $n$ in the sequence above.  Set
    $s_n:=\delta_n\lfloor\eta r/(8\delta_n)\rfloor$.

\begin{lemma}\label{lem:transfer_analyticity}
    For every $n$, there exists a finite complex measure $\nu_n$, supported on $[0,1]$, such that, for any $\lambda \in \delta_n\mathbb Z$ such that $\lambda>-s_n$,
    \begin{align}\label{eq:mudelta1}
           \mu_{\delta_n \bbZ^2}\!\left[ \Delta_r h(0)\,\mathbf{1}_{\mathbf E^{(n)} (\mathbf z^{(n)} + \lambda \mathbf v)}\right]&
           = \int_{[0,1]}(1-\alpha)^{(\lambda+s_n)/\delta_n}\,d\nu_n(\alpha).
    \end{align}
     Moreover,
    \begin{align}\label{eq:mudelta2}
            \|\nu_n\|_{\TV}\le C\,\mu_{\delta_n \bbZ^2} [\mathbf E^{(n)} (\mathbf z^{(n)})]
    \end{align}
    for a constant $C=C(\eta,r,M) < \infty$ that depends only on $\eta$, $r$ and $M$, but not on $n$.
\end{lemma}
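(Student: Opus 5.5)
The plan is to apply the spectral representation (Proposition~\ref{lemma:spectral_representation}) in the direction $\star$ with $v=v_\star$. The line $\calL^{(n)}$ is a separation line for $\mathbf z^{(n)}$, so it splits the observable into two local pieces: $\bfX:=\Delta_r h(0)\prod_{i\notin I}\ind_{E_i^{(n)}(z_i^{(n)})}$, supported on the negative side, and $\bfY:=\prod_{i\in I}\ind_{E_i^{(n)}(z_i^{(n)})}$, supported on the positive side. Everything is at distance at least $\eta r/2$ from $\calL^{(n)}$. Because $\Delta_r h(0)$ involves differences of heights, it is measurable in terms of $h_4$ together with the gradient. To make $\bfX$ bounded and local in the sense of Definition~\ref{def:local_observables}, I would truncate $\Delta_r h(0)$ at a large level (using the uniform exponential tails from Lemma~\ref{lemma:arm_event_bound}) or note that the representation extends by linearity and $L^2$ approximation.

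The first step is to re-centre the problem. I translate the reflection line: replace $\calL^{(n)}$ by its translate by $-s_n$ in the direction $v$, i.e.\ view $\bfY$ as $\tau_{-s_n v/|v|}$-shifted so that the positive-side observable $\bfY_0:=\tau_{-s_n\,\cdot}(\bfY)$ still lies on the positive side of some lattice reflection line. This is possible since $s_n\le\eta r/8<\eta r/2$ and we only need $\calR_\star$ to pass through face centres. Then for $\lambda\in\delta_n\bbZ$ with $\lambda>-s_n$, the event $\mathbf E^{(n)}(\mathbf z^{(n)}+\lambda\mathbf v)$ corresponds to $\bfX\,\tau_{m v_\star}(\bfY_0)$ with $m=(\lambda+s_n)/\delta_n\in\Z_{\ge0}$. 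Here I am using the rescaled lattice, in which the step $v_\star$ has length $2\delta_n$ up to the factor conventions, and I absorb the factor $2$ into $s_n$. Equation~\eqref{eq:pushing_spectral_representation} then gives
\begin{align*}
\mu_{\delta_n\bbZ^2}[\bfX\,\tau_{mv_\star}\bfY_0]=\int_{[0,1]}t^m\,d\mu_{\bfX,\bfY_0,\star}(t).
\end{align*}
Pushing forward under $t\mapsto\alpha=1-t$ yields $\nu_n$, supported on $[0,1]$, which proves \eqref{eq:mudelta1}.

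For the bound \eqref{eq:mudelta2}, I would use \eqref{eq:pushing_reflection_positivity}:
\begin{align*}
\|\nu_n\|_{\TV}^2\le\mu[\bfX\bfX^{\dagger}]\,\mu[\bfY_0^{\dagger}\bfY_0].
\end{align*}
For the second factor I apply quasi-independence (Lemma~\ref{lemma:quasi_independence}) to the events $E_i$, $i\in I$, and their reflections. These are $\eta$-separated by a distance of order $\eta r/4$ from the line, so the factor is at most $C\prod_{i\in I}\mu[E_i]^2$. The first factor is $\mu[\Delta_rh(0)\Delta_rh^\dagger(0)\ind_{\cap_{i\notin I}E_i\cap E_i^\dagger}]$. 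I would bound it by Cauchy--Schwarz or Hölder, together with a conditional moment bound $\mu[(\Delta_r h(0))^2\mid \cap_{i\notin I}(E_i\cap E_i^\dagger)]\le C$. Multiplying gives $C\prod_{i\notin I}\mu[E_i]^2$, and a final application of quasi-independence converts $\prod_i\mu[E_i]$ into $\mu[\mathbf E^{(n)}(\mathbf z^{(n)})]$, yielding \eqref{eq:mudelta2}.

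The main obstacle is this conditional second-moment bound on $\Delta_r h(0)$ given the height-mod-$4$ conditioning on separated regions. I would prove it as in Lemma~\ref{lemma:L2bounds}: explore a flower domain around $\Lambda_{2r}$, control its oscillation via Lemma~\ref{lemma:arm_event_bound} under the conditioning, and use Lemma~\ref{lemma:double_circ} to decouple it from the conditioned regions. Strictly, $(\Delta_rh)^2$ itself rather than its averaged version is needed here. Its tails follow from the arm event bound applied between the circle of radius $r$ and the point $0$ at mesh scale. Since $r$ is fixed, this is uniform in $n$, with constants depending on $\eta,r,M$.
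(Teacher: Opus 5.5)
You have the same setup as the paper: splitting the observable along the separation line, re-centring $\bfY$ by $-s_n\mathbf v$, pushing the spectral measure forward under $t\mapsto 1-t$, and using quasi-independence for $\mu[\bfY^\dagger\bfY]$ and for the final recombination. The step that fails is your bound on $\mu[\bfX\bfX^\dagger]$.

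You keep the raw $\Delta_r h(0)$ in $\bfX$. After Cauchy--Schwarz you then need $\mu_{\delta_n\bbZ^2}[(\Delta_r h(0))^2\mid\cdots]\le C$ uniformly in $n$, and you justify this by Lemma~\ref{lemma:arm_event_bound} applied between the face $0$ and the circle of radius $r$. That lemma only covers annuli of fixed quality, i.e.\ with a macroscopic inner domain. The annulus from scale $\delta_n$ to $r$ consists of order $\log(r/\delta_n)$ such annuli, each carrying an order-one height fluctuation.

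Indeed, $\Delta_r h(0)=h(0)-\oint_{\partial B_r(0)}h$ compares a single face with a macroscopic circle average. In the delocalised regime, where the limit is the GFF, its second moment is of order $\log(r/\delta_n)\to\infty$, and conditioning on separated events of bounded oscillation does not change this. Your argument therefore only yields $\|\nu_n\|_{\TV}\lesssim\sqrt{\log(1/\delta_n)}\,\mu_{\delta_n\bbZ^2}[\mathbf E^{(n)}(\mathbf z^{(n)})]$. Uniformity in $n$ is exactly what the lemma is for, since it feeds the Montel argument in Proposition~\ref{prop:one-step-pushing}.

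The flower-domain argument of Lemma~\ref{lemma:L2bounds} cannot rescue this either. Through Jensen and Lemma~\ref{lemma:boundedness} it controls the square of a \emph{conditional expectation} of $\Delta_r h(0)$, never its conditional second moment.

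The missing idea is to average out the microscopic fluctuations before using reflection positivity. Every $E_i(z_i)$ lies outside $\Lambda_{2r}(0)$, so the spatial Markov property gives
$\mu_{\delta_n\bbZ^2}[\Delta_r h(0)\ind_{\mathbf E}]=\mu_{\delta_n\bbZ^2}[\widetilde\Delta_r h(0)\ind_{\mathbf E}]$, where $\widetilde\Delta_r h(0)=\mu_{\delta_n\bbZ^2}[\Delta_r h(0)\mid h_4|_{\partial\Lambda_{2r}(0)}]$. This is why the paper takes $\bfX:=\widetilde\Delta_r h(0)\,\ind_{\mathbf E_-}$.

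Equivalently, with your $\bfX$, you can apply the Markov property around $\Lambda_{2r}(0)$ and its reflection. This replaces $\Delta_r h(0)\,\Delta_r h^\dagger(0)$ by $\widetilde\Delta_r h(0)\,(\widetilde\Delta_r h(0))^\dagger$ inside $\mu[\bfX\bfX^\dagger]$ before you apply Cauchy--Schwarz; the spectral measure is unchanged, since its moments agree.

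Cauchy--Schwarz then produces $\mu[\widetilde\Delta_r h(0)^2\mid\mathbf E_-\cap\mathbf E_-^\dagger]$. This is bounded uniformly in $\delta$, because the oscillation on the boundary of the explored flower domain has $\delta$-uniform exponential tails (Lemmas~\ref{lemma:arm_event_bound} and~\ref{lemma:L2bounds}), transported to the conditioned measure via quasi-independence as in the paper.
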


\begin{proof}
We fix $n$ and omit it from notation, with the notable exception that we will still write $\nu_n$ for the measure we are aiming to construct. Recall the definition of $\widetilde\Delta_r h(0)$ and set 
\begin{align}
        \bfX&:=\widetilde\Delta_r h(0)\,\mathbf{1}_{\mathbf E_-(\mathbf z)},\\
        \bfY&:=\mathbf{1}_{\mathbf E_+(\mathbf z-s\mathbf v)}.
\end{align}
where
\begin{align}
	\mathbf E_+(\mathbf z):=\bigcap_{i\in I} E_i(z_i) \quad \text{ and } \quad 
	\mathbf E_-(\mathbf z):=\bigcap_{i\in\{1,2,3,4\} \setminus I} E_i(z_i).
\end{align}
For any $\lambda$ as above, the spatial Markov property enables us to write that
\begin{align}
        \mu_{\delta \bbZ^2}\!\left[
        \Delta_r h(0)\,
        \mathbf{1}_{\mathbf E (\mathbf z + \lambda \mathbf v)}
        \right] &=    \mu_{\delta \bbZ^2}\!\left[
        \widetilde\Delta_r h(0)\,
        \mathbf{1}_{\mathbf E (\mathbf z + \lambda \mathbf v)}
        \right] \\
        &= \mu_{\delta \bbZ^2} \!\left[ \bfX\,\tau_{(\lambda+s)v}(\bfY)  \right] \\
 &= \int_{[0,1]}(1-\alpha)^{(\lambda+s)/\delta}\,d\nu_n(\alpha), 
\end{align}
where $\nu_n$ the pushforward under $t\mapsto1-t$ of the full-plane spectral measure in \eqref{eq:pushing_spectral_representation}.
Moreover, the corresponding total-variation bound is given by~\eqref{eq:pushing_reflection_positivity}: 
\begin{equation}\label{eq:h3}
        \|\nu_n\|_{\TV}
        \le
        \mu_{\delta \bbZ^2}[\bfX \bfX^\dagger]^{1/2}\,
        \mu_{\delta \bbZ^2}[\bfY^\dagger \bfY]^{1/2}.
\end{equation}
Let us bound the right-hand side. First, Lemma \ref{lemma:quasi_independence} gives
\begin{align}  
        \mu_{\delta \bbZ^2}[\bfY \bfY^\dagger] &\leq C\, \mu_{\delta \bbZ^2}[\mathbf E_+(\mathbf z-s\mathbf v)]^2=C\, \mu_{\delta \bbZ^2}[\mathbf E_+(\mathbf z)]^2,
\end{align}
where $C$ is some constant depending on $M$ and $\eta$, but not on $\delta$. 
\commentTiancheng{Lemma 3.5 is a bit subtle here as we discussed. }

Second, Cauchy-Schwarz and the Lemmas~\ref{lemma:quasi_independence} and  \ref{lemma:L2bounds} give
\begin{align*}
    \mu_{\delta \bbZ^2}[\bfX\bfX^\dagger]&=\mu_{\delta \bbZ^2}[\mathbf E_-\cap\mathbf E_-^\dagger]\mu_{\delta \bbZ^2}[\widetilde\Delta_rh(0)(\widetilde\Delta_rh(0))^\dagger|\mathbf E_-\cap\mathbf E_-^\dagger]\\
    &\le \mu_{\delta \bbZ^2}[\mathbf E_-\cap\mathbf E_-^\dagger]\sqrt{\mu_{\delta \bbZ^2}[(\widetilde\Delta_rh(0))^2|\mathbf E_-\cap\mathbf E_-^\dagger]\mu_{\delta \bbZ^2}[[(\widetilde\Delta_rh(0))^\dagger]^2|\mathbf E_-\cap\mathbf E_-^\dagger]}\\
    &= \mu_{\delta \bbZ^2}[\mathbf E_-\cap\mathbf E_-^\dagger]\mu_{\delta \bbZ^2}[(\widetilde\Delta_rh(0))^2|\mathbf E_-\cap\mathbf E_-^\dagger]\\
    &\le C\, \mu_{\delta \bbZ^2}[\mathbf E_-]^2,
\end{align*}
where the equality is by symmetry of the model, and where once again $C$ depends on $M$ and $\eta$, but not on $\delta$. Plugging the two previous bounds in \eqref{eq:h3} and using Lemma \ref{lemma:quasi_independence} one last time gives
\begin{align}  
       \|\nu_n\|_{\rm TV}\le  \mu_{\delta \bbZ^2}[\mathbf E_+] \mu_{\delta \bbZ^2}[\mathbf E_-] & \leq C\,  \mu_{\delta \bbZ^2}[\mathbf E(\mathbf z)],
\end{align}
thus concluding the proof. 
\end{proof}

We return to the proof of Proposition~\ref{prop:one-step-pushing} and remind the reader that we are looking to contradict \eqref{eq:to_contradict_push}. 
Consider the measure $\nu_n$ obtained in the previous lemma via \eqref{eq:mudelta1}. Introduce
\begin{align}
d \widetilde\nu_n(\alpha /\delta_n) := \frac{1}{\mu_{\delta_n \bbZ^2} \left[\mathbf E^{(n)} (\mathbf z^{(n)})\right]}\, d\nu_n(\alpha)
\end{align}
and
the functions
\begin{equation*}
	\varphi_n:\lambda\mapsto \int_{\bbR_+}(1-\delta_n \alpha)^{(\lambda+s_n)/\delta_n}\,d\widetilde\nu_n(\alpha).
\end{equation*}
Now, \eqref{eq:mudelta2} implies that $\|\widetilde\nu_n\|_{\TV}\le C$ for all $n$. The functions $\varphi_n$ are therefore holomorphic and uniformly bounded on the domain $\calU=\{z\in\bbC:\Re(z)>-\eta r/16\}$. By Montel's theorem, we know that this sequence of functions is relatively compact for uniform convergence on all compacts of $\calU$. 

We reach a contradiction as follows.
Note that 
\eqref{eq:mudelta1} reads
\begin{align}\label{eq:mudelta3}
	\varphi_n(\lambda)=\frac{ \mu_{\delta_n \bbZ^2} \left[\mathbf E^{(n)} (\mathbf z^{(n)} + \lambda \mathbf v)\right]}{ \mu_{\delta_n \bbZ^2} \left[\mathbf E^{(n)} (\mathbf z^{(n)})\right]}	 \mu_{\delta_n \bbZ^2}\left[ \Delta_r h(0)\, |\, \mathbf E^{(n)} (\mathbf z^{(n)} + \lambda \mathbf v)\right].
\end{align}
Lemma~\ref{lemma:quasi_independence} and our assumption on the sequence $\zeta^{(n)}$ imply that
\begin{align}\label{eq:mixing_ratio}
	c
	\leq \frac{ \mu_{\delta_n \bbZ^2} \left[\mathbf E^{(n)} (\mathbf z^{(n)} + \lambda \mathbf v)\right]}{ \mu_{\delta_n \bbZ^2} \left[\mathbf E^{(n)} (\mathbf z^{(n)})\right]}	 
	\leq C,
\end{align}
for constants $c,C > 0$ that are uniform in $\lambda \in [0,b]\cap\delta_n\mathbb Z$ and in $n$. 
As such, \eqref{eq:to_contradict_push} yields \begin{align}\label{eq:zero_analytic}
	\lim_{n\rightarrow\infty}\sup_{\lambda\in[a,b]\cap\delta_n\mathbb Z}|\varphi_n(\lambda)|= 0.
\end{align} 
Any subsequential scaling limit of the sequence $(\varphi_n)$ therefore vanishes on $[a,b]$, so must be identically equal to zero. This means that the $(\varphi_n)$ actually converge uniformly to zero on each compact of $\calU$, and therefore $(\varphi_n(0))$ converges to zero as $n$ tends to infinity. However, this is in direct contradiction with the fact that 
\begin{align*}
	\varphi_n(0)=\mu_{\delta_n \bbZ^2}[\Delta_rh(0)|\mathbf E^{(n)}(\mathbf z^{(n)})],
\end{align*}
whose modulus remains, by \eqref{eq:to_contradict_push_lower_bound}, larger than $\eps$ for every $n$. This concludes the proof.

\section{Proof of the harmonicity theorem (Theorem~\ref{thm:harmonicity})}
\label{sec:harmonicity_proof}

\boxstatus{Ioan 05.10: NEW finished this section at 2pm}

\subsection{Proof of Theorem~\ref{thm:explored_harmo}}\label{sec:harmonicity_local}

We are now ready to explicitly define the exploration appearing in Theorem~\ref{thm:explored_harmo} and the associated notion of successful outcome. 
Recall the exploration $\calE$ of Section~\ref{sec:almost_markov_proof} consisting of 
\begin{itemize}
	\item the realisations $F_1,\dots, F_4$ of the flower domains $\calF_1,\dots, \calF_4$ in each rectangle $R_1,\dots, R_4$;
	\item the realisations $(\xi, \tau)$ of $h_4$ and $\omega^+$ on each $\calF_i^c$;
	\item the realisation $\xi$ of $h_4$ on $\calU$.
\end{itemize}
Also recall that all notions above depend on $\eta > 0$. 

Fix $\eps > 0$ and $M \geq 1$. 
We call a realisation ${\rm Exp}$ of $\calE$ $(\eps,M)$-{\em successful} if the following hold: 
\begin{itemize}
\item $\calE$ is successful in the sense of Proposition \ref{prop:strong_mixing},
\item for each $i \in \{1,\dots, 4\}$, $\osc(\xi|_{\partial F_i}) \leq M$,
\item  $\mu_{\delta\bbZ^2} [ \widetilde\Delta_r h(x)^2 \,|\,\calE = {\rm Exp}] \leq \tfrac{M}\eps$ and 
\item $\mu_{\delta\bbZ^2} [ \widetilde\Delta_r h(x)^2 \,|\, \calF_i = F_i \text{ and } (h_4,\omega^+)|_{\cup_i F_i} = (\xi, \tau^+)] \leq \tfrac{M}\eps$.
\end{itemize}

We will first prove that, for any $\eps > 0$, there exists $\eta =\eta(\eps)> 0$ and $M = M(\eta,\eps) \geq 1$ 
such that 
\begin{align}\label{eq:epsM-successful}
\mu_{\delta\bbZ^2} [ \calE \text{ is $(\eps,M)$-successful}] \geq 1 - 4\eps. 
\end{align}
We will then show that, for this notion of successful exploration, \eqref{eq:explored_harmo} holds. 
Together, these facts will prove Theorem~\ref{thm:explored_harmo}. 

\begin{proof}[Proof of Theorem~\ref{thm:explored_harmo}]
	Fix $\eps > 0$. 
	Let $\eta > 0$ be sufficiently small that
	\begin{align}
		\mu_{\delta\bbZ^2} \big[ \text{$\calE$ is successful in the sense of Proposition~\ref{prop:strong_mixing}}\big] \geq 1-\eps,
	\end{align}
	for $\delta$ small enough. 
	
	Now, Lemma~\ref{lemma:L2bounds} implies that 
	\begin{align}
		\mu_{\delta\bbZ^2} \big[ \mu_{\delta\bbZ^2} [ \widetilde\Delta_r h(x)^2 \,|\,\calE] \big] \leq C,
	\end{align}
	for some universal constant $C > 0$. Thus, for any $M \geq C/\eps^2$, we conclude by the Markov inequality that 
	\begin{align}
		\mu_{\delta\bbZ^2} \big[ \calE \text{ is such that } \mu_{\delta\bbZ^2} [ \widetilde\Delta_r h(x)^2 \,|\,\calE] \geq M/\eps \big] 
		\leq \eps.
	\end{align}
	The same holds for the measure conditioned only on the $\calF_i$. 
	
	Finally, by Lemma~\ref{lemma:arm_event_bound}, we may also choose $M $ sufficiently large that 
	\begin{align}
		\mu_{\delta\bbZ^2} [\osc(\xi|_{\partial F_i}) \leq M, \, \forall i] \geq 1 - \eps. 
	\end{align}
	The three displays above imply \eqref{eq:epsM-successful}.
	
	We now turn to proving~\eqref{eq:explored_harmo}. Fix a $(\eps,M)$-successful realisation ${\rm Exp}$ of $\calE$. 
	Let us first show that, for $\delta > 0$ small enough, 
	\begin{align}\label{eq:boundepsM}
	\Big| \mu_{\delta \bbZ^2} \big[\Delta_rh(x) \big| \calF_i = F_i \text{ and } (h_4,\omega^+)|_{\cup_i F_i} = (\xi, \tau^+)\big]  \Big| \leq {\eps}.
	\end{align}
	We may not directly apply Theorem~\ref{thm:intermediate_statement} here, as the conditioning is in terms of height gradients and percolation configurations. This is easily overcome as follows. Recall the fattened flower domain $\calF_i^+$, obtained from $\calF_i$ by exploring $h_4$ on an additional layer of faces. 
	This additional information renders the conditioning on $\omega^+$ in $\calF_{i}^c$ irrelevant for the measure in $\calF_i^+$. 
	Thus, for $\delta > 0$ sufficiently small, we have 
	\begin{align*}
	&\Big| \mu_{\delta \bbZ^2} \big[\Delta_rh(x) \big| \calF_i = F_i \text{ and } (h_4,\omega^+)|_{\cup_i F_i} = (\xi, \tau^+)\big]  \Big| \\
	&\leq \sum_{\zeta}
	\big|\mu_{\delta \bbZ^2} \big[\Delta_rh(x) \big|\, h_4|_{\cup_i F_i^+} = \zeta \big] \big| \times
	\mu_{\delta \bbZ^2} \big[h_4|_{\cup_i F_i^+} = \zeta \big| \calF_i = F_i ,\text{  } (h_4,\omega^+)|_{\cup_i F_i} = (\xi, \tau^+)\big] \\
	&\leq \sum_{\zeta}
	\eps 
	\mu_{\delta \bbZ^2} \big[h_4|_{\cup_i F_i^+} = \zeta \big| \calF_i = F_i ,\text{ } (h_4,\omega^+)|_{\cup_i F_i} = (\xi, \tau^+)\big] 
	= {\eps},
	\end{align*}
	where the sum is over all possible realisations $\zeta$ of $h_4$ on $\bigcup_i F_i^+$. 
	The bound of ${\eps}$ on the expectation of $\Delta_rh(x)$ under the conditioning on the heights in $\bigcup_iF_i^+$ 
	is due to Theorem~\ref{thm:intermediate_statement} and is valid for $\delta < \delta_0 (\eps,\eta,M)$. 
	Indeed, here $\osc(\zeta|_{\partial F_i^+}) \leq M+3$ due to the bound on the oscillations of $\xi$. 
	
	We now extract the bound on $\Delta_rh(x)$ under $\mu_{\delta \bbZ^2} \big[ \big|\, \calE={\rm Exp} \big]$.
	Let $\bbP$ be a coupling of 
	\begin{align}
	h \sim \mu_{\delta \bbZ^2} \big[ \cdot \big|\mathcal{E} = {\rm Exp}  \big] \text { and }
	h' \sim \mu_{\delta \bbZ^2} \big[\cdot \big| \calF_i = F_i \text{ and } (h_4,\omega^+)|_{\cup_i F_i} = (\xi, \tau^+)\big] 
	\end{align}
	such that $\bbP[h = h'\text{ on }\Lambda_{2r}(x)]\geq 1-\eps$. Such a coupling exists by Proposition~\ref{prop:strong_mixing}. 
	
	Then 
	\begin{align*}
	|\mu_{\delta \bbZ^2} \big[ \Delta_rh(x) \big|\, \calE={\rm Exp} \big] | 
	\leq & \big|\bbE[\Delta_rh'(x)] \big| \\
	+& \big|\bbE[\Delta_rh(x) \ind_{\{h \neq h'\text{ on }\Lambda_{2r}(x)\}}] \big|  
		+ \big|\bbE[\Delta_rh'(x) \ind_{\{h \neq h'\text{ on }\Lambda_{2r}(x)\}}] \big|.  
	\end{align*}
	The first term is bounded by \eqref{eq:boundepsM}.
	Moreover, due to our assumed bound on the second moment of $\widetilde\Delta_rh(x)$
	and the Cauchy-Swartz inequality, we find 
	\begin{align*}
	 \bbE\big[ \Delta_rh'(x) \ind_{\{h \neq h'\text{ on }\Lambda_{2r}(x)\}}\big]^2  
	 &= \bbE\big[\widetilde \Delta_rh'(x) \ind_{\{h \neq h'\text{ on }\Lambda_{2r}(x)\}}\big]^2  \\
	 & \leq \bbE\big[\widetilde \Delta_rh'(x)^2\big]\bbP\big[h \neq h' \text{ on }\Lambda_{2r}(x)\big]
	 \leq \eps. 
	\end{align*}
	The same inequality holds for the third term. 
	In conclusion 
	\begin{align*}
	|\mu_{\delta \bbZ^2} \big[ \Delta_rh(x) \big|\, \calE={\rm Exp} \big] | 
	 \leq \eps + 2\sqrt \eps. 
	\end{align*}
	This concludes the proof of Theorem~\ref{thm:explored_harmo}, with altered values of $\eps$. 
\end{proof}

\subsection{Proof of Theorem~\ref{thm:harmonicity}}

We will now see how Theorem~\ref{thm:explored_harmo} implies Theorem~\ref{thm:harmonicity}.
This is a rather simple consequence of the weak-mixing \eqref{eq:weak_mixing_1} and of the $L^2$ bound \eqref{eq:htilde}.

\begin{proof}[Proof of Theorem~\ref{thm:harmonicity}]
    Fix $K, R,\eps, r, G, \zeta$ and points $z,u_2, \dots, u_n$ as in the statement of the theorem. 
    All constants below are allowed to depend on $K$, $R$, $\eps$, $r$ and $n$, but not on $G$, $\zeta$ or the points $z,u_2,\dots, u_n$ that satisfy the requirements of the theorem. 
    
 Combining \eqref{eq:weak_mixing_1} and Theorem~\ref{thm:explored_harmo}, we find that, for any $c > 0$, if $ \delta >0$ is sufficiently small 
	\begin{align}\label{eq:explored_harmo3}
		\Big|	\mu_{G}^\zeta \big[ \Delta_rh(z) \big| \calE \text{ that is successful} \big]  \Big|< c
	\end{align}
	and 
	\begin{align}\label{eq:explored_good3}
		\mu_{G}^\zeta \big[ \text{$\calE$ is successful}\big]  > 1- c.
	\end{align}

The difference with Theorem~\ref{thm:explored_harmo} is that $\calE$ is sampled from $\mu_{G}^\zeta$ rather than the infinite volume measure. 
Equation \eqref{eq:explored_harmo3} is identical to \eqref{eq:explored_harmo}, since the conditioning on $\calE$ renders the replacement of $\mu_{\delta\bbZ^2}$ by $\mu_G^\zeta$ irrelevant. 
We will apply the above, with a value of $c$ to be decided below. 

Now, as in the proof of Theorem~\ref{thm:explored_harmo}, 
we need to bound the expectation of $ \Delta_rh(z)$ when $\calE$ fails to be successful. 
We do this using a series of $L^2$ bounds. 
We have 
\begin{align}
	\big|\Delta_{r} \mu_{G}^\zeta[h(z)h(u_2)\dots h(u_{n})] \big|  
	\leq & \Big| \mu_{G}^\zeta\big[\mu_{G}^\zeta[\Delta_rh(z) | \calE ] \cdot \widetilde h_r(u_2)\dots\widetilde h_r(u_{n}) \cdot\ind_{\calE \text{ successful}} \big] \Big| \nonumber\\
	&+ \Big| \mu_{G}^\zeta\big[\widetilde\Delta_rh (z)   \cdot \widetilde h_r(u_2)\dots \widetilde h_r(u_{n}) \cdot\ind_{\calE \text{ not successful}} \big] \Big|,\label{eq:DeltaCSfinal}
\end{align}
where the notation $\widetilde h_r(.)$ was introduced in \eqref{eq:htilde_def}. 
By the Cauchy--Schwarz inequality and \eqref{eq:explored_harmo3}, we have 
\begin{align*}
 \mu_{G}^\zeta\big[\mu_{G}^\zeta[\Delta_rh(z) | \calE ] \cdot \widetilde h_r(u_2)\cdots\widetilde h_r(u_{n}) \cdot\ind_{\calE \text{successful}} \big]^2   
 \leq c^2 \mu_{G}^\zeta\big[ \widetilde h_r(u_2)^2\cdots\widetilde h_r(u_{n})^2\big]
 \leq c^2 C,
\end{align*}
for some constant $C = C(K, R, \eps, r, n) > 0$ given by Lemma~\ref{lem:htilde}.
Also 
\begin{align*}
&\mu_{G}^\zeta\big[\widetilde\Delta_rh (z)   \cdot \widetilde h_r(u_2)\cdots \widetilde h_r(u_{n}) \cdot\ind_{\calE \text{not successful}} \big]^2\\ 
&\leq 
\mu_{G}^\zeta\big[\widetilde\Delta_rh (z)^2    \widetilde h_r(u_2)^2\cdots \widetilde h_r(u_{n})^2 \big] \mu_{G}^\zeta\big[\calE \text{ not successful}\big] 
\leq c C.
\end{align*}

Since $c >0$ may be chosen after $C$, we may choose it so that $c C \leq (\eps/2)^2$. 
Inserting the bounds above into \eqref{eq:DeltaCSfinal}, we conclude \eqref{eq:harmonicity}.
\end{proof}

\section{Boundary conditions for the height function}\label{sec:boundary_conditions_h}

\boxstatus{Tiancheng 03.10: this section comes from EA section 12.}

\boxstatus{This section is mostly done, the main thing missing are the proofs of certain lemmas which should be done in other parts. Todo: L2 bound for h}

\subsection{Statement of the result}

Let $\calD$ be a domain with $k$ marked points $a^{(1)},\ldots,a^{(k)}$ on the boundary. Let $\zeta$ be boundary conditions which are constant equal to $\zeta(i)$ on each arc $(a^{(i)}a^{(i+1)})$ (with the convention that $a^{(0)}=a^{(k)}$), and such that for each $i$, $|\zeta(i+1)-\zeta(i)|= 1$. In the case $k=0$ where there are no marked points on the boundary, denote simply by $\zeta(0)$ the constant value of $\zeta$ on the boundary.

\begin{proposition}
\label{prop:boundary_values_of_h}
    For all $r,\eta,\eps>0$, there exists $d=d(r,\eta,\eps)>0$ such that the following holds. Let $\calD\subset\delta\bbZ^2\cap \Lambda_{1/\eta}$ be a discrete domain with boundary conditions $\zeta$ as above, let $x_0,x_1,\ldots,x_\ell$ be points in $\calD$. Assume that:
    \begin{itemize}
        \item There exists some conformal map $f:\calD\to\bbD$ such that the $f(a^{(i)})$ are at least at distance $\eta$ from each other.
        \item The $x_i$ are at least at distance $r$ from each other.
        \item The point $x_0$ is at distance at most $d$ from the boundary arc $(a^{(0)}a^{(1)})$, but at distance at least $r$ from the rest of the boundary $\partial\calD\backslash(a^{(0)}a^{(1)})$ (in the case where there are no boundary points this second condition is void).
    \end{itemize}
    Then, one has
    \begin{equation*}
        \Big|\mu^\zeta_{\calD}[h(x_0)h(x_1)\cdots h(x_\ell)]-\zeta(0)\mu^\zeta_\calD[h(x_1)\cdots h(x_\ell)]\Big|\le \eps.
    \end{equation*}
\end{proposition}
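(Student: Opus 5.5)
The plan is to show that, near the arc $(a^{(0)}a^{(1)})$, the height at $x_0$ is pinned to $\zeta(0)$ with high probability through a local exploration, and that the error this introduces is negligible in $L^2$. The proof has three steps.

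\textbf{Step 1: the local configuration near the arc.} The arc $(a^{(0)}a^{(1)})$ is made of faces of a single colour, say even, all with height $\zeta(0)$. Up to a shift of the height by $2$, this gives $+$ boundary conditions for $\omega$ along the arc. Let $y$ be a boundary point at distance at most $d$ from $x_0$. Consider the dyadic annuli $\Lambda_{2^{j+1}d}(y)\setminus\Lambda_{2^jd}(y)$ for $0\le j\le J$, with $2^{J}d\le r/2$. None of them reaches the rest of the boundary $\partial\calD\setminus(a^{(0)}a^{(1)})$ or the points $x_1,\dots,x_\ell$. In each annulus I would ask for a half-circuit of $\omega^+$ attached to the arc, separating $x_0$ from the other points.

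\textbf{Step 2: many independent chances.} The half-annulus intersected with $\calD$ is a quad whose two $\omega$-boundary arcs lie on the $+$ arc. Corollary~\ref{prop:RSW_quads} and the monotonicity of Proposition~\ref{prop:FKG_omega+} give a crossing with probability at least $c>0$, uniformly over the outside. The conformal-length bound is where the hypothesis on $f$ and the Carath\'eodory-type regularity enter. This gives
\[
\mu_\calD^\zeta[\text{no such half-circuit}]\le(1-c)^J,
\]
which is small once $d$ is small. I would explore from the outside inwards and use the outermost half-circuit $\gamma$. By Lemma~\ref{lem:measures}, the measure inside $\gamma$ is then a $+$ measure with constant height $\zeta(0)$ on its boundary, independent of the configuration outside. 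Lemma~\ref{lemma:boundedness} then gives $\mu[h(x_0)\mid\cdot]=\zeta(0)$ exactly, or up to the odd-face offset of at most $1$. This offset would be removed by using the odd face adjacent to $x_0$ and averaging via Proposition~\ref{prop:odd_spins_uniform}. Because $\gamma$ separates $x_0$ from $x_1,\dots,x_\ell$, the tower property gives
\[
\mu^\zeta_\calD\big[h(x_0)h(x_1)\cdots h(x_\ell)\ind_{\gamma\text{ exists}}\big]=\zeta(0)\,\mu^\zeta_\calD\big[h(x_1)\cdots h(x_\ell)\ind_{\gamma\text{ exists}}\big].
\]

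\textbf{Step 3: the error term.} It remains to bound
\[
\mu^\zeta_\calD\big[(|h(x_0)|+|\zeta(0)|)\,|h(x_1)\cdots h(x_\ell)|\,\ind_{\gamma\text{ absent}}\big].
\]
By Cauchy--Schwarz this is at most $(1-c)^{J/2}$ times a second moment $\mu^\zeta_\calD[(1+h(x_0)^2)\prod_i h(x_i)^2]$. This second moment should be bounded uniformly in $\delta$. The bound comes from a version of Lemma~\ref{lem:htilde} applied to $h$ itself rather than to $\widetilde h$, whose proof controls heights by the oscillation of flower domains. Here Lemma~\ref{lemma:arm_event_bound} gives exponential tails for the height difference between $x_i$ and the boundary across each of the $O(\log(1/\delta))$ scales. \textbf{This is the main obstacle}: the variance of $h(x_0)$ genuinely grows like $\log(\mathrm{dist}(x_0,\partial\calD)/\delta)$, so the moment is not bounded uniformly.

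To handle this I would nest the argument. The annulus exploration succeeds at scale $j$ with probability bounded below independently of earlier failures. Conditionally on the first $j$ failures, the height at $x_0$ differs from $\zeta(0)$ by at most $O(j)$ plus exponential tails, by Corollary~\ref{cor:heigh_diff_annuli} applied to alternating circuits. Summing $j(1-c)^j$ over scales then gives the bound with a factor $\sum_{j>J}j(1-c)^j$. The remaining microscopic scales below $d$ are handled by choosing $\gamma$ among the macroscopic annuli only: since $x_0$ lies inside $\gamma$ at every scale, the fluctuation of $h(x_0)$ inside $\gamma$ is already absorbed into the exact identity of Step 2. Finally, the factors $h(x_i)^2$ for $i\ge1$ are bounded by Lemma~\ref{lem:htilde}, because the $x_i$ are at distance at least $r$ from $x_0$ and from one another. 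Choosing $J$, and hence $d$, so that the total error is at most $\eps$ concludes the proof.
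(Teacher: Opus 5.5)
Your skeleton matches the paper's. You find, with probability $1-o_{d\to0}(1)$, an $\omega$-circuit at height $\zeta(0)$ around $x_0$ inside $\Lambda_r(x_0)$ (Proposition~\ref{prop:constant_by_parts_bc_boundary_RSW}). On that event the conditional mean of $h(x_0)$ is exactly $\zeta(0)$, and the complement is controlled by Cauchy--Schwarz.

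\textbf{Main gap: Step~2.} You treat it as routine, but it is exactly what the paper identifies as the main difficulty. You assert that Corollary~\ref{prop:RSW_quads} and the monotonicity of Proposition~\ref{prop:FKG_omega+} give a crossing of each half-annulus with probability at least $c$, uniformly over the outside.
\begin{itemize}
\item Proposition~\ref{prop:FKG_omega+} covers only $\mu^{\pm}_\calD$, $\mu_{\bbZ^2}$, and conditionings on $h_4$ on even faces and on $\omega$.
\item It says nothing about $\mu^\zeta_\calD$ when $\zeta$ has odd arcs. These fix heights on odd faces and cannot be encoded by $(h_4,\omega)$ on even faces alone, since a dual path does not give the boundary vertices their correct weight.
\item Under such boundary conditions FKG for $\omega^+$ may fail, so there is no comparison with extremal $\omega$-boundary conditions.
\item Being at distance $r$ from the odd arcs does not help, because the odd-arc constraint is non-local. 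It amounts to a weight $2^{-m}$, with $m$ the number of boundary $\omega^*$-clusters, times the indicator that arcs of different heights are not $\omega^*$-connected, and these clusters can reach $\Lambda_r(x_0)$.
\end{itemize}
The ingredient you are missing is Lemma~\ref{lem:constant_by_parts_bc_decomposition}. Averaging over formal outer even heights $h^\even_\partial$ turns the odd arcs into $\omega^*$-arcs with the correct weights. This produces measures $\mu^{\zeta(h^\even_\partial)}_\calD$ that do satisfy FKG and RSW. $\mu^\zeta_\calD$ is then their mixture conditioned on the event $\calE$ (no two arcs of different heights are connected) and reweighted by $X\in[2^{-k},1]$. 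One also needs $\mu^{\zeta(h^\even_\partial)}_\calD[\calE]\ge c(\eta)$ (Lemma~\ref{lem:diff_height_arcs_dont_connect}), which yields $\mu^\zeta_\calD[\calC^c]\le 2^k c(\eta)^{-1}C(d/r)^c$.

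This is where the hypothesis on $f$ is really used, together with $k\le 2\pi/\eta$. It gives no control of local conformal lengths at scale $d$. None is needed there: near $x_0$ the paper uses circuits in annuli, with FKG and RSW for $\omega$-boundary conditions.

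\textbf{Second gap: Step~3 fails as written.} Lemma~\ref{lem:htilde} bounds second moments of the conditional expectations $\widetilde h_{r_i}(x_i)$, not of $h(x_i)$. For every $i$, not only $i=0$, $\mu^\zeta_\calD[h(x_i)^2]$ is of order $\log(\mathrm{dist}(x_i,\partial\calD)/\delta)$. So your Cauchy--Schwarz bound with the raw product $\prod_i h(x_i)^2$ is not uniform in $\delta$.

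Your nesting argument only controls macroscopic scales. On the event that every half-circuit fails, the fluctuations of $h(x_0)$ below scale $\mathrm{dist}(x_0,\partial\calD)$ are untouched.

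The fix, as in the paper, is to first condition on a $\sigma$-algebra $\calF$ with respect to which the failure event is measurable. Take $\calF$ to be the configuration outside small disjoint boxes around the $x_i$, with the box around $x_0$ of size comparable to its distance to $\partial\calD$. Then $\mu^\zeta_\calD[h(x_0)\cdots h(x_\ell)\ind_{\calC^c}]=\mu^\zeta_\calD\big[\mu^\zeta_\calD[h(x_0)\cdots h(x_\ell)\mid\calF]\,\ind_{\calC^c}\big]$. The inner conditional expectation factorises by the spatial Markov property into conditionally averaged heights, whose second moment Lemma~\ref{lem:htilde} bounds uniformly in $\delta$.
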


\begin{remark}Let us point out that the bound on the distance between the $f(a^{(i)})$ also imposes a bound on the number $k$ of boundary points, notably $k\le \frac{2\pi}\eta$.\end{remark}

The main challenge in proving this boundary result is extending our RSW estimates to the boundary. This is the content of the following result. Note that in this result, we assume that the arc $(a^{(0)}a^{(1)})$ is formed by a path of $\omega$. 

\begin{proposition}[Boundary RSW]
\label{prop:constant_by_parts_bc_boundary_RSW}
    Take $r,\eta,d>0$ as above. There is some $c>0$ such that the following holds. Let $\calD\subset\delta\bbZ^2\cap \Lambda_{1/\eta}$ be a discrete domain with boundary conditions $\zeta$ as described above. Assume that:
    \begin{itemize}
        \item There exists some conformal map $f:\calD\to\bbD$ such that the $f(a^{(i)})$ are at least at distance $\eta$ from each other.
        \item The point $x_0$ is at distance at most $d$ from the boundary arc $(a^{(0)}a^{(1)})$, but at distance at least $r$ from the rest of the boundary $\partial\calD\backslash(a^{(0)}a^{(1)})$.
    \end{itemize}
    Then, there is some $C=C(\eta)$ depending only on $\eta$ such that
    \begin{equation*}
        \mu^\zeta_{\calD}[\text{There is no circuit of $\omega$ inside $\Lambda_r(x_0)$ of height $\zeta(0)$ surrounding $x_0$}]\le C(d/r)^c.    \end{equation*}
\end{proposition}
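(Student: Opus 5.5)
The plan is a multi-scale argument in the dyadic annuli $\Lambda_{2^{j+1}d}(x_0)\setminus\Lambda_{2^j d}(x_0)$, $0\le j\le J:=\lfloor\log_2(r/d)\rfloor-2$. In each such annulus near the arc $(a^{(0)}a^{(1)})$ we want a uniformly positive conditional probability of producing a circuit of height $\zeta(0)$ surrounding $x_0$. The circuit is allowed to use the boundary arc itself, which carries height $\zeta(0)$ and, by the piecewise even/odd structure, behaves as a path of $\omega$. Iterating over the $J\asymp\log_2(r/d)$ scales then gives a bound of the form $(1-c_0)^J\le C(d/r)^c$.

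Fix a scale $j$ and write $\Lambda^{(j)}:=\Lambda_{2^{j+1}d}(x_0)$. Since $x_0$ lies within distance $d$ of the arc and at distance at least $r$ from the rest of $\partial\calD$, the part of $\partial\calD$ inside $\Lambda^{(j)}$ belongs entirely to $(a^{(0)}a^{(1)})$. So $\Lambda^{(j)}\cap\calD$ decomposes into pieces whose boundary consists of boundary arcs of height $\zeta(0)$ and crosscuts in $\calD$. The connected component of $\partial\calD\cap\Lambda^{(j)}$ closest to $x_0$ separates the annulus. It therefore suffices to find, in $\calD\cap(\Lambda^{(j)}\setminus\Lambda_{2^jd}(x_0))$, a path of $\omega$ at height $\zeta(0)$ joining the boundary arc to itself and separating $x_0$ from $\partial\Lambda^{(j)}$. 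Together with the arc, such a path closes a circuit around $x_0$ at height $\zeta(0)$.

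To get this path we explore from outside to inside. We condition on the configuration outside $\Lambda^{(j)}$, or more precisely on the flower domain explored from $\partial\Lambda^{(j)}$ inward, together with the configurations from coarser scales. We then work in the resulting region. Its boundary consists of the boundary arc, with height $\zeta(0)$, and petals whose oscillation is controlled, with exponential tails, by Lemma~\ref{lemma:arm_event_bound} applied to $\mu_\calD^\zeta$. On the event that this oscillation is at most some $M$, we can apply the FKG inequality for $|h-\zeta(0)|$ together with a boundary version of \cite[Thm.~1.4]{DumKarManOul22}. These give, with probability at least $c_1(M)$, a crossing of the half-annulus along which $h=\zeta(0)$ is forced next to the arc. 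Starting from this crossing, we use the monotonicity in boundary conditions of Proposition~\ref{prop:FKG_omega+} and the quad RSW bound of Corollary~\ref{prop:RSW_quads}. Both should be applied to the quad formed by the half-annulus, with its two "feet" on the boundary arc; its conformal length is bounded uniformly in $j$. This yields an $\omega$-crossing from arc to arc at height $\zeta(0)$ with probability at least $c_2$. When the oscillation exceeds $M$, we simply declare that scale a failure; by choosing $M$ large, the probability of this is at most $c_1c_2/2$. The circuits are explored scale by scale from the outside in, so the failure events are conditionally dominated by independent ones, and multiplying the conditional failure probabilities over $j$ gives the bound.

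The main obstacle is the boundary RSW step. Positive association for $\omega$ can fail under height boundary conditions, and the boundary arc is a height condition, not a $\pm$ condition for $\omega$. The first thing to try is to use the piecewise even/odd structure: the faces adjacent to $(a^{(0)}a^{(1)})$ all have the same colour and the same height $\zeta(0)$. Then, after a shift making $\zeta(0)\in4\bbZ$ (or $4\bbZ+2$, swapping $\omega^\pm$), the arc acts as a $+$ (resp.\ $-$) arc for $\omega$, via Lemma~\ref{lem:measures} in its arc version. In that case the question reduces to RSW for $\mu^{\rm alt}$-type measures on boundary quads, which is the content of Corollary~\ref{prop:RSW_quads}. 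The remaining difficulty is that the petals of the flower domain carry mixed $\omega^\pm$ information; this is handled by the crude oscillation bound together with the $|h|$ circuit argument, as in the proof of Lemma~\ref{lemma:double_circ}. The constant $C=C(\eta)$ accounts for the first few scales, where the map $f$ and the proximity of other marked points influence the geometry.
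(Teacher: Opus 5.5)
There is a genuine gap at the step where you invoke Proposition~\ref{prop:FKG_omega+} and Corollary~\ref{prop:RSW_quads}. The measure $\mu^\zeta_\calD$ fixes odd heights on the odd arcs and different heights on the other even arcs. Your outside-in exploration (petals, fattened flower domains, or your intermediate crossing) also reveals odd heights. Monotonicity for $\omega^+$ is only available when you condition on even heights and $\omega$. So you cannot compare the conditional law near $(a^{(0)}a^{(1)})$ with $\mu^{\rm alt}$ until that information has been screened off by percolation boundary conditions.

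The screening trick of Lemma~\ref{lemma:double_circ} does not transfer to the boundary. In the bulk, the forced $\omega^+$ circuits are closed loops that can be explored from inside and outside. Near the arc, a screen must be a crosscut ending on the arc. A path with $|h-\zeta(0)|\ge M+4$ cannot end on the arc, which has height $\zeta(0)$. The $\omega^\pm$ level paths it forces lie at heights strictly between $\zeta(0)+M$ and $\zeta(0)+M+4$ (in absolute offset), so they cannot end on the arc either. They close up around an island rather than cutting $x_0$ off from the petals. The only $\omega$-paths that cut the half-annulus from arc to arc are those connected to the arc, hence at height $\zeta(0)$. That is exactly the event you want to produce, so the step ``a crossing along which $h=\zeta(0)$ is forced, then monotonicity'' is circular. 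Dual screens do not help: the Markov property across closed edges requires the odd heights on both sides. The ``boundary version'' of the $|h|$-circuit theorem you cite is also not available. Note too that your argument never really uses the $\eta$-separation of the $f(a^{(i)})$, since all marked points are at distance at least $r$ from $x_0$. In the paper, that separation is exactly what controls the global obstruction.

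The missing ingredient is the paper's global change of measure. Lemma~\ref{lem:constant_by_parts_bc_decomposition} sums over formal even heights $h^\even_\partial$ placed outside the odd arcs. This writes $\mu^{\bar\zeta}_\calD$ as a mixture of measures $\mu^{\zeta(h^\even_\partial)}_\calD$ whose boundary conditions are purely in terms of $(h_4^\even,\omega)$, so they satisfy FKG. It then recovers $\mu^\zeta_\calD=\mu^{\bar\zeta}_\calD[\,\cdot\,X\mid\calE]/\mu^{\bar\zeta}_\calD[X\mid\calE]$. Here $X=2^{-m}\in[2^{-k},1]$ comes from Proposition~\ref{prop:odd_spins_uniform}, and $\calE$ is the event that arcs of different heights are not connected.

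Under each $\mu^{\zeta(h^\even_\partial)}_\calD$, the constant-boundary argument gives the bound $C(d/r)^c$. That argument is FKG plus RSW in half-annuli with feet on the arc, iterated over scales, which is essentially your multi-scale scheme. Lemma~\ref{lem:diff_height_arcs_dont_connect} gives $\mu^{\zeta(h^\even_\partial)}_\calD[\calE]\ge c(\eta)$, using the separation of the marked points. Together with $k\le 2\pi/\eta$, this yields $C(\eta)(d/r)^c$. Your scale-by-scale iteration works once it is run under these FKG measures rather than under $\mu^\zeta_\calD$ directly.
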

\eav{HUGO: shouldn't we add that the circuit is at a distance of x to have a good L2 norm}

Before proving this proposition, let us explain how it implies Proposition \ref{prop:boundary_values_of_h}.

\begin{proof}[Proof of Proposition \ref{prop:boundary_values_of_h}]
Without loss of generality, assume that $\zeta(0)=0$. By Proposition \ref{prop:constant_by_parts_bc_boundary_RSW}, there is an event $\calC$ of probability at least $1-C(\eta)(d/r)^c$ for some $C(\eta),c>0$ such that under this event, $x_0$ is surrounded by a circuit of $\omega$ inside $\Lambda_r(x_0)$ of height $0$. If this occurs, conditionally on everything outside, the expectation of $h(x_0)$ is identically zero, so
\begin{equation*}
    \mu^\zeta_{\calD}[h(x_0)h(x_1)\cdots h(x_\ell)\ind_{\calC}]=0.
\end{equation*}
Then, by Lemma \ref{lem:htilde} with $r_0$ the distance from $x_0$ to the circuit found in $\calC$ and by Cauchy-Schwarz, there is a constant $C'(r,k)$ such that
\begin{align}
    \Big|\mu^\zeta_{\calD}[h(x_0)h(x_1)\cdots h(x_\ell)\ind_{\calC^c}]\Big|&\le \mu^\zeta_{\calD}[\mu^\zeta_{\calD}[h(x_0)h(x_1)\cdots h(x_\ell)|\calF]^2]^{1/2}\mu^{\zeta}_{\calD}[\calC^c]^{1/2}\\
    &\le C'(r,k)C(\eta)^{1/2}(d/r)^{c/2},
\end{align}
where the $\sigma$-algebra $\calF$ is the set of heights necessary to determine the event $\calC$.
The right-hand side can be taken arbitrarily small by taking $d$ small enough in terms of $r$ and $\eta$, using the fact that $k$ is bounded in terms of $\eta$.
\end{proof}

The section is organised as follows. In the next section, we derive Proposition \ref{prop:constant_by_parts_bc_boundary_RSW} for constant boundary conditions (i.e.~$k=0$). Then, we explain how to interpret constant by parts boundary conditions with the help of the percolation process. Finally, we deduce Proposition \ref{prop:constant_by_parts_bc_boundary_RSW} in this setting.

\subsection{Boundary RSW for constant boundary conditions}

In the case of constant boundary conditions, the result is a direct consequence of RSW.

\begin{proof}[Proof of Proposition \ref{prop:constant_by_parts_bc_boundary_RSW} for constant boundary conditions]
    Assume that the boundary conditions $\zeta$ are constant, then they are given by a circuit of $\omega$ of height $\zeta(0)$. By FKG and RSW\eav{HUGO: should we add a precise ref?} for the measure $\mu^{\zeta(0)}_\calD$, there are some universal $c,C>0$ such that with probability at least $1-C(d/r)^c$, there is a circuit of $\omega$ inside $\Lambda_r(x_0)\backslash\Lambda_d(x_0)$. Such a circuit intersects the boundary $\partial\calD$ so it has height $\zeta(0)$.
\end{proof}

When $\zeta$ is not constant, the apparent lack of FKG makes this result harder to prove. We explain how to circumvent this in the next section. In this case, since the boundary conditions $\zeta(i)$ have alternating parity, up to adding a global constant $1$ to the heights, we will henceforth assume that each $\zeta(i)$ has the same parity as its index $i$.

\subsection{Constant by parts boundary conditions from the percolation}

The goal of this section is to explain how to interpret boundary conditions which are constant by parts as boundary conditions for the even height and percolation $(h_4^\even,\omega)$, in order to keep some estimates from RSW and FKG all the way to the boundary. There are essentially two hurdles to overcome. The first one is that although a path of $\omega$ gives weight $1$ to boundary vertices of type $\mathbf c$, this is not the case for paths of $\omega^\star$. Therefore, on the odd arcs $(a^{(i)}a^{(i+1)})$, we cannot simply express the boundary conditions using $(h_4^\even,\omega)$. Secondly, boundary arcs of different heights must not connect, and the odd boundary clusters must have certain fixed spins, which introduces additional factors.

We start with the first issue. Let $\calD$ be a domain with boundary conditions $\zeta$ which are constant by parts. For each edge $e$ between odd vertices on the boundary, let $h_e^\even\in\{0,2\}$ be a formal height modulo $4$ associated to edge $e$, denote by $h^\even_\partial$ the tuple of all $h_e^\even$. Given a value of $h^\even_\partial$, let $\zeta(h^\even_\partial)$ be the boundary conditions defined as follows (see Figure \ref{fig:cst_by_part_boundary}, Left):
\begin{enumerate}
    \item The arcs $(a^{(i)}a^{(i+1)})$ with even $i$ are formed by a path of $\omega$, with height $h_4^\even \equiv \zeta(i)\pmod 4$.
    \item The arcs $(a^{(i)}a^{(i+1)})$ with odd $i$ are formed by a path of $\omega^\star$, and outside of each edge $e$ of the boundary $(a^{(i)}a^{(i+1)})$ lies the height $h_e^\even$ modulo $4$, which interacts with the height on the inner side of $e$.
\end{enumerate}

Let us define $\mu^{\bar\zeta}_\calD$ to be the six-vertex measure conditioned on having the correct even height $h_4^\even$ on the even arcs $(a^{(i)}a^{(i+1)})$, and being of constant odd height $h^\odd$ on each of the odd arcs $(a^{(i)}a^{(i+1)})$, and with boundary weight $\mathbf c_\partial=1$. The only difference with $\mu^\zeta_\calD$ is that we do not fix the odd heights on the boundary. Let $\calE$ be the event that no two arcs $(a^{(i)}a^{(i+1)})$ of different heights are connected. Then, one can connect these different measures in the following way.

\begin{lemma}
\label{lem:constant_by_parts_bc_decomposition}
    There are values $Z_{\calD}^{\zeta(h^\even_\partial)}\ge 0$ such that the following equality of measures holds
    \begin{equation*}
        \mu_\calD^{\bar\zeta}[\bullet]=\frac1{Z_{\calD}^{\bar\zeta}}\sum_{h^\even_\partial}Z_{\calD}^{\zeta(h^\even_\partial)}\mu^{\zeta(h^\even_\partial)}_{\calD}[\bullet],
    \end{equation*}
    where $Z_\calD^{\bar\zeta}:=\sum_{h^\even_\partial}Z_{\calD}^{\zeta(h^\even_\partial)}$ is such that both sides are probability measures. Furthermore, there exists a random variable $X\in [2^{-k},1]$ such that
    \begin{equation*}
        \mu_\calD^\zeta[\bullet]=\frac{\mu^{\bar\zeta}_{\calD}[\bullet\cdot X|\calE]}{\mu^{\bar\zeta}_{\calD}[X|\calE]}.
    \end{equation*}
\end{lemma}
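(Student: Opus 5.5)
The plan is to prove both identities by a direct computation on configuration weights. Everything is a finite sum, so no limits are needed. I expand the weight of each configuration of the six-vertex model on $\calD$ under the two relevant boundary data, keeping the coupling with $\omega$ of Section~\ref{sec:review_percolation_representation}.

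\textbf{First identity.} Fix a configuration of heights on $\calD$ that is compatible with $\bar\zeta$. This means:
\begin{itemize}
\item the even heights on the even arcs are correct;
\item each odd arc carries a constant odd height;
\item the boundary weight is $\mathbf c_\partial=1$.
\end{itemize}
I then sum its weight over the formal exterior heights $h^\even_e\in\{0,2\}$ attached to the boundary edges of the odd arcs. Each $h^\even_e$ interacts only with the heights at the single boundary vertex adjacent to $e$.

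Define $Z_\calD^{\zeta(h^\even_\partial)}$ as the total weight of configurations compatible with $\zeta(h^\even_\partial)$. Define $\mu_\calD^{\zeta(h^\even_\partial)}$ as the normalised restriction. The first identity is then the statement that marginalising a joint measure on $(h^\even_\partial,\text{configuration})$ over $h^\even_\partial$ gives $\mu^{\bar\zeta}_\calD$.

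What has to be checked is that, for fixed interior configuration, summing the local vertex weights over $h^\even_e$ reproduces exactly the boundary weight used in $\mu^{\bar\zeta}_\calD$. This is a one-vertex computation, done with the same bookkeeping as in the diagonal transfer matrix:
\begin{itemize}
\item weight $1$ for an $(\omega^+)^*$-edge;
\item weight $\mathbf c-1$ for an $\omega$-edge.
\end{itemize}
If the two sides differ only by a constant per boundary edge, that constant is global and cancels in the normalisation, so the identity still holds.

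\textbf{Second identity.} $\mu^\zeta_\calD$ fixes the odd boundary heights to the prescribed values $\zeta(i)$. By contrast, $\mu^{\bar\zeta}_\calD$ only fixes them to be constant odd on each odd arc, and it also allows arcs of different heights to be joined.
\begin{itemize}
\item Conditioning on $\calE$ removes the configurations in which two arcs of different heights are connected. Such configurations are incompatible with piecewise constant data that has jumps of $\pm1$.
\item On $\calE$, I use Proposition~\ref{prop:odd_spins_uniform} (valid for these measures by Lemma~\ref{lem:measures}). Given $(h_4^\even,\omega)$, the odd heights are i.i.d. uniform $\pm1$ on each cluster of $\omega^*$.
\item The odd arcs are contained in boundary clusters of $\omega^*$, and there are at most $k$ of them. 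Requiring each such cluster to carry the prescribed value is therefore an event of conditional probability $2^{-N}$, where $N\le k$ is the number of distinct clusters meeting the odd arcs.
\end{itemize}
I take $X:=2^{-N}$, or more precisely $X=\mu^{\bar\zeta}_\calD[\text{odd boundary spins correct}\mid h_4^\even,\omega]$, times the indicator that the required spins are consistent. Consistency is exactly what $\calE$ should guarantee. Bayes' formula then gives $\mu^\zeta_\calD[\bullet]=\mu^{\bar\zeta}_\calD[\bullet\, X\mid\calE]/\mu^{\bar\zeta}_\calD[X\mid\calE]$. Since $1\le N\le k$, we get $X\in[2^{-k},1]$.

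\textbf{Main obstacle.} I expect the hardest step to be the single-vertex check in the first identity. One has to verify carefully that summing over the exterior formal heights $h^\even_e$ turns the $\mathbf c$-type boundary weight into weight $1$ regardless of whether the adjacent edge is in $\omega$ or $\omega^*$. Otherwise a configuration-dependent factor would appear and would have to be absorbed into $X$.

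I would first try to enumerate the local configurations at a boundary vertex of an odd arc directly, since there are only a handful of cases. If a configuration-dependent factor does survive, it should be a product over the boundary clusters. I would then fold it into $X$ and use the bound on the number of arcs to keep $X$ bounded above and below.
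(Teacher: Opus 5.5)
Your proposal follows the paper's proof: the first identity comes from summing the odd-arc boundary vertex weights over $h^\even_e$ and finding a configuration-independent factor per edge; the second comes from writing $\mu^\zeta_\calD=\mu^{\bar\zeta}_\calD[\,\cdot\,|\,\text{correct boundary odd heights}]$, noting that $\calE$ is automatic there, and using Proposition~\ref{prop:odd_spins_uniform} to get $X=2^{-m}$, with $m\le k$ the number of boundary $\omega^\star$-clusters. The check you flag as the main obstacle is immediate, since on an odd arc the crossing even edge is always closed: exactly one of the two values $h^\even_e\in\{0,2\}$ equals the inner even height $h_4(e)$ and gives a $\mathbf c$-type vertex of weight $\mathbf c-1$, the other gives weight $1$, and so $\sum_{\eps\in\{0,2\}}(\mathbf c-1)^{\ind(h_4(e)=\eps)}=\mathbf c$ for every configuration (hence your fallback is never needed, and absorbing a factor into $X$ could not have rescued the first identity anyway, since it contains no $X$).
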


\begin{proof}[Proof of Lemma \ref{lem:constant_by_parts_bc_decomposition}]
Let us start by proving the first equality of measures, where the $Z_\calD^{\zeta(h^\even_\partial)}$ are the partition functions of the corresponding measures, so that $Z_\calD^{\zeta(h^\even_\partial)}\mu^{\zeta(h^\even_\partial)}_{\calD}[\bullet]$ is the weight of a configuration. The only difference between the measures on both sides is the weight on the boundary. For the height $h_4$, on the left-hand side, one has 
\begin{equation*}
    \mu_\calD^{\bar\zeta}[h_4]\propto\mathbf c^{N(h_4)},
\end{equation*}
where $N(h_4)$ denotes the number of vertices of type $5$ or $6$ in $h$ (which is measurable in terms of $h_4$).
On the right-hand side, for each edge $e$ of the boundary, writing $h_4(e)$ for the height $h_4$ on the face inside the domain and adjacent to $e$, we have\eav{HUGO: I like the notation $\ind()$. I would use that all over the paper if you agree}
\begin{align*}
    \frac1{Z_{\calD}^{\bar\zeta}}\sum_{h^\even_\partial}Z_{\calD}^{\zeta(h^\even_\partial)}\mu^{\zeta(h^\even_\partial)}_{\calD}[h_4]&=\frac1{Z_{\calD}^{\bar\zeta}}\sum_{h^\even_\partial}\mathbf c^{N(h_4)}\prod_{e\in\partial_{\text{odd}}^e\calD}(\mathbf c-1)^{\ind(h_4(e)=h^\even_\partial(e))}\\
    &=\frac{\mathbf c^{N(h_4)}}{Z_{\calD}^{\bar\zeta}}\prod_{e\in\partial_{\text{odd}}^e\calD}\sum_{\eps\in\{0,2\}}(\mathbf c-1)^{\ind(h_4(e)=\eps)}\\
    &=\mathbf c^{N(h_4)}\cdot\frac{\mathbf c^{|\partial_{\text{odd}}^e\calD|}}{Z_{\calD}^{\bar\zeta}}.
\end{align*}
where $|\partial^e_{\text{odd}}\calD|$ is the total number of edges $e$ on the odd boundary arcs of $\calD$. Since both sides are probability measures and are proportional, they must be equal.

Let us now move to the proof of the second equality of measures. The only difference between the measures $\mu^\zeta_\calD$ and $\mu^{\bar\zeta}_\calD$ is that the former requires certain heights on the boundary. Let $\Sigma_\partial$ be the event that the heights on the boundary are those given by $\zeta$, then $\mu^\zeta_\calD$ is simply the measure $\mu^{\bar\zeta}_\calD[\cdot|\Sigma_\partial]$. Under this measure, notice that the event $\calE$ is automatically satisfied (since clusters of $\omega^\star$ have constant spin). Conversely, given a configuration $(h_4^\even,\omega)$ satisfying $\calE$, Proposition \ref{prop:odd_spins_uniform} ensures that each boundary cluster of $\omega^\star$ takes the correct spin with probability $\frac12$. Therefore, if $m\le k$ is the (random) number of boundary clusters of $\omega^\star$, then taking $X=2^{-m}$ gives the desired result. See Figure \ref{fig:cst_by_part_boundary} on the right for an illustration of the event $\calE$.
\end{proof}

\begin{figure}
\centering
\includegraphics[width=0.3\textwidth]{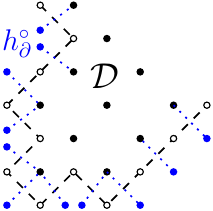}
\includegraphics[width=0.65\textwidth]{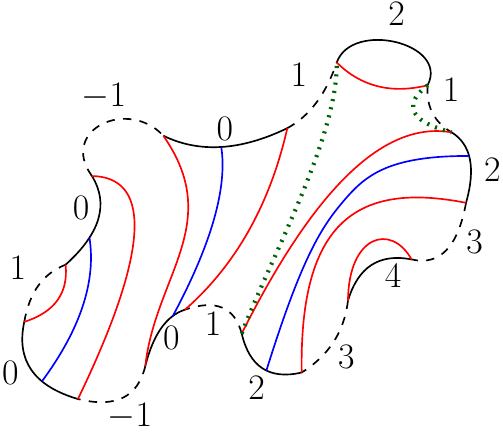}
\caption{\underline{Left:}In order to imulate boundary conditions with boundary weight $1$ on an odd path while only conditioning on $(h^\even,\omega)$, one can do the following: take a formal height $h^\even_\partial$ modulo $4$ outside $\calD$ for each medial medial vertex, and average out over all values of the spins. Then, for any configuration inside, each boundary vertex receives a weight $\frac{1+\mathbf c}2$. \newline \underline{Right:} In red, an example of interfaces of $\omega$ which satisfy $\calE$. The event $\calE$ corresponds to the existence of certain paths of $\omega$ between even parts of the boundary (in blue). Notice that conditioning on these primal paths does not necessarily determine the interfaces, since for example two of the interfaces in red can be rewired to become the dotted interfaces in green, while still being coherent with the blue paths. This makes the number of boundary clusters of $\omega^\star$ go from $m=4$ to $m=5$.}
\label{fig:cst_by_part_boundary}
\end{figure}

\subsection{Boundary RSW for constant by parts boundary conditions}

Before proving Proposition \ref{prop:constant_by_parts_bc_boundary_RSW}, let us first show the following result.

\begin{lemma}
\label{lem:diff_height_arcs_dont_connect}
    For all $\eta>0$, there is a constant $c(\eta)>0$ such that the following holds. Let $\calD$ be a domain with boundary conditions $\zeta$ as in the start of the section, and assume that there is a conformal map $f:\calD\to\bbD$ such that the $f(a^{(i)})$ are at distance at least $\eta$ from each other. Then, for all formal boundary heights $h^\even_\partial$, one has
    \begin{equation*}
        \mu^{\zeta(h^\even_\partial)}_\calD[\calE]\ge c(\eta).
    \end{equation*}
\end{lemma}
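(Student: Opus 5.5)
The plan is to reduce $\calE$ to a finite intersection of increasing events for $\omega$ (equivalently, for the pair $(\omega^+,1-\omega^-)$), and then combine the FKG property of Proposition~\ref{prop:FKG_omega+} with the quad RSW bound of Corollary~\ref{prop:RSW_quads}. Under $\mu^{\zeta(h^\even_\partial)}_\calD$, the even arcs $(a^{(i)}a^{(i+1)})$, $i$ even, are wired paths of $\omega$ carrying prescribed heights $\zeta(i)$. The odd arcs are dual paths whose outer formal heights $h^\even_\partial$ only modify boundary weights, so the marginal of $\omega$ remains monotone. Two even arcs of different heights can never be connected by $\omega$, since clusters have constant height. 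The obstruction to $\calE$ is therefore a \emph{dual} configuration that separates the odd arcs wrongly: some odd-arc cluster of $\omega^*$ would touch two odd arcs whose heights $\zeta(i)$ differ. Hence $\calE$ holds as soon as every odd arc $(a^{(i)}a^{(i+1)})$ is cut off from the other odd arcs by primal $\omega$-paths. Concretely, it suffices that each even arc $(a^{(j)}a^{(j+1)})$ is connected by $\omega$ (inside $\calD$) to the two neighbouring even arcs $(a^{(j-2)}a^{(j-1)})$ and $(a^{(j+2)}a^{(j+3)})$ carrying the same height. This would need checking carefully, in particular whether such a connection is even allowed when the heights differ by $4\bbZ$. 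If it is not, I would instead require an $\omega$-path from each even arc that encloses the adjacent odd arc together with small neighbourhoods of its endpoints, so that the odd arc's dual cluster is isolated.

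Next, I would construct these crossings in quads of bounded conformal length. Map $\calD$ to $\bbD$ by $f$. Since the points $f(a^{(i)})$ are $\eta$-separated, $k\le 2\pi/\eta$, and around each odd arc I can pick a quad $\calQ_i$ whose two "long" sides lie on the two adjacent even arcs and whose other sides are crosscuts in $\calD$ near the endpoints. Its conformal length is bounded by some $\rho(\eta)$, by conformal invariance of extremal length. A crossing of $\calQ_i$ in $\omega$ between the two even arcs, passing around the odd arc, isolates that odd arc. The event $\calE$ then contains the intersection of these $k$ increasing events.

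The probability estimate comes from monotonicity and FKG. Each crossing event is increasing in $\omega^+$. The boundary conditions of $\mu^{\zeta(h^\even_\partial)}_\calD$ restricted to $\calQ_i$ dominate the alternating boundary conditions $\mu^{\rm alt}_{\calQ_i}$: the domain boundary is wired on the two relevant arcs, and the crosscut sides are at worst free or dual. I would get this by exploring, or by conditioning on everything outside $\calQ_i$ and invoking the comparison in Proposition~\ref{prop:FKG_omega+}. Corollary~\ref{prop:RSW_quads} then gives each crossing probability at least $c(\rho(\eta))$, and FKG bounds the intersection below by $c(\rho)^{k}\ge c(\rho)^{2\pi/\eta}=:c(\eta)$.

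The main obstacle is justifying FKG and the boundary comparison under the mixed measure $\mu^{\zeta(h^\even_\partial)}_\calD$. Arcs of $\omega^+$ and $\omega^-$ (heights $0$ versus $2$ mod $4$) coexist, and the odd arcs carry the $h^\even_\partial$-dependent weights $(\mathbf c-1)^{\ind}$. I would first try to check that these boundary weights are monotone local factors, so that the FKG lattice condition for $(\omega^+,1-\omega^-)$ survives. Where an arc is of the minus type, I would work with $\omega^-$ crossings, by the $h\mapsto h+2$ symmetry, separately for each quad.
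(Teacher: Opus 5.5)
Your overall strategy (crossings in quads of bounded conformal length, Corollary~\ref{prop:RSW_quads}, monotonicity, $k\le 2\pi/\eta$) is the same as the paper's. The gap is in which crossings you ask for: the event you reduce $\calE$ to can have probability zero. Suppose two odd arcs $B,B'$ are not $\omega^*$-connected. By planar duality, there is then an $\omega$-path from an even arc on one of the two boundary portions between $B$ and $B'$ to an even arc on the other portion. Since $\omega$-clusters have constant height (and $\omega^+$, $\omega^-$ edges never touch), these two even arcs must carry the same height. The even arcs adjacent to an odd arc $(a^{(i)}a^{(i+1)})$ carry $\zeta(i-1),\zeta(i+1)\in\{\zeta(i)\pm1\}$, and these differ whenever $\zeta$ makes a monotone step there. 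Take $\zeta=(0,1,2,1)$. The two odd arcs are then always $\omega^*$-connected, which is harmless because they have the same height. Yet your event (every odd arc cut off from all others) is impossible there, and so is your fallback of enclosing each odd arc by one $\omega$-path from an adjacent even arc. If instead you read ``neighbouring even arcs carrying the same height'' as a condition, the event becomes too weak. For $\zeta=(0,1,2,3,4,3,2,1)$ all neighbouring even arcs have different heights, so you require nothing. But $\calE$ forces the non-adjacent height-$2$ arcs $(a^{(2)}a^{(3)})$ and $(a^{(6)}a^{(7)})$ to be joined by $\omega^-$; otherwise the height-$1$ odd arcs are dual-connected to the height-$3$ ones.

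The missing ingredient is the combinatorial step that selects long-range connections between equal-height even arcs and shows they imply $\calE$. The paper takes a maximal non-crossing partition $\calP$ of the even arcs into parts of equal height. For each part it builds rectangles of bounded conformal length joining its arcs, with rectangles of different parts disjoint; this uses planarity and the $\eta$-separation of the $f(a^{(i)})$. It then proves that crossing all of them implies $\calE$. Suppose odd arcs $B,B'$ of different heights were not separated by any part. On each of the two boundary portions between them, every intermediate height is realised by an outermost part. Maximality of $\calP$ then forces the two outermost parts of a given even intermediate height to coincide, which is a contradiction.

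Your final FKG step also fails as stated. An $\omega^-$-crossing is decreasing for $(\omega^+,1-\omega^-)$, so crossings of different types are not positively correlated. FKG can only be used within one part, where all crossings are of the same type. Different parts must be handled by conditioning on the complement of their disjoint regions, which is why disjointness across parts is needed.
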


\begin{proof}
Consider the arcs $(a^{(i)}a^{(i+1)})$ with even $i$, and consider a maximal plane partition $\calP$ of these arcs, in such a way that each part of the partition contains only arcs with the same boundary value $\zeta(i)$. By planar, we mean that if even arcs $A_1,A_2,A_3,A_4$ are in this order around the boundary, then it cannot be that $A_1,A_3$ belong to some part $P\in\calP$ and arcs $A_2,A_4$ belong to part $P'\ne P$. By maximal, we mean according to the inclusion order of partitions, there may be multiple such maximal partitions.

Now, for each pair $A,A'$ of even arcs in the same part $P\in\calP$, construct a topological rectangle $\calR_{A,A'}\subset \bbD$, in such a way that
\begin{itemize}
    \item Each topological rectangle $\calR_{A,A'}$ has opposite sides $f(A)$ and $f(A')$, with conformal length bounded from above by some constant $C(\eta)$ depending only on $\eta$,
    \item Two topological rectangles corresponding to different parts do not intersect.
\end{itemize} 
The existence of these rectangles is ensured by the condition on the $f(a^{(i)})$, and by the planarity of the partition. Now, consider the event $\calE'$ that each conformal rectangle $f^{-1}(\calR_{A,A'})$ is crossed by a path of $\omega$. For each part $P\in\calP$, conditionally on the even spin and percolation outside of $\cup_{A,A'\in P}f^{-1}(\calR_{A,A'})$, Corollary \ref{prop:RSW_quads} ensures that each rectangle $f^{-1}(\calR_{A,A'})$ is crossed by a path of $\omega$ with positive probability $c(\eta)$, and since all of the $A\in P$ have the same boundary value, all these paths of $\omega$ have the same spin, so Proposition \ref{prop:FKG_omega+} ensures that all arcs of $P$ are connected with positive probability. See Figure \ref{fig:constant_by_parts_bc_conditioning} on the right for an illustration.

Now, we claim that the event $\calE'$ is contained in the event $\calE$, which would conclude the proof. To prove this, we need to show that if $B$ and $B'$ are two odd arcs with different boundary values, then there must be two even arcs $A$ and $A'$ in the same part of $\calP$ such that $A,B,A',B'$ are in this order around the boundary, see Figure \ref{fig:constant_by_parts_bc_conditioning}, Left. Assume that this is not the case. We split the boundary $\calD$ at $B$ and $B'$ into two parts $\calD_1$ and $\calD_2$. By contradiction, each part $P\in\calP$ only contains arcs in one of the $\calD_i$, and any such partitions are naturally ordered, in the sense that $P$ is \emph{inside} $P'$ if any $A\in P$ has an $A_1\in P'$ to one side of $\calD_i$ and an $A_2\in P'$ to the other side. When moving from $B$ to $B'$ along one of the $\calD_i$ and reading off the heights of the outermost parts, every height $h$ between the boundary value at $B$ and that at $B'$ must appear. Therefore, if $h$ is such a height, there is an outermost part $P_i$ in each respective $\calD_i$ which has boundary value $h$. But then by maximality, these parts $P_i$ would have to be the same (since otherwise one could replace them with the union $P_1\cup P_2$), which leads to a contradiction.
\end{proof}

\begin{figure}
\centering
\includegraphics[width=0.45\textwidth]{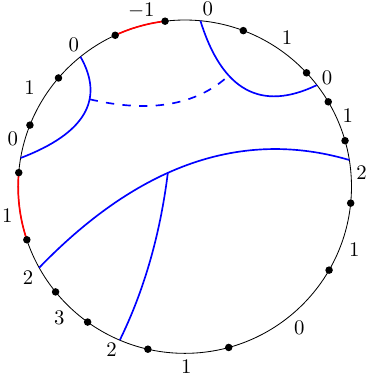}
\hspace{0.05\textwidth}
\includegraphics[width=0.45\textwidth]{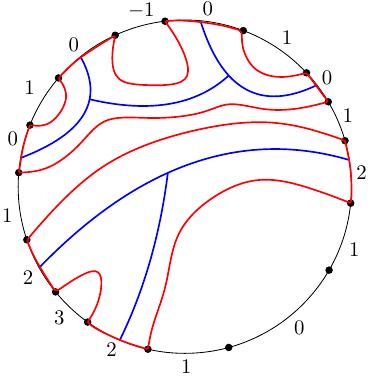}
\caption{\underline{Left:} An illustration of the argument that two odd arcs of different heights must be separated by a part $P\in\calP$. The parts are show by paths in blue, the two arcs of heights $1$ and $-1$ in red must necessarily have, on each part of the boundary between them, an outermost part of height $0$, these two parts must then be connected by maximality of $\calP$. \underline{Right:} Given the parts, one can find topological rectangles such that the sets $\cup_{A,A'\in P}\calR_{A,A'}$ (here in red) are disjoint, the existence of crossings of these rectangles implies that all arcs in the same part are connected.}
\label{fig:constant_by_parts_bc_conditioning}
\end{figure}

As a direct consequence of these lemmas, we get the following.

\begin{proof}[Proof of Proposition \ref{prop:constant_by_parts_bc_boundary_RSW} for non constant boundary conditions]
    Let $\calC$ be the event that this circuit exists. Given a configuration $h^\even_\partial$, by Proposition \ref{prop:FKG_omega+}, the measure $\mu^{\zeta(h^\even_\partial)}_{\calD}$ has RSW and FKG for the even heights and percolation $(h_4^\even,\omega)$. By RSW and FKG, as in the proof of the Proposition in the constant boundary case, the event $\calC^c$ happens with probability at most $C(d/r)^c$ under $\mu^{\zeta(h^\even_\partial)}_\calD$, for some universal constants $c,C$. By Lemma \ref{lem:diff_height_arcs_dont_connect}, the event $\calE$ happens with positive probability at least $c(\eta)>0$ under boundary conditions $\zeta(h^\even_\partial)$, so one has
    \begin{equation*}
        \mu^{\zeta(h^\even_\partial)}_{\calD}[\calC^c|\calE]\le c(\eta)^{-1}C(d/r)^c.
    \end{equation*}
    By Lemma \ref{lem:constant_by_parts_bc_decomposition}, $\mu^\zeta_\calD$ is an average of the measures $\mu^{\zeta(h^\even_\partial)}_{\calD}[\cdot|\calE]$, so the result follows.
\end{proof}

\section{Proof of GFF convergence (Theorem~\ref{thm:even})}\label{sec:GFF_convergence}

\boxstatus{Tiancheng 03.10: this section comes from EA section 13.}

\boxstatus{This section is mostly done in terms of content, depending on where certain lemmas are proved. Todo: Merging of points in correlations}

Before diving into the proof of Theorem \ref{thm:even}, we prove certain qualitative results on the correlation function.

\subsection{Hölder continuity of correlation functions}

Let us first prove that the correlation functions are Hölder continuous away from the boundary.

\begin{proposition}
\label{prop:correlations_holder}
    There exists a constant $c>0$, and for all positive $M$ a constant $C=C(M)>0$ such that the following holds. Take $\eps,r>0$, and let $\delta>0$ be small enough in terms of $\eps,r,M$. Let $\calD\subset\delta\bbZ^2$ be a discrete domain with boundary conditions $\zeta$ such that $|\zeta|\le M$. Assume that $x_1,x_1',x_2,x_3,\ldots,x_k$ are points in $\calD$. Assume $x_1$ and $x_1'$ are at distance at most $\eps$ from each other, and assume that all of the $x_i$ are at distance at least $r$ from each other, and that $x_1$ is at distance at least $r$ from $\partial\calD$. Then,
    \begin{equation*}
        \Big|\mu^\zeta_\calD[h(x_1)h(x_2)\cdots h(x_k)]-\mu^\zeta_\calD[h(x_1')h(x_2)\cdots h(x_k)]\Big|\le C(\eps/r)^c.
    \end{equation*}
\end{proposition}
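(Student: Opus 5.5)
We couple the two points through a shared ``good circuit'' around both of them, in the spirit of the proof of Proposition~\ref{prop:boundary_values_of_h}. Set $\rho:=\sqrt{\eps r}$, so that $\eps\ll\rho\ll r$ once $\eps/r$ is small; if $\eps/r$ is not small, the bound is trivial by Lemma~\ref{lem:htilde}. We look for a circuit $\gamma$ of $\omega^+$ (or $\omega^-$) surrounding both $x_1$ and $x_1'$, contained in the annulus $\Lambda_{r/4}(x_1)\setminus\Lambda_{2\rho}(x_1)$, and as close as possible to $x_1$. Such a circuit has constant height. It can be explored from the outside of $\Lambda_{r/4}(x_1)$ inwards, so conditionally on $\gamma$, on its height $h_\gamma$, and on the configuration outside $\gamma$, the law inside $\gamma$ is $\mu^{+}_{\mathrm{int}(\gamma)}$ (or $\mu^-$) shifted by $h_\gamma$, by Lemma~\ref{lem:measures}.

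\textbf{Step 1: the inner expectations are close.} Inside $\gamma$, the expectations of $h(x_1)-h_\gamma$ and $h(x_1')-h_\gamma$ must be compared. Here we use harmonicity rather than symmetry. By Theorem~\ref{thm:harmonicity}, applied after rescaling the domain by $\rho$, the function $y\mapsto\mu^{+}_{\mathrm{int}(\gamma)}[h(y)]$ is approximately harmonic at scales comparable to $\rho$, and it is bounded by Lemma~\ref{lemma:boundedness}. Rather than attempting a quantitative rate from Theorem~\ref{thm:harmonicity}, we iterate the RSW coupling across dyadic scales between $\eps$ and $r$. Take $\log_2(r/\eps)$ disjoint dyadic annuli around $x_1$. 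In each annulus, Lemma~\ref{lemma:double_circ} (and Proposition~\ref{prop:RSW+} after the first exploration) gives, uniformly in the outside, probability at least $c_0$ of a double circuit of $\omega^+$ in which the inner circuit can be explored from outside. A standard coupling, exploring from outside inwards with the swapping Lemma~\ref{lem:geometric_mixing_annuli}, then couples the configurations seen from $x_1$ and from $x_1'$. The two configurations agree outside a common circuit at scale $2^{j}\eps$, and the two points see translated copies of comparable domains. More concretely, $\mu[h(x_1)-h(x_1')\mid\text{outside the $j$-th circuit}]$ is a martingale difference whose size is controlled by the oscillation $\osc$ on that circuit. That oscillation has exponential tails by Lemma~\ref{lemma:arm_event_bound}. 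The probability that no good coupling occurs across all $m\asymp\log(r/\eps)$ annuli is at most $(1-c_0)^m=(\eps/r)^{c}$.

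\textbf{Step 2: combining with the other points.} We write
\[
\mu^\zeta_\calD[(h(x_1)-h(x_1'))h(x_2)\cdots h(x_k)]=\mu^\zeta_\calD\big[\mu^\zeta_\calD[h(x_1)-h(x_1')\mid\mathcal G]\,\widetilde h(x_2)\cdots\widetilde h(x_k)\big],
\]
where $\mathcal G$ is the $\sigma$-algebra of the exploration outside $\Lambda_{r/4}(x_1)$ together with the annuli explorations. We split according to whether the coupling succeeded. On success, the conditional difference vanishes, or is $O(\eps/r)^c$ after averaging. On failure, Cauchy--Schwarz applies, using Lemma~\ref{lem:htilde} for the second moments (a constant depending on $M$ and $r/\operatorname{dist}$, with the oscillation of $\zeta$ bounded by $2M$) and the bound $(\eps/r)^{c}$ on the failure probability. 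This gives $C(M)(\eps/r)^{c/2}$.

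\textbf{Main obstacle.} The main obstacle is Step 1: turning the existence of many nested circuits into an actual coupling, uniform in the outside boundary conditions, between the configurations near $x_1$ and near $x_1'$. The circuits have constant height, so the additive structure is fine. However, the inner domains are random and the two points sit asymmetrically inside them. We would handle this by choosing, in each annulus, two compatible circuits in the sense of Lemma~\ref{lem:geometric_mixing_annuli}. The cross-ratio bound then gives total-variation closeness, bounded away from $1$, of the laws of the configuration around $\Lambda_{2^{j}\eps}(x_1)$ and its translate by $x_1'-x_1$, which yields the geometric decay.
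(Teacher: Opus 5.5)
\textbf{There is a genuine gap, in Step 1.} You set up exactly the right object. You take a circuit $\gamma$ of $\omega^\pm$ surrounding $x_1$ and $x_1'$, explored from the outside, so that conditionally on everything outside it the law inside is $\mu^\pm_{\mathrm{int}(\gamma)}$ shifted by $h_\gamma$. But then you write ``here we use harmonicity rather than symmetry'', and that discards the one idea the argument needs.

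The joint law of $(h,\omega)$ under $\mu^\pm_{\mathrm{int}(\gamma)}$ is invariant under $h\mapsto 2h_\gamma-h$. Global arrow reversal preserves the six-vertex weights. Since $2h_\gamma\in 4\mathbb Z$, the map also preserves the classes $4\mathbb Z$ and $4\mathbb Z+2$ that define $\omega^+$ and $\omega^-$. Hence the conditional expectations of $h(x_1)$ and of $h(x_1')$ are \emph{both exactly} $h_\gamma$. On the event that such a circuit exists, the integrand $\mu^\zeta_\calD[h(x_1)-h(x_1')\mid\cdot\,]$ therefore vanishes identically. This is the same fact used in the proof of Proposition~\ref{prop:boundary_values_of_h}, which you cite.

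The replacement mechanism you propose does not work as stated.
\begin{itemize}
\item Theorem~\ref{thm:harmonicity} has no rate, so it cannot produce $(\eps/r)^c$. You acknowledge this yourself.
\item Lemma~\ref{lem:geometric_mixing_annuli} is a cross-ratio bound giving decorrelation between the inside and outside of an annulus. It says nothing about comparing the configuration seen from $x_1$ with the translate of the one seen from $x_1'$.
\item Inside a common circuit, the two points sit at different positions in the same random, non-translation-invariant domain. So ``the two points see translated copies of comparable domains'' is false.
\item Consequently, ``on success the conditional difference vanishes, or is $O(\eps/r)^c$'' is asserted with no mechanism behind it.
\end{itemize}

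Once the symmetry is used, everything else collapses to the paper's short proof. Let $\calC$ be the event that $\omega$ has a circuit in $\Lambda_{r/2}(x_1)\setminus\Lambda_{2\eps}(x_1)$ around $\Lambda_{2\eps}(x_1)$, and take the outermost one so that it is explored from outside. The points $x_2,\dots,x_k$ lie outside $\Lambda_{r/2}(x_1)$.
\begin{itemize}
\item On $\calC$ the contribution is exactly zero, by the symmetry above.
\item Iterating uniform RSW over the $\asymp\log_2(r/\eps)$ dyadic annuli gives $\mu^\zeta_\calD[\calC^c]\le C(\eps/r)^c$.
\item The contribution of $\calC^c$ is handled by Cauchy--Schwarz and Lemma~\ref{lem:htilde}; this is your Step~2.
\end{itemize}
The intermediate scale $\rho=\sqrt{\eps r}$, the harmonicity input and the coupling construction are all unnecessary.
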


\begin{proof}
    We rewrite the left-hand side as the expectation $\mu^\zeta_{\calD}[(h(x_1)-h(x_1'))h(x_2)\cdots h(x_k)]$. Assume that $r\ge 4\eps$, and consider the event $\calC$ that there is a circuit of $\omega$ in the annulus $\Lambda_{r/2}(x_1)\backslash\Lambda_{2\eps}(x_1)$ going around $\Lambda_{2\eps}(x_1)$. On the event $\calC$, conditionally on the heights $h(x_2),\ldots, h(x_k)$ which depend only on faces outside $\Lambda_{r/2}(x_1)$, the expected values of $h(x_1)$ and $h(x_1')$ are both equal to the height of the circuit of $\omega$. Therefore,
    \begin{equation*}
        \mu^\zeta_{\calD}[(h(x_1)-h(x_1'))h(x_2)\cdots h(x_k)\ind_\calC]=0.
    \end{equation*}
    Since the box $\Lambda_{r/2}(x_1)$ does not intersect the boundary $\partial\calD$, by RSW \eav{ref HUGO: indeed}, the event $\calC$ occurs in $\mu^\zeta_\calD$ with probability at least $1-C(\eps/r)^c$ for some universal constant $c$ and some $C=C(M)>0$. The contribution of the event $\calC^c$ is then controlled by Cauchy-Schwarz and Lemma \ref{lem:htilde}. \eav{L2 lemma for h. HUGO Being careful like in derivation of Prop 9.1 from 9.2}.
\end{proof}

\subsection{Correlation functions when merging points}

In this section, we show the following result which explains how the correlation functions behave in the limit when two points get close. Note that the points $1$ and $2$ chosen in the statement are completely arbitrary and may be replaced by any other non-zero points of 
$\bbR^2$ up to changing the constant $C$.

\begin{proposition}[Merging of two-point correlations]
\label{prop:two_point_merging}
	There exists a universal constant $C>0$ with the following properties.
	Let $r\geq 1$ and let $\calD\subset\bbZ^2$ be a simply connected domain with
	$\Lambda_r\subset\calD\subset\Lambda_{2r}$.
	Let $\zeta$ be any admissible boundary condition on $\calD$ with
	$|\zeta|\leq M\in\Z_{\geq 1}$.
	Then for any $x\in [-r/2,r/2]^2$, we have
	\begin{equation}
		\Big|\mu_{\calD}^\zeta[h(0)h(x)]-\mu_{\Z^2}[(h(0)-h(re_1))(h(x)-h(2re_1))]\Big|\le C\cdot M^2.
	\end{equation}
\end{proposition}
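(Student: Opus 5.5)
Both correlations contain the same logarithmic singularity as $x$ approaches $0$, so the approach is to extract that common local part. Concretely, I would show that each side differs from a \emph{local} two-point quantity by $O(M^2)$.

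The natural intermediate object is obtained through a coupling near the origin. Fix a scale $\rho\asymp r$, for instance $\rho=r/8$, with $x\in\Lambda_{r/2}$ (points with $|x|\ge\rho$ are handled separately, as explained below).

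For the finite-domain side, I would explore from $\partial\calD$ inwards the flower domain $\calF$ between $\Lambda_r$ and $\Lambda_{3r/4}$. Conditionally on it, write $h=H+\phi$, where $H$ is the constant height of the petal circuits of $\omega$ (when a circuit exists) and $\phi$ is the field inside. Then
\[
\mu_\calD^\zeta[h(0)h(x)]=\mu[H^2]+\mu[H(\phi(0)+\phi(x))]+\mu[\phi(0)\phi(x)].
\]
By Lemma~\ref{lemma:arm_event_bound} (second bound), $|H|\le M+O(1)$ outside an event of exponentially small probability in $M$. Lemma~\ref{lemma:boundedness} bounds $|\mu[\phi(\cdot)\mid\calF]|$ by the oscillation on $\partial\calF$. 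Together these give that the first two terms are $O(M^2)$.

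The full-plane side is treated the same way. Explore an $\omega$-circuit of height $H'$ in $\Lambda_r\setminus\Lambda_{3r/4}$ around $0$ and $x$, existing with positive probability by Proposition~\ref{prop:RSW+}; iterating outward gives failure probability $e^{-c k}$ after $k$ annuli. Expanding the product $(h(0)-h(re_1))(h(x)-h(2re_1))$, the terms involving $h(re_1)$ and $h(2re_1)$ are $O(1)$. To see this, apply Lemma~\ref{lem:htilde}-type $L^2$ bounds together with Corollary~\ref{cor:heigh_diff_annuli}, which controls the height differences between the circuit and the points $re_1$, $2re_1$ at macroscopic distance $\asymp r$.

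Both sides are thereby reduced to $\mu[\phi(0)\phi(x)]$, where $\phi$ is the field relative to an $\omega$-circuit at scale $\asymp r$, under two different outside conditionings. The two conditional laws of the inner field are compared via the annulus swapping of Lemma~\ref{lem:geometric_mixing_annuli}. Its ratio bounds give comparability up to constants, not equality. I therefore expect to need a double-circuit coupling, as in the quasi-independence proof (Lemma~\ref{lemma:quasi_independence}): with probability bounded below, both measures see $\omega^+$ circuits in two nested annuli, and inside the inner one the laws coincide exactly. Iterating over $O(1)$ dyadic annuli between scales $r/2$ and $r/8$, the coupling succeeds except with probability bounded by a constant less than $1$. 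On failure, the mismatch is again controlled by $L^2$ bounds: the second moments of $\phi(0)$ and $\phi(x)$ relative to a circuit at scale $r$ are bounded by $\log$-type quantities that are uniform in $r$ after subtracting the circuit height, by Lemma~\ref{lemma:arm_event_bound}.

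The main obstacle is exactly this failure term. A crude Cauchy--Schwarz bound gives only $O(\mu[\phi(0)^2])$, which is of order $\log(r/\delta)$ and not bounded. To fix this, I would make the coupling success monotone: run the exploration annulus by annulus from scale $r$ down to the scale $|x|$. Each annulus gives an independent chance $c>0$ of coupling, and failure up to scale $2^{-j}r$ costs a variance of order $j$ times $(1-c)^j$, which is summable. Below scale $|x|$ the two points are handled jointly. If $|x|\ge\rho$, no coupling down to small scales is needed and everything is $O(1)$ from the Lemma~\ref{lem:htilde} bounds directly. Summing all contributions gives the bound $C M^2$.
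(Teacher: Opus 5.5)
Your first half matches the paper's proof. Both arguments explore a macroscopic $\omega^+$-circuit around $0$ and $x$ (inside $\calD$, respectively separating $\{0,x\}$ from $\{re_1,2re_1\}$), absorb the circuit height and the cross terms into an $O(M^2)$ error, and reduce both sides to an average over the random enclosed domain $\calA$ of $X_{0,x}(\calA)=\mu^+_{\calA}[h(0)h(x)]$. This is your $\mu[\phi(0)\phi(x)]$. The routes split at the comparison of these two averages. The paper treats $\calA\mapsto X_{0,x}(\calA)$ as a deterministic function and imports its Lipschitz property in the domain from \cite[Eq.~(406)]{DumKozLamMan26}; RSW then pins $\calA$ at scale $r$ with good tails, and the argument closes in one line. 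You instead re-derive that insensitivity from scratch, by a multi-scale coupling of the two inner fields plus $L^2$ bookkeeping on the scales where the coupling fails. Your route stays within this paper's toolbox and yields more (an actual coupling of the local fields), but as written it needs three repairs.

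First, seeing circuits in both configurations does not make the insides coincide; the \emph{same} circuit must be used. Use the double-circuit swapping bounds to show that the laws of the outermost-from-outside circuit $\Gamma$ in the next annulus are comparable under the two measures, then maximally couple $\Gamma$ itself. On failure, both conditional laws are again mixtures of $\mu^+_{\mathrm{int}\,\gamma}$. This is what makes the per-scale success probability uniform, so that failures multiply to $(1-c)^j$.

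Second, the accumulated increments $\psi_i$ are correlated with the failure event. Bound the cost on $\{J=j\}$ by H\"older, e.g.\ $\mu[\psi_i^4]^{1/2}(1-c)^{j/2}\le C j^2(1-c)^{j/2}$, rather than by the heuristic $j(1-c)^j$. The cross term $(\psi_1-\psi_2)(\chi(0)+\chi(x))$ vanishes by the $h\mapsto -h$ symmetry of $\mu^+$.

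Third, in the full plane the search for a separating circuit must go inward, down to scale $|x|$, not outward: outward annuli would enclose $re_1$ and $2re_1$. On total failure, coarse-grain $h(re_1)$ and $h(2re_1)$ through $\widetilde h$ instead of applying Cauchy--Schwarz directly, since $\mathrm{Var}(h(0)-h(re_1))\asymp\log r$.
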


\begin{proof}
	Recall Equation~(404) in~\cite{DumKozLamMan26}.
	Both expectations can be calculated as follows:
	\begin{itemize}
		\item Explore the largest $\omega^+$-circuit $\gamma$ surrounding $0$ and $x$,
		\begin{itemize}
			\item Within $\calD$ (in the fixed boundary case), and
			\item Not surrounding $re_1$ and $2re_1$ (in the full-plane case).
		\end{itemize}
		Then both expectations are equal up to an additive term of order $O(M^2)$,
		plus $X_{0,x}(\calA)$, where $\calA$ is the domain bounded by $\gamma$.
	\end{itemize}
	To prove the proposition, it suffices to prove that
	\begin{equation}
		\mu_{\calD}^\zeta[X_{0,x}(\calA)]-\mu_{\Z^2}[X_{0,x}(\calA)]=O(M).
	\end{equation}
	This is straightforward from the RSW estimates and the Lipschitz property of $X_{0,x}(\calA)$,
	established in~\cite[Equation~(406)]{DumKozLamMan26}.
\end{proof}




\subsection{Tightness of the correlation functions}

\eav{There is the set $\calD_k$ which is an unfortunate notation since $\calD$ is our domain. I will replace it by $D_k$ for readability for now.}

In order to work with continuous correlation functions instead of discrete ones, let us define $\bar h$ to be the averaged out height function, defined as follows. If $\delta$ is the mesh size of the lattice under consideration, for $x\in\bbR^2$, set $\bar h(x) := \oint_{\partial B_\delta(x)}h(y)dy$ the average of $h$ on a circle of radius $\delta$ around $x$. The main advantage of $\bar h$ is that it is automatically Lipschitz, thus continuous. 

Throughout the rest of the proof, let $\calD_\delta$ be a sequence of domains in $\delta\bbZ^2$ converging to some domain $\calD\subset\bbR^2$ as in Theorem \ref{thm:even}. We will use all the notations of the theorem, including the marked points $a^{(i)}_\delta$ and boundary conditions $\zeta_\delta(i)$. Throughout this part, we will set
\begin{equation*}
    D_k(\calD):=\{(u_1,\ldots,u_k)\in \calD^k|\forall i\ne j, u_i\ne u_j\}
\end{equation*}
to be the set $D_k$ of the introduction, defined for the continuum domain $\calD$. Let us point out that an exhaustion of compacts of $D_k(\calD)$ are the compacts
\begin{equation*}
    K_\eps:=\{(u_1,\ldots,u_k)\in\calD^k|\forall i\ne j, \text{dist}(u_i,u_j)>\eps\text{ and }\forall i, \text{dist}(u_i,\partial\calD)>\eps\}.
\end{equation*}

The following result is an immediate consequence of the Hölder regularity of Proposition~\ref{prop:correlations_holder}, and thus we omit its proof.

\begin{proposition}[Tightness of the correlation functions]
\label{prop:correlation_function_compactness}
    Let $\calD_\delta$ be a sequence of domains in $\delta\bbZ^2$ converging to some domain $\calD\subset\bbR^2$ in the sense of Theorem \ref{thm:even}. Let $k$ be a positive integer. For each $\delta>0$, define the $k$-point correlation function
    \begin{equation*}
        \Phi_k^{\delta}:(x_1,\ldots,x_k)\in(\bbR^2)^k\mapsto \mu^{\zeta_\delta}_{\calD_\delta}[\bar h(x_1)\bar h(x_2)\cdots \bar h(x_k)]
    \end{equation*}
   Then, for every compact $K$ of $D_k(\calD)$, there is some $\delta_0(K)>0$ such that the functions $((\Phi_k^\delta)_{|K})_{\delta<\delta_0}$ are well defined and form a precompact set in $(\calC^0(K),||\cdot||_\infty)$.
\end{proposition}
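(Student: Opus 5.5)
The plan is Arzelà--Ascoli on each compact $K=K_\eps$: show that for $\delta<\delta_0(K)$ the functions $\Phi_k^\delta$ are well defined on $K$, uniformly bounded, and equicontinuous. Well-definedness comes from Carathéodory convergence. The compact $\overline{\{u:\operatorname{dist}(u,\partial\calD)\ge\eps\}}$ lies inside $\calD_\delta$ for $\delta$ small, and in fact at distance at least $\eps/2$ from $\partial\calD_\delta$. Taking $\delta<\eps/4$ then makes every circle $\partial B_\delta(x_i)$ with $x\in K$ lie in $\calD_\delta$ at distance at least $\eps/4$ from the boundary, so $\bar h(x_i)$ makes sense.

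For uniform boundedness, write $\Phi_k^\delta(x)$ as an average over $y_i\in\partial B_\delta(x_i)$ of $\mu^{\zeta_\delta}_{\calD_\delta}[h(y_1)\cdots h(y_k)]$. Each such correlation is bounded using Lemma~\ref{lem:htilde}, applied with $r_i$ a small multiple of $\eps$ so that the separation hypotheses hold, the $u_i$ are $\eps/2$-apart, and $r_i/d_i$ is bounded below. We replace $h(y_i)$ by $\widetilde h_{r_i}(y_i)$ one factor at a time. This is legitimate because, conditionally on $h_4$ on the annuli, the remaining factors are measurable outside $\Lambda_{2r_i}(y_i)$, and the spatial Markov property applies (the conditional law inside is the domain law, so $h$ is determined up to the explored data). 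Cauchy--Schwarz together with the product bound of Lemma~\ref{lem:htilde} then gives $|\mu[h(y_1)\cdots h(y_k)]|\le C(\eps,M,k)$, where $M=\max_i|\zeta(i)|$ is fixed by the theorem.

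For equicontinuity, take $x,x'\in K$ with $|x-x'|\le\rho$. Telescope, changing one coordinate at a time:
\[\Phi_k^\delta(x)-\Phi_k^\delta(x')=\sum_{j=1}^k\big(\Phi_k^\delta(x'_1,\dots,x'_{j-1},x_j,\dots,x_k)-\Phi_k^\delta(x'_1,\dots,x'_j,x_{j+1},\dots,x_k)\big).\]
Coupling $y_j=x_j+w$ with $y'_j=x'_j+w$ on the averaging circle, each term is an average of differences handled by Proposition~\ref{prop:correlations_holder}. We need $\rho\le\eps/4$ so that the intermediate points stay $\eps/2$-separated and at distance at least $\eps/2$ from the boundary. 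Applying that proposition with $r=\eps/4$ bounds each term by $C(M)(\rho/\eps)^c$, provided $\delta$ is small in terms of $\rho,\eps,M$.

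The main obstacle is this last proviso. The Hölder bound only holds for $\delta\le\delta_1(\rho)$, so it does not give a single modulus at all scales. I would handle it as follows. For $\rho\ge\delta$, use the bound above with $\rho$ replaced by $\max(\rho,\delta_1^{-1}(\delta))$. For $\rho<\delta$, use the Lipschitz property of $\bar h$, which gives a bound of order $(\rho/\delta)$ times a moment bound of the type just proved. This yields a modulus $\omega(\rho)\to0$, uniform over $\delta<\delta_0$. Alternatively, and more simply, use the standard fact that a sequence of continuous functions that is uniformly bounded and \emph{asymptotically} equicontinuous (oscillation at scale $\rho$ at most $\omega(\rho)+o_\delta(1)$) is precompact in $\calC^0(K)$. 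The finitely many $\delta$ above any threshold are individually continuous, so the diagonal Arzelà--Ascoli argument goes through.
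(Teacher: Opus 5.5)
Your proposal is correct. It is the Arzel\`a--Ascoli argument the paper has in mind when it calls the statement an immediate consequence of Proposition~\ref{prop:correlations_holder} and omits the proof; your pointwise bound via Lemma~\ref{lem:htilde} and your treatment of the $\delta$-dependent threshold in the H\"older estimate supply details the paper leaves implicit.

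One slip in your first way of handling that threshold: splitting at $\rho=\delta$ does not give a modulus tending to zero, because for $\delta$ just above $\rho$ the bound of order $\rho/\delta$ is $O(1)$. Split instead at a threshold such as $\delta=\sqrt{\rho}$, using the H\"older bound below it, or simply rely on your asymptotic-equicontinuity alternative, which works as stated along the sequence $\delta\to0$.
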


\begin{remark}
\label{rem:smoothing_correlations_changes_by_little}
Let us mention that by Proposition \ref{prop:correlations_holder}, for any compact $K$ of $D_k(\calD)$, there is a constant $C(K)$ such that for all $\delta$ small enough and $(x_1,\ldots,x_k)\in K$,
\begin{equation*}
    \Big|\Phi_k^{\delta}(x_1,\ldots,x_k)-\mu^{\zeta_\delta}_{\calD_\delta}[h(x_1)h(x_2)\cdots h(x_k)]\Big|\le C(K)\delta^c.
\end{equation*}
Therefore, any result on the uniform convergence of $\Phi_k^\delta$ on all compacts of $D_k(\calD)$ implies the same convergence result for the correlation functions $\mu^{\zeta_\delta}_{\calD_\delta}[h(x_1)h(x_2)\cdots h(x_k)]$. We shall work with the former due to the tightness result.
\end{remark}

\subsection{Identification of the limit}

We are now ready to show the convergence of the correlation functions. If $\calD\subset\bbR^2$ is a domain, let $\Gamma_\calD$ be a Dirichlet boundary conditions GFF with variance $1$. If $\calD$ has marked points $a_1,\ldots,a_k$ on its boundary with certain boundary conditions $\zeta(i)$, let $H_\zeta$ be the harmonic extension of these boundary conditions to the domain $\calD$.

%

\begin{proof}[Proof of Theorem \ref{thm:even}]
Identifying the $k$-point correlations and Remark \ref{rem:smoothing_correlations_changes_by_little} give the result of Theorem \ref{thm:even} in the sense of convergence of correlation functions. Obtaining the convergence in distribution of $h$ is then a direct consequence of this result and of the Hölder continuity of Proposition \ref{prop:correlations_holder}. The proof of this result is completely analogous to that in \cite[Section~8]{DumKozLamMan26}.

We therefore focus on identifying the $k$-point correlations. By Proposition \ref{prop:correlation_function_compactness}, the functions $\Phi_k^\delta$ are precompact when restricted to any compact subset $K$ of $D_k(\calD)$. Therefore, we may extract subsequential limits $\Psi_k$ in the topology of uniform convergence in any compact subset of $D_k(\calD)$. It remains to show that all of the $\Psi_k$ are unique and given by the correlation functions of the desired GFF. Let us notice that for all $k\ge 1$, the function $\Psi_k$ satisfies the following properties:
\begin{enumerate}
    \item By Theorem \ref{thm:harmonicity}, $\Psi_k$ is harmonic in each variable inside $D_k(\calD)$.
    \item By Proposition \ref{prop:boundary_values_of_h}, as $x_1$ converges to a point on the boundary arc $(a^{(i)}a^{(i+1)})$ (or to any point of the boundary if there are no marked points), one has
    \begin{equation*}
        \Psi_k(x_1,x_2,\ldots,x_k)\xrightarrow{x_1\to (a^{(i)}a^{(i+1)})} \zeta(i)\Psi_{k-1}(x_2,\ldots,x_k).
    \end{equation*}
    \item By Proposition \ref{prop:two_point_merging}, assuming that $k\ge 2$, given fixed points $(x_2,\ldots,x_k)\in D_{k-1}(\calD)$, there is a constant $C(x_2,\ldots,x_k)>0$ such that for all $x_1$ close enough to $x_2$, one has
    \begin{equation*}
        \Big|\Psi_k(x_1,\ldots,x_k)-\sigma^2G_{\bbR^2}(x_1,x_2)\Psi_{k-2}(x_3,\ldots,x_k)\Big|\le C(x_2,\ldots,x_k),
    \end{equation*}
    where $G_{\bbR^2}(\cdot,\cdot)$ is the full-plane Green's function, and where we set $\Psi_0:=1$.
	Indeed, this follows by exploring a flower domain around $x_1$ and $x_2$, using that proposition,
	and noticing that the conditional expectation of $h(x_1)h(x_2)$ is given by the full-plane plus an error term of the order of the maximum square height on the flower domain boundary.
\end{enumerate}
We now show by induction on $k$ that this characterises the $\Psi_k$ as the correlation functions $\Psi_k^{\text{GFF}}$ of $\sigma\cdot \Gamma_\calD+H_{\zeta}$. When $k=0$, we immediately have $\Psi_0=1=\Psi_0^{\text{GFF}}$. Let us move to the case $k=1$. In this case, $\Psi_1$ is a harmonic function on $\calD$, whose boundary conditions on each arc $(a^{(i)}a^{(i+1)})$ are constant by parts and given by $\zeta$. Furthermore, the function $\Psi_1$ is bounded between the smallest and largest values of $\zeta$, so it stays bounded around the marked points $a^{(i)}$. Therefore, $\Psi_1$ is a harmonic function with boundary values constant by parts given by $\zeta$, so $\Psi_1=H_\zeta=\Psi_1^{\text{GFF}}$.

Assume now that $k\ge 2$ and that for all $\ell<k$, we have $\Psi_\ell=\Psi_\ell^{\text{GFF}}$. Fix $(x_2,\ldots,x_k)\in D_{k-1}(\calD)$, and consider the function
\begin{equation*}
    F(x_1):= \Psi_k(x_1,\ldots,x_k)-\Psi_k^{\text{GFF}}(x_1,\ldots,x_k).
\end{equation*}
Then, the function $F$ is harmonic in $\calD\backslash\{x_2,\ldots,x_k\}$. Consider what happens when $x_1$ converges to one of the other points, without loss of generality assume that it is $x_2$. Then, as $x_1$ converges to $x_2$, we may use the merging result, as well as the analogous result for the continuum GFF to write
\begin{equation*}
    F(x_1)=\sigma^2G_{\bbR^2}(x_1,x_2)\Psi_{k-2}(x_3,\ldots,x_k)-\sigma^2G_{\bbR^2}(x_1,x_2)\Psi_{k-2}^{\text{GFF}}(x_3,\ldots,x_k)+O_{x_2,\ldots,x_k}(1).
\end{equation*}
By induction, the first two terms cancel, so $F$ is actually bounded around $x_2$, and $x_2$ is therefore a removable singularity of $F$ (this is for example a simple consequence of the fact that Brownian motion started from a point $y\ne x_2$ almost surely does not pass through $x_2$). Therefore, $F$ can be extended to a harmonic function on all of $\calD$. Furthermore, as $x_1$ converges to any boundary arc $(a^{(i)}a^{(i+1)})$, the function $F$ converges to zero, and as $x_1$ converges to any marked point $a^{(i)}$, the function $F$ stays bounded (one can for example use Lemma \ref{lem:htilde}) \eav{Add reference here}. This forces the function $F$ to be identically zero on all of $\calD$, which concludes the induction step.
\end{proof}


\bigbreak\noindent{\bf AI use Statement.}
No mathematical content was generated or found with the assistance of artificial intelligence.

\bigbreak\noindent{\bf Acknowledgments.} The second author was funded by the ERC SPRaPaF. All but the fourth authors were partially funded by the Swiss FNS as well as the NCCR SwissMap.
The fourth author acknowledges support by the French National Research
Agency (ANR), project number ANR-23-CPJ1-0150-01.

\end{document}